\newcommand{\calS}{\mathcal{S}}
\newcommand{\calUT}{\mathcal{U}_{\mathcal{T}}}
\newcommand{\sigmaT}{\sigma_{\mathcal{T}}}
\newcommand{\adderArray}{\texttt{Adder Array}}
\newcommand{\adderUnit}{\texttt{Adder Unit}}
\newcommand{\componentAdder}{\texttt{Component Adder}}
\newcommand{\partialAdder}{\texttt{Partial Adder}}
\newcommand{\bracket}{\texttt{Bracket}}
\newcommand{\externalCommunication}{\texttt{External Communication}}
\newcommand{\genome}{\texttt{Genome}}
\newcommand{\bracketBlocker}{\texttt{Bracket Blocker}}
\newcommand{\greenoff}{\texttt{green-off}}
\newcommand{\greenon}{\texttt{green-on}}
\newcommand{\yellowoff}{\texttt{yellow-off}}
\newcommand{\yellowup}{\texttt{yellow-up}}
\newcommand{\yellowdown}{\texttt{yellow-down}}
\def\planter/{{\texttt{planter}}}
\def\leftcomp/{{\texttt{left-comp}}}
\def\rightcomp/{{\texttt{right-comp}}}
\def\leftarm/{{\texttt{left-arm}}}
\def\rightarm/{{\texttt{right-arm}}}
\def\crashingcounter/{{\texttt{crashing-counter}}}
\newcommand{\fullgridgraph}{G^\mathrm{f}}\newcommand{\bindinggraph}{G^\mathrm{b}}
\newcommand{\href}[2]{#2}
\theoremstyle{definition}
\newcounter{thm-count}
\newcounter{lem-count}
\newcounter{def-count}
\newcounter{cor-count}
\newcounter{obs-count}
\newtheorem{theorem}[thm-count]{Theorem}
\newtheorem{lemma}[lem-count]{Lemma}
\newtheorem{definition}[def-count]{Definition}
\newtheorem{observation}[obs-count]{Observation}
\title{The Impacts of Dimensionality, Diffusion, and Directedness on Intrinsic Universality in the abstract Tile Assembly Model}
\author[1]{Daniel Hader}
\author[2]{Aaron Koch}
\author[3]{Matthew J. Patitz}
\author[4]{Michael Sharp}
\affil[ ]{\small{\textit{${}^1$dhader@uark.edu, ${}^2$aekoch@uark.edu, ${}^3$patitz@uark.edu, ${}^4$mrs018@uark.edu}}}
\affil[ ]{Department of Computer Science and Computer Engineering, University of Arkansas, Fayetteville, AR, USA. This research was supported in part by National Science Foundation Grants CCF-1422152 and CAREER-1553166.}
\begin{document}

\date{}
\maketitle

\begin{abstract}
In this paper we present a series of results related to mathematical models of self-assembling systems of tiles and the impacts that three diverse properties have on their dynamics. In these self-assembling systems, initially unorganized collections of tiles undergo random motion and can bind together, if they collide and enough of their incident glues match, to form assemblies. Here we greatly expand upon a series of prior results which showed that (1) the abstract Tile Assembly Model (aTAM) is intrinsically universal (FOCS 2012), and (2) the class of directed aTAM systems is not intrinsically universal (FOCS 2016). Intrinsic universality (IU) for a model (or class of systems within a model) means that there is a universal tile set which can be used to simulate an arbitrary system within that model (or class). Furthermore, the simulation must not only produce the same resultant structures, it must also maintain the full dynamics of the systems being simulated and display the same behaviors modulo a scale factor.
While the FOCS 2012 result showed that the standard, two-dimensional (2D) aTAM is IU, here we show that this is also the case for the three-dimensional (3D) version. Conversely, the FOCS 2016 result showed that the class of aTAM systems which are directed (a.k.a. deterministic, or confluent) is not IU, meaning that there is no universal simulator which can simulate directed aTAM systems while itself always remaining directed, implying that nondeterminism is fundamentally required for such simulations. Here, however, we show that in 3D the class of directed aTAM systems is actually IU, i.e. there is a universal directed simulator for them. This implies that the constraint of tiles binding only in the plane forced the necessity of nondeterminism for the simulation of 2D directed systems. This then leads us to continue to explore the impacts of dimensionality and directedness on simulation of tile-based self-assembling systems by considering the influence of more rigid notions of dimensionality. Namely, we introduce the Planar aTAM, where tiles are not only restricted to binding in the plane, but they are also restricted to traveling within the plane, and we prove that the Planar aTAM is not IU, and prove that the class of directed systems within the Planar aTAM also is not IU. Finally, analogous to the Planar aTAM, we introduce the Spatial aTAM, its 3D counterpart, and prove that the Spatial aTAM is IU.

This paper adds to a broad set of results which have been used to classify and compare the relative powers of differing models and classes of self-assembling systems, and also helps to further the understanding of the roles of dimension and nondeterminism on the dynamics of self-assembling systems. Furthermore, to prove our positive results we have not only designed, but also implemented what we believe to be the first IU tile set ever implemented and simulated in any tile assembly model, and have made it, along with a simulator which can demonstrate it, freely available.
\end{abstract}

\thispagestyle{empty}
\clearpage
\setcounter{page}{1}

\section{Introduction}

Self-assembling systems create structure from randomness, utilizing only local interactions between components which begin in disorganized collections but randomly mix and collide with each other, possibly binding when allowed by those local interactions.  Natural self-assembling systems abound (e.g. the formation of the crystalline structure of snowflakes or the autonomous combination of the proteins composing a virus) and, inspired by the complexity they generate, researchers have sought to model them and to create novel self-assembling systems. This has led to an impressive variety of, among other things, DNA-based self-assembled creations (e.g. \cite{Douglas2009,ke2012three,rothemund2004algorithmic,RothOrigami,evans2014crystals,SignalTilesExperimental,WinfreeDNARobots2010,SeemanDNARobots2010,SeemanMaoLaBeanReif,drmaurdsa}).  It has also led to a variety of mathematical models based on components of different sizes and shapes (e.g. \cite{Winf98,GeoTiles,Polyominoes,Polygons,BeckerTimeOpt,OneTile,jDuples}) as well as a diverse set of dynamics (e.g.  \cite{Winf98,AGKS05g,FTAM,jSignals,SingleNegative,JonoskaSignals1,FlexibleCompModel,jRTAM,MLR07}).  An important trait of nearly all of these models is that they are capable of \emph{algorithmic self-assembly}, in which systems are able to create assemblies that represent computations, following embedded algorithms via the rule-based combination of their constituent components. In fact, even the first and simplest of these models, the abstract Tile Assembly Model (aTAM)\cite{Winf98}, is capable of Turing universal computation.  Given the powerful computational potential of systems in these models, and the variety between component geometries and dynamics, initially it was difficult to compare the relative powers and limitations between them, and direct comparisons were often piecemeal (e.g. \cite{Versus}).  Fortunately, with the incorporation of a tool used within the domain of cellular automata, namely \emph{intrinsic universality} \cite{Ollinger-CSP08,Goles-etal-2011}, such comparisons became possible.  In \cite{IUSA} is was shown that the 2D aTAM is intrinsically universal (IU), which means that there exists a constant-sized tile set, among the infinite collection of tile sets, which is capable of simulating any of the infinite systems within the 2D aTAM.  Furthermore, this simulation preserves the full dynamics and geometry of the original system, following the exact same assembly processes, modulo only a scaling factor such that constant-sized regions of tiles in the simulating system represent individual tiles of the original system.

In \cite{WoodsIUSurvey}, Woods describes the formation of a ``kind of computational complexity theory for self-assembly'' which utilizes the notion of intrinsic universality. The IU concept has been used to characterize the relative powers of many combinations of models and systems with differing parameters (e.g. \cite{2HAMIU,DirectedNotIU,jDuples,Signals3D,IUNeedsCoop,TempOneNotIU,OneTile,j2HAMSim,jBreakableDuples}). For instance, results have shown that allowing tiles with more complex geometries can dramatically reduce the number of unique tile types required for universal simulation and computations - to only \emph{one} if flipping and rotation of tiles are allowed \cite{OneTile}, or that the dynamics of hierarchical assembly models (in which arbitrarily large assemblies can combine in pairs) make the binding threshold (a.k.a. temperature parameter) a crucial and separating factor in the ability of one system to simulate another \cite{2HAMIU,j2HAMSim}.

While these directions provide interesting explorations of the ways in which permuting the properties of self-assembly models affects their relative powers, fundamental questions remain in even the simplest models in relation to nondeterminism and dimensionality.  In \cite{TempOneNotIU} it was shown that the set of non-cooperative 2D aTAM systems (in which a tile can attach to a growing assembly if it binds with only a single other tile in the assembly, as opposed to requiring two or more bindings), a.k.a. temperature-1 self-assembling systems, are not intrinsically universal or capable of bounded Turing machine computation, while \cite{CookFuSch11} showed that non-cooperative but ``just barely 3D'' systems (which are those that require only two planes) are, in fact, capable of deterministic Turing universal computation. In \cite{IUNeedsCoop} the authors showed that cooperative, i.e. temperature-2, self-assembly is required for intrinsic universality of both 2D and 3D classes of systems, which proved that the initial 2D aTAM IU construction of \cite{IUSA} at temperature 2 was optimal with respect to the temperature parameter. However, in the construction of that proof there was built-in nondeterminism in the form of many locations of the simulating systems which were forced to nondeterministically allow for the selection of which tile type may appear in those locations.  A self-assembling system is called \emph{directed} if, irrespective of the (valid) assembly path which it follows, the exact same final assembly results, meaning that the final assembly is always identical in shape and in the types of tiles located in each position.  The construction of \cite{IUSA} was forced, due to that nondeterminism, to result in simulating systems which were not directed even when they were simulating directed systems. The result of \cite{DirectedNotIU} proved that for the 2D aTAM, such nondeterminism was actually fundamental and unavoidable.  The question then remained as to whether this nondeterminism is a by-product of the dynamics of the aTAM itself, or instead is caused by the planarity of the 2D aTAM, i.e. the fact that assemblies must be embedded in the plane and tiles are not able to grow over other tiles.

\subsection{Our results}

In this paper, our results extend what is known about the effects of the interplay between dimensionality and directedness on intrinsic universality in self-assembly. However, to better understand the impacts of embedding systems within different dimensions, we introduce new variants of the aTAM in which tiles are not only restricted to attaching within regular lattices of the correct dimension, they are also required to travel only within those dimensions. We consider such \emph{diffusion restricted} models and call them the \emph{Linear aTAM}, \emph{Planar aTAM}, and \emph{Spatial aTAM} in 1D, 2D, and 3D, respectively. In these models, new tiles are only allowed to attach to locations on the perimeters of assemblies if they can diffuse into them along collision-free paths beginning from infinitely far away, i.e. they cannot be blocked from diffusing into those locations by tiles already attached to the assembly. As an example, in the Planar aTAM the tiles forming a 2D square which fully surrounds an empty central location prevent the diffusion of any tile into that central location.

We first make some relatively straightforward observations about intrinsic universality in the 1D aTAM, where tiles are restricted to forming linear assemblies. Section~\ref{sec:1D} contains more details about these observations, but to summarize, it is easy to show that the 1D aTAM is not IU. This is because any universal tile set $U$ must have a fixed number of tile types, say $|U| = t$. However, it is simple to define a 1D system with greater than $t$, say $t+1$, unique tile types, where that system forms a $t+1$ length line. Any system using $U$ must ``pump'' after forming a line of length $t$, meaning that a tile type must be repeated and the segment between the repeats could grow an infinite number of times. This would not be a valid simulation of the system which only made a finite line. Since the system failing to be simulated is also directed, the class of directed 1D aTAM systems is not IU. Finally, since tiles already attached to a linear assembly cannot block the ability of other tiles to bind to either end (the only possible frontier locations), diffusion can't actually be restricted so the dynamics are the same as for the regular 1D aTAM.

We next turn to 2D, noting that the standard 2D aTAM isn't entirely restricted to two dimensions, since tiles are allowed to diffuse into attachment locations through 3D space. We elucidate how that impacts the intrinsic universality of the aTAM. We prove that although the standard aTAM is IU\cite{IUSA}, the Planar aTAM is not IU (Theorem~\ref{thm:PaTAM-not-IU}), which means that the restriction of keeping tiles in the plane is too restrictive for a universal simulator to exist. To complete the results in 2D, we explore the combination of the diffusion constraint and directedness. Specifically, we prove that the class of directed systems in the Planar aTAM is not IU (Theorem~\ref{thm:DP2DaTAM-not-IU}). Thus, the combination of the two restrictions on tile assembly systems does not result in dynamics which allow for universal simulation.

We then move to 3D, proving that the 3D aTAM is IU (Theorem~\ref{thm:3DaTAMIU}), and present a universal tile set $U$ along with an algorithm to create necessary seed assemblies for $U$ to simulate arbitrary 3D aTAM systems.  We next show that, due to the careful design of $U$, $U$ is also an intrinsically universal tile set for the set of directed 3D aTAM systems (Theorem~\ref{thm:directed3DIU}), which means that when $U$ is used to simulate a directed 3D aTAM system, the simulating system itself remains directed. Thus we prove that the necessity of nondeterminism proven in \cite{DirectedNotIU} is a result of the 2D aTAM being limited to the plane. Finally, we prove that the Spatial aTAM is IU (Theorem~\ref{thm:spatial-IU}), contrasting with our result showing that the Planar aTAM is not. The one remaining combination, that of directed classes of the Spatial aTAM, we conjecture to not be IU.

\begin{table}[ht]
\centering
\begin{tabular}{l|c|c|c|c|}
\cline{2-5}
& General & Directed & \begin{tabular}[c]{@{}c@{}}Diffusion\\ restricted\end{tabular} & \begin{tabular}[c]{@{}c@{}}Directed +\\ diffusion restricted\end{tabular} \\ \hline

\multicolumn{1}{|l|}{1D} & \begin{tabular}[c]{@{}c@{}}not IU\\ (Obs. \ref{obs:1D-not-IU})\end{tabular} & \begin{tabular}[c]{@{}c@{}}not IU\\ (Obs. \ref{obs:1D-directed-not-IU})\end{tabular} & \begin{tabular}[c]{@{}c@{}}not IU\\ (Obs. \ref{obs:1D-linear-not-IU})\end{tabular} & \begin{tabular}[c]{@{}c@{}}not IU\\ (Obs. \ref{obs:1D-directed linear-not-IU})\end{tabular} \\ \hline

\multicolumn{1}{|l|}{2D} & IU\cite{IUSA} & not IU\cite{DirectedNotIU} & \begin{tabular}[c]{@{}c@{}}not IU\\ (Thm. \ref{thm:PaTAM-not-IU})\end{tabular} & \begin{tabular}[c]{@{}c@{}}not IU\\ (Thm. \ref{thm:DP2DaTAM-not-IU})\end{tabular} \\ \hline

\multicolumn{1}{|l|}{3D} & \begin{tabular}[c]{@{}c@{}}IU\\ (Thm. \ref{thm:3DaTAMIU})\end{tabular} & \begin{tabular}[c]{@{}c@{}}IU\\ (Thm. \ref{thm:directed3DIU})\end{tabular} & \begin{tabular}[c]{@{}c@{}}IU\\ (Thm. \ref{thm:spatial-IU})\end{tabular} & Conj. not IU \\ \hline
\end{tabular}
\end{table}

\subsection{Implemented IU tile set} \label{sec:implementation}

Due to the complexity of IU tile sets, which are capable of universally simulating entire classes of systems, as far as we know, no IU tile set has ever been explicitly defined down to the individual tile level.  Rather, they have been logically described at high levels of abstraction.  We believe that our IU tile set is the first, in any model of self-assembly, to be explicitly generated and tested.  We developed a set of Python scripts to design, generate, and test each component of our construction.  We combined the tile sets for each component into our universal tile set with approximately 152,000 unique tile types, also making this what we believe to be the most complex aTAM system which has ever been fully developed.  We explicitly defined the algorithm and wrote the code required to take as input an arbitrary 3D aTAM system and generate the seed assembly which uses our IU tile set to generate the initial (seed) assembly required for the tile set to simulate the input system.\footnote{Our implementation omits two relatively trivial components which do not impact the correctness of the 3D aTAM IU simulations, but which are fully designed and described in the following text.} The tile set and scripts used to test it, along with images and videos of examples, are freely available online at \url{http://self-assembly.net/wiki/index.php?title=Intrinsic_Universality_of_the_aTAM#3D}. Additionally, we developed the 2D/3D aTAM simulator PyTAS that is specifically optimized to efficiently handle loading, simulation, and rendering of 3D systems consisting of several millions of tiles, and was used extensively to test this construction.  PyTAS is freely available online at \url{http://self-assembly.net/wiki/index.php?title=PyTAS}.

\section{Preliminaries}

In this section, we present definitions for the models and concepts used throughout the paper.

\subsection{Informal description of the abstract Tile Assembly Model}
\label{sec-tam-informal}

This section gives a brief informal sketch of the abstract Tile Assembly Model (aTAM).
See Section~\ref{sec-tam-formal} for formal definitions. % of the aTAM.
Here, we define the 2D aTAM, whereas in Section~\ref{sec-tam-formal} we formulate the $d$-dimensional aTAM. For %purposes of
notational convenience, throughout this paper  the term ``aTAM'' refers to the 2D aTAM.

A \emph{tile type} is a unit square with four sides, each consisting of a \emph{glue label}, often represented as a finite string, and a nonnegative integer \emph{strength}. A glue~$g$ that appears on multiple tiles (or sides) always has the same strength~$s_g \in \{0,1,2,\ldots \}$. %\dwm{DW added: `A glue $g$ that appears on multiple tiles (and/or sides) always appears with the same strength $s_g$.}
There are a finite set $T$ of tile types, but an infinite number of copies of each tile type, with each copy being referred to as a \emph{tile}. An \emph{assembly}
is a positioning of tiles on the integer lattice $\Z^2$, described  formally as a partial function $\alpha:\Z^2 \dashrightarrow T$. %(more formally an equivalence class of all translations of a given assembly).
Let $\mathcal{A}^T$ denote the set of all assemblies of tiles from $T$, and let $\mathcal{A}^T_{< \infty}$ denote the set of finite assemblies of tiles from $T$.
We write $\alpha \sqsubseteq \beta$ to denote that $\alpha$ is a \emph{subassembly} of $\beta$, which means that $\dom\alpha \subseteq \dom\beta$ and $\alpha(p)=\beta(p)$ for all points $p\in\dom\alpha$.
Two adjacent tiles in an assembly \emph{interact}, or are \emph{attached}, if the glue labels on their abutting sides are equal and have positive strength. %according to $g$.
Each assembly induces a \emph{binding graph}, a grid graph whose vertices are tiles, with an edge between two tiles if they interact.
The assembly is \emph{$\tau$-stable} if every cut of its binding graph has strength at least~$\tau$, where the strength   of a cut is the sum of all of the individual glue strengths in~the~cut.

A \emph{tile assembly system} (TAS) is a triple $\calT = (T,\sigma,\tau)$, where $T$ is a finite set of tile types, $\sigma:\Z^2 \dashrightarrow T$ is a finite, $\tau$-stable \emph{seed assembly},
and $\tau$ is the \emph{temperature}.
An assembly $\alpha$ is \emph{producible} if either $\alpha = \sigma$ or if $\beta$ is a producible assembly and $\alpha$ can be obtained from $\beta$ by the stable binding of a single tile.
In this case we write $\beta\to_1^\calT \alpha$ (to mean~$\alpha$ is producible from $\beta$ by the attachment of one tile), and we write $\beta\to^\calT \alpha$ if $\beta \to_1^{\calT*} \alpha$ (to mean $\alpha$ is producible from $\beta$ by the attachment of zero or more tiles).
When~$\calT$ is clear from context, we may write $\to_1$ and $\to$ instead.
We let $\prodasm{\calT}$ denote the set of producible assemblies of~$\calT$.
An assembly is \emph{terminal} if no tile can be $\tau$-stably attached to it.
We let~$\termasm{\calT} \subseteq \prodasm{\calT}$ denote  the set of producible, terminal assemblies of $\calT$.
A TAS~$\calT$ is \emph{directed} if $|\termasm{\calT}| = 1$. Hence, although a directed system may be nondeterministic in terms of the order of tile placements,  it is deterministic in the sense that exactly one terminal assembly is producible (this is analogous to the notion of {\em confluence} in rewriting systems).

\subsection{Diffusion restrictions: Planar and Spatial aTAM definitions}

In addition to the standard constraints of temperature, dimension, and directedness which serve to differentiate various classes of aTAM systems, in this paper we will also investigate a constraint based on the ability of tiles to diffuse, from arbitrarily far away from an assembly, into frontier locations while always remaining within two dimensions for the 2D version, or three dimensions for the 3D version. This constraint serves to model the fact that when systems are constrained to restricted dimensions, once a region of space is completely blocked by surrounding tiles, there will be no way for tiles to attach within that space. We call the 1D version of the model with this constraint the Linear aTAM, the 2D version the Planar aTAM, and the 3D version the Spatial aTAM.

More formally, a Planar (Spatial) aTAM system is one where, in addition to all of the normal requirements for tile attachment, a tile can only attach to an assembly if there is exists a contiguous path from the node representing the attachment location to a node outside of the minimal bounding box of the assembly in the graph corresponding to the lattice $\mathbb{Z}^2$ ($\mathbb{Z}^3$), such that none of the points along the path are in the domain of the assembly. We call such a path a \emph{diffusion path}.

Notice that, since tiles never detach in the aTAM, once a given location has had all diffusion paths blocked, i.e. it is surrounded by the assembly, no tile will ever be able to attach in that location. We say that the planar (spatial) constraint has been invoked on such a tile location. We call a connected set of locations in which tiles cannot attach due to the planar (spatial) constraint a \emph{constrained subspace}. The set of all tiles that are adjacent to a constrained subspace is called the \emph{constraining subassembly}. Notice that a constraining subassembly is not actually a connected assembly, as it will always contain disconnected sets of tiles (due to the diffusion path only including $\pm x$, $\pm y$, and $\pm z$ movements). In other words, the constraining subassembly is the set of all tiles such that, if any single tile were removed, the constrained subspace would either no longer be constrained or would now contain the location of the removed tile.

Finally, we note that a restriction based on the ability of tiles to be blocked from diffusing into frontier locations by tiles already existing in an assembly does not have any impact on the 1D aTAM, where assemblies are linear (i.e. $1 \times n$ lines). This is because any assembly can only have $0$, $1$, or $2$ frontier locations, and none can be blocked by a tile already attached to the assembly. Thus, the dynamics of the Linear aTAM do not differ from the standard 1D aTAM.

\subsection{Simulation Overview}

In this section, we provide a high-level, intuitive definition of what it means for one tile assembly system to simulate another, and the definition of intrinsic universality. See Section~\ref{sec:simulation_def} for full technical definitions.

Consider the simulation of one system, $\calT$, by another system $\calS$. The simulation by $\calS$ will be done at some scale factor, say $m$, such that in $\calS$, $m \times m$ squares of tiles in 2D, or $m \times m \times m$ cubes of tiles in 3D, represent individual tiles of $\calT$. We call such squares or cubes of tiles in the simulator \emph{macrotiles}, and a macrotile representation function, $R$, must be given to map each macrotile in $\calS$ to a tile in $\calT$. The application of $R$ to all of the macrotiles of an assembly is referred to as $R^*$. For the simulation of $\calT$ by $\calS$ to be \emph{valid}, we say that an assembly $\alpha'$ in $\calS$ which maps (under $R$) to an assembly $\alpha$ in $\calT$ must be able to grow into representations of exactly the same next assemblies that $\calT$ can, and vice versa. An additional constraint that is placed upon the simulator $\calS$ is that it can perform partial growth into empty macrotile regions immediately adjacent to an assembly, allowing it to compute which type of tile may need to be represented there, but it cannot perform such growth, called \emph{fuzz}, further than one macrotile distance into empty space.

We say that a model (or class of systems) is IU if there exists some tile set, say $U$, such that for any system $\calT$ in that model (or class), tiles of $U$ can be arranged into a seed assembly so that subsequent growth of the system using $U$ will correctly simulate $\calT$.

\begin{wrapfigure}{r}{0.6in}
\vspace{-55pt}
\centering
\includegraphics[width=0.55in]{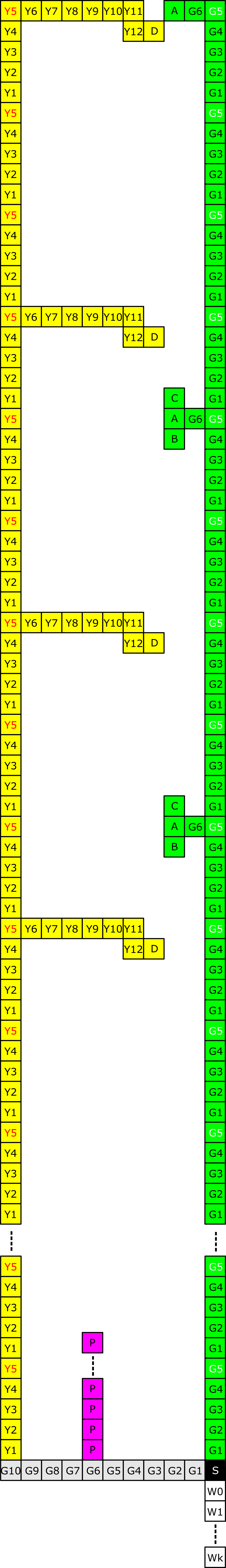}
\caption{$\calT$ of proof of Thm~\ref{thm:PaTAM-not-IU}}
\label{fig:pTAM-notIU-high-level}
\vspace{-85pt}
\end{wrapfigure}

\section{The Planar aTAM is not IU} \label{sec:PaTAM-short}

Here we provide a sketch for the proof of Theorem~\ref{thm:PaTAM-not-IU}. The full proof can be found in Section~\ref{sec:PaTAM-appendix}.

\begin{theorem}\label{thm:PaTAM-not-IU}
The Planar aTAM is not intrinsically universal.
\end{theorem}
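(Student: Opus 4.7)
The plan is a proof by contradiction: suppose $U$ is a tile set that is intrinsically universal for the Planar aTAM, and fix its associated scale factor $m$ and representation function $R$. I will exhibit a Planar aTAM system $\calT$, of the form sketched in Figure~\ref{fig:pTAM-notIU-high-level}, that $U$ cannot simulate while itself respecting the planar diffusion constraint.

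The witness $\calT$ is engineered so that the planar constraint induces a race between two producible behaviors at a single designated cell $p$. From a small seed, $\calT$ grows two arms of parameterized length via cooperative $\tau=2$ glues: one arm delivers a special tile $t_A$ to $p$ along an open diffusion corridor, while the other arm races to close off that corridor so that $p$ becomes a constrained empty cell. By tuning the arm lengths, both scenarios remain producible and yield two terminal assemblies of $\calT$ that differ exactly at $p$, with the distinguishing information carried through arms that can be made arbitrarily long relative to any fixed $m$ and $|U|$.

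A correct simulator $\calS$ using $U$ must have producible terminal assemblies whose macrotile images under $R^*$ match both outcomes of $\calT$, so in one case $\calS$ completes the macrotile $M_p$ representing $p$ and in the other it must leave $M_p$ incomplete. Because $\calS$ is itself Planar, completing $M_p$ requires diffusion paths through $\calS$'s own tile-level geometry to remain open. The core step is a window-movie / pumping argument applied to the closing arm of $\calS$: the arm has macrotile-width cross-section $O(m)$, admitting only $|U|^{O(m)}$ distinct local interface states, so choosing the arms in $\calT$ long enough forces two positions along the closing arm where $\calS$ must exhibit identical local states in both scenarios. Splicing the $t_A$-filling context of one scenario onto the enclosing context of the other produces a valid producible $\calS$-assembly whose $R^*$-image is not producible in $\calT$, contradicting faithful simulation.

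The main obstacle will be engineering the races in $\calT$ so that the decision between the two scenarios is genuinely delocalized from $p$: this prevents the simulator from using its single macrotile of allowed fuzz to resolve the race locally, and it guarantees that the spliced assembly is genuinely $R^*$-invalid rather than some benign third behavior also consistent with $\calT$. Concretely, this amounts to showing that whatever cooperative-glue mechanism $\calS$ uses to propagate the ``corridor-closed'' versus ``corridor-open'' signal must be realized by a macrotile-level path whose information capacity is bounded by $|U|^{O(m)}$, so that the pumping argument bites. Once this is established, the contradiction follows and no such universal $U$ can exist.
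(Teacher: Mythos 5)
Your high-level toolkit matches the paper's (proof by contradiction, a carefully engineered witness system $\calT$, window-movie pumping across $O(m)$-width cuts, and exploiting the tension between the planar constraint and scaled-up macrotiles), but both the witness and the crucial final step are materially different, and the gap you flag at the end is exactly where the paper's proof does its real work, so I don't think your sketch closes the argument.

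First, the witness. The paper's $\calT$ does not stage a \emph{race} between a delivery arm and a closing arm toward a single cell $p$; instead it has two arbitrarily tall vertical columns (green and yellow) separated by a third (pink) column in between. The arms are short and of fixed length. The nondeterminism that matters is not ``which arm wins the race'' but \emph{which type} of arm the yellow column sends: a $\yellowdown$ block or a $\yellowup$ block, which determines whether the meeting point is diagonally below or above the green column's $A$ tile. The pumping is applied to the tall columns (not to the closing arm), and the identical window movies are used to conclude that the green column's $A$-representing macrotile cannot know, when it resolves, which of the two yellow arm types it is about to meet.

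Second, the key step. The paper's contradiction is \emph{not} obtained by splicing the ``$t_A$-filled'' context of one scenario onto the ``enclosed'' context of the other. That splice is under-specified and, as stated, is not a standard Window Movie Lemma move: the lemma lets you pump or swap the subassembly between two identical window movies within compatible producible assemblies, but ``near-$p$ from scenario A'' and ``far-from-$p$ from scenario B'' need not present identical windows, and if they do it is not evident that the spliced $R^*$-image is unproducible rather than a benign intermediate. The paper instead argues that, because of the two yellow-arm options and the pumped green column, there is some iteration at which the $A$-representing macrotile must have tiles at \emph{both} its NW corner ($\alpha$) and SW corner ($\beta$) before it resolves (otherwise a gap remains that the pink column can diffuse through, breaking ``follows''). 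Since $m > 1$, these are distinct tile placements occurring at distinct times. Taking whichever corner is filled first (say $\alpha$), a case analysis on the growth needed to resolve the diagonally-adjacent $C$ macrotile gives the contradiction: either $C$ can resolve using only tiles east of $\alpha$, in which case one can stall $A$'s completion and produce an assembly where $C$ is represented without $A$ (unproducible in $\calT$), or resolving $C$ requires growth west of $\alpha$, which lands in a diagonally-adjacent macrotile and is therefore forbidden diagonal fuzz. Neither case appears in your sketch.

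The paragraph where you write that ``the main obstacle will be engineering the races in $\calT$ so that the decision $\ldots$ is genuinely delocalized from $p$'' and reducing this to bounding the signal's information capacity by $|U|^{O(m)}$ is, in effect, deferring the hardest part of the proof. The information-capacity bound alone is not sufficient: the paper still has to convert ``the simulator cannot tell the two arm types apart'' into a concrete violation, and it does so through the two-corners observation and the diagonal-fuzz dichotomy, which is a geometric argument about when a single macrotile can simultaneously resolve and seal. Without that (or an equivalent mechanism making the spliced assembly provably $R^*$-invalid and fuzz-respecting), the proposal does not yet establish the theorem.
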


We prove Theorem~\ref{thm:PaTAM-not-IU} by contradiction. Therefore, assume that the Planar aTAM is IU, and that the tile set $U$ is the tile set that is IU for it. We give a high-level description of a Planar aTAM system $\calT$ and show that any system using tile set $U$, say $\calUT$, cannot simulate it.

The idea behind $\calT$ is to grow two parallel, arbitrarily tall columns. These columns, at carefully defined periodic intervals, can grow arms inwards to meet each other and seal off space between them and below the meeting point. Figure \ref{fig:pTAM-notIU-high-level} illustrates what these columns look like. There are two types of arms which can grow from the left column and a single type of arm which can grow from the right column. One of the left arms grows a tile upwards, and the other grows a tile downwards, before possibly meeting the corner of a tile of a right arm. Which arm grows from the left column is non-deterministic. Because of this non-determinism and the fact that the arms can grow infinitely tall, we use the Window Movie Lemma (a result shown in \cite{IUNeedsCoop} which is similar to the pumping lemma for regular languages) to show that there is an assembly sequence in $\calT$ such that the macrotile in $\calUT$ at the end of the right arm, which should form a constrained subspace between the arms, will not be able to ``know'' which of the left arm types grew. We then use a case analysis to prove that this macrotile will either have to (1) resolve (i.e. represent a tile in $\calT$ under representation function $R$) before it can sufficiently close off the space below it (which would allow growth to continue in a space representing a constrained subspace), or (2) that tiles must be able to grow outside of the allowed fuzz regions. Either of these result in $\calUT$ not properly simulating $\calT$. (Brief overviews of these cases can be seen in Figures \ref{fig:pTAM-notIU-pinch-points-intro} and \ref{fig:pTAM-notIU-bad-D-growth-intro}.)

\begin{figure}[ht]
\centering
\includegraphics[width=4.0in]{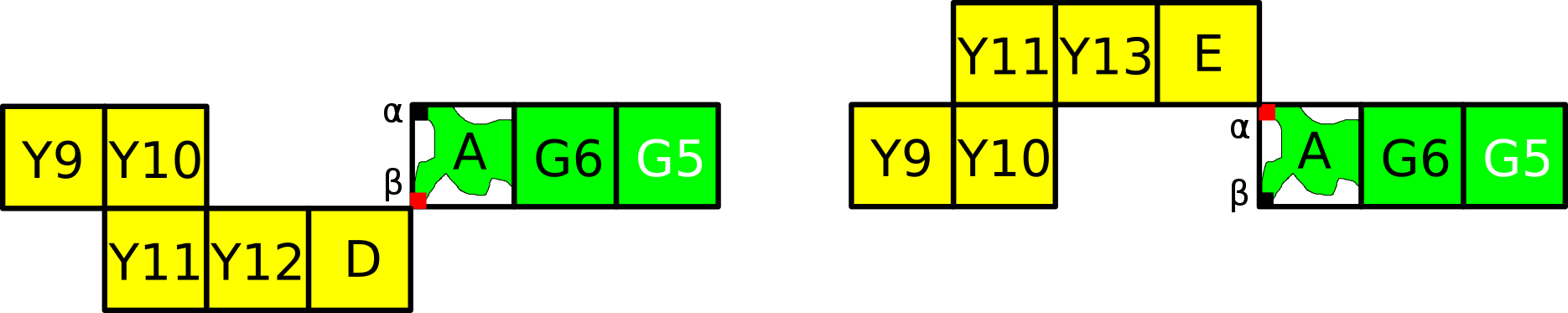}
\caption{Depiction of how untiled locations $\alpha$ or $\beta$ after macrotile $A$ resolves would leave a gap for diffusion of tiles.}
\label{fig:pTAM-notIU-pinch-points-intro}
\end{figure}

\begin{figure}[ht]
\centering
\includegraphics[width=4.0in]{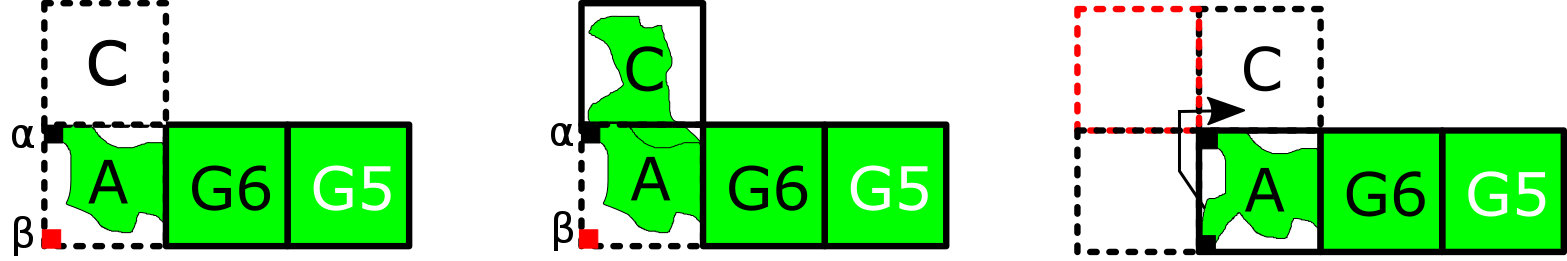}
\caption{(left) Depiction of an $A$ macrotile which has not yet resolved but has a tile in location $\alpha$ and not yet in $\beta$, (middle) If growth necessary to resolve the $C$ macrotile is possible without growing to the left of the $\alpha$ location, then it must be possible to grow and resolve that macrotile before $A$ resolves, (right) If growth necessary to resolve $C$ must grow to the left of the $\alpha$ location, then growth must violate the restriction on fuzz.}
\label{fig:pTAM-notIU-bad-D-growth-intro}
\end{figure}

\vspace{-15pt}
\section{The Directed Planar aTAM is not IU} \label{sec:DPaTAM-short}

Here we provide a sketch for Theorem~\ref{thm:DP2DaTAM-not-IU}. The full proof can be found in Section~\ref{sec:DPaTAM-appendix}.

\begin{theorem} \label{thm:DP2DaTAM-not-IU}
The directed Planar aTAM is not intrinsically universal.
\end{theorem}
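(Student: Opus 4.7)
The plan is to adapt the argument for Theorem~\ref{thm:PaTAM-not-IU} to the directed setting. We assume for contradiction that there is a tile set $U$ intrinsically universal for the directed Planar aTAM, and we exhibit a directed Planar aTAM system $\calT$ whose simulation by any directed system $\calUT$ using $U$ is impossible, contradicting the assumed universality.

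First I would construct $\calT$ by retaining the two parallel tall columns from the previous proof but determinizing the left column's arm selection so that both arm types still appear in the final assembly. Concretely, the left column encodes a local counter that schedules an up-pointing arm at some rungs and a down-pointing arm at others, while the right column always grows the same arm type. The resulting terminal assembly is unique, so $\calT$ is directed, yet it contains infinitely many instances of both left-arm meeting configurations. Because $U$ is finite, any simulator $\calUT$ must eventually reuse the same local macrotile state across two rungs of different arm type, a property detectable through a suitably placed window near the meeting-point macrotile.

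The core step is then an application of the Window Movie Lemma. We identify two heights $h_1$ and $h_2$ at which $\calT$ produces, respectively, an up-arm meeting and a down-arm meeting, and at which the window movies of $\calUT$ agree on all glue crossings. Splicing the two assembly sequences then yields a valid assembly sequence of $\calUT$ in which the meeting-point macrotile at height $h_1$ is grown as though a down-arm had arrived there (or vice versa). A case analysis essentially identical to the one illustrated by Figures~\ref{fig:pTAM-notIU-pinch-points-intro} and~\ref{fig:pTAM-notIU-bad-D-growth-intro} then yields a contradiction: the meeting macrotile either resolves into the wrong tile, so that the spliced terminal assembly differs from what $\calT$ requires under $R^*$ (breaking directedness of $\calUT$), or it leaves a diffusion path into the region that should become a constrained subspace, or it forces tiles to be placed outside the allowed fuzz region.

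The main obstacle will be ensuring that the spliced assembly sequence is genuinely realizable in a directed simulator despite the planar diffusion restriction. We must guarantee that tiles borrowed from the second sequence do not collide with or block existing tiles from the first, and that the window is placed so that the region whose growth is replaced by splicing has the same geometry at both chosen rungs. The regularity of the column construction makes this plausible, but verifying compatibility in detail, particularly in light of the fuzz constraint and the possibility that $\calUT$ tries to defer macrotile resolution until the ambiguity is physically resolved, is the most delicate part of the argument and likely requires careful coordination of the window placement with the macrotile scale factor of the simulation.
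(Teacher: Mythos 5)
Your proposal takes a genuinely different route from the paper. The paper abandons the two-column construction entirely and instead uses a small, temperature-1 directed loop system (Figure~\ref{fig:DPaTAM-not-IU-system-intro}): the seed sits at one corner of a rectangle, the loop can be closed clockwise or counter-clockwise, and the contradiction comes from analyzing the ``bottom sequence'' $P$ of south-most tiles in the $T_i$ macrotiles. By minimality and planarity, the tiles of $P$ must be placed in strictly increasing $x$-order along one orientation and strictly decreasing $x$-order along the other, so a hybrid assembly sequence that starts from one side and finishes from the other necessarily traps some $P_i$ inside a constrained subspace, violating directedness. No Window Movie Lemma is used at all.

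The key gap in your approach is that determinizing the arm schedule with a counter removes exactly the uncertainty that drove the pinch-point case analysis in Theorem~\ref{thm:PaTAM-not-IU}. That argument established that the $A$ macrotile must tile both corners $\alpha$ and $\beta$ before resolving \emph{because} it cannot know whether the yellow arm arriving alongside it will be a $\yellowup$ or $\yellowdown$ block. Once the arm type is a deterministic function of height, a candidate simulator can encode the counter state in the body of each column macrotile, ``know'' which arm type is coming, and tile only the needed corner. You try to re-introduce the uncertainty via the Window Movie Lemma, by finding two heights $h_1, h_2$ with identical window movies but different arm types in $\calT$, but this step is not justified: if the simulator carries the counter state across any cut through the column, then identical window movies force identical counter states and hence identical arm types, and no useful splice exists. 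Forcing the counter's state space to exceed the window-movie alphabet would require a careful argument that the period $P$ outstrips $p' = (g+1)^{O(m)}\cdot(O(m))!$, but $m$ itself depends on $|T|$, which must grow to support the counter, and a hypothetical universal simulator is free to use any computable scale function of $\calT$. This circular dependency is precisely the kind of difficulty the paper's loop-and-ordering argument avoids, and your sketch neither resolves nor flags it.
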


We prove Theorem~\ref{thm:DP2DaTAM-not-IU} by contradiction. Therefore, assume that the class of directed systems in the PaTAM is IU, and that $U$ is a universal tile set capable of simulating the entire class. We describe a temperature-1 directed PaTAM system $\calT$ that is impossible for $U$ to simulate. For a visual reference of the terminal assembly of $\calT$, refer to Figure~\ref{fig:DPaTAM-not-IU-system-intro}.

\begin{figure}[h]
\centering
\includegraphics[width=2in]{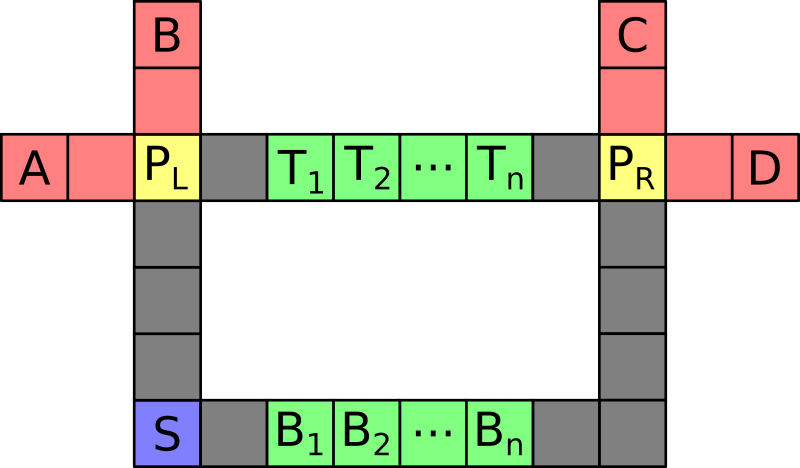}
\hspace{1cm}
\includegraphics[width=2in]{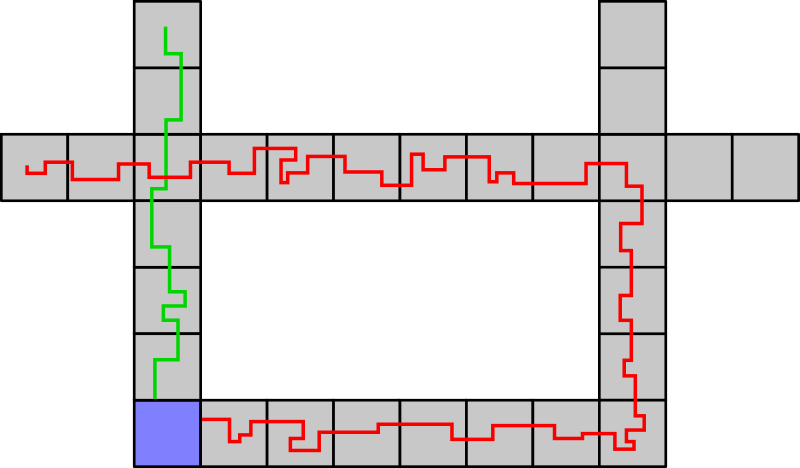}
\caption{An illustration of the directed system $\calT$ which cannot be simulated in a directed manner by a universal tileset in the PaTAM and an illustration of what two growth paths in the simulation would look like were it possible.}
\label{fig:DPaTAM-not-IU-system-intro}
\vspace{-10pt}
\end{figure}

$\calT$ starts with a single seed tile (labelled $S$ in the illustration) and we pick $n$ to be equal to $|U|$. This means that the tiles labelled $T_1, \ldots, T_n$ and $B_1, \ldots, B_n$ grow into rows with as many tiles as there are in the IU tileset $U$. All glues among tiles in $\calT$ are $\tau$-strength and thus there are many possible assembly sequences, all yielding the same terminal assembly. To show that this system cannot be simulated by any system using $U$, we let $\calUT$ be any arbitrary such system and consider two assembly sequences in $\calUT$ which grow the terminal assembly: a clockwise one in which the macrotiles $T_1, \ldots, T_n$ resolve from left to right and a counter-clockwise one where they resolve from right to left. The red tiles protruding from the assembly ending in the tiles $A$, $B$, $C$, and $D$ ensure that contiguous paths of tiles growing in opposite directions ending in these protrusions must constrain the subspace within the assembly (illustrated in Figure \ref{fig:DPaTAM-not-IU-system-intro}).

Then we consider the set $P$ of tiles in the terminal assembly of $\calUT$ which consists of exactly the tiles with the smallest $y$ coordinate in each column within the macrotiles $T_1, \ldots, T_n$ (i.e. the south-most tile for each given $x$ coordinate in these macrotiles). We then show that it must be the case that, if these macrotiles resolve from left to right, the order of attachment of tiles in $P$ must also be from left to right, otherwise there must necessarily exist a tile in $P$ who's attachment can occur inside a potentially constrained subspace. The same can be shown, using a symmetric argument, for an assembly sequence where $T_1, \ldots, T_n$ resolve from right to left. Using this, we then perform a case analysis to consider assembly sequences in which tiles are attached from both directions, meeting at a single tile in $P$. This tile in $P$ will necessarily be inside of a constrained subspace (Figure \ref{fig:DPaTAM-not-IU-contra-intro} illustrates the idea used to prove both of these claims). If a tile can grow inside of a potentially constrained subspace, under certain assembly sequences, it would be impossible for that tile to attach in all. This would result in multiple terminal assemblies which contradicts the assumption that $\calUT$ is a directed simulator.

\begin{figure}[h]
\centering
\includegraphics[height=1in]{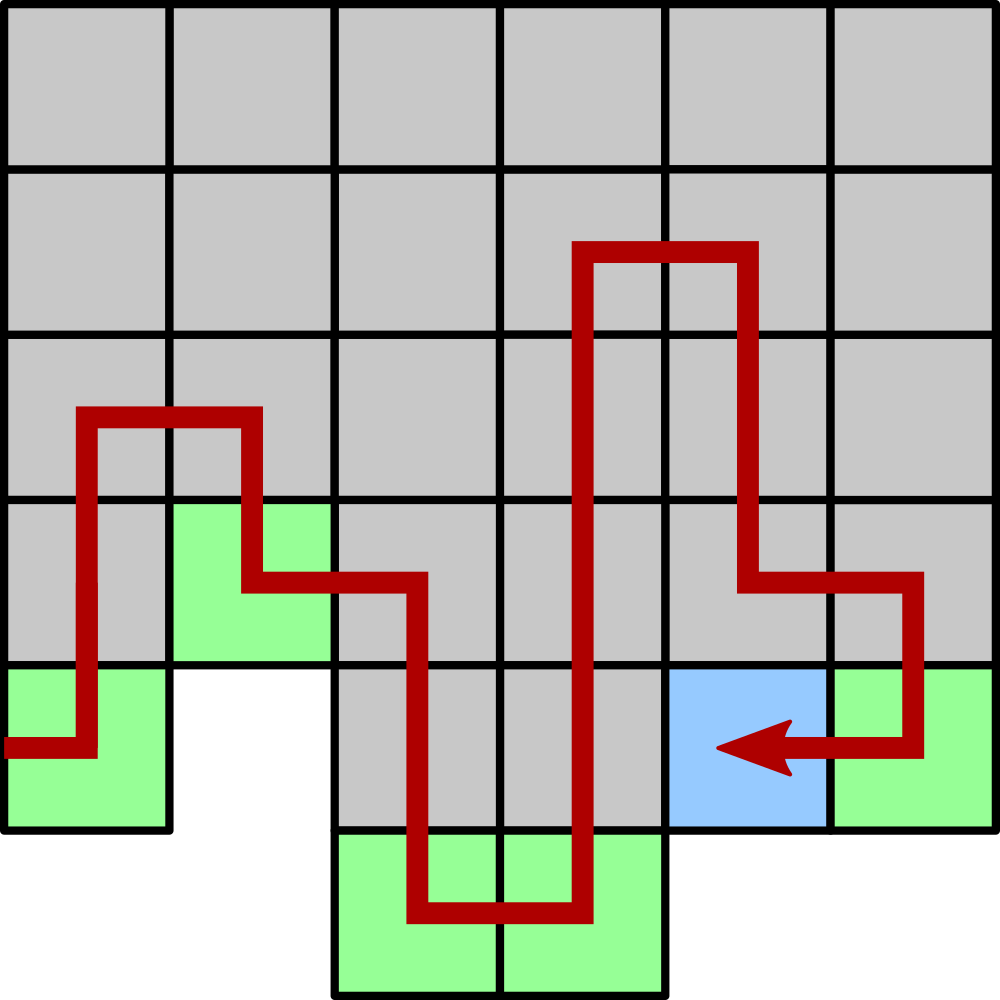}
\hspace{1cm}
\includegraphics[height=1in]{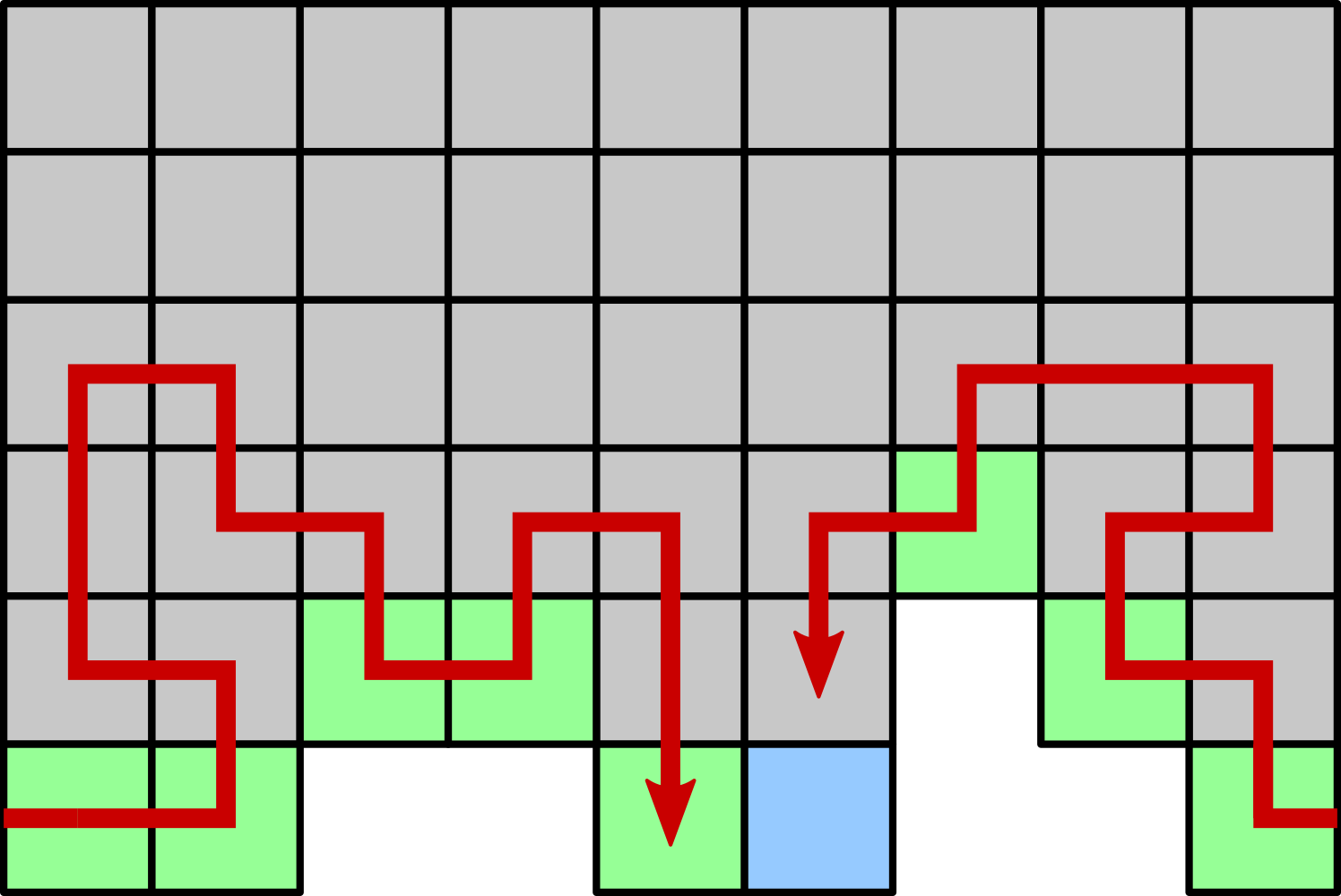}
\caption{(left) If one of the tiles in $P$ can be attached after both of its neighbors, that tile (illustrated in blue) must be able to attach within a constrained subspace. (right) It's possible to follow an assembly sequence which places tiles in $P$ such that one of them is trapped in a constrained subspace.}
\label{fig:DPaTAM-not-IU-contra-intro}
\vspace{-10pt}
\end{figure}

\section{The 3D aTAM is IU} \label{sec:construction-short}

\begin{theorem} \label{thm:3DaTAMIU}
The 3D aTAM is intrinsically universal.
\end{theorem}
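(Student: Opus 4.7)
The plan is to exhibit a single universal tile set $U$ and a fixed temperature $\tau_U$, together with an algorithm that, given any 3D aTAM system $\calT = (T,\sigma,\tau)$, produces a seed assembly $\sigmaT$ over $U$ such that $\calUT = (U,\sigmaT,\tau_U)$ simulates $\calT$ at some scale factor $c = c(\calT)$. The argument will have three parts: (a) design of a macrotile gadget that faithfully performs one generic tile-attachment step of $T$, (b) a recipe that stitches copies of this gadget into a seed encoding both $\sigma$ and the tile table of $T$, and (c) a verification of the simulation relation defined in Section~\ref{sec:simulation_def}.

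Each macrotile will be a $c\times c\times c$ cube whose six faces serve as communication channels with its six neighboring macrotiles. When a neighboring macrotile resolves to some $t\in T$, it writes the glue label and strength of the abutting side of $t$ onto the shared face. The interior of a macrotile is then organized along disjoint coordinate axes into several subgadgets, whose names appear in the preamble macro list: an \externalCommunication\ layer reads the input faces and funnels the observed labels along one axis into the \adderArray, which uses \adderUnit s, \componentAdder s, and \partialAdder s to identify subsets of incoming strengths summing to at least $\tau$; the result is then issued as a \query\ against a \genome, a copy of the full tile table of $T$ laid out in the interior of the macrotile by counters propagated from the seed; \bracket\ and \bracketBlocker\ tiles enforce that exactly one tile type in the genome is confirmed as a match and that non-matching rows emit no usable signal; and the output glues of the matched tile are finally routed back to the free faces by a \criticalOrientation\ subgadget. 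The essential advantage of 3D is that three independent coordinate directions give enough room to pipeline input-collection, table lookup, and output-dispatch along non-crossing layers, avoiding the signal-crossing obstructions that forced nondeterminism in 2D.

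The seed-construction algorithm proceeds as follows. Given $\calT$, first pick $c$ large enough for the adder, genome, and dispatch machinery for $|T|$ tile types and temperature $\tau$ to fit inside a single $c\times c\times c$ cube. Then lay down, for each position $p\in\mathrm{dom}(\sigma)$, a macrotile-sized block that hard-codes the output glues of $\sigma(p)$ on its faces, and attach an initialization track that streams the encoded tile table of $T$ into the genome region of every macrotile that will ever grow. Correctness is then a local verification: I would check that the composite gadget resolves a macrotile to some $t\in T$ if and only if $\calT$'s rules allow the attachment of $t$ at the corresponding position given the glues encoded on the macrotile's input faces. A standard commuting-diagram argument on assembly sequences then yields that $\calUT$ simulates $\calT$ under the representation function $R$ mapping each fully resolved macrotile to its matched tile type. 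The fuzz condition is enforced by designing the macrotile so that no signal escapes a free face until the match has been confirmed and the corresponding output glue committed.

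The main obstacle I expect is ensuring that the genome query is unambiguous. The \adderArray\ can produce many candidate subsets of incoming glues whose strengths already sum to at least $\tau$, and the genome can contain many tile types, so the matching discipline must force at most one tile's output glues ever to be written. My intended design serializes the query into a single canonical orientation of the input word, has each row of the genome emit a local accept signal built from \bracket\ tiles that is killed by \bracketBlocker\ tiles as soon as any discrepancy with the canonical word is detected, and uses a tournament along the genome's \backbone\ to select exactly one accepting row. Getting this single-writer discipline precisely right, so that it also respects the fuzz restriction and, moreover, enables Theorem~\ref{thm:directed3DIU}, is where the bulk of the design effort will lie; the remainder of the correctness proof is a careful but essentially routine local verification.
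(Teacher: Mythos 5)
Your proposal follows essentially the same approach as the paper: it proposes the same decomposition into $\genome$, $\adderArray$ (with $\adderUnit$/$\componentAdder$/$\partialAdder$), $\bracket$, $\bracketBlocker$, and $\externalCommunication$ modules, the same use of a seed-encoded tile table propagated to each macrotile, and the same tournament mechanism in the $\bracket$ to resolve multiple candidate tile types, with correctness established by local verification and an induction over assembly sequences. The only slight deviations are cosmetic: in the paper the $\criticalOrientation$ is the designated face of the inner $\genome$ band where initialization and queries occur (not an output-routing gadget), and the $\genome$ is propagated between macrotiles by band growth rather than a separate streaming track, but neither discrepancy affects the soundness of the plan.
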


To prove Theorem~\ref{thm:3DaTAMIU}, we show that there exist functions  $\mathcal{R}$ and $S$ (to generate representation functions and seed assemblies) and some tile set $U$ such that, for each $\mathcal{T} = (T,\sigma,\tau)$ which is a TAS in the 3D aTAM, there is a constant $m \in \mathbb{N}$ such that, letting $R = \mathcal{R}(\mathcal{T})$, $\sigma_\mathcal{T} = S(\mathcal{T})$, and $\calUT = (U,\sigma_\mathcal{T},\tau')$, $\calUT$ simulates $\mathcal{T}$ at scale $m$ using macrotile representation function $R$.  To do so, we will set $\tau' = 2$ (i.e. the simulations by $U$ will all be at temperature $=2$), and we will explicitly define $U$ and give the algorithms which implement $\mathcal{R}$ and $S$. The scale factor $m$ of the simulation will be $O(|T|^2\log(|T|\tau))$.

In this section, we provide a high-level overview of the components in the construction and how they are combined to create $\calUT$ which simulates arbitrary 3D aTAM system $\mathcal{T}$. More thorough descriptions, proofs of correctness, and low-level details are in Sections~\ref{sec:construction}, \ref{sec:thm1-proof}, and \ref{sec:low-level}, respectively.

The main concepts of our construction can be broken down into four \emph{modules} or functional subassemblies: the $\genome$, the $\adderArray$, the $\bracket$, and the $\externalCommunication$. We provide brief descriptions of the functions of each here.

\begin{itemize}
\itemsep0em
\item $\genome$ : This module contains an encoding of the system to be simulated, $\calT$, in the form of a look-up table that takes as input the tile type / direction pair of a neighboring macrotile and outputs every potential tile type that could form a bond with that neighbor and the strength of that bond. The $\genome$ also contains instructions to build the other modules listed here.
\item $\adderArray$ : This module is responsible for determining, for each tile type $t \in T$, if there are enough glues incident on the current macrotile for it to begin to represent a tile of type $t$ under $R$. It does this by adding up the bond strengths (from the $\genome$) with which tile $t$ could attach in the current location and making sure the total is sufficient for attachment.
\item $\bracket$ : Once the $\adderArray$ determines the tile types into which the macrotile could resolve, this module picks one tile type non-deterministically (if there is a choice).
\item $\externalCommunication$ : This module carries an encoding of the decided upon tile type (as output from the $\bracket$) from the current macrotile to all neighboring macrotiles.
\end{itemize}

Now, we will describe the process by which one macrotile block $L$ goes from empty space to fully grown. We call the transition of a macrotile from mapping to empty space in $\calT$ to mapping to a tile in $\calT$ \emph{differentiation}.

\begin{enumerate}
\itemsep0em
\item Once a neighboring macrotile location has differentiated, it exports a copy of the $\genome$ and its $\externalCommunication$ to macrotile $L$.
\item The $\genome$ propagates around $L$ and initiates growth of the other three modules.
\item The incoming $\externalCommunication$ modules from differentiated neighbors grow into the $\genome$ to query for whether or not their glues could contribute to the differentiation of $L$.
\item The information from the previous step is sent to the $\adderArray$ to determine if enough glues are present to allow simulation of the attachment of specific tile types.
\item Encodings of potential tile types enter the $\bracket$ where one is non-deterministically chosen.
\item The winning tile type leaves the $\bracket$ and grows into the $\externalCommunication$.
\item The $\genome$ and $\externalCommunication$ modules are propagated to neighboring macrotiles.
\end{enumerate}

Essentially, a macrotile block $L$ in the terminal assembly of $\calUT$ (representing a tile location $l$ in the terminal assembly of $\calT$) can be in one of three states. If $l$ is not adjacent to any tile in the terminal assembly, $L$ will be completely empty. If $l$ is adjacent to a tile but does not have enough incident glues to be a frontier location, $L$ will have all four modules set up but no tile types will be output from the $\adderArray$ to the $\bracket$. Finally, if $l$ represents a tile, $L$ will have all four modules set up and an encoding of that tile type will have left the $\adderArray$, made it through the $\bracket$, and be outputted to neighboring macrotile locations by the $\externalCommunication$.

In addition to this growth paradigm, our construction needs a  representation function $R$ and seed function $S$. $R$ works by using the scale factor of the simulation to determine where the output of the $\bracket$ will be in each macrotile block. Once this set of relative tile positions within each block is filled, $R$ reads the encoding within the individual tiles and outputs the corresponding tile type from $T$. To obtain the seed, the simulated system $\calT$ is input and a corresponding $\genome$ is created. This $\genome$ is placed in the macrotile locations that map to tile locations filled by the seed in $\calT$. Additionally, a hard-coded $\bracket$ output is included in each seed macrotile to ensure that the seed of $\calUT$ maps under $R^*$ to the seed of $\calT$. Once the simulation begins, this seed is able to start propagating the $\genome$ and $\externalCommunication$ modules to neighboring macrotile locations of the seed to start the process of differentiation for those locations and all further growth.

Recall that this construction is implemented on the individual tile level. Section~\ref{sec:implementation} contains the URL's for the universal tile set, seed generation scripts, and optimized PyTAS simulator.

\section{The Directed 3D aTAM is IU} \label{sec:directed3d-short}

In this section, we show that the directed subset of 3D aTAM systems is itself IU, since the tile set and simulation we constructed for the proof of Theorem~\ref{thm:3DaTAMIU} was carefully designed so that, whenever a directed system is simulated, the simulating system is also directed. Recall that a directed system has only a single terminal assembly. This means that if location $l \in \mathbb{Z}^3$ is mapped to a tile of type $t$ in one assembly sequence, in every other valid assembly sequence, location $l$ is also mapped (eventually) to a tile of type $t$.

\begin{theorem} \label{thm:directed3DIU}
The directed 3D aTAM is intrinsically universal.
\end{theorem}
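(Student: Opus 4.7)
The plan is to reuse the universal tile set $U$, representation-function generator $\mathcal{R}$, and seed generator $S$ from the proof of Theorem~\ref{thm:3DaTAMIU} without modification, and argue that whenever the input system $\mathcal{T}$ is directed the resulting simulator $\calUT$ is also directed. Correctness of the simulation itself is already in hand from Theorem~\ref{thm:3DaTAMIU}; the remaining task is to show that $|\termasm{\calUT}| = 1$ whenever $|\termasm{\mathcal{T}}| = 1$. My first step would be to catalogue the sources of nondeterminism in $\calUT$. By inspection of the four modules, the $\genome$ is a static lookup table that is propagated deterministically from each seed macrotile; the $\adderArray$ performs a fixed arithmetic summation over the glue-strength data supplied by currently-differentiated neighbors and forwards a tile type $t \in T$ to the next stage exactly when the sum meets $\tau$; and the $\externalCommunication$ simply copies a tile-type encoding outward. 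Appealing to the explicit tile-level specification given in Section~\ref{sec:low-level}, each of these three modules has unique internal growth given fixed external inputs. This pins the only potential source of divergence between assembly sequences on the $\bracket$, which by its description is nondeterministic only when the $\adderArray$ feeds it more than one candidate tile type.

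The second step is to prove, by induction on the assembly sequence of $\calUT$, that when $\mathcal{T}$ is directed the $\adderArray$ of every macrotile $L$ ever forwards at most one candidate to its $\bracket$. The invariant to maintain is that, at every intermediate state of the sequence, the set of macrotiles that have already passed through their $\bracket$ maps under $R^*$ to some producible $\alpha \in \prodasm{\mathcal{T}}$. Suppose $L$ represents location $l \in \mathbb{Z}^3$ and that at some moment its $\adderArray$ forwards a tile type $t$; by construction this means $t$ is $\tau$-stably attachable at $l$ in the current $\alpha$. Since $\mathcal{T}$ is directed, $\alpha$ has a unique terminal extension $\alpha^* \in \termasm{\mathcal{T}}$, and any tile type $\tau$-stably attachable at $l$ in $\alpha$ must equal $\alpha^*(l)$: otherwise two distinct single-tile extensions of $\alpha$ would place different tiles at $l$, and no pair of terminal extensions of these could coincide, contradicting $|\termasm{\mathcal{T}}| = 1$. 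Consequently every candidate the $\adderArray$ of $L$ ever forwards equals the single tile type $\alpha^*(l)$, and the $\bracket$ has nothing to choose.

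The third step closes the induction. With the $\bracket$ output of $L$ pinned to $\alpha^*(l)$, the $\externalCommunication$ of $L$ propagates only that encoding to each neighbor, so the external inputs to each neighboring macrotile are themselves uniquely determined; combined with the internal determinism of each module given fixed inputs, this forces every macrotile to have a unique terminal configuration. Hence $\calUT$ has exactly one producible terminal assembly, proving directedness. Wrapping up, Theorem~\ref{thm:3DaTAMIU} supplies the valid $R$-simulation, and the induction above supplies directedness, so $U$ is an IU tile set for the directed 3D aTAM.

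The main obstacle I anticipate is the bookkeeping behind the internal-determinism claim in the first step. The four modules are composed of many tile-level subcomponents --- pipelines, counters, wire-crossings, the walk of the $\genome$ around the cube, and so on --- and one must verify that none of them has a designed-in nondeterministic branch that survives past a fixed external input. This is a matter of auditing the explicit construction rather than a conceptual hurdle, but it is where all of the real work lives; by contrast, once internal determinism is granted, the directedness argument itself is a short syntactic consequence of the definition of $|\termasm{\mathcal{T}}|=1$ together with the way the $\adderArray$ is wired into the $\bracket$.
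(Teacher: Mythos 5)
There is a genuine gap in your second step. The pivotal claim is ``any tile type $\tau$-stably attachable at $l$ in $\alpha$ must equal $\alpha^*(l)$,'' and the induction invariant places $l$ in $\dom\alpha$ once $L$ has passed its $\bracket$. But the $\adderArray$ of $L$ keeps receiving inputs \emph{after} $L$ differentiates, because neighbors that only attach later (often \emph{because} $L$ differentiated and exported to them) will send their own $\externalCommunication$ datapaths back to $L$. Lemma~\ref{lem:correct_adder} says the $\adderArray$ outputs $t$ exactly when $t$ could attach at $l$ ``regardless of whether $l$ already contains a tile or not.'' Directedness does not exclude the situation where, after $\alpha^*(l)$ is placed, subsequently placed neighbors expose $\ge\tau$ worth of glues toward $l$ matching a \emph{different} tile type $t'$ --- this never produces undirectedness in $\calT$ because $\alpha^*(l)$ is already there and blocks $t'$, but in $\calUT$ the $\adderUnit$ for $t'$ will still succeed and send a guide rail to the $\bracket$. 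This is exactly the scenario in Figure~\ref{fig:bracket-blocker-scenario}, and it is why the construction includes the $\bracketBlocker$ mechanism (Section~\ref{sec:bracket-blocker}): after a candidate wins, its datapath seals off all unused $\bracket$ input lanes \emph{before} $L$ outputs to its neighbors, so late-arriving guide rails can never race nondeterministically inside the $\bracket$. Your proposed proof has no substitute for this mechanism and, as written, would imply the $\bracketBlocker$ is superfluous, which it is not.

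A second, lesser issue is in step one: ``unique internal growth given fixed external inputs'' conflates a fixed-input computation with a genuine race between multiple concurrently-growing sources into shared locations. The $\genome$ bands can be initiated simultaneously from up to six neighbors, and within a single $\adderUnit$ several $\componentAdder$s can independently succeed and launch success-signal growth. The paper spends real effort --- circular latches (Section~\ref{sec:collision-tolerance}) and the bidirectional, generic success-signal design (Figure~\ref{fig:Component_Adder_Success_Signal}) --- to make these meeting regions yield identical tile placements irrespective of arrival order, and your audit needs to surface and resolve exactly these collision points rather than assert internal determinism wholesale. The paper's own proof of Lemma~\ref{lem:directed-simulation} is organized around precisely these three race conditions (genome, adder success signal, bracket), so the structure of what must be checked matches your instinct; the missing piece is recognizing that the third one survives the directedness hypothesis and requires the blocking mechanism rather than being argued away.
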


Due to space constraints, we simply provide an overview of the scenarios which needed to be analyzed to show our construction remains directed when simulating directed systems. The full details of the proof can be found in Section~\ref{sec:thm2-proof}. For our analysis, we consider the types of nondeterminism which can arise in the 3D aTAM and show that none of them will cause nondeterminism in the form of undirectedness when $\calUT$ is simulating $\calT$. There are three essential types of nondeterminism in the 3D aTAM: (1) the random selection of one frontier location, out of possibly many, for a tile attachment in each step of the assembly process, (2) locations where one or more incident glues match those of multiple tile types with enough strength to allow any of them to bind, and (3) locations which can receive tiles of different types depending on which adjacent positions are tiled first (i.e. nondeterminism caused by the relative timing of growth of different portions of the assembly). The first type of nondeterminism does not cause a system to be undirected, as long as the ordering of tile additions does not lead to one of the other types of nondeterminism. In addition, the second type of nondeterminism is avoided in our construction by careful design of the tile types, such that (other than a few specific special cases described in the proof) they are all designed with distinct input and output sides (where input sides are used for the initial attachment of a tile and output sides are used to allow other tiles to bind afterward) and no two tile types have the same sets of input glues. Furthermore, backward growth, which could allow tiles to attach using their output glues, is avoided using ``key and latch'' techniques (see Section~\ref{sec:growth-patterns} for technical details). That leaves the final type of nondeterminism, which is based off of ``race conditions'' between the growth of different portions of the assembly, as the only type of nondeterminism to be analyzed.

Consider the case where $\calT = (T,\sigma,\tau)$ and $T$ has just a single tile type where all $6$ sides of that tile type have the same glue which is of strength $\tau$. Let $\sigma$ consist of just a single instance of that tile type placed at $(0,0,0)$. $\calT$ is directed, with a single terminal assembly which is the infinite, complete tiling of $\mathbb{Z}^3$ with tiles of that single type. However, this system has an uncountably infinite number of valid assembly sequences. If directedness in the simulator $\calUT$ is to be maintained, it is necessarily the case that the terminal assembly of this system must appear identical in the situation where every macrotile had its growth initiated by each of its neighbors but also initiated the growth of each neighbor, since for each scenario there exists a valid assembly sequence in $\alpha$ which matches that ordering, and $\calUT$, by condition of being directed, is only allowed a single terminal assembly. It is for this reason that the bands of the $\genome$ were designed so that they merge seamlessly at input and output intersections and form a connected structure that makes it impossible to determine the ordering of their growth into macrotile locations. Additionally, every macrotile which differentiates sends its output $\externalCommunication$ datapaths to all $6$ neighboring macrotiles, which will all accept them and grow through $\genome$ queries and the $\adderArray$ as though they were the first inputs and will also seamlessly merge outputs of multiple pieces within the $\adderArray$ such that it impossible to tell which subset of glues caused a specific tile type to be output. (Since we are only concerned with the simulation of directed systems, other than a specific case discussed in the proof, only one tile type will ever be able to output to the $\bracket$.) The full proof gives a description of how the modules of our construction are designed to maintain directedness in spite of nondeterministic rates of growth of the components, and thus why $U$ is IU for the class of directed 3D aTAM systems, which is therefore IU itself.

\section{The Spatial aTAM is IU} \label{sec:spatial-atam}

Here we describe our construction that proves the Spatial aTAM is IU. The full proof is in Section~\ref{sec:spatial_tech_details}.

\begin{theorem} \label{thm:spatial-IU}
The Spatial aTAM is intrinsically universal.
\end{theorem}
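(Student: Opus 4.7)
The plan is to build on the 3D aTAM IU construction of Theorem~\ref{thm:3DaTAMIU}. At a high level, I show that essentially the same modular architecture (the $\genome$, $\adderArray$, $\bracket$, and $\externalCommunication$) can be reused, with a careful redesign of internal growth patterns so that (i) every tile attachment of $\calUT$ has an open diffusion path in $\mathbb{Z}^3$, making $\calUT$ itself a valid Spatial aTAM system, and (ii) the spatial constraints present in $\mathcal{T}$ are correctly reflected at the macrotile scale. Because this theorem does not concern directed systems (indeed, the authors conjecture the directed Spatial aTAM is not IU), I may freely introduce nondeterminism in growth patterns as needed to respect diffusion constraints.

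First, I would audit each module of the 3D construction and verify that no sub-assembly ever encloses an empty interior region that must still be filled. Where necessary, each strand or band of a module is redesigned to grow as a single wave from one end to the other, with the tile on the outer boundary placed last, so that every interior tile retains a diffusion path through the not-yet-grown portion. When a module requires multiple substructures to approach from different directions (for instance the $\adderArray$ merging outputs, or $\externalCommunication$ datapaths emanating to all six neighbors), I would offset the substructures slightly in the third dimension so they never close a loop before meeting. This is precisely where the Spatial case diverges from the Planar case: the extra dimension permits routing around obstacles that would have been fatal in 2D.

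Second, I would ensure that the outer faces of a fully resolved macrotile block form a tight barrier to diffusion, meaning that no $\mathbb{Z}^3$ channel of unit width remains on the shared face between two fully grown neighbors. Combined with a suitable choice of scale factor $m$, this forces the macrotile-level geometry to mirror the tile-level geometry of $\mathcal{T}$: whenever the six macrotile-neighbors of an empty macrotile location $L$ are all resolved, the $\mathbb{Z}^3$ interior of $L$ is spatially constrained, matching the spatial constraint on the corresponding simulated tile location. Conversely, whenever a diffusion path to a tile location $l$ exists in $\mathcal{T}$, a diffusion path in $\mathbb{Z}^3$ exists into the corresponding macrotile block through the unresolved or missing neighbors of that block.

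The main obstacle will be establishing that this correspondence holds across \emph{all} valid assembly sequences, since the timing of macrotile differentiation in $\calUT$ can, in principle, trap empty regions at different moments than in $\mathcal{T}$. To handle this, I would prove a simulation invariant: at every step of every valid assembly sequence of $\calUT$, the current assembly is $R^*$-consistent with a producible assembly of $\mathcal{T}$, the set of macrotile locations into which future growth in $\calUT$ is diffusion-reachable equals the $R$-preimage of the set of tile locations into which growth is diffusion-reachable in the matching $\mathcal{T}$-assembly, and every tile whose placement is scheduled by some module is itself diffusion-reachable within its own macrotile interior. The crux is the second clause, since it is where the geometric scaling of the simulation must be shown to match the spatial constraint of the simulated system; the tight outer faces and the carefully ordered internal growth from the first two steps are precisely what allow this invariant to be inductively preserved by each single-tile attachment.
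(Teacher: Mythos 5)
Your high-level reading of the problem is right — the 3D IU construction must be augmented so that (i) $\calUT$ is itself a valid Spatial system and (ii) a macrotile blocks diffusion exactly when its simulated tile would — but your proposal is missing the mechanism that actually makes (ii) achievable, and the invariant you state in your third step cannot be inductively preserved without it. The difficulty is synchronization: under the representation function, a macrotile ``resolves'' at the step a particular tile is placed, and for the invariant to hold that same step must be the one that severs all diffusion paths through the macrotile. Your plan puts a tight shell on the \emph{outer faces} of a resolved macrotile, but there is no single boundary tile whose placement simultaneously seals all six faces: while the shell is being tiled, the macrotile is partially sealed and the correspondence between its $R$-image (still empty) and its diffusion behavior (partially blocking) breaks. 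Nor does a ``single-wave'' growth order or ``a suitable choice of scale factor $m$'' supply the missing ingredient — scale does not control the relative timing of the shell completion and the $\bracket$ output, and the final shell tile lives on one face, not at a point controlling all six.

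The paper's actual proof introduces a specific gadget that resolves exactly this: a \emph{pipe intersection} at the macrotile's center, connected by one-tile-wide pipes to each of the six faces. The boundary shell is tiled \emph{first} (around the datapath I/O and around the open ends of the pipes), leaving only the pipes as through-channels, and then a single tile is placed at the pipe-intersection center. That one placement blocks all six pipes at once, and $R$ is redefined to make that tile's presence the criterion for the macrotile representing a non-empty tile, so sealing and differentiation are literally the same assembly step. This is what underlies the paper's Lemma~\ref{lem:diffusion_vs_differentiated} (diffusion through $L$ iff $L$ has not differentiated) and Lemma~\ref{lem:scaled_spatial} (chaining such paths gives diffusion reachability matching $\calT$). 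Without an analogous point of simultaneous closure — and without adjusting $R$ to key off it — the equivalence your invariant asserts simply does not hold for the intermediate assemblies, so the inductive step fails.
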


This construction is an augmentation of the construction used to prove Theorem~\ref{thm:3DaTAMIU}. The problem with using the original construction is that it is able to grow and differentiate new macrotiles within locations that map to constrained subspaces (i.e. subspaces which are completely sealed~off by the tiles of the assembly). To prevent this, we supplement the original construction to use a \emph{blocking protocol} that will force tiles to attach around the boundary of a macrotile which hasn't yet differentiated, but still allow diffusion through a series of one-tile-wide pipes until differentiation happens.

The centerpiece of this augmented construction is a structure that we'll subsequently refer to as the \emph{pipe intersection}. Shown in Figure~\ref{fig:meeting_point}, this structure helps scale the spatial constraint by tying the six paths that connect through it to the six faces of the macrotile. Therefore, by adding the central tile location as an input to the representation function $R$, we can have the macrotile differentiate in the exact same assembly step that the diffusion paths between all six neighbors are cut off. In other words, placing a tile in the middle of the pipe intersection (1) blocks any diffusion between the neighboring macrotiles and (2) causes the current macrotile to differentiate. To implement this new blocking protocol, instead of performing step 7 in the growth sequence of the original construction, we now perform the following sequence of steps after step 6:

\begin{enumerate}
\itemsep0em
\item $\externalCommunication$ and $\genome$ modules grow to boundaries of macrotile and pause.
\item Pipes are seeded at the pipe intersection and grow out to all boundaries (except the top).
\item Boundaries are tiled, starting from the bottom boundary and growing in a spiral around side boundaries, growing around I/O datapaths and the ends of the pipes.
\item The pipe intersection is filled from above and the macrotile officially differentiates.
\item Tiles diffuse into the pipes from the outside, grow through the pipes and activate the $\genome$ and $\externalCommunication$ to continue growing into neighboring macrotiles.
\end{enumerate}

\begin{wrapfigure}{r}{2.0in}
\centering
\includegraphics[width=2.0in]{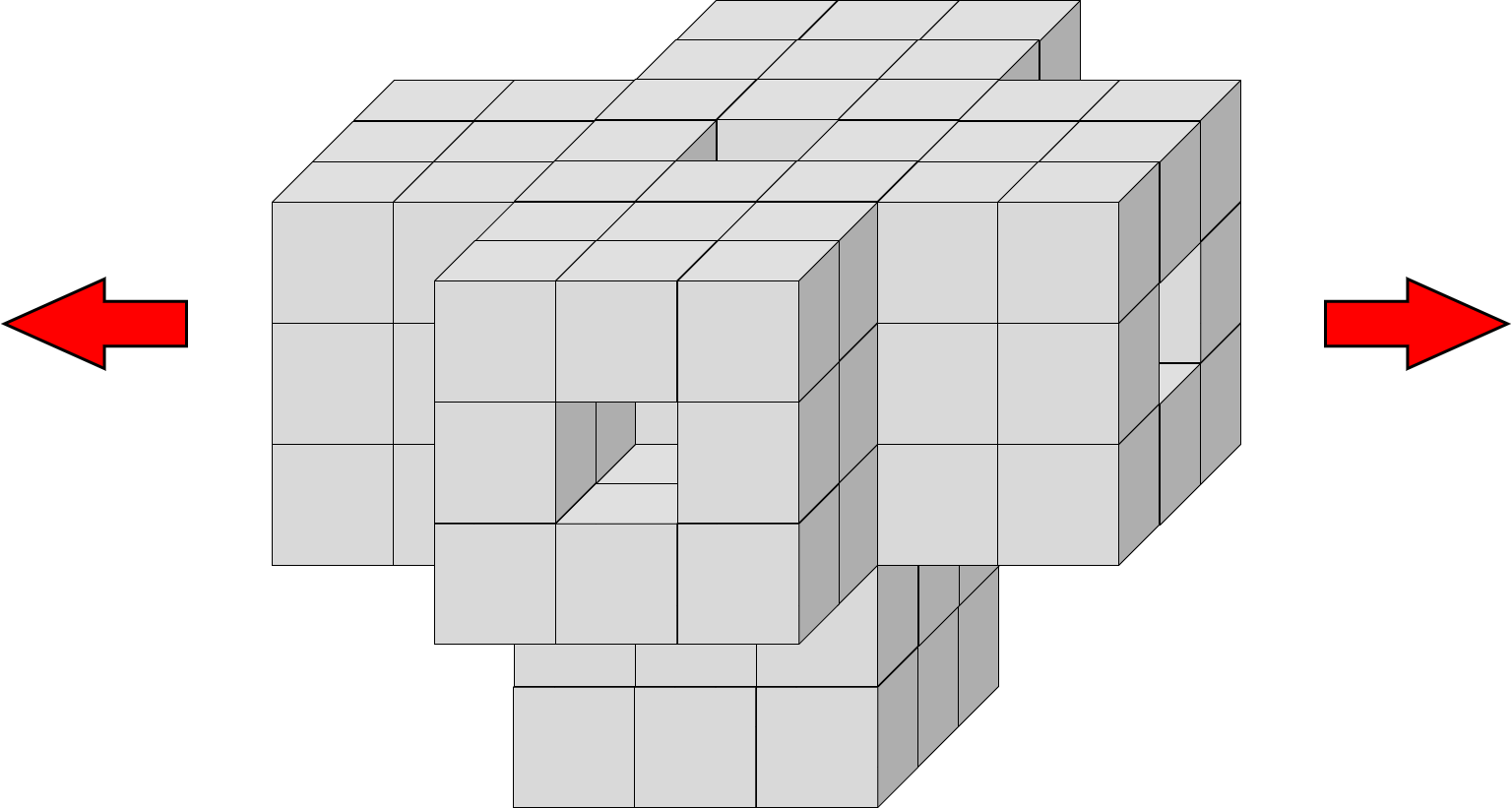}
\caption{The pipe intersection. This structure is important in the simulation of the spatial constraint because it allows multiple paths to be cut off simultaneously when the macrotile differentiates.}
\label{fig:meeting_point}
\end{wrapfigure}

From here, we can prove that diffusion paths through a macrotile (from one side to another) exist only when the macrotile maps to empty space under $R$. Using this, we can then prove that paths through non-differentiated macrotiles can be strung together to make a diffusion path in $\calUT$ to macrotile locations that map to unconstrained space in $\calT$. The full proof for both of these lemmas can be found in Section~\ref{sec:spatial_tech_details}.
Utilizing the dynamics of the original construction with the addition of the blocking protocol to simulate the spatial constraint, this system is capable of simulating any Spatial aTAM system. With the addition of a slightly augmented seed generation function $S$ and representation function $\mathcal{R}$, this construction provides as an intrinsically universal tile set for the Spatial aTAM, thereby proving Theorem~\ref{thm:spatial-IU}.

\section{Formal description of the abstract Tile Assembly Model} \label{sec-tam-formal}

This section gives a formal definition of the abstract Tile Assembly Model (aTAM)~\cite{Winf98}. For readers unfamiliar with the aTAM,~\cite{RotWin00} gives an excellent introduction to the model.
For purposes of notational convenience, throughout this paper we will use the term ``aTAM'' will refer to the 2D aTAM.

Fix an alphabet $\Sigma$.
$\Sigma^*$ is the set of finite strings over $\Sigma$.
$\Z$, $\Z^+$, and $\N$ denote the set of integers, positive integers, and nonnegative integers, respectively.
Let $d \in \{2,3\}$.
Given $V \subseteq \Z^d$, the \emph{full grid graph} of $V$ is the undirected graph $\fullgridgraph_V=(V,E)$, %where $V=A$,
and for all $\vec{x} = \left(x_0, \ldots, x_{d-1}\right), \vec{y} = \left(y_0, \ldots, y_{d-1}\right)\in V$, $\left\{\vec{x},\vec{y}\right\} \in E \iff \| \vec{x} - \vec{y}\| = 1$; i.e., if and only if $\vec{x}$ and $\vec{y}$ are adjacent on the $d$-dimensional integer Cartesian space.

A $d$-dimensional \emph{tile type} is a tuple $t \in (\Sigma^* \times \N)^{2d}$; e.g., a unit square (or cube) with four (or six) sides listed in some standardized order, each side having a \emph{glue} $g \in \Sigma^* \times \N$ consisting of a finite string \emph{label} and nonnegative integer \emph{strength}.
We assume a finite set of tile types, but an infinite number of copies of each tile type, each copy referred to as a \emph{tile} (either a 2D square or 3D cube tile type). A $d$-dimensional tile set is a set of $d$-dimensional tile types and is written as $d$-$T$. A tile set~$T$ is a set of $d$-dimensional tile types for some $d \in \{2,3\}$.

A $d$-{\em configuration} is a (possibly empty) arrangement of tiles on the integer lattice $\Z^d$, i.e., a partial function $\alpha:\Z^d \dashrightarrow T$. A configuration $\alpha$ is a $d$-configuration for some $d \in \{2,3\}$.
Two adjacent tiles in a configuration \emph{interact}, or are \emph{attached}, if the glues on their abutting sides are equal (in both label and strength) and have positive strength.
Each configuration $\alpha$ induces a \emph{binding graph} $\bindinggraph_\alpha$, a grid graph whose vertices are positions occupied by tiles, according to $\alpha$, with an edge between two vertices if the tiles at those vertices interact. A $d$-\emph{assembly} is a connected non-empty configuration, i.e., a partial function $\alpha:\Z^d \dashrightarrow T$ such that $\fullgridgraph_{\dom \alpha}$ is connected and $\dom \alpha \neq \emptyset$. An assembly is a $d$-assembly for some $d \in \{2,3\}$. The \emph{shape} $S_\alpha \subseteq \Z^d$ of $\alpha$ is $\dom \alpha$.

Given $\tau\in\Z^+$, $\alpha$ is \emph{$\tau$-stable} if every cut of~$\bindinggraph_\alpha$ has weight at least $\tau$, where the weight of an edge is the strength of the glue it represents. %\marginpar{DW: moved  `energy' sentence to informal. }
When $\tau$ is clear from context, we say $\alpha$ is \emph{stable}.
Given two assemblies $\alpha,\beta$, we say $\alpha$ is a \emph{subassembly} of $\beta$, and we write $\alpha \sqsubseteq \beta$, if $S_\alpha \subseteq S_\beta$ and, for all points $p \in S_\alpha$, $\alpha(p) = \beta(p)$.

A $d$-dimensional \emph{tile assembly system} ($d$-TAS) is a triple $d$-$\mathcal{T} = (d\textrm{-}T,\sigma,\tau)$, where $d$-$T$ is a finite set of $d$-dimensional tile types, $\sigma:\Z^d \dashrightarrow T$ is the finite, $\tau$-stable, $d$-dimensional \emph{seed assembly}, and $\tau\in\Z^+$ is the \emph{temperature}. The triple $\mathcal{T} = (T,\sigma,\tau)$ is a TAS if it is is a $d$-TAS for some $d \in \{2,3\}$.
Given two $\tau$-stable assemblies $\alpha,\beta$, we write $\alpha \to_1^{\mathcal{T}} \beta$ if $\alpha \sqsubseteq \beta$ and $|S_\beta \setminus S_\alpha| = 1$. In this case we say $\alpha$ \emph{$\mathcal{T}$-produces $\beta$ in one step}. % \marginpar{SMS: Note that we allow for a 3D tile to attach in an empty space that is completely closed off -- although our 3D construction does not use this feature -- I think this is fine.}
If $\alpha \to_1^{\mathcal{T}} \beta$, $ S_\beta \setminus S_\alpha=\{p\}$, and $t=\beta(p)$, we write $\beta = \alpha + (p \mapsto t)$.
The \emph{$\mathcal{T}$-frontier} of $\alpha$ is the set $\partial^\mathcal{T} \alpha = \bigcup_{\alpha \to_1^\mathcal{T} \beta} S_\beta \setminus S_\alpha$, the set of empty locations at which a tile could stably attach to $\alpha$. The \emph{$t$-frontier} $\partial_t \alpha \subseteq \partial \alpha$ of $\alpha$ is the set $\setr{p\in\partial \alpha}{\alpha \to_1^\mathcal{T} \beta \text{ and } \beta(p)=t}.$ %\dwm{delete  $t$-frontier?\\SMS: Let's leave it in, we can use it in section 3.}

Let $\mathcal{A}^T$ denote the set of all assemblies of tiles from $T$, and let $\mathcal{A}^T_{< \infty}$ denote the set of finite assemblies of tiles from $T$.
A sequence of $k\in\Z^+ \cup \{\infty\}$ assemblies $\alpha_0,\alpha_1,\ldots$ over $\mathcal{A}^T$ is a \emph{$\mathcal{T}$-assembly sequence} if, for all $1 \leq i < k$, $\alpha_{i-1} \to_1^\mathcal{T} \alpha_{i}$.
The {\em result} of an assembly sequence is the unique limiting assembly (for a finite sequence, this is the final assembly in the sequence).

We write $\alpha \to^\mathcal{T} \beta$, and we say $\alpha$ \emph{$\mathcal{T}$-produces} $\beta$ (in 0 or more steps) if there is a $\mathcal{T}$-assembly sequence $\alpha_0,\alpha_1,\ldots$ of length $k = |S_\beta \setminus S_\alpha| + 1$ such that
(1) $\alpha = \alpha_0$,
(2) $S_\beta = \bigcup_{0 \leq i < k} S_{\alpha_i}$, and
(3) for all $0 \leq i < k$, $\alpha_{i} \sqsubseteq \beta$.
If $k$ is finite then it is routine to verify that $\beta = \alpha_{k-1}$.
We say $\alpha$ is \emph{$\mathcal{T}$-producible} if $\sigma \to^\mathcal{T} \alpha$, and we write $\prodasm{\mathcal{T}}$ to denote the set of $\mathcal{T}$-producible assemblies. The relation $\to^\mathcal{T}$ is a partial order on $\prodasm{\mathcal{T}}$ %\cite{Rothemund-2001,Lathrop-2009}.

An assembly $\alpha$ is \emph{$\mathcal{T}$-terminal} if $\alpha$ is $\tau$-stable and $\partial^\mathcal{T} \alpha=\emptyset$.
We write $\termasm{\mathcal{T}} \subseteq \prodasm{\mathcal{T}}$ to denote the set of $\mathcal{T}$-producible, $\mathcal{T}$-terminal assemblies. If $|\termasm{\mathcal{T}}| = 1$ then  $\mathcal{T}$ is said to be {\em directed}.

When $\mathcal{T}$ is clear from context, we may omit $\mathcal{T}$ from the notation above and instead write
$\to_1$,
$\to$,
$\partial \alpha$, % \dwm{delete $\partial \alpha$?\\SMS: Let's leave this in too}
\emph{assembly sequence},
\emph{produces},
\emph{producible}, and
\emph{terminal}.

\subsection{Formal Definitions of Simulation} \label{sec:simulation_def}

To state our main result, we must formally define what it means for one TAS to ``simulate'' another.  Our definitions come from \cite{IUNeedsCoop} with the natural modifications to extend from 2D to 3D.  Intuitively, simulation of a system $\mathcal{T}$ by another system $\mathcal{S}$ is done by utilizing some scale factor $m \in \mathbb{Z}^+$ such that $m \times m \times m$ cubes of tiles in $\mathcal{S}$ represent individual tiles in $\mathcal{T}$, and there is a ``representation function'' which is able to interpret the assemblies of $\mathcal{S}$ as assemblies in $\mathcal{T}$.

From this point on, let $T$ be a tile set and let $m\in\Z^+$.
An \emph{$m$-block macrotile} over $T$ is a partial function $\alpha : \Z_m^3 \dashrightarrow T$, where $\Z_m = \{0,1,\ldots,m-1\}$.
Let $B^T_m$ be the set of all $m$-block macrotiles over $T$.
The $m$-block with no domain is said to be $\emph{empty}$.
For a general assembly $\alpha:\Z^3 \dashrightarrow T$ and $(x',y',z')\in\Z^3$, define $\alpha^m_{(x',y',z')}$ to be the $m$-block macrotile defined by $\alpha^m_{(x',y',z')}(i_x,i_y,i_z) = \alpha(mx'+i_x,my'+i_y,mz'+i_z)$ for $0 \leq i_x,i_y,i_z< m$.
For some tile set $S$, a partial function $R: B^{S}_m \dashrightarrow T$ is said to be a \emph{valid $m$-block macrotile representation} from $S$ to $T$ if for any $\alpha,\beta \in B^{S}_m$ such that $\alpha \sqsubseteq \beta$ and $\alpha \in \dom R$, then $R(\alpha) = R(\beta)$.

For a given valid $m$-block macrotile representation function $R$ from tile set~$S$ to tile set $T$, define the \emph{assembly representation function}\footnote{Note that $R^*$ is a total function since every assembly of $S$ represents \emph{some} assembly of~$T$; the functions $R$ and $\alpha$ are partial to allow undefined points to represent empty space.}  $R^*: \mathcal{A}^{S} \rightarrow \mathcal{A}^T$ such that $R^*(\alpha') = \alpha$ if and only if $\alpha(x,y,z) = R\left(\alpha'^m_{(x,y,z)}\right)$ for all $(x,y,z) \in \Z^3$.
For an assembly $\alpha' \in \mathcal{A}^{S}$ such that $R^*(\alpha') = \alpha$, $\alpha'$ is said to map \emph{cleanly} to $\alpha \in \mathcal{A}^T$ under $R^*$ if for all non empty blocks $\alpha'^m_{(x,y,z)}$, $(x,y,z)+(u_x,u_y,u_z) \in \dom(\alpha)$ for some $(u_x,u_y,u_z) \in U_3$ such that $u^2_x + u^2_y + u^2_z \le 1$, or if $\alpha'$ has at most one non-empty $m$-block $\alpha^m_{0,0}$.  In other words, $\alpha'$ may have tiles on macrotile blocks representing empty space in $\alpha$, but only if that position is adjacent to a tile in $\alpha$.  We call such growth ``around the edges'' of $\alpha'$ \emph{fuzz} and thus restrict it to be adjacent to only valid macrotiles, but not diagonally adjacent (i.e.\ we do not permit \emph{diagonal fuzz}).

In the following definitions, let $\mathcal{T} = \left(T,\sigma_T,\tau_T\right)$ be a TAS, let $\mathcal{S} = \left(S,\sigma_S,\tau_S\right)$ be a TAS, and let $R$ be an $m$-block representation function $R:B^S_m \rightarrow T$.

\begin{definition}
\label{def-equiv-prod} We say that $\mathcal{S}$ and $\mathcal{T}$ have \emph{equivalent productions} (under $R$), and we write $\mathcal{S} \Leftrightarrow \mathcal{T}$ if the following conditions hold:
\begin{enumerate}[topsep=0pt,itemsep=-1ex,partopsep=1ex,parsep=1ex]
\item $\left\{R^*(\alpha') | \alpha' \in \prodasm{\mathcal{S}}\right\} = \prodasm{\mathcal{T}}$.
\item $\left\{R^*(\alpha') | \alpha' \in \termasm{\mathcal{S}}\right\} = \termasm{\mathcal{T}}$.
\item For all $\alpha'\in \prodasm{\mathcal{S}}$, $\alpha'$ maps cleanly to $R^*(\alpha')$.
\end{enumerate}
\end{definition}

\begin{definition}
\label{def-t-follows-s} We say that $\mathcal{T}$ \emph{follows} $\mathcal{S}$ (under $R$), and we write $\mathcal{T} \dashv_R \mathcal{S}$ if $\alpha' \rightarrow^\mathcal{S} \beta'$, for some $\alpha',\beta' \in \prodasm{\mathcal{S}}$, implies that $R^*(\alpha') \to^\mathcal{T} R^*(\beta')$.
\end{definition}

The next definition essentially specifies that every time $\mathcal{S}$ simulates an assembly $\alpha \in \prodasm{\mathcal{T}}$, there must be at least one valid growth path in $\mathcal{S}$ for each of the possible next steps that $\mathcal{T}$ could make from $\alpha$ which results in an assembly in $\mathcal{S}$ that maps to that next step.

\begin{definition}
\label{def-s-models-t} We say that $\mathcal{S}$ \emph{models} $\mathcal{T}$ (under $R$), and we write $\mathcal{S} \models_R \mathcal{T}$, if for every $\alpha \in \prodasm{\mathcal{T}}$, there exists $\Pi \subset \prodasm{\mathcal{S}}$ where $R^*(\alpha') = \alpha$ for all $\alpha' \in \Pi$, such that, for every $\beta \in \prodasm{\mathcal{T}}$ where $\alpha \rightarrow^\mathcal{T} \beta$, (1) for every $\alpha' \in \Pi$ there exists $\beta' \in \prodasm{\mathcal{S}}$ where $R^*(\beta') = \beta$ and $\alpha' \rightarrow^\mathcal{S} \beta'$, and (2) for every $\alpha'' \in \prodasm{\mathcal{S}}$ where $\alpha'' \rightarrow^\mathcal{S} \beta'$, $\beta' \in \prodasm{\mathcal{S}}$, $R^*(\alpha'') = \alpha$, and $R^*(\beta') = \beta$, there exists $\alpha' \in \Pi$ such that $\alpha' \rightarrow^\mathcal{S} \alpha''$.
\end{definition}

\begin{definition}\label{def:s-simulates-t}
\label{def-s-simulates-t} We say that $\mathcal{S}$ \emph{simulates} $\mathcal{T}$ (under $R$) if $\mathcal{S} \Leftrightarrow_R \mathcal{T}$ (equivalent productions), $\mathcal{T} \dashv_R \mathcal{S}$ and $\mathcal{S} \models_R \mathcal{T}$ (equivalent dynamics).
\end{definition}

\newcommand{\REPL}{\mathsf{REPR}}
\newcommand{\frakC}{\mathfrak{C}}

\subsection{Intrinsic universality}
\label{sec:iu_def}
Now that we have a formal definition of what it means for one tile system to simulate another, we can proceed to formally define the concept of intrinsic universality, i.e., when there is one general-purpose tile set that can be appropriately programmed to simulate any other tile system from a specified class of tile systems.

Let $\REPL$ denote the set of all macrotile representation functions (i.e., $m$-block macrotile representation functions for some $m\in\Z^+$).  Define $\frakC$ to be a class of tile assembly systems, and let $U$ be a tile set.  Note that each element of $\frakC$, $\REPL$, and $\mathcal{A}^U_{< \infty}$ is a finite object, hence encoding and decoding of simulated and simulator assemblies can be defined to be computable via standard models such as Turing machines and Boolean circuits.

\begin{definition}\label{def:iu-specific-temp}
We say $U$ is \emph{intrinsically universal} for $\frakC$ \emph{at temperature} $\tau' \in \Z^+$ %\marginpar{SMS: added this definition for intrinsic universality at a specific temperature to help with wording of theorem statements}
if there are computable functions $\mathcal{R}:\frakC \to \REPL$ and $S:\frakC \to \mathcal{A}^U_{< \infty}$ such that, for each $\mathcal{T} = (T,\sigma,\tau) \in \frakC$, there is a constant $m\in\N$ such that, letting $R = \mathcal{R}(\mathcal{T})$, $\sigma_\mathcal{T}=S(\mathcal{T})$, and $\mathcal{U}_\mathcal{T} = (U,\sigma_\mathcal{T},\tau')$, $\mathcal{U}_\mathcal{T}$ simulates $\mathcal{T}$ at scale $m$ and using macrotile representation function~$R$.
\end{definition}
That is, $\mathcal{R}(\mathcal{T})$ outputs a representation function that interprets assemblies of $\mathcal{U}_\mathcal{T}$ as assemblies of $\mathcal{T}$, and $S(\mathcal{T})$ outputs the seed assembly used to program tiles from $U$ to represent the seed assembly of $\mathcal{T}$.

\begin{definition}
\label{def:iu-general}
We say that~$U$ is \emph{intrinsically universal} for $\frakC$ if it is intrinsically universal for $\frakC$ at some temperature $\tau'\in Z^+$.
\end{definition} %\marginpar{SMS: this is the normal definition of IU. DW: agreed! Nice.}

\begin{definition}
We say that $\frakC$ is intrinsically universal if there exists some $U$ such that $U$ is instrinsically universal for $\frakC$.
\end{definition}

\section{Details of Observations Regarding the 1D aTAM}\label{sec:1D}

In this section we make observations about the lack of intrinsic universality in the 1-dimensional (1D) aTAM, which can be thought of similarly to the 2D aTAM, but where tiles are only allowed to bind via east and west edges (i.e. only forming 1D line assemblies).

\begin{observation}\label{obs:1D-not-IU}
The 1D aTAM is not intrinsically universal.
\end{observation}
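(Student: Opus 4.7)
The plan is to prove this by contradiction via a pigeonhole-plus-pumping argument on the finite size of any candidate universal tile set. Suppose $U$ were intrinsically universal for the 1D aTAM and set $u := |U|$. For each $n \geq 1$, I would define a 1D system $\mathcal{T}_n$ that uses $n$ distinct tile types linked by unique east--west glue pairs, so that from its single-tile seed only monotone eastward growth is possible and the unique terminal assembly is a line of length exactly $n$. The hypothesized simulator $\mathcal{U}_{\mathcal{T}_n} = (U, S(\mathcal{T}_n), \tau')$ must then have a unique terminal line of length $L_n := n m_n$ (for the system-dependent scale $m_n$), mapping cleanly under $R^*_n := \mathcal{R}(\mathcal{T}_n)^*$ onto the length-$n$ line of $\mathcal{T}_n$.

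The central step is a pigeonhole observation: because the seed $S(\mathcal{T}_n)$ must map under $R^*_n$ onto the single-tile seed of $\mathcal{T}_n$, it occupies only $O(m_n)$ tiles (a single macrotile together with at most one macrotile of fuzz on either side), so for all sufficiently large $n$ the eastward-grown portion of $\mathcal{U}_{\mathcal{T}_n}$'s terminal line strictly exceeds $u$ tiles. Writing the types along the terminal line as $s_0, s_1, \ldots, s_{L_n-1}$, pigeonhole produces indices $i < j$ within the growth region with $s_i = s_j = \tilde t$ for some $\tilde t \in U$, so $\tilde t$'s east glue matches the west glue of both $s_{i+1}$ and $s_{j+1}$. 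The next step is to pump: since in strictly monotone eastward growth every new tile attaches using only its single west-neighbor bond at temperature $\tau'$, the assembly sequence that grows east through $s_i$, then repeats the loop $s_{i+1}, \ldots, s_j$ a total of $k$ times, then continues with $s_{j+1}, \ldots, s_{L_n-1}$ is valid for every $k \geq 1$, and $s_{L_n-1}$ (whose east glue is unmatched in $U$) remains the easternmost tile in each case, so each pumped line is itself terminal. Thus $\prodasm{\mathcal{U}_{\mathcal{T}_n}}$ contains terminal lines of unbounded length.

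The contradiction is then immediate: every producible of $\mathcal{T}_n$ is a line of length at most $n$, so for $k$ large enough the pumped terminal line of length $L_n + (k-1)(j - i)$ exceeds what can possibly be mapped cleanly under $R^*_n$ into $\prodasm{\mathcal{T}_n}$, regardless of whether $(j - i)$ is a multiple of $m_n$ or not, violating the equivalent-productions clause of Definition~\ref{def-equiv-prod}. Hence no such $U$ exists and the 1D aTAM is not intrinsically universal. The main subtlety to handle carefully is ensuring that the pumped attachments really are valid at the simulator's fixed temperature $\tau'$; this is why I would design $\mathcal{T}_n$ so that the simulator's eastern growth is forced to be monotone (any east-attaching tile in a line has only its single west neighbor to bond with during eastward growth), which guarantees that the replicated east-glue--west-glue match between $\tilde t$ and its successor provides the same bond strength that supported the corresponding attachment in the original sequence.
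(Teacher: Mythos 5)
Your proposal is correct and takes essentially the same approach as the paper: define a 1D system whose unique terminal line is long enough that the (necessarily finite) simulator tile set $U$ must repeat a tile type along the simulated line, then pump the repeated segment to obtain producible assemblies in the simulator that are longer than any producible of the simulated system, violating equivalent productions. Your version is phrased more carefully (parameterizing the family $\mathcal{T}_n$, bounding the seed macrotile and fuzz, noting that eastward attachments are single-bond so pumping is valid at $\tau'$, and observing the pumped lines are themselves terminal), but the underlying pigeonhole-plus-pumping argument is identical to the paper's.
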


\begin{proof}
Observation~\ref{obs:1D-not-IU} can be easily proven by contradiction. Therefore, assume the 1D aTAM is IU, and that tile set $U$ is a tile set that is IU for it. Let $|U| = t$, that is, $t$ is the number of unique tile types in $U$. We can simply define 1D aTAM system $\calT = (T,\sigma,1)$ such that $|T| = t+1$ and its seed $\sigma$ consists of a single tile at the origin. The tiles of $\calT$ are designed so that for each $t_i \in T$ for $0 < i < t$, the west glue of $t_1$ is of type $g_i$ and its east glue is of type $g_{i+1}$. The seed tile only has an east glue, and it is of type $g_1$, and tile type $t_t$ only has a west glue, and it is of type $g_{t-1}$. All glues have strength $=1$. Clearly, $\calT$ forms a terminal assembly which is a line of length $t+1$, which extends to the east from the seed tile. Since $U$ contains only $t$ unique tile types, any line which it forms of length $>t$ must contain a duplicated tile type, and a simple ``pumping'' argument shows that whatever assembly occurs between the two occurrence must be able to appear again to the east of the second occurrence, and this can be repeated infinitely often. Therefore, any simulating system which makes use of $U$ must be able to make infinite assemblies if it can make any which are longer than length $t$. Thus, it cannot simulate $\calT$, which only makes a single, finite terminal assembly.
\end{proof}

\begin{observation}\label{obs:1D-directed-not-IU}
The class of directed 1D aTAM systems is not intrinsically universal.
\end{observation}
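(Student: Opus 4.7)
The plan is to prove Observation~\ref{obs:1D-directed-not-IU} by contradiction, directly reusing the system $\calT$ constructed in the proof of Observation~\ref{obs:1D-not-IU}. The key point I would emphasize is that that system is \emph{already directed}, so no additional construction is required; the entire work of the proof is (i) verifying directedness of $\calT$, and then (ii) invoking the same pumping argument.

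First I would recall the construction: $\calT = (T,\sigma,1)$ has $|T| = t+1$ (where $t = |U|$ for the hypothesized universal tile set $U$), a single-tile seed at the origin with only an east glue $g_1$, and tile types $t_1,\ldots,t_t$ chained so that each $t_i$ has west glue $g_i$ and east glue $g_{i+1}$, with the terminal tile $t_t$ having no east glue. I would then argue that $\calT$ is directed as follows. Because the seed has only an east glue and every non-seed tile type has a west glue that matches a unique east glue appearing in exactly one location of any producible assembly, a straightforward induction on assembly length shows that at each step there is exactly one frontier location and exactly one tile type that can attach there. Hence all assembly sequences of $\calT$ produce the same terminal assembly, a horizontal line of length $t+1$, and $|\termasm{\calT}| = 1$, so $\calT$ is directed.

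Next, I would invoke the pumping argument verbatim from Observation~\ref{obs:1D-not-IU}: since $U$ has only $t$ tile types, any simulator $\calUT$ of $\calT$ using $U$ must, while assembling a line of sufficient length to represent the $t+1$ tiles of $\calT$ at scale, repeat some tile type of $U$. The segment of $\calUT$ between two occurrences of that repeated tile type can be pumped arbitrarily far to the east, producing an infinite family of producible assemblies in $\calUT$ that strictly extends the representation of $\calT$'s terminal assembly. These pumped assemblies either violate the equivalent-productions condition of Definition~\ref{def-equiv-prod} (by representing an assembly of $\calT$ strictly extending its unique terminal assembly) or force $R^*$ to map arbitrarily large suffixes to empty space, violating the fuzz restriction. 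Either way $\calUT$ fails to simulate the directed system $\calT$, contradicting the assumed universality of $U$ for directed 1D systems.

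I do not expect a real obstacle here: the only non-mechanical step is the directedness verification, and because every glue label is used at exactly one position in $T$'s chain the frontier is a single location at every stage, so directedness is immediate. The proof is essentially a one-line reduction noting that the witness of Observation~\ref{obs:1D-not-IU} already lies in the directed subclass.
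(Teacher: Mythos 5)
Your proof is correct and follows essentially the same route as the paper: the paper's one-sentence proof also observes that the witness system $\calT$ from Observation~\ref{obs:1D-not-IU} is already directed, so the pumping argument against a universal $U$ carries over unchanged. Your added verification of directedness (unique frontier location at every step) is a reasonable expansion of what the paper leaves implicit.
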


\begin{proof}
The proof of Observation~\ref{obs:1D-directed-not-IU} follows immediately from the fact that $\calT$ of the proof of Observation~\ref{obs:1D-not-IU} is directed, and since it can be constructed to be larger than any 1D tile set which is claimed to be IU for directed 1D aTAM systems, it can't be simulated (in a directed manner or otherwise) using such a tile set.
\end{proof}

To be analogous to the Planar aTAM and Spatial aTAM, in which tiles are constrained by the required ability to diffuse within 2 and 3 dimensions, respectively, the \emph{Linear aTAM} for the 1D case actually turns out not to be different from the general 1D aTAM, since in 1D no open frontier location can be blocked off from possible incoming tiles by tiles already attached to an assembly. Therefore, the following observation follows immediately from Observation~\ref{obs:1D-not-IU}.

\begin{observation}\label{obs:1D-linear-not-IU}
The Linear aTAM is not intrinsically universal.
\end{observation}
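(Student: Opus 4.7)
The plan is to observe that the diffusion restriction defining the Linear aTAM has no effect on the dynamics in one dimension, and then to invoke Observation~\ref{obs:1D-not-IU} essentially verbatim. The key structural fact to establish first is that for any 1D assembly $\alpha$ (which is necessarily a connected subset of $\Z$, i.e.\ an interval $[a,b]$), the frontier $\partial \alpha$ consists of at most two points, namely $a-1$ and $b+1$. Each of these points lies on the boundary of the (one-dimensional) bounding box of $\alpha$, so the trivial diffusion path consisting of the single point itself, extended outward to $\pm\infty$ along $\Z \setminus \dom\alpha$, is always collision-free. Hence every potential attachment location in a 1D assembly admits a valid diffusion path, and so the Linear aTAM produces the same producible and terminal assembly sets as the 1D aTAM for every system.

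Next I would formalize the contradiction. Assuming the Linear aTAM is IU, let $U$ be a universal 1D tile set with $|U| = t$, and consider the same system $\calT = (T,\sigma,1)$ constructed in the proof of Observation~\ref{obs:1D-not-IU}: a $1$-strength chain of $t+1$ distinct tile types whose unique terminal assembly is a line of length $t+1$ extending east from the seed. Because the diffusion restriction does not alter the producible set in 1D, $\calT$ is a valid Linear aTAM system whose productions coincide with its 1D aTAM productions. Let $\calUT$ be any Linear aTAM system over $U$ that purportedly simulates $\calT$ at some scale $m$. Since $\calUT$ is itself a 1D system, its producible assemblies are also unaffected by the diffusion restriction, so it is simultaneously a 1D aTAM system.

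Then the pigeonhole/pumping argument from Observation~\ref{obs:1D-not-IU} applies unchanged: any line in $\calUT$ long enough to represent the full terminal assembly of $\calT$ at scale $m$ contains a repeated tile type, and the segment between two such repeats can be iterated an unbounded number of times along a valid assembly sequence, producing strictly larger producible assemblies. Because in 1D this pumped growth always occurs at an exposed endpoint, the diffusion restriction cannot forbid it. These pumped assemblies cannot map under $R^*$ into $\prodasm{\calT}$, violating the equivalent productions condition of Definition~\ref{def-equiv-prod}, contradicting the assumption that $U$ is IU for the Linear aTAM.

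The entire argument is essentially routine given the preceding observation; the only step requiring any care is the first paragraph's justification that the diffusion restriction is genuinely vacuous in 1D, and this is already stated in the paragraph immediately preceding the observation. I expect no real obstacle, and the proof will be a short remark.
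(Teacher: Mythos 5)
Your proposal is correct and takes essentially the same approach as the paper: observe that the diffusion restriction is vacuous in 1D (since frontier locations are always at the endpoints of an interval and hence never blocked), so the Linear aTAM coincides with the 1D aTAM, and then the result follows immediately from Observation~\ref{obs:1D-not-IU}. You simply flesh out the details of the pumping argument that the paper leaves implicit by reference.
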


As with the previous observation, the addition of the requirement for diffusion doesn't change the dynamics of the model, so the following observation follows immediately from Observation~\ref{obs:1D-directed-not-IU}.

\begin{observation}\label{obs:1D-directed linear-not-IU}
The class of directed Linear aTAM systems is not intrinsically universal.
\end{observation}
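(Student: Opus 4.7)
The plan is to reduce this statement to Observation~\ref{obs:1D-directed-not-IU} by showing that the Linear aTAM and the 1D aTAM coincide as models, so their directed subclasses are identical, and then quoting the directed 1D result. There is essentially no new technical content; the argument is parallel to the reduction of Observation~\ref{obs:1D-linear-not-IU} to Observation~\ref{obs:1D-not-IU}.

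First I would invoke the structural fact already noted in the preliminaries: in one dimension every producible assembly is a $1 \times n$ line segment anchored at the seed, and such an assembly has at most two frontier locations, namely its two endpoints. Each endpoint lies outside the minimal bounding box of the assembly in at least one direction along the single coordinate axis, so it admits a trivial diffusion path to infinity. Hence the diffusion restriction defining the Linear aTAM never forbids any attachment that was legal in the 1D aTAM, and the one-step relation $\to_1^\mathcal{T}$ is identical under either interpretation. It follows that for every 1D TAS $\mathcal{T}$ the sets $\prodasm{\mathcal{T}}$ and $\termasm{\mathcal{T}}$ coincide between the two models, and in particular $\mathcal{T}$ is directed in the Linear aTAM if and only if it is directed in the 1D aTAM.

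Next I would observe that because the two model classes agree verbatim on producibility, terminality, and directedness, they also agree on what it means for a tile assembly system to simulate another (Definition~\ref{def:s-simulates-t}) and on what it means for a tile set to be intrinsically universal (Definition~\ref{def:iu-general}). Thus any candidate universal tile set $U$ for the directed Linear aTAM would, under the very same functions $\mathcal{R}$ and $S$, be universal for the class of directed 1D aTAM systems. Applying Observation~\ref{obs:1D-directed-not-IU} then yields the contradiction, completing the proof. The only step requiring any care is verifying the vacuousness of the diffusion restriction in 1D, and the hardest part there — handling the case where some interior location might be enclosed — simply does not arise because linear assemblies have no interior frontier. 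The witness system from the proof of Observation~\ref{obs:1D-directed-not-IU}, namely the length-$(|U|+1)$ directed line that forces any simulator using $U$ to pump, transfers unchanged and serves as the explicit counterexample here.
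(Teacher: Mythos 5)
Your argument is correct and matches the paper's own reasoning: the diffusion restriction is vacuous in 1D because the only frontier locations of a line are its two endpoints, so the directed Linear aTAM coincides with the directed 1D aTAM and Observation~\ref{obs:1D-directed-not-IU} applies directly. The paper states this tersely as an immediate consequence; you have simply spelled out the intermediate step.
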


\section{Technical Details for the Planar aTAM is not IU} \label{sec:PaTAM-appendix}

In this section, we provide the technical details for Section~\ref{sec:PaTAM-short} and Theorem~\ref{thm:PaTAM-not-IU}.

\begin{proof}
We prove Theorem~\ref{thm:PaTAM-not-IU} by contradiction. Therefore, assume that the Planar aTAM is IU, and that the tile set $U$ is the tile set that is IU for it. We will now define Planar aTAM system $\calT = (T,\sigma,1)$ and assume that $\calUT = (U,\sigma_\calT,\tau)$ is the system which simulates it with $U$, where $m$ is the scale factor and $R$ is the $m$-block macrotile representation function. The general procedure will be to select valid assembly sequences in $\calT$ which arrive at specified target shapes. We'll then have the simulation by $\calUT$ proceed to the point of matching that assembly (which must be possible if $\calUT$ simulates $\calT$), and we'll inspect and record the assembly sequences followed. This will eventually allow us to prove that $\calUT$ has to violate the definition of simulation. For our notation, we will refer to assemblies in $\calT$ as $\alpha$, $\beta$, etc., and those in $\calUT$ which map, under $R$, to them as $\alpha'$, $\beta'$, etc. (i.e. they will named as primed versions). Similarly, an assembly sequence in $\calT$ will be referred to as $\Vec{\alpha}$, while one in $\calUT$ will be $\Vec{\alpha'}$.

\begin{figure}
\centering
\includegraphics[width=3.0in]{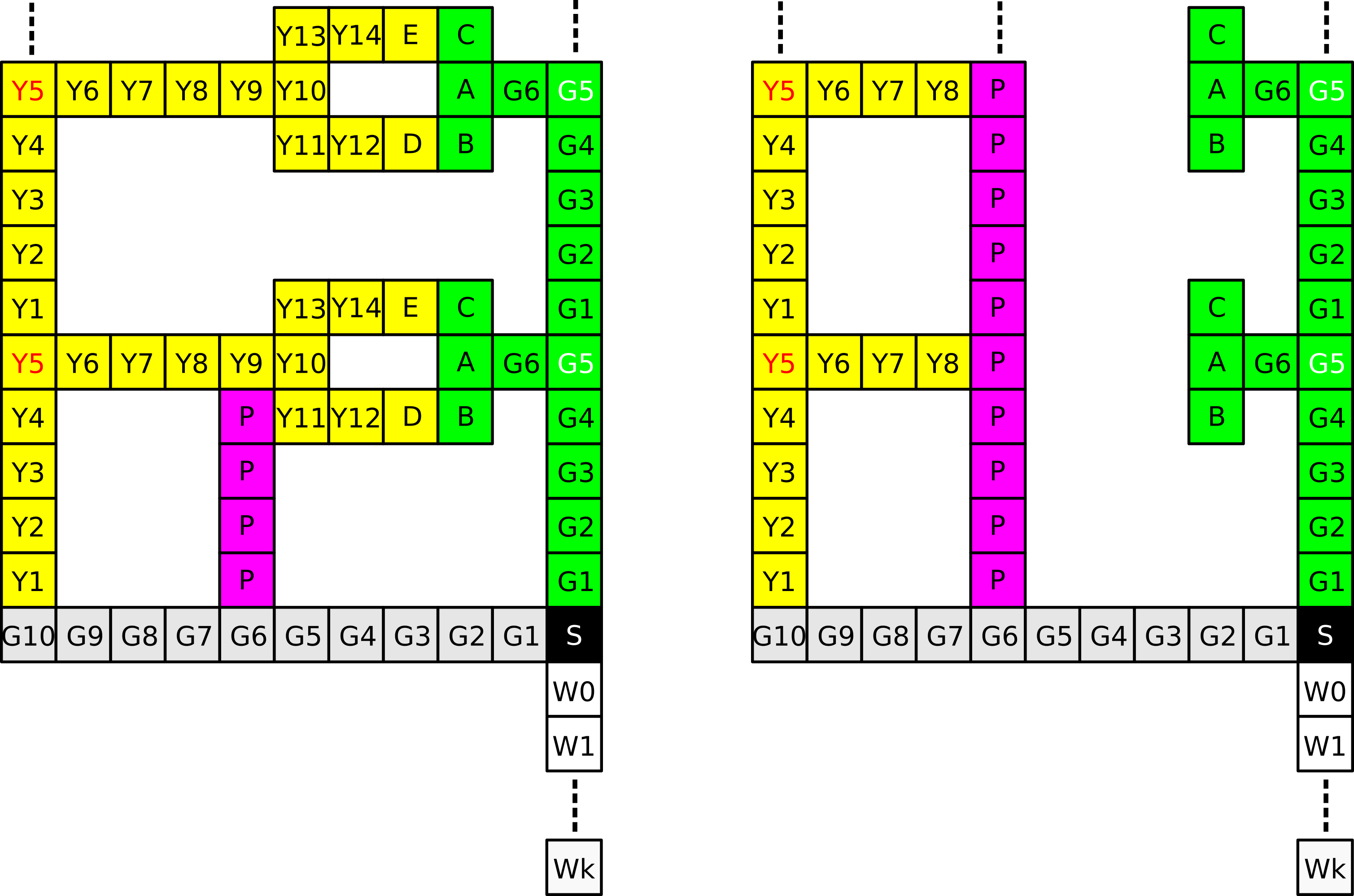}
\caption{Partial overview of Planar aTAM system $\calT$, showing partial portions of two possible infinite assemblies. Growth begins from the seed (black). Growth below the seed is deterministic and grows a length $k$ column of tiles. The green column grows infinitely tall, and from tiles of type $G5$ grows am arm leftward which either terminates in a tile of type $A$, with a $B$ tile below and $C$ tile above. Note however that restrictions based on planarity may, depending on timing, prevent growth of the arms. The yellow column grows infinitely tall, and from every tile of type $Y5$ it is possible (if not blocked by the pink column or prevented by planarity restrictions) to grow an arm rightward which terminates in $E$ and $D$ type tiles. The pink column can grow infinitely upward, unless blocked by a yellow arm or prevented by planarity restrictions.}
\label{fig:pTAM-notIU-overview}
\end{figure}

\begin{figure}
\centering
\includegraphics[width=1.7in]{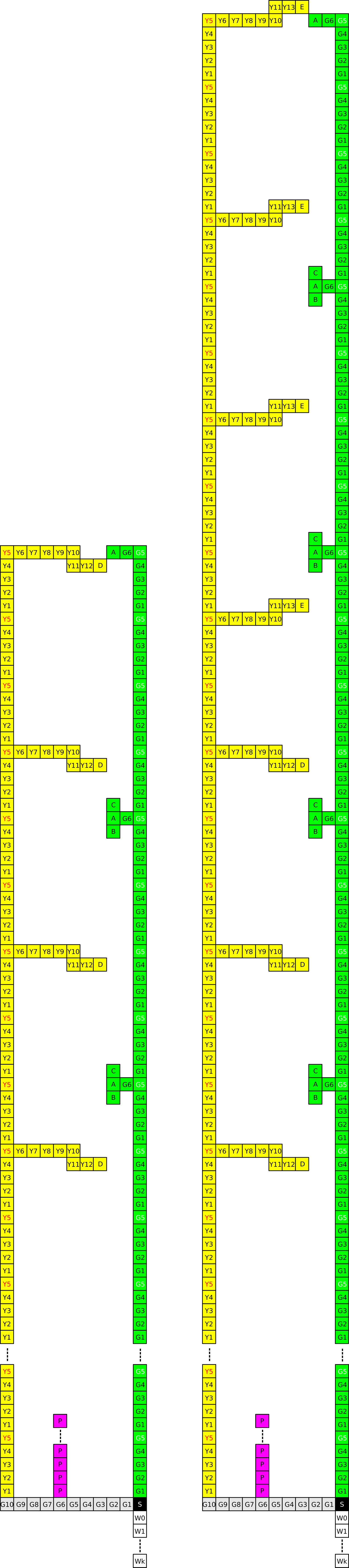}
\caption{Portion of the pumped growth of $\calT$. In this example, the height of a yellow iteration is $3$ blocks and the height of a green iteration is $4$ blocks, so on the left, tiles of the yellow and green columns become diagonally adjacent after $4$ yellow iterations and $3$ green iterations (neither of which count the bottom portion before arms are grown). The same relationship would occur for iteration heights of $n-1$ and $n$ for any $n$.}
\label{fig:pTAM-notIU-pumped-growth}
\end{figure}

Figure~\ref{fig:pTAM-notIU-overview} shows an overview of some possible subassemblies of two infinite terminal assemblies of $\calT$, which is an undirected system capable of growing an infinite number of infinite assemblies. Let $|U| = k$ be the size of tile set $U$. We define $\calT$ so that it begins from a single seed tile at the origin. We now describe a valid assembly sequence, $\Vec{\alpha_{\texttt{pre}}}$, which we will select for $\calT$.

First, a column grows downward from the seed, a distance of $k+1$ with each location being of a unique tile type. (This ensures that the scale factor $m$ of $\calUT$ must be greater than $1$, since $U$ only has $k$ tile types and therefore must use more than one to uniquely map to each of the $k$ different tile types.) It then grows a row of $10$ grey tiles to the west.

\begin{figure}
\centering
\begin{subfigure}[b]{1.8in}
\centering
\includegraphics[width=1.0in]{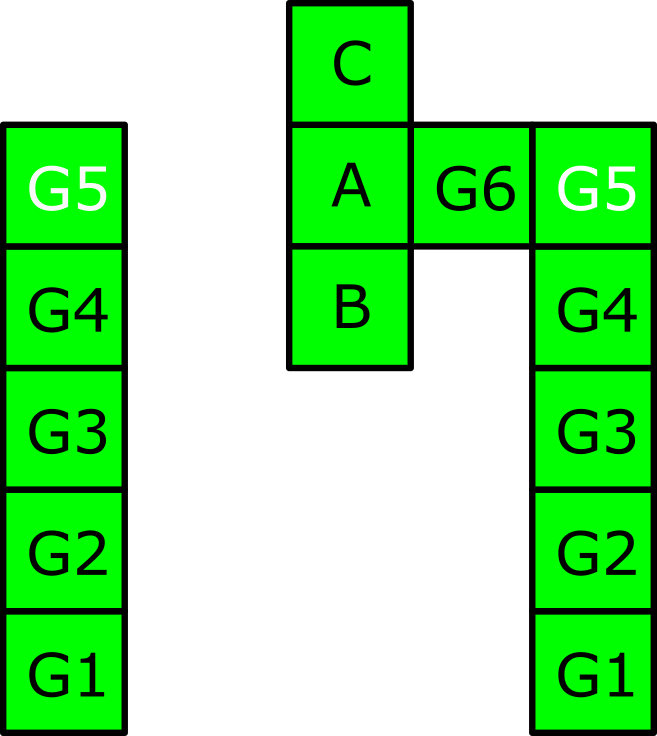}
\caption{Green blocks $\greenoff$ (left), and $\greenon$ (right)}
\label{fig:pTAM-not-IU-green-blocks}
\end{subfigure}
\hspace{0.05in}
\begin{subfigure}[b]{3.8in}
\includegraphics[width=3.8in]{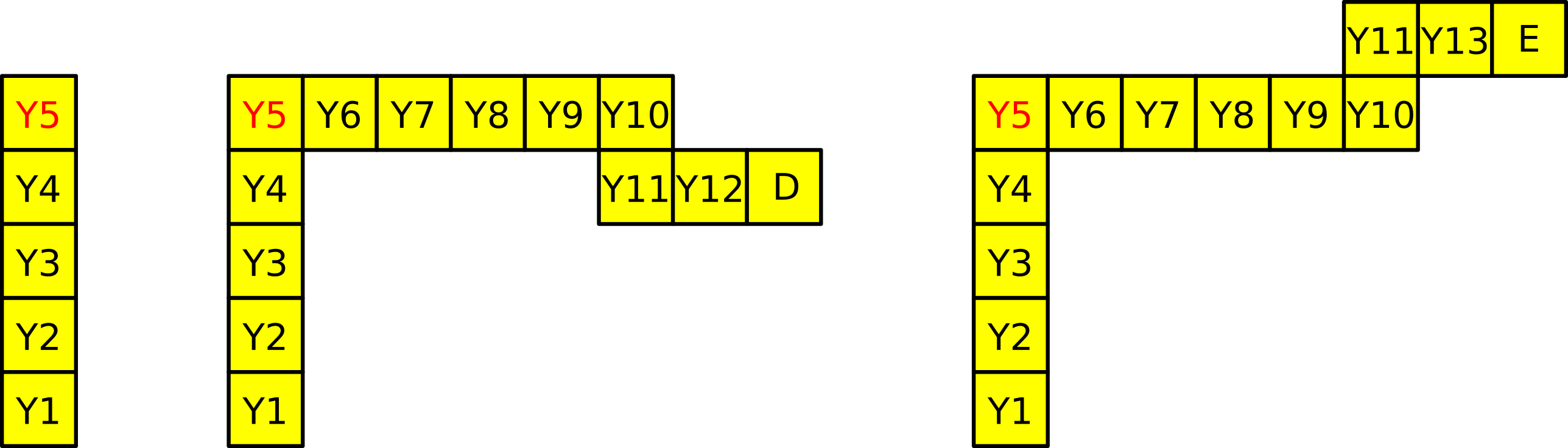}
\caption{Yellow blocks $\yellowoff$ (left), $\yellowdown$ (middle), and $\yellowup$ (right)}
\label{fig:pTAM-not-IU-yellow-blocks}
\end{subfigure}
\caption{Small subassemblies formed from green and yellow tiles in $\calT$ which will form the logical building blocks of larger assemblies. Note that in any block it is possible for all tiles of that color to attach, but we only consider assembly sequences in which particular subsets have attached.}
\label{fig:pTAM-not-IU-blocks}
\end{figure}

Note that since $m$ is the scale factor that $\calUT$ uses to simulate $\calT$, if there is a single-tile-wide column of tiles which is being represented in $\calUT$, then a cut which separates that scaled-up column (and thus the entire assembly) into two halves does not have to be longer than $3m$, since it could cut the macrotile representing a tile of $\calT$ as well as the maximum allowed fuzz of width equal to one macrotile on each side of it. If we let $g$ equal the number of unique glues in $U$ and note that $g+1$ can account for each of those glues plus the null glue, we can see that the value $p = ((g+1)^{6m}\cdot(6m)!+1)$ represents one greater than the maximum number of possible ways that such a cut could receive glues along its two sides (i.e. all possible sets of glues that are incident upon its sides and all possible orderings of arrival for the glues of each set). Following \cite{IUNeedsCoop}, we call the cuts \emph{windows} and each set of glues and ordering of their arrivals a \emph{window movie}. Similar to the use of window movies in \cite{IUNeedsCoop}, we note that if a column contains $p$ cuts, then there must be at least two which are duplicates of each other. The Window Movie Lemma of \cite{IUNeedsCoop} uses this fact to show that the subassembly between two identical cuts can be ``pumped'' either up or down, meaning that an arbitrary number of additional copies can be added between the two identical cuts, or the current copy can be removed, and the resulting assembly must be producible by a valid assembly sequence. We will use the two facts that (1) such a valid assembly sequence of $\calUT$ exists, and (2) $\calUT$ is assumed to simulate $\calT$ (so all valid assembly sequences of $\calUT$ must correspond to valid simulations of $\calT$) to note that whenever there are two identical window movies cutting an assembly in $\calUT$, there is an assembly sequence which we can run forward (or in reverse) one step at a time which will grow (or shrink) the macrotiles in a valid ordering (i.e. which maps to a valid assembly sequence in $\calT$) and without breaking the allowed boundaries of surrounding fuzz.

Next, a preliminary set of green, yellow, and pink tiles north of the seed attaches as follows. (See Figure~\ref{fig:pTAM-not-IU-blocks} for depictions of the blocks referenced, and note that whenever we talk about the attachment of a block, we mean the attachment of tiles one at a time to form that block.) A series of $6mp$ $\greenoff$ blocks form. Then, a series of $6mp$ $\yellowoff$ blocks form, and then a series of $p$ pink tiles. We will call the current assembly $\alpha_{\texttt{pre}}$. A schematic depiction of it can be seen as the lower portion of Figure~\ref{fig:pTAM-notIU-pumped-growth}.

At this point, the green and yellow columns are much taller than the pink column, but are the same height as each other (since they grew the same number of blocks and the blocks are all of height $5$). We have also guaranteed enough room for the pink column to grow upward if needed, and as much as may be needed, without being blocked by any yellow arms during the following growth sequence.

We now run the simulation of $\calUT$ until it places the first tile in its assembly $\alpha_{\texttt{pre}}'$ such that $R(\alpha_{\texttt{pre}}') = \alpha_{\texttt{pre}}$, and we record the entire assembly sequence $\Vec{\alpha_{\texttt{pre}}'}$.

We define \emph{iterations} as subassemblies composed of specific numbers of blocks. Let a $\texttt{green}$ $\texttt{iteration}$ consist of $6m+2$ $\greenoff$ blocks followed by a single $\greenon$ block. Let a $\texttt{yellow iteration}$ consist of $6m+1$ $\yellowoff$ blocks followed by a single $\yellowdown$ block. Let a $\texttt{pink iteration}$ consist of a single pink tile. We use the term \emph{pumping a column} when we do the following. In $\calT$, have the column grow $p$ iterations, then in $\calUT$ have the simulator follow that growth sequence and record the assembly sequence. By the time the $p$th iteration completes, by the definition of $p$ it must be the case that at least two iterations have the same window movie separating their first and second macrotiles. Because of this and our previously described ability to pump such an assembly, we rewind the assembly until we return to the first tile placement against the first of the identical window movies, and use our ability to construct an assembly sequence which creates identical copies of the assembly between those cuts to create a new assembly sequence which builds an assembly of exactly the same height as before we rewound (which may mean that the last full copy of the pumped portion of the assembly is not completed, but this is still a valid assembly sequence and assembly). Note that this will result in an assembly in $\calUT$ which maps to the same assembly of $\calT$ as before we pumped the column, but it may be a different assembly. %Furthermore, from this point on, when we talk about pumping a column, we implicitly first grow the assembly in $\calT$ the necessary number of tiles, then have $\calUT$ model that via the growth of its macrotiles.

The goal of the next portion of this process is to pump each of the columns until each can be pumped while being guaranteed not to influence, or be influenced by, any of the others. This is possible because (1) during this period, no tiles of any column will be close enough to each other to directly interact via adjacent tiles, even through fuzz, and (2) interaction via paths of tiles which travel through the grey macrotiles at the bottom of the columns is bounded because each such path adds to the width of a cut between those macrotiles, and the maximum width of such a cut, even including allowable fuzz, is limited to $3m$.

Pump the green column, and record whether or not during the final, resulting assembly sequence, any tile is placed along either the cut between the macrotiles representing $G8$ and $G9$
, or the cut between those representing $G3$ and $G4$ (which includes the boundaries between those macrotiles and the fuzz regions above and below them). We will call such growth \emph{collusion}. Do the same for the pink column, then for the yellow column.

Repeat the following loop $6m+1$ times:
\begin{enumerate}
\item If the pumping of the yellow or the pink columns caused collusion since the last time the green column was pumped, which may have resulted in new tile placements in the green column, pump the green column again (i.e. let it grow $p$ iterations, find matching window movies, rewind, and regrow using the repeated subassembly). Otherwise, if no collusion occurred, continue the same pumping which was done to complete the last $p$ iterations until the green column has grown another $p$ iterations.

\item If the pumping of the green or yellow columns caused collusion since the last time the pink column was pumped, pump the pink column again. Otherwise, continue the previous pumping of the pink column until it has grown another $p$ iterations.

\item If the pumping of the green or pink columns caused collusion since the last time the yellow column was pumped, pump the yellow column again. Otherwise, continue the previous pumping of the yellow column until it has grown another $p$ iterations.
\end{enumerate}

By the assumption that $\calUT$ simulates $\calT$, it must be the case that the previous loop can complete, with no portion of the assembly in $\calUT$ violating the constraints of fuzz. During the execution of the loop, once there is a loop iteration where no collusion occurs during the pumping of any of the three columns, then for the remainder of the loop, each column simply completes by pumping the same assembly sequence. Since each of the two cuts which may be transmitting a path of collusion is a maximum of $3m$ in width, collusion could occur a maximum of $6m$ times. Thus, by pumping each column once before the loop, then iterating the loop $6m+1$ times, it must be the case that no collusion occurred during iteration $6m+1$ of the loop, and so the last iteration in the loop consists of each column simply continuing the pumping which is used to finish the previous iteration.

Recall that the height of a green iteration is $6m+2$ $\greenoff$ blocks plus one $\greenon$ block $= 6m+3$ blocks. The height of a yellow iteration is $6m+1$ $\yellowoff$ blocks plus one $\yellowdown$ block $= 6m+2$ blocks. By the dimensions of the iterations and geometry of the blocks, the first time that a yellow tile (of type $D$) will be diagonally adjacent to a green tile (of type $A$) is after exactly $6m+3$ yellow iterations and $6m+2$ green iterations. (An example where the heights of yellow and green iterations are $3$ and $4$, respectively, can be seen on the left side of Figure~\ref{fig:pTAM-notIU-pumped-growth}.)

Finally, after this loop completes, we then grow another $6m+3$ yellow iterations; however, instead of using a $\yellowdown$ block at the end of each iteration, we use a $\yellowup$ block instead. We then pump the green column up to the height of an additional $6m+2$ green iterations. This results in a $\yellowup$ block against a $\greenon$ block which is a pumped copy of a previous $\greenon$ block. Note that since we are pumping from the previous green iterations in this final iteration, we know that there is no collusion occurring between this last $\greenon$ block and the yellow column. Furthermore, recall that we previously pumped the green and yellow columns (without any arms which could block the pink column) to a height of $6mp$ blocks (i.e. $5\cdot6mp = 30mp$ tiles), and the pink column is now only $p + 6m + 2$ tiles, so there is no chance for collision with the yellow arms.

We will now inspect the way in which the macrotiles representing green $A$ tiles are grown. Refer to Figure~\ref{fig:pTAM-notIU-pinch-points} for a depiction of the following argument. Let $\alpha$ and $\beta$ represent the extreme northwest and southwest corners, respectively, of a macrotile representing an $A$ tile in any of the first $6m+2$ iterations. We will first prove that, at the first tile placement during which such a macrotile represents an $A$ tile rather than empty space (which we will refer to as the macrotile \emph{resolving}), the locations $\alpha$ and $\beta$ must contain tiles (perhaps one of them receiving that first tile which causes the macrotile to resolve). We will rewind the assembly sequence of the final green iteration so that its last block has just placed the first tile that causes its $A$-representing macrotile to resolve. We will now grow the yellow iteration by one more iteration, pausing it immediately after it places the first tile that causes its last block's macrotile to resolve to a $D$. Recall that this $D$ macrotile will be diagonally adjacent to the final $A$ macrotile of the green column. Also recall that the additional growth of the yellow column is guaranteed to be unable to collude with the other columns, and since no additional tiles have been placed by either column since they resolved their $A$ and $D$ tiles, there cannot be any tiles in their surrounding fuzz. Therefore, if the $\beta$ position does not have a tile at this point, there must be a free path in the plane for tiles to diffuse from infinitely far away, through the gap of that location, through all of the gaps between the green and yellow columns, and down to the pink column.  Furthermore, we know that the pink column is capable of being pumped for further growth. Therefore, there is a valid assembly sequence which grows additional macrotiles which resolve to pink tiles. However, this violates the definition of simulation, specifically because $\calT$ does not follow $\calUT$ because the corresponding assembly in $\calT$ is prevented from adding additional pink tiles due to the planar constraint because the final $A$ and $D$ tiles of its green and yellow columns close off that portion of the plane. Therefore, the $\beta$ location must have a tile at the time the $A$-representing macrotile resolves.

To see why the $\alpha$ location must also have a tile, consider the $A$ macrotile in the final iteration whose yellow column contains a $\yellowup$ block. Notice that in this iteration, because of the $\yellowup$ block, it must be the case that the macrotile representing $A$ must have a tile at location $\alpha$ because otherwise, for the exact same reason as with location $\beta$ in the previous iterations, the pink tiles would be able to continue growth in a constrained subspace. Since, in this macrotile, a tile must have been placed in location $\alpha$ before resolving, and since the green column in this final iteration is simply the result of pumping from the previous iterations which must all contain tiles at location $\beta$, it must be the case that during some iteration, a tile must have attached in both the $\beta$ and $\alpha$ locations before the corresponding macrotile resolved to $A$.

We now know that there exists a macrotile representing an $A$ in which both the $\alpha$ and $\beta$ locations must have tiles before resolving. By the dynamics of the aTAM, and the fact that the scale factor $m$ at which $\calUT$ simulates $\calT$ must be greater than $1$, those must be two distinct locations and must therefore receive tiles in different steps. Therefore, without loss of generality, we will assume that it is possible for the $\alpha$ position to receive a tile first. (In the case where it's the $\beta$ position, an identical but symmetric argument will hold.) Please refer to Figure~\ref{fig:pTAM-notIU-bad-D-growth} for a depiction of the following argument. We rewind the growth of the yellow column by an iteration so that it no longer has a macrotile diagonally adjacent to the $A$ macrotile. There are two possibilities for the growth of the $C$ macrotile using the assembly sequence which we have been pumping for the green column. (1) All of the growth necessary to resolve the $C$ macrotile is possible by placing tiles strictly to the right of the $\alpha$ location. In this case, there is a valid assembly sequence which halts the growth of $A$ before it has resolved, but which grows and resolves the macrotile representing $C$. This is because, in order for macrotile $A$ to resolve, tile location $\beta$ must grow after $\alpha$ but is not strictly to the right of $\alpha$. This breaks the simulation of $\calT$ by $\calUT$ because the $A$ macrotile in the assembly in $\calUT$ maps to empty space, so $\calT$ cannot follow $\calUT$ by attaching a corresponding $C$ tile. This leaves one final possible case: (2) The growth of $C$ can only be completed if one or more tiles grow to the left of the $\alpha$ location in $A$. However, such growth would be required to grow through the diagonally adjacent macrotile location in order to grow into, or cooperate with tiles within, the $C$ macrotile. Even if the $A$ macrotile first resolves, this still results in fuzz growing diagonally out of bounds, which also breaks the simulation by $\calUT$.

\begin{figure}
\centering
\includegraphics[width=5.5in]{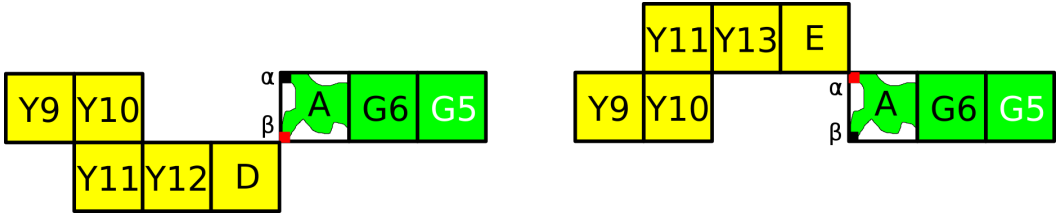}
\caption{Depiction of how untiled locations $\alpha$ or $\beta$ after the $A$ macrotile resolves would leave a gap for the diffusion of tiles.}
\label{fig:pTAM-notIU-pinch-points}
\end{figure}

\begin{figure}
\centering
\includegraphics[width=5.5in]{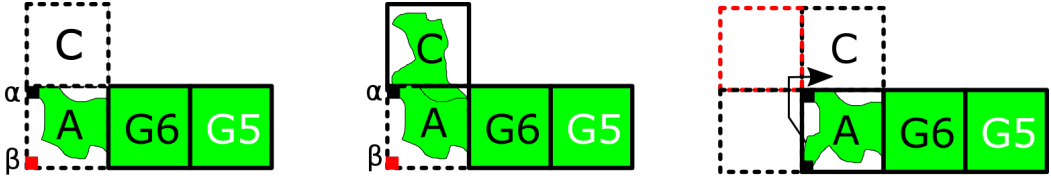}
\caption{(left) Depiction of an $A$ macrotile which has not yet resolved but has a tile in location $\alpha$ and not yet in $\beta$, (middle) If growth necessary to resolve the $C$ macrotile is possible without growing to the left of the $\alpha$ location, then it must be possible to grow and resolve that macrotile before $A$ resolves, (right) If growth necessary to resolve $C$ must grow to the left of the $\alpha$ location, then growth must violate the restriction on fuzz.}
\label{fig:pTAM-notIU-bad-D-growth}
\end{figure}

Therefore, $\calUT$ does not simulate $\calT$, and therefore our assumption that $U$ is IU for the Planar aTAM is false, and thus no tile set is IU for the Planar aTAM, and Theorem~\ref{thm:PaTAM-not-IU} is proven.

\end{proof}

\section{Technical Details for the Directed Planar aTAM is not IU} \label{sec:DPaTAM-appendix}

In this section, we provide the technical details for Section~\ref{sec:DPaTAM-short} and Theorem~\ref{thm:DP2DaTAM-not-IU}.

\begin{figure}
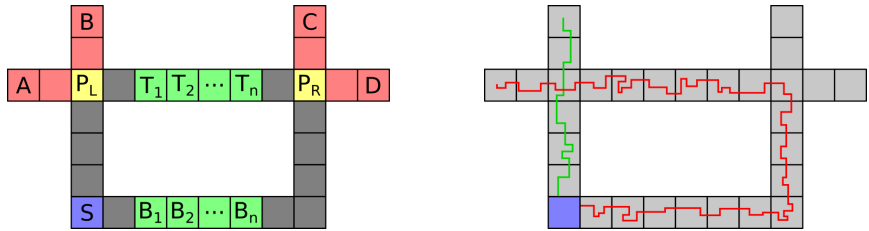

\centering
\includegraphics[width=2in]{images/2d_planar_not_IU.png}
\hspace{1cm}
\includegraphics[width=2in]{images/DPaTAM-paths.png}
\caption{An illustration of the directed system $\calT$ which cannot be simulated in a directed manner by a universal tileset in the PaTAM and an illustration of what two paths of the simulation would look like were it possible.}
\label{fig:DPaTAM-not-IU-system}
\vspace{-10pt}
\end{figure}

\begin{proof}

We prove Theorem~\ref{thm:DP2DaTAM-not-IU} by contradiction. Therefore, assume that the class of directed systems in the PaTAM is IU, and that $U$ is a tile set which is IU for it. Let $n = |U|$ be the number of tile types in $U$. We will now show a directed PaTAM $\calT$ which cannot be simulated by any directed PaTAM system using $U$. Let $\calT$ be the directed temperature-1 PaTAM system illustrated in Figure~\ref{fig:DPaTAM-not-IU-system}, and for the sake of contradiction assume that $\calUT = (U,\sigma_\calT,\tau)$ is the system which simulates $\calT$, and that the scale factor of the simulation is $m$. In $\calT$, the seed is the single blue tile in the bottom-leftmost corner. In this system, since it is temperature-1, there is no cooperation. Also notice that there are $2n$ green tiles, $n$ at the top and $n$ at the bottom. This will insure that the scale factor $m$ must be greater than 1, because the simulating system only has $n$ distinct tiles and needs to map macrotiles to tiles in $\calT$.

Consider a few special assemblies in $\calT$. Let $\Vec{\alpha}$ be the assembly sequence starting with $S$ which grows right through the $B_i$ tiles (for $1\leq i\leq n$), then up to the $P_R$ tiles, then left through the $T$ and $P_L$ tiles and into $A$. Let $\Vec{\beta}$ be the assembly sequence starting with $S$ which grows up through $P_L$ and into $B$. Let $\Vec{\gamma}$ be the assembly sequence starting with $S$ which grows right through the $B_i$ tiles (for $1\leq i\leq n$), then up through $P_R$ and into $C$. Finally, let $\Vec{\delta}$ be the assembly sequence starting with $S$ which grows up to $P_L$, then right through the $T$ and $P_R$ tiles and into $D$. Since $\calT$ is temperature-1, each assembly in these assembly sequences are all $\tau$-stable and thus the sequences are valid. Since $U$ is a universal tileset, there are assembly sequences of tiles in $U$ which simulate these assemblies with a scale of $m$. For each of these assembly sequences in $\calT$, let $\Vec{\alpha}'$, $\Vec{\beta}'$, $\Vec{\gamma}'$, and $\Vec{\delta}'$ be the corresponding assembly sequences in $U$. For convenience, let $\alpha'$, $\beta'$, $\gamma'$, and $\delta'$ denote the earliest assemblies in the corresponding assembly sequences in $U$ which have a tile placed in the macrotile corresponding to $A$, $B$, $C$, and $D$ respectively. Furthermore, since there are only finitely many such assembly sequences and assemblies, we can suppose that $\Vec{\alpha}'$ is the assembly sequence which simulates $\Vec{\alpha}$ such that the number of tiles in $\alpha'$ is minimized. We say that this assembly is minimal and we can assume likewise for the other three assembly sequences.

Notice that $\alpha'$ and $\beta'$ must share at least one tile somewhere within a 1 macrotile distance of the macrotiles corresponding to $\beta$ since $\alpha'$ cannot grow around macrotile $B$ nor $\beta'$ around $A$ (because such growth would be outside of the allowable \emph{fuzz}). This is likewise true for $\gamma'$ and $\delta'$. Keep in mind that we are in the Planar aTAM model and this intersection between $\alpha$ and $\beta$ means that the space in the center of the assembly would be cut-off from the plane and no tiles would be able to grow there in the future. Now, notice that $\beta'$ and $\delta'$ must share at least one tile as well. If this were not the case, then either all of the tiles of $\delta'$ would be entirely to the right of the tiles in $\beta'$ or entirely to the left. Notice that this latter situation is impossible since $\delta'$ has to place tiles in macrotiles to the right of the tiles of $\beta'$. Furthermore, if all of the tiles of $\delta'$ were to the right of the tiles in $\beta'$, then there would be some tiles of $\delta'$ inside the region of space which is encircled by $\alpha'$ and $\beta'$. Since $\alpha'$ and $\beta'$ can grow independently from these tiles, it would be possible for the growth of $\alpha'$ and $\beta'$ to finish before those tiles of $\delta'$ could grow. This would lead to multiple terminal assemblies which is impossible since we assume that our simulator is directed. Thus there must be some tile shared by both $\beta'$ and $\delta'$. This is likewise true for $\alpha'$ and $\gamma'$. For convenience call, let $X_{\alpha \beta}$ be the tile shared by both $\alpha'$ and $\beta'$ and define $X_{\alpha \gamma}$, $X_{\delta \beta}$, and $X_{\delta \gamma}$ likewise.

$\alpha$ and $\delta$ both place $T_1,\ldots,T_n$, albeit from different directions, so it must be the case that $\alpha'$ and $\delta'$ contain tiles which span across the corresponding macrotiles. For convenience, we call the macrotile blocks corresponding to $T_1,\ldots,T_n$, along with the macrotiles immediately north and south of them, $\Omega$. To reach our contradiction, that $\calT$ cannot be simulated in a directed fashion, we will consider a sequence of tiles within this $\Omega$ region belonging to both $\alpha'$ and $\delta'$ and we will show that this sequence of tiles grows as part of $\alpha$ in the opposite direction as in $\delta$. We will then show that this must lead to tiles which must be attached within an enclosed area in order to preserve directedness, which is impossible because of the planar constraint.

First consider the set of tiles belonging to $\delta'$ with the smallest $y$ coordinate for each column in $\Omega$. Let $P^\delta$ be the sequence of these tiles organized from smallest to largest $x$ coordinate. Since there are $n$ macrotiles across the width of $\Omega$ and a macrotile is an $m \times m$ square of tiles, $P^\delta$ will contain $m n$ tiles. Let $P^\delta_i$ be the $i$th tile in $P^\delta$ where $1\le i\le m n$. Notice that no tile could ever grow at a smaller $y$ coordinate in $\Omega$ than any tile in $P^\delta$ because otherwise it would be within the region bounded by $\delta'$ and $\gamma'$ and would thus lead to undirected growth. Keep in mind that $P^\delta$ may not be a contiguous sequence of tiles as can be seen in Figure~\ref{fig:DPaTAM-bottom-seq}. Notice, however, that the order in which the tiles appear in $P_\delta$, namely from smallest $x$ coordinate to largest, corresponds to the order in which the tiles are placed in $\Vec{\delta}'$. We will show this using induction. First suppose, for contradiction, that $P^\delta_1$ was placed after $P^\delta_j$ for some $1<j\le m n$. If this were true, there would necessarily be a contiguous path of tiles from $X_{\delta\beta}$ to $P^\delta_j$ which goes above $P^\delta_1$. $P^\delta_1$ would then be in a closed off region where it would not be able to influence or inhibit the growth of tiles in the $D$ macrotile at the end of $\delta'$ since it cannot grow underneath $P^\delta_j$. Thus, if during the assembly sequence $\Vec{\delta}'$, tile $P^\delta_1$ was never placed, it would still be possible to reach the $D$ macrotile and thus $\Vec{\delta}'$ does not properly minimize the size of $\delta'$. Therefore $P^\delta_1$ must be placed before all tiles in $P^\delta$ during the assembly sequence $\Vec{\delta}'$. Proving the inductive case is similar. Suppose that the order is correct for tiles in $P^\delta$ up to $P^\delta_k$ and suppose that $P^\delta_j$ is placed before $P^\delta_{k+1}$ with $k+1<j\le m n$. There must exist a contiguous path of tiles connecting $P^\delta_k$ to $P^\delta_j$ which encloses $P^\delta_{k+1}$ since it cannot grow underneath either $P^\delta_k$ to $P^\delta_j$. Thus $P^\delta_{k+1}$ would not be able to influence the growth of macrotile corresponding to $D$ and thus $P^\delta_{k+1}$ is not part of a minimal $\delta'$. This shows that $P^\delta$ is grown in order and we can define $P^\alpha$ likewise except that the order of $P^\alpha$ is from largest $x$ coordinate to smallest.

Now suppose that $P^\delta$ contains a tile which is not in $P^\alpha$. We quickly show that this is impossible and that any tile in $P^\delta$ must be identical to the tile in $P^\alpha$ with the same $x$ coordinate. If a different tile existed and it were below a tile in $P^\alpha$, as stated previously, it might be that $\alpha'$ and $\beta'$ finished growing before that tile could attach and thus the tile would attach in a constrained subspace. This would lead to undirectedness. If it were above, then the opposite could be true with a tile of $P^\alpha$ being below a tile in $P^\delta$. This, along with the symmetrical argument for all tiles in $P^\alpha$ being in $P^\delta$ show that $P^\delta$ and $P^\alpha$ share all of their tiles. Moreover, because of the previous argument this implies that the order in which those tiles are placed during $\Vec{\delta}'$ and $\Vec{\alpha}'$ are opposite.

\begin{figure}
\centering
\includegraphics[width=3in]{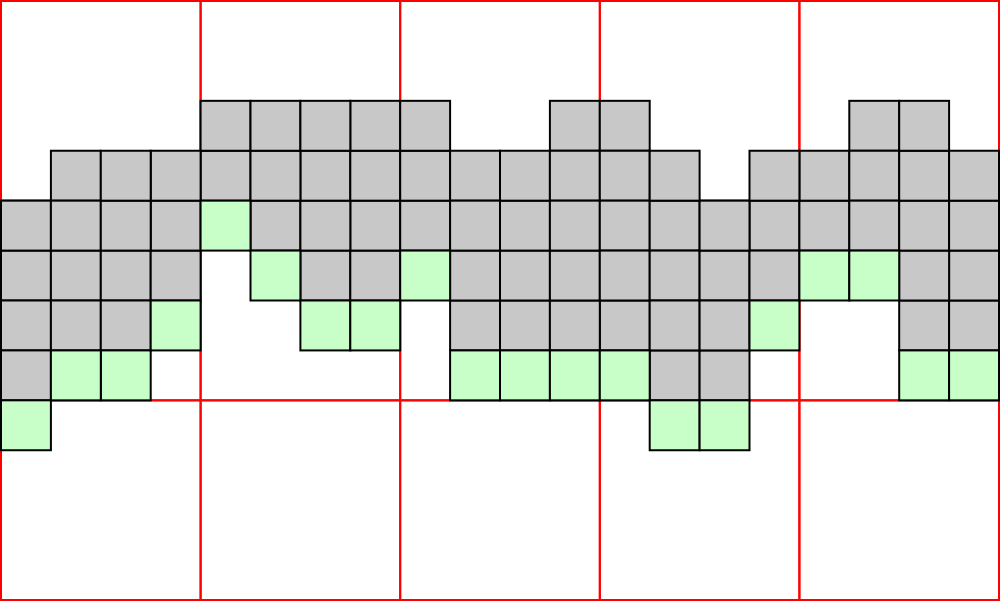}
\caption{Given the tiles of $\delta'$ in $\Omega$, the sequence $P^\delta$ is made up of the green colored tiles at the bottom of each column.}
\label{fig:DPaTAM-bottom-seq}
\end{figure}

Now suppose that $P^\delta$ consists of a straight horizontal line of tiles with no change in $y$ coordinate. In this case, there are two possibilities: either there is some tile in $P^\delta$, say $P^\delta_i$, such that the tile immediately north of $P^\delta_i$ is placed before $P^\delta_i$ during $\Vec{\delta}'$; or no such tile exists. If such a tile does exist, then it is possible for the tile north of $P^\delta_i$ to grow before $P^\delta_i$ and then for $P^\delta_{i+1}$ to grow from the other side following $\Vec{\alpha}'$. Since we assumed that all tiles in $P^\delta$ have the same $y$ coordinate, this would mean that $P^\delta_i$ is inside a constrained subspace and thus cannot attach leading to multiple finite assemblies. If no such $P^\delta_i$ existed, then the only way for each tile in $P^\delta$ to attach is using a $\tau$-strength glue from the previous tile in $P^\delta$. Since the scale factor must be at least $2$ and $\Omega$ is at least $n$ macrotiles across, this would lead to a pumpable line of tiles since there are only $n$ tiles in $U$. This could lead to arbitrary horizontal growth and means that the system would not be directed.

Thus, there must be at least one tile in $P^\delta$ with a different $y$ coordinate than the other tiles. This implies that there is at least one tile which is at a smaller $y$ coordinate than the tile before or after it in $P^\delta$. Suppose, without loss of generality, that tile $P^\delta_i$ is at a smaller $y$ coordinate than tile $P^\delta_{i+1}$. Notice that $P^\delta_i$ must have at least two adjacent tiles. If this were not the case and it only had a single adjacent tile, then it would only be bound by a $\tau$-strength glue to that one adjacent tile. Such a tile, however, could not be part of a minimal $\delta'$ since its removal would not affect any other tiles. Further notice that $P^\delta_i$ cannot have a tile to its south since it is the tile with the smallest $y$ coordinate of that $x$ coordinate in $\Omega$. Moreover $P^\delta_i$ cannot have a tile to its east since $P^\delta_{i+1}$ has the smallest $y$ coordinate with that $x$ coordinate and is at a higher $y$ coordinate than $P^\delta_i$. So $P^\delta_i$ must have a tile to its north and west. Let $t_n$ and $t_w$ be these tiles respectively. Suppose now, for contradiction, that during $\Vec{\delta}'$, the placement of $t_n$ happened before the placement of $t_w$. This would imply that there is a contiguous path of tiles connecting $P^\delta_{i-1}$ to $t_n$ before the growth of $P^\delta_i$. This would mean that the attachment of $P^\delta_i$ is unnecessary for further growth of $\delta'$ since no tiles growing from it could grow around $t_n$ or $P^\delta_{i-1}$. This means that $\delta'$ was not minimal and thus during $\Vec{\delta}'$, the placement of $t_w$ must happen before the placement of $t_n$. A similar argument shows that during $\Vec{\alpha}'$, the placement of $t_n$ must happen before the placement of $t_w$. Thus if we follow $\Vec{\delta}'$ up to the point where $t_w$ has attached and then follow $\Vec{\alpha}'$ to the point where $t_n$ has attached, $P^\delta_i$ will be in an enclosed region such that by the planarity constraint, it cannot grow. This means that the simulating system is not directed which is a contradiction.
Thus no such tileset $U$ can exist.

\end{proof}

\section{Design and Implementation of 3D aTAM Construction} \label{sec:construction}

In this section, we give more thorough description of the modules and growth process of our construction from Section~\ref{sec:construction-short} and prove lemmas regarding the functionality of these pieces. The lemmas will be put together into an overall proof of Theorem~\ref{thm:3DaTAMIU} in Section~\ref{sec:thm1-proof} and low-level technical details of the construction's implementation will be provided in Section~\ref{sec:low-level}.

The number of 3D aTAM systems is infinite. In order for a single, constant-sized tile set $U$ to simulate these systems, it is necessary for each simulating system to contain within its seed an encoding of the full tile set being simulated, and each tile $t$ of the simulated system must be represented by a macrotile containing an arrangement of tiles from $U$ which specify the type of $t$ under the representation function $R$. We will explain our construction by describing how that information is propagated into growing macrotile locations, as well as the logical \emph{modules}, or functional sub-assemblies, which form within each location which may grow into a new macrotile. These modules perform the necessary transfers and combinations of information as well as computations that determine which tiles should be represented and which information should be further propagated to neighboring locations. Along with the encoding of the simulated tile set $T$, %which we will call the \emph{$T$-encoding},
encoded within the seed of the simulator is a variety of information (e.g. dimensions, relative locations, etc.) which describes how the modules specific to the simulated system $\mathcal{T}$ are constructed. All of this information together is called the $\genome$, as it specifies the full set of information needed for the macrotiles grown by the generic tiles of $U$ to form, or differentiate into, macrotiles which represent specific tiles of $\mathcal{T}$.

Let $\alpha \in \prodasm{\mathcal{T}}$ be an arbitrary producible assembly (possibly the seed) of $\mathcal{T}$, $\beta \in \prodasm{\calUT}$ such that $R^*(\beta) = \alpha$ (i.e. $\beta$ is a producible assembly in $\calUT$ which maps to assembly $\alpha$ in $\calT$ under representation function $R^*$) and let $l$ be a location in $\alpha$'s frontier. We will discuss the growth of tiles in $\calUT$ into $\beta^m_l$, which is the location of the $m$-block macrotile in $\calUT$ representing the frontier space $l$, and which we will refer to simply as $L$.  Without loss of generality, assume that given $\beta$, no further tile attachments can occur in any of the occupied $m$-block macrotile locations adjacent to $L$ (i.e. the adjacent macrotile locations are currently, but perhaps temporarily, ``complete'') and that no tiles have been placed inside of $L$.  We will call the neighboring macrotile locations $s_d$ for $d \in \{N,S,E,W,U,D\}$. We say that a macrotile $L$ \emph{differentiates} into a tile type $t$ at the first point in time in which $L$ no longer maps to empty space under representation function $R$ and instead maps to a tile of type $t$.

Assume that for some $s_d$, the macrotile in that location is complete and represents a tile in $t$.  An important concept in understanding the growth of a macrotile is the use of \emph{datapaths} and \emph{guide rails}.  These gadgets grow from one point in the simulation of $\calUT$ to another while encoding binary information about a tile type, a glue strength, etc.  While explained in more detail in Section~\ref{sec:growth-patterns}, the difference between the two is essentially that a datapath encodes a set of ``instructions'' that navigate it to specific coordinates in the macrotile while guide rails use blocking and cooperation with preset structures to navigate.  Using these gadgets, the growth of tiles within $L$ can be thought of as occurring in three stages: setup, computation, and differentiation.

The first stage of growth, setup, consists of a copy of the $\genome$ which will wrap around the exterior of the macrotile space in three concentric bands (see Figure~\ref{fig:intercube_genome}). The $\genome$ grows by following a set of instructions embedded within a specific portion of the $\genome$ which guide it to a specific, predetermined location within $L$.  This location is fixed within $L$ (and is in the same relative location in all macrotiles) and the instructions used to get here are specific to which direction the neighbor $s_d$ is from $L$.  Once the $\genome$ has grown up to a certain point, another section of instructions within the $\genome$ are activated (i.e. those instructions begin to control new growth). These instructions spur growth that is responsible for building a series of modules which will process input glue information (from $s_d$ and any other neighbors which may provide it in the future) to determine if and when the location of $L$ has received enough input glues from neighboring locations to select and resolve into a tile from $T$.  Once the growth of these modules is complete, the input glue information from neighboring macrotiles can be fed into them. These modules are called the $\adderArray$, $\bracket$, and $\externalCommunication$.

\begin{figure}[htb]
\centering
\includegraphics[width=4.5in]{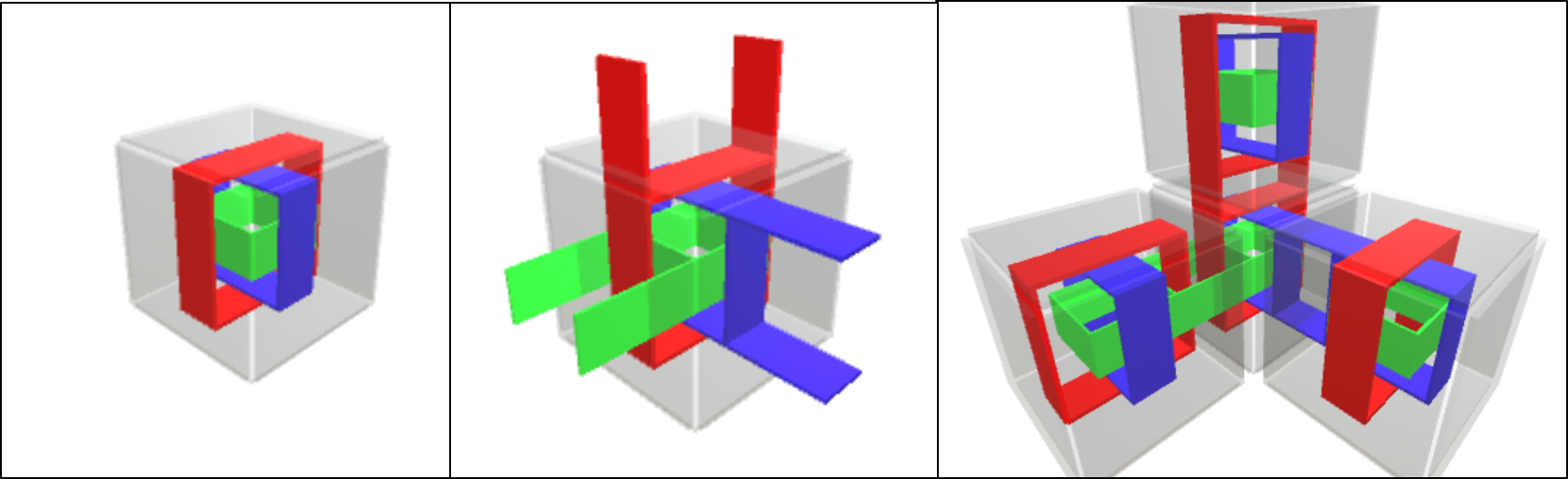}
\caption{(left) Bands of the $\genome$ before the macrotile begins to output the $\genome$ to neighboring macrotiles, (middle) depiction of growth outward from the $\genome$ bands, (right) full connections between the $\genome$ bands of 3 neighboring macrotiles.}\label{fig:intercube_genome}
\end{figure}

The second stage of growth is the computation stage. This begins with neighboring macrotiles ($s_d$ and any other neighbors that have already resolved to tiles of $T$) sending datapaths that encode their respective tile types to $L$. Each such datapath interacts with the $\genome$ in $L$ to find, or \emph{query}, which tile types in $T$ could potentially form a bond with the neighboring macrotile in the direction of that neighbor. If such a tile type exists, this spurs a new datapath that encodes the strength of the potential bond and which grows to the first module, called the $\adderArray$. This module sums up the strength values from these datapaths in order to determine which neighboring glues have enough collective strength to allow for the attachment of specific tile types from $T$.  When the $\adderArray$ determines that a tile type has enough input glues for its placement to be simulated (i.e. a tile of that type could have attached in the corresponding location in $\calT$), it outputs a guide rail to the next module, called the $\bracket$.  This module is necessary in the case that multiple such tile types in $T$ exist, and therefore $L$ could potentially resolve into multiple unique tile types. It works by moving the paths that encode these tile types through a competition in which only one ``wins''. This winning path encodes the tile type that $L$ resolves into.  Because the representation function $R$ simply looks at this encoding, the end of the $\bracket$ is where $L$ differentiates into this winning tile type. This initiates the final stage of growth.

The differentiation stage begins when the output path from the $\bracket$ grows to the final module, called the $\externalCommunication$. This module first sends a signal back to the $\genome$, prompting it to grow into all neighboring macrotiles. Additionally, the $\externalCommunication$ module sends a datapath encoding the tile type that $L$ represents to all neighboring macrotiles. This initiates the growth process in any empty neighboring macrotiles (and merges seamlessly into those which are not empty) and helps to continue the growth process in neighboring macrotiles that need more cooperation in order to differentiate. %An example depicting the high-level logical inputs and outputs of these modules is illustrated in Figure~\ref{fig:DataFlow} and more details of each are contained in the following sections.

\subsection{Growth process overview}\label{sec:full-growth-process}

The full growth process of a macrotile may be thought of as occurring sequentially in the order specified below. Note that steps \ref{step:genome-growth}-\ref{step:query} can overlap to some degree, however, whenever an unfinished module isn't in place at an early enough time, growth of an interacting piece will simply stall until the unfinished piece grows to a point in which they cooperate to allow further growth. Steps \ref{step:genome-prop} and \ref{step:ext-comm} are similar. While the growth isn't strictly sequential, it can be thought of as such (see Lemma~\ref{lem:timing}).

\begin{enumerate}[topsep=0pt,itemsep=-1ex,partopsep=1ex,parsep=1ex]
\item $\genome$ growth: Initiated by a neighbor, the $\genome$ propagates around the macrotile to form a full, three concentric band structure (see Figure~\ref{fig:intercube_genome} for an image and Section~\ref{sec:genome} for details).\label{step:genome-growth}

\item Initialization: Once the $\genome$ has grown up to a certain point, it ``seeds'' the other major modules, i.e. initiates the growth necessary for them to perform their functions in later steps.\label{step:init}
\begin{enumerate}[topsep=0pt,itemsep=-1ex,partopsep=1ex,parsep=1ex]
\item The $\externalCommunication$ is seeded (see Section~\ref{sec:external_communication} for details).
\item The $\bracket$ is seeded (see Section~\ref{sec:bracket} for details of this module).
\item The $\adderArray$ is seeded (see Section~\ref{sec:adderArray} for details of this module).
\end{enumerate}

\item Query: Whenever a neighboring macrotile differentiates, it sends an \emph{external communication} datapath containing a binary encoding of that neighboring macrotile's tile type into the currently growing macrotile. This datapath grows to a specified portion of the $\genome$ where it performs a \emph{query} operation. This works by finding (via lookup tables encoded in the $\genome$) the tile types that could bind to that neighbor. Any tile types that meet this condition spawn new datapaths encoding the strength of the bond that could form, and grow from the $\genome$ to the $\adderArray$.\label{step:query}

\item $\adderArray$ success:  The $\adderArray$ contains an adder specific to each tile type of $T$. If one of those adders determines that enough adjacent glues are present in neighboring macrotiles for the attachment of the tile type which it uniquely represents to be simulated, it grows input to the unique location in the $\bracket$ for that tile type.\label{step:adder}

\item $\bracket$ competition: One or more entries to the $\bracket$, each containing the encoding of a unique tile type in $T$, grow through toward the end, potentially competing in a series of pairs to be the first to grow into \emph{points of competition}, with only one ultimately reaching the end of the $\bracket$.  Once this occurs, the macrotile officially represents, via the representation function $R$ (see Section~\ref{sec:rep-function} for details), the corresponding tile type in $T$.\label{step:bracket}

\item $\genome$ propagation: A datapath is grown to the $\genome$ bands which propagates around and causes them to grow branching copies of the $\genome$ into each of the neighboring macrotiles.\label{step:genome-prop}

\item $\externalCommunication$: The identity of the tile type for this macrotile is propagated to each neighboring macrotile by a datapath which guides it to the location of the critical orientation of the neighbor's $\genome$ (i.e. a special portion of the $\genome$ that handles initialization and queries) corresponding to the correct query location for a glue from the relative direction of this macrotile. This effectively outputs the type of this tile to its neighbors.\label{step:ext-comm}
\end{enumerate}

For a given macrotile location $L$, if $L$ maps to a tile in $\calT$ and is contained in a terminal assembly of $\calUT$, then the macrotile at $L$ will have grown through all \ref{step:ext-comm} steps. If $L$ doesn't map to a tile in the terminal assembly but a neighboring macrotile location does, then the macrotile at $L$ will grow through the first \ref{step:query} steps. If neither $L$ nor any of its neighbors map to a tile in the terminal assembly, then the macrotile at $L$ will be empty.

\begin{figure}[htp]
\centering
\begin{subfigure}[b]{3.2in}
\label{fig:full-macrotile1}
\includegraphics[width=3.12in]{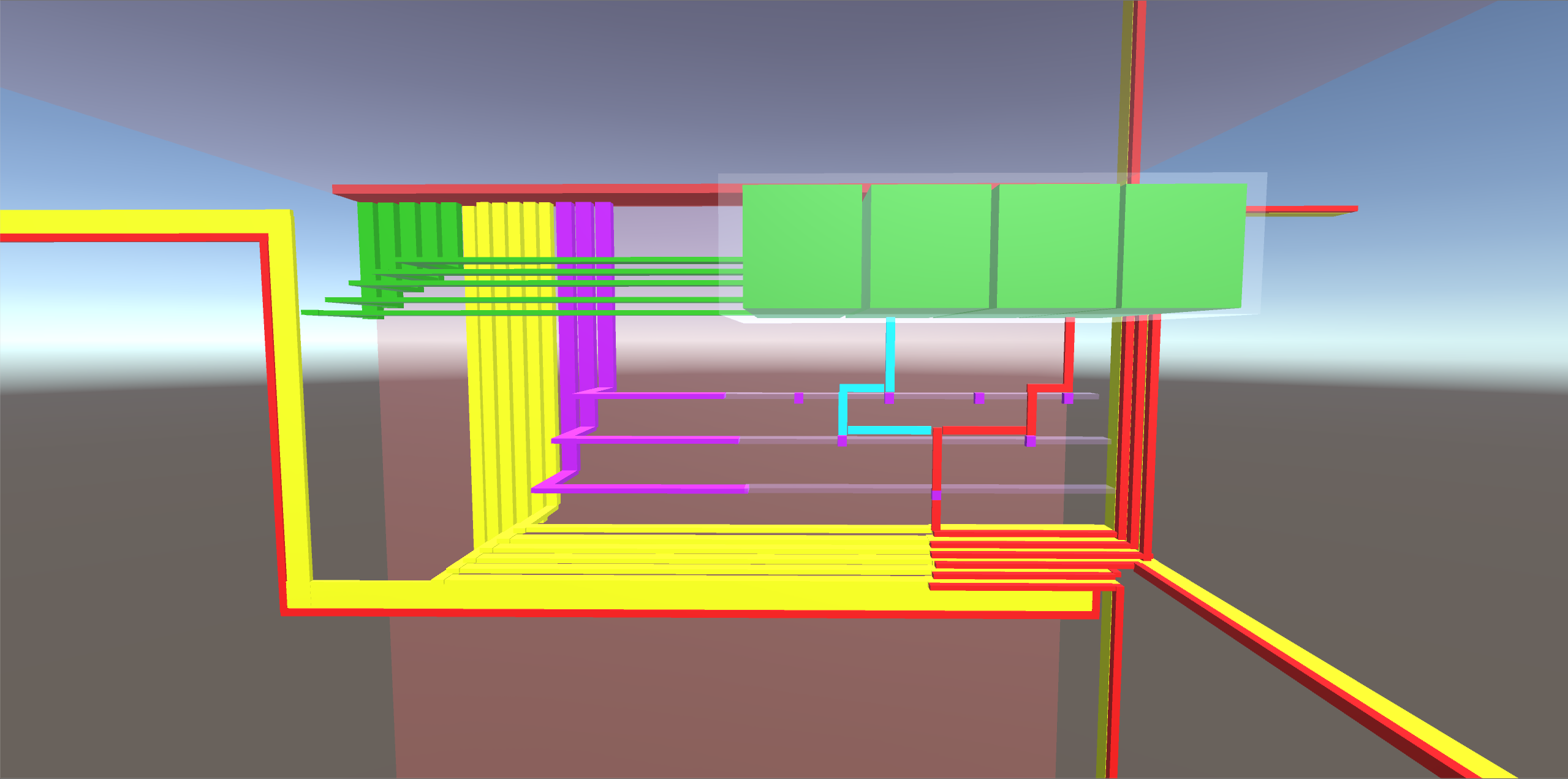}
\caption{Rotation 1}
\end{subfigure}
\begin{subfigure}[b]{3.2in}
\label{fig:full-macrotile2}
\includegraphics[width=3.12in]{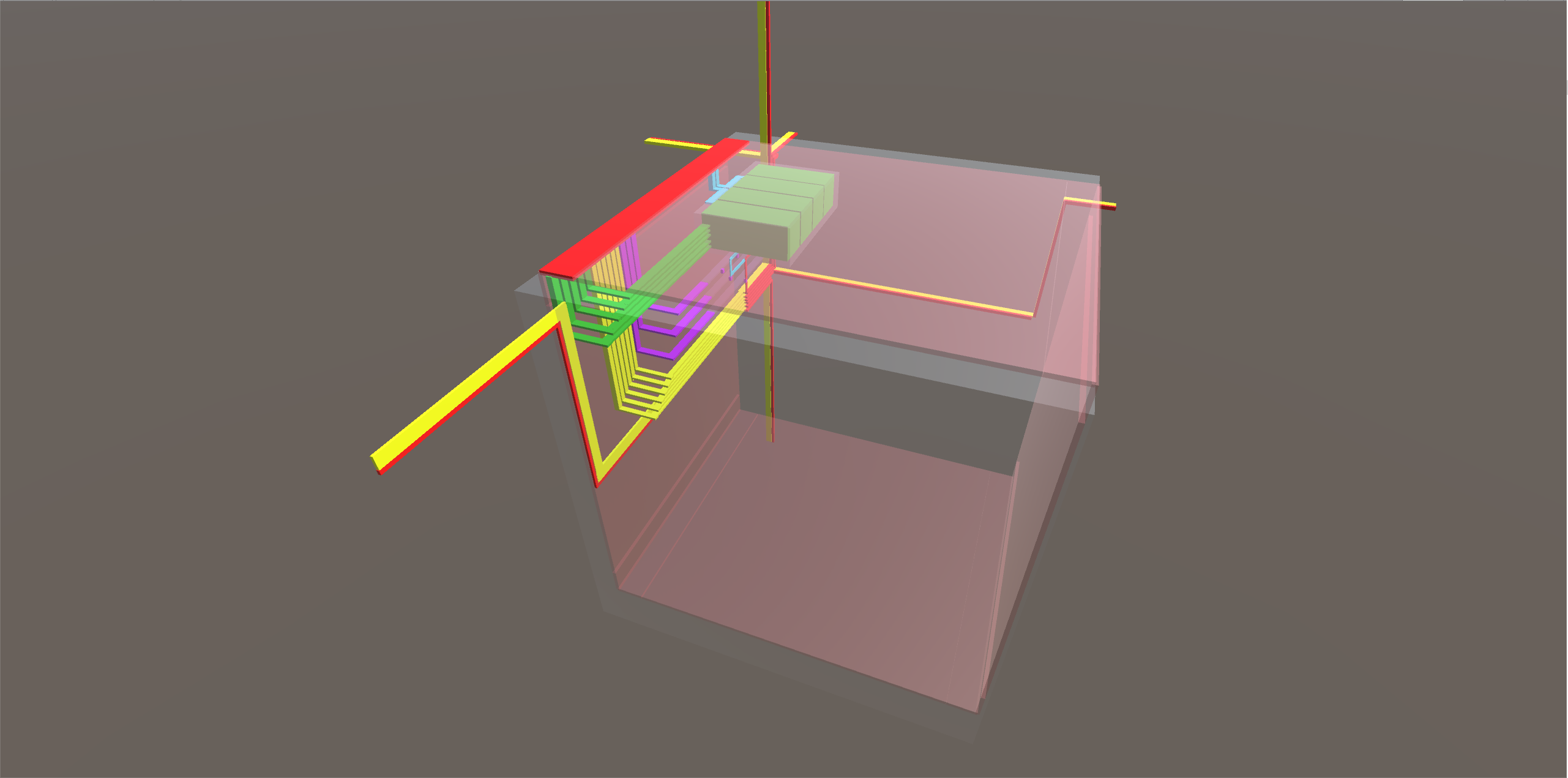}
\caption{Rotation 2}
\end{subfigure}
\caption{Schematic view of the major components of a macrotile excluding the bands of the $\genome$ other than the portion in the critical orientation (red, top). The green boxes represent the $\adderArray$, the purple section below it represents the $\bracket$, and the blue and red paths from the $\adderArray$ through the $\bracket$ represent two tile type representations competing with the red winning and initiating the $\externalCommunication$ paths which are yellow and red.  The paths from the critical orientation of the $\genome$ to components of the same color represent the datapaths which seeded the growth of those components.}
\label{fig:full-macrotile}
\end{figure} \hfill

\subsection{Representation function and seed generation}\label{sec:scale}

Given an arbitrary 3D aTAM system $\calT = (T,\sigma,\tau)$, in order to generate $\calUT = (U,\sigma_{\calT},\tau')$, we must compute (1) the scale factor $m$, (2) the structure of the $\genome$, (3) the seed structure $\sigma_{\calT}$, and (4) the representation function $R$ which maps macrotiles of $\calUT$ to tiles in $\calT$.
Discussion and details of how $m$ and the $\genome$ are computed can be found in Sections~\ref{sec:genome} and \ref{sec:scale-details}.  In this section, we briefly discuss how $R$ is created and our algorithm which implements the generation of the seed.  For more details on the calculations and algorithms, please see Section~\ref{sec:scale-details}.

\subsubsection{Macrotile representation function $R$}\label{sec:rep-function}

The tiles of $T$ are given an arbitrary but fixed ordering.  Based on that ordering, each tile type is assigned a number $0 \le i < |T|$.  The binary encoding of the tile type's number $i$ padded to length $\lceil \log_2(|T|) \rceil$ is then used as the \emph{binary representation} of that tile type throughout the construction.  To determine the representation of a macrotile $L$, the macrotile representation function $R$ simply looks at the coordinates of the $\lceil \log_2(|T|) \rceil$ locations for the end of the $\bracket$'s output (which is a fixed set of coordinates for all macrotiles based on the value of $m$).  If any of those spaces are empty, the entire macrotile maps to empty space.  However, if all of those locations are occupied by tiles, then they will encode the binary representation of a tile type in $T$ (see Lemma~\ref{lem:bracket}), and this is what $R$ maps the macrotile to.%  The assembly representation function $R^*$ is simply the application of $R$ to all macrotiles of an assembly in $\prodasm{\calUT}$.

\subsubsection{Structure of macrotiles representing seed tiles}\label{sec:seed}

For each location $l$ and tile $s$ in $\sigma$, a macrotile $S$ is created in $\sigma_{\calT}$. Each $S$ contains the following:

\begin{enumerate}[topsep=0pt,itemsep=-1ex,partopsep=1ex,parsep=1ex]
\item A single but complete row of the $\genome$ at a specific intersection point of the bands.
\item Tiles encoding the binary representation of the tile type in $T$ which this macrotile represents in the positions at the end of the output of the $\bracket$.  Note that a special tile type is used for the bottom-most such tile so that it will grow in the forward direction out of the $\bracket$ but not backward into it.
\item A single-tile-wide path of tiles which connect the row of the $\genome$ to the $\bracket$ output tiles.  This path of tiles contains strength-1 glues between all adjacent tiles so that each is attached with exactly strength 2, and each contains no additional glues on any other sides.  This connecting path is simply to ensure that $\sigma_{\calT}$ is a stable assembly.
\item A set of special \emph{blocking tiles} which are located in the $6$ positions in which $\genome$ queries would otherwise be able to begin, and $2$ more which are located in the positions from which the $\genome$'s initialization of the $\adderArray$ and $\bracket$ would begin.  These blocker tiles prevent $\externalCommunication$ datapaths from initiating queries in a seed tile macrotile and also prevent the macrotile from growing the $\adderArray$ and $\bracket$.
\item A single-tile-wide path of tiles which connects the row of the $\genome$ with each of the blocking tiles, in a similar fashion and for the same reason as the path to the $\bracket$ output tiles.
\end{enumerate}

If $\sigma$ contains more than one tile, then the final addition to $\sigma_{\calT}$ is that the macrotiles are given an ordering and a connecting path of tiles links their $\genome$ rows in that sequence, again to ensure that the seed assembly is stable.  Note that, for all of the connecting paths, it is simple to ensure that they don't occupy space that could be used by other components which may grow in a macrotile.

Seed macrotiles in $\calUT$ are different from other macrotiles in several ways.  First, the six input query regions on the critical orientation of the $\genome$ are blocked to ensure that no later input can ever initiate a query.  Second, the $\adderArray$ and the $\bracket$ are prevented from growing.  Only the $\externalCommunication$ datapaths will be initialized. Once they grow to the necessary location to receive output from the $\bracket$, the tiles that were placed there as part of the seed to represent the tile type of the macrotile then cooperate to begin growth of (1) the signal that travels back along those datapaths to the $\genome$ and initiates the growth of the $\genome$ into neighboring macrotiles and (2) the $\externalCommunication$ datapaths to all neighboring macrotiles.

\subsection{Independence of timing of growth}

\begin{figure}[htb]
\centering
\includegraphics[width=\textwidth]{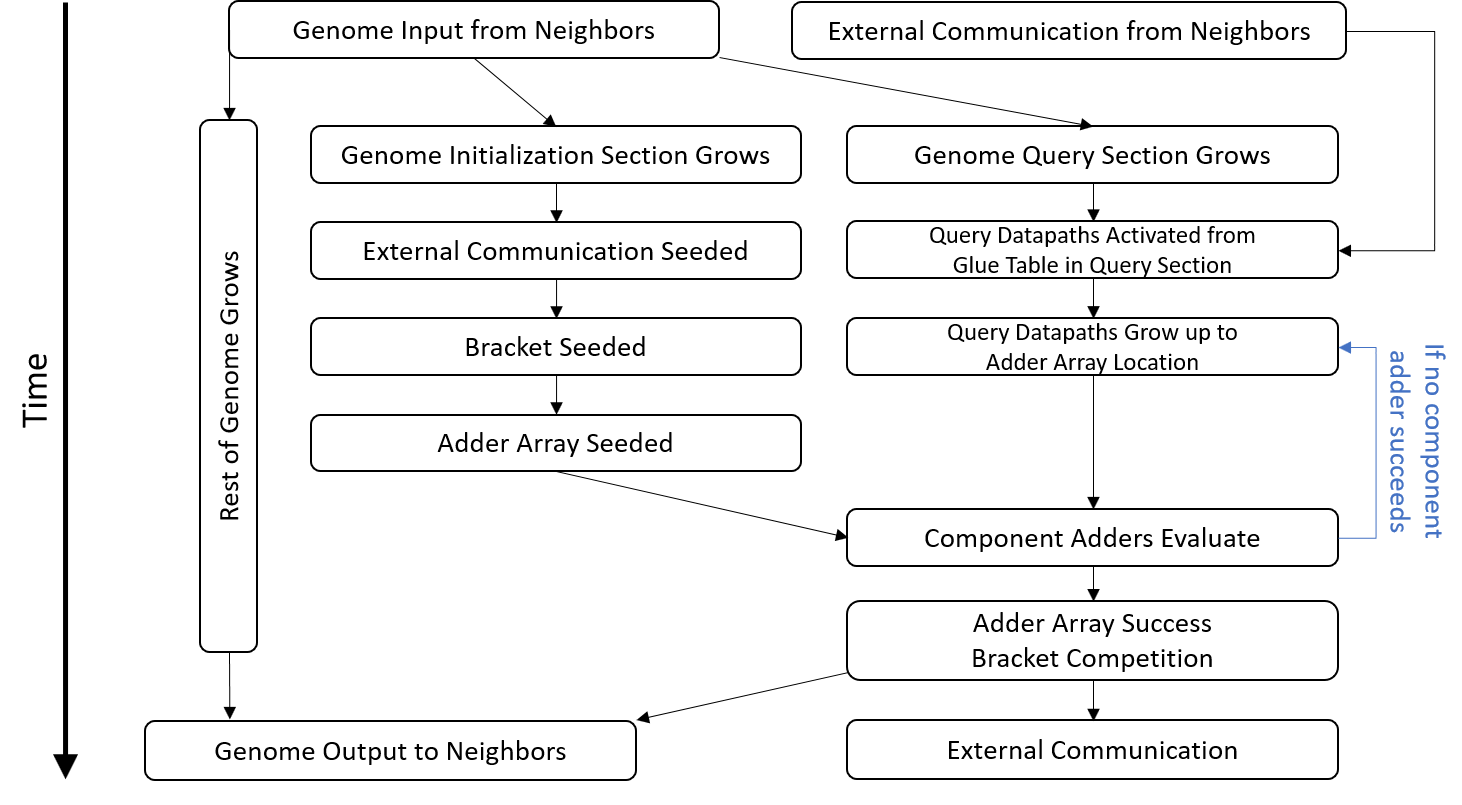}
\caption{A diagram of the time dependencies in the full growth process of a macrotile. Events at the beginning of an arrow must happen before an event at the end of an arrow can happen.}\label{fig:time_dependencies}
\end{figure}

\begin{lemma}\label{lem:timing}
Let $L$ be a macrotile in $\calUT$. Whether the modules within $L$ grow in the order specified in Section~\ref{sec:full-growth-process} or a different order, the set of all possible tile placements within macrotile $L$ remains the same.
\end{lemma}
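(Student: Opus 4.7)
The plan is to reduce the claim to a standard confluence-style argument by showing that the growth of the modules is governed by a directed acyclic graph of cooperation dependencies (illustrated in Figure~\ref{fig:time_dependencies}), and that any pair of events that are incomparable in this DAG is commutative in the sense that running either first does not enable or disable any tile placement of the other. First I would enumerate the modules involved — the $\genome$ bands, the seeded $\externalCommunication$, the seeded $\bracket$, the seeded $\adderArray$, the query datapaths originating from a neighbor, and the $\genome$-propagation and outbound $\externalCommunication$ datapaths — and identify the disjoint regions of the macrotile each occupies, together with the small set of interface locations where two modules meet and cooperate.

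Next, for each outgoing edge in the dependency diagram, I would verify that the downstream module only attaches its first tile at the interface by a strength-$\tau$ cooperative bond that uses at least one glue that cannot exist until the upstream module has grown to that interface. This is the monotonicity observation: because the aTAM never detaches tiles, if the downstream module's seeding tile has no valid attachment while the upstream module is incomplete, then launching the downstream module ``early'' simply results in no placements until the upstream module catches up. Conversely, for pairs of modules that are incomparable in the DAG — for example, the $\adderArray$ seeding, the $\bracket$ seeding, and the $\externalCommunication$ seeding, all initiated by the critical orientation of the $\genome$ — I would use the fact that their footprints are disjoint and that none contributes glues along the boundary of another, so any interleaving of their assembly sequences yields the same set of reachable placements.

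With these two pieces, the proof is a straightforward diamond/local-confluence argument. Formally, let $\vec{\alpha}$ and $\vec{\beta}$ be two $\calUT$-assembly sequences inside $L$ that agree on what tiles have been placed by neighbors but differ only in the scheduling of intra-macrotile events. I would show, by induction on the length of the longer of the two sequences, that for every tile placed in $\vec{\alpha}$ there is a position in $\vec{\beta}$ at which the same tile is placed (and vice versa), using the two observations above to commute any two adjacent steps that come from independent modules and to defer any step whose preconditions are not yet met. This establishes that $\prodasm{\calUT}$ restricted to $L$ is the same set of partial macrotile configurations regardless of the order in which modules are brought online, which is exactly the statement of the lemma.

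The main obstacle will be handling the interface tiles between modules carefully: I need to rule out any ``spurious'' cooperative bond that could allow a downstream module to begin (or a module to grow into a region it should not occupy) when an upstream module is partially but not fully grown. I expect to discharge this by appealing to the key-and-latch technique referenced in Section~\ref{sec:directed3d-short} and the explicit glue discipline of the $\genome$ bands, $\adderArray$, $\bracket$, and $\externalCommunication$, each of which is designed so that its initiation tile has a unique input-glue signature present only on the finished upstream structure; any attempt to attach it elsewhere or earlier fails for want of $\tau$-strength cooperation.
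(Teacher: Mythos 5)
Your proposal is correct and reaches the same conclusion as the paper, but it does so through a somewhat more formal and general framework. The paper's proof is more direct and enumerative: it identifies four explicitly sequential sub-processes (genome band growth, module initialization, query section growth, incoming external-communication datapaths), asserts that since each is individually sequential the only scheduling freedom lies at the three ``merge points'' where two of them must interact, and then for each of those three merge points observes that the interaction requires strength-$\tau$ cooperation across a gap, so whichever component arrives first simply stalls until the other is present and the same tile is placed. Your proposal repackages the ``sequential sub-processes'' observation into a DAG of cooperation dependencies and the ``merge point'' observation into a monotonicity/commutation lemma for incomparable events, then closes via a diamond-style local-confluence induction. What your approach buys is rigor and reusability: the diamond argument makes explicit the induction that the paper leaves implicit, and it scales naturally to new modules (as would be needed for the Spatial aTAM augmentation) by adding nodes and edges to the DAG rather than re-enumerating merge points. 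What the paper's approach buys is brevity and concreteness: three named merge points, each dispatched with a single cooperation argument, is easier to check against the actual tile set than a general confluence claim. One point worth flagging: you are right to call out ``spurious'' cooperative bonds as the main obstacle --- the paper does not discharge this inside the proof of the lemma but instead relies on the key-and-latch and circular-latch machinery described separately in Sections~\ref{sec:growth-patterns} and \ref{sec:collision-tolerance}, and you correctly identify that an invocation of those is needed to make the commutation argument airtight. Also note that your inductive claim is phrased over two fixed assembly sequences that ``agree on what tiles have been placed by neighbors''; since incoming external-communication datapaths are themselves among the racing modules, you would want to either treat neighbor outputs as part of the scheduled events in the same DAG, or quantify over prefixes of neighbor output explicitly, to avoid smuggling in an assumption the lemma must itself establish.
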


\begin{proof}
As illustrated in Figure~\ref{fig:time_dependencies}, there are four sequential sub-processes of growth within the full growth process of a macrotile. The first sub-process is the growth of the $\genome$ outside of the critical orientation. The second sub-process is the initialization of the other modules from the $\genome$. The third sub-process is the growth of the query section of the $\genome$. The fourth sub-process is the growth of external communication datapaths from neighbors. There are three points at which two of these paths must merge (i.e. the components must interact). Because the sub-processes leading up to these points are all sequential, these ``merge points'' are the only events within the full growth process in which we must show that the arrival of one path before another cannot cause different tile placements to occur.

The first merge point is the activation of query datapaths from the glue table of the $\genome$. This requires cooperation from a query row in the $\genome$ and an $\externalCommunication$ datapath from a neighboring macrotile. Regardless of which component grows to the designated point of cooperation first, they can only interact using cooperation, meaning no additional growth can happen until both components are present, and therefore both timings result in the same tile placement.

The second merge point is the evaluation of a $\componentAdder$. This requires cooperation between the ``seeded'' part of the $\adderArray$ (i.e. the portion whose growth is initiated by the $\genome$'s initialization section and which can grow before any input arrives) and the query datapaths from the $\genome$. Again, because this requires cooperation, the evaluation of a specific $\componentAdder$ cannot happen until both the seeded parts of the $\adderArray$ and the query datapaths are present, after which the same tiles will be placed, regardless of which component arrived first.

The final merge point is the propagation of the $\genome$ to neighbors. This requires cooperation between the $\genome$ and a signal that is activated once a guide rail wins the $\bracket$ and grows to the $\externalCommunication$. The $\genome$ presents strength-1 glues that will begin propagation to neighbors, but the signal provides a strength-2 glue that is required to initiate the growth. Without the completed $\genome$, the signal will stall on the partial $\genome$ until more fills out. Without the signal, the output regions of the $\genome$ will stall until the signal is provided. Therefore, regardless of which is present first, the same tile placement ensures.
\end{proof}

A consequence of Lemma~\ref{lem:timing} is that, in later proofs, since different ordering always yield the same tile placements for the majority of macrotile growth (and in fact all macrotile growth if $\calT$ is directed), we can always assume that growth happens in the preferred ordering given earlier in this section.

\subsection{Genome}\label{sec:genome}
The $\genome$ module is present in partial and completed macrotiles and contains the entirety of information about both the simulated tile set $T$, the temperature $\tau$ of $\calT$, and the information (dimensions and directions) needed to construct the macrotiles of $\calUT$. It consists of three sections: (1) a series of instructions used to grow the bands of the $\genome$ itself, referred to as $G_1$, (2) information related to the definitions of the glues and tiles in $T$, referred to as $G_2$, and (3) the information necessary to construct the $\adderArray$, $\bracket$, and $\externalCommunication$ modules with dimensions and locations dictated by the scale factor of the simulation, referred to as $G_3$.

\subsubsection{Structure}

The smallest unit of the $\genome$ is a \emph{row}, a linear set of tiles. All of the information that the $\genome$ represents is contained within each row. However, the overall structure of the $\genome$ is three connected, concentric bands each made up of multiple rows, as illustrated on the left in Figure \ref{fig:intercube_genome}. The three bands are connected where they overlap. While the system definition information encoded in the rows of the $\genome$ is only utilized by other modules on the ``up'' face of the innermost band (which we also refer to as the \emph{critical orientation} of the $\genome$), the full band of the $\genome$ is necessary to transport the information in the $\genome$ from a neighboring macrotile in any direction to the place where queries which utilize the information will eventually happen. (Note that during the proof of Theorem~\ref{thm:directed3DIU} this will be important because this also works to preserve directedness of $\calUT$, since the completed structure doesn't indicate which neighbor initiated the growth of the $\genome$ structure.)

We refer to the corners of each band as \emph{intersections}. These are the points that connect the $\genome$ of one macrotile to another, as illustrated on the right in Figure \ref{fig:intercube_genome}. Whenever the growth of a $\genome$ is initiated by a neighboring macrotile, it grows to a designated ``input'' intersection that begins the propagation of the full $\genome$ structure. In the event that the macrotile eventually differentiates, the $\externalCommunication$ module sends a callback signal, which will be further described in Section~\ref{sec:callback}, that circulates over the $\genome$ and initiates the propagation of the $\genome$ to the six neighbors through six designated ``output'' intersections. Note that, if another neighbor tries to propagate the $\genome$ and the current macrotile already has a full $\genome$ structure, the incoming path of tiles will seamlessly merge with the existing $\genome$ bands via an ``input'' intersection.

\subsubsection{$G_1$ - $\genome$ movement}

Whenever a macrotile location finishes the process of differentiation (i.e. determining the tile type from $T$ that it simulates), it initiates the growth of the $\genome$ into all of the neighboring macrotile locations. As the $\genome$ begins to grow into an empty macrotile location, it uses the information encoded in the first section, $G_1$, to propagate around the entire macrotile in the concentric bands structure. (While only the information on the critical orientation of the $\genome$ is used to control growth of the macrotile's additional modules, the entire structure is propagated in order to preserve directedness when necessary, since the $\genome$ growth can be initiated by any of the six neighboring locations, as will be discussed for the proof of Theorem~\ref{thm:directed3DIU}. Details of the mechanisms which ensure this can be found in Section~\ref{sec:collision-tolerance}.) The information is encoded as a series of instructions which are executed one at a time as the $\genome$ grows, and specific instructions are executed for each band.  A schematic representation of the bands of the $\genome$ can be seen in Figure~\ref{fig:intercube_genome}.

\subsubsection{$G_2$ - Glue table}\label{sec:glue_table}
The second section of the $\genome$ includes information about the tiles in $T$. The section has up to $6|T|^2$ entries, one for each direction cross each pair of tiles if that pair of tiles can bind in that direction, all separated by delimiters. (If a pair of tiles cannot bind in a given direction, there is no entry.) Entries are organized into the following hierarchy.

\begin{enumerate}
\item \emph{Highest Level} - a group that represents the tile type $t \in T$ represented by the neighboring macrotile
\item \emph{Middle Level} - a subgroup that represents the direction of the neighboring macrotile from the currently growing macrotile
\item \emph{Lowest Level} - individual entries that represent the tile types of $T$ which could bind to $t$ in the specific direction %different tile types that the currently growing macrotile can potentially differentiate into
\end{enumerate}

This hierarchy can be seen in Figure \ref{fig:GlueTableDetailed}. Each entry in the glue table includes a row of tiles encoding instructions that, once ``activated'' (i.e. when tiles grow into a specific adjacent location, allowing cooperative tile attachments to initiate further growth), will initiate growth of new datapaths to the $\adderArray$. We call the growth of a datapath across the $\genome$ for this purpose, and then to the $\adderArray$ (when necessary) a \emph{query}.  There are $6$ rows in the critical orientation of the $\genome$ that are specified for queries, one for each direction.  A depiction of how the section of the critical orientation grows so that there is a row specific to each direction can be seen in Figure~\ref{fig:InstructionsByBand}.  Whenever an $\externalCommunication$ datapath grows from a neighboring macrotile to the designated row for its direction, it grows along this section, counting along the delimiters until it finds both (a) the group that represents the tile type represented by the neighboring macrotile that it grew from and (b) the subgroup that represents the direction that it grew from. Here, it activates the datapaths for each tile type that could attach to the adjacent glue represented by that neighboring macrotile, each of which grows to the correct portion of the $\adderArray$ for that tile type. Embedded within these datapaths is the information about (a) the strength of the bond that can form between the neighbor and the potential tile type that the datapath represents and (b) the relative location of the specific component within the $\adderArray$ that corresponds to the same tile type as the datapath.

\begin{figure}[htb]
\centering
\includegraphics[width=6.5in]{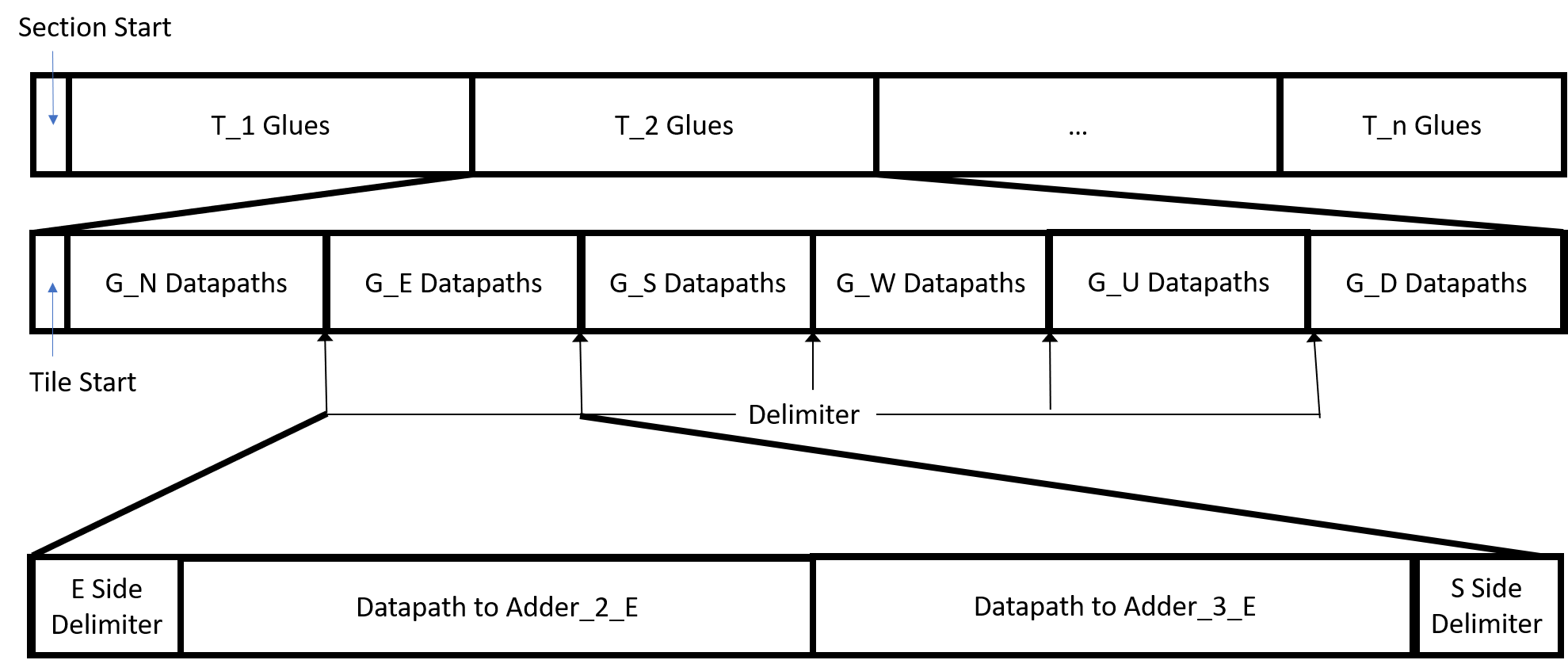}
\caption{Layout of the glue table portion of the $\genome$.  The top row depicts how the entire glue table is divided into sections with one for each tile of $T_i \in T$.  The middle row shows how each tile entry is actually composed of one section for each of the six directions, corresponding to the glues on those sides of $T_i$.  The bottom row shows how each entry for a direction is actually composed of a series of datapaths, where each encodes the directions needed for the datapath to grow to the section of the $\adderArray$ which corresponds to a tile which could bind to the specific glue.}\label{fig:GlueTableDetailed}
\end{figure}

\subsubsection{$G_3$ - Initialization}\label{sec:initialization}
The third and final section of the $\genome$ contains the instructions for how to ``seed'' the other major components of the macrotile, i.e. the $\adderArray$, $\bracket$, and $\externalCommunication$ modules. As soon as the $\genome$ has grown into the critical orientation, this section initiates the growth of datapaths that cause the growth of the general structure of these other modules so that they can perform the correct functions during later stages of macrotile growth. The type of growth through the internal components of a macrotile during differentiation makes the ordering in which these modules are created important (i.e. sometimes the growth relies on blocking tiles to be present in advance). Therefore, the initialization datapaths employ a protocol that we refer to as a \emph{callback} in which the datapath that seeds the $\externalCommunication$ module, once finished, will send a signal that initiates the datapath that seeds the $\bracket$, which, in turn, will send a signal that initiates the datapath that seeds the $\adderArray$. Therefore, the ordering of completion of these components always begins with the $\externalCommunication$ module, then the $\bracket$, and finally the $\adderArray$. (More information about callback functionality can be found in Section~\ref{sec:callback}.)  The seeds for the $\externalCommunication$ include $6$ datapaths which are positioned so that the first is adjacent to the final output location of the $\bracket$. This allows the output of the $\bracket$ to spur cooperative growth which initiates the growth of $\externalCommunication$ datapaths which navigate to the $6$ neighbors. Additionally, the first seeded datapath also provides the glues for a path whose growth is initiated by the $\bracket$ output which travels back to the $\genome$ then around the bands, causing them to output the $\genome$ into all $6$ neighboring macrotiles.

\subsubsection{Correctness of $\genome$}
In this section we show that the $\genome$ correctly performs the necessary functions specific to it.

\paragraph{Correctness of $\genome$ encoding}

\begin{lemma}[Genome correctness]\label{lem:genome-data}
Given an input system $\calT$, the $\genome$ is correctly generated so that it encodes all necessary data for $G_1$, $G_2$, and $G_3$.
\end{lemma}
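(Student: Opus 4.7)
The plan is to proceed constructively by exhibiting an explicit algorithm that, given an input system $\calT = (T,\sigma,\tau)$, produces a finite list of tile-type specifications encoding each of the three sections $G_1$, $G_2$, and $G_3$, and then arguing separately that each section contains exactly the data required by its role as described in Section~\ref{sec:genome}.

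First I would handle $G_1$, which is the simplest since its content depends only on the scale factor $m$ (itself a function of $|T|$ and $\tau$) rather than on the details of $T$. The instructions encode, as a short list of (direction, step-count) pairs, the path traced by the three concentric bands depicted in Figure~\ref{fig:intercube_genome}; correctness reduces to checking that interpreting these instructions under the datapath movement rules (described in Section~\ref{sec:growth-patterns}) produces the required band geometry, which is a bounded, mechanical verification.

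Next, for $G_2$, I would describe an algorithm that iterates over each ordered triple $(t, d, t') \in T \times \{N,S,E,W,U,D\} \times T$ and, for each triple, inspects the glue on side $d$ of $t$ and the glue on the opposing side of $t'$. Whenever these two glues match in label and have positive strength, an entry is appended containing (i) the strength of the shared glue and (ii) the relative navigation instructions from the query row on the critical orientation of the $\genome$ to the input of the $\componentAdder$ corresponding to $t'$ in the $\adderArray$. The resulting list is then partitioned into the three-level hierarchy (outer by $t$, middle by $d$, inner by the enumerated set of valid $t'$) with delimiter tiles inserted between groups as required for the counting scheme described in Section~\ref{sec:glue_table}. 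Correctness here amounts to showing that this hierarchy enumerates exactly the potential bonds and that the embedded navigation instructions correctly reference the locations used by the $\adderArray$; this is essentially a bookkeeping check against the coordinate scheme fixed by $m$.

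Finally, $G_3$ is handled by emitting the hardcoded seeding instructions for the $\adderArray$, $\bracket$, and $\externalCommunication$ modules, plus the callback routing described in Section~\ref{sec:initialization}; since these instructions depend only on $m$ and the internal layout of the macrotile (and not on the glue details of $T$), correctness is analogous to the $G_1$ case. The main obstacle will be the $G_2$ verification, specifically aligning the direction encoding and the relative-coordinate navigation so that a query initiated by an $\externalCommunication$ datapath arriving from direction $d$ and encoding tile type $t$ actually reaches and activates exactly the correct subset of $\componentAdder$ inputs; this requires cross-referencing the coordinate conventions used independently by the external communication paths, the glue table, and the $\adderArray$ layout to confirm that all three agree, which is more a careful case analysis than a deep argument.
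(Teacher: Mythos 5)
Your constructive, section-by-section decomposition ($G_1$, then $G_2$, then $G_3$, each with an explicit generation routine followed by a verification check) matches the paper's proof in spirit, and your observation that the $G_2$ coordinate-alignment bookkeeping is the most involved part is reasonable. However, there is one idea the paper's proof treats as central and which your proposal skips entirely: the scale factor $m$ is \emph{not} directly ``a function of $|T|$ and $\tau$'' that can be computed up front and then used to emit the datapaths. The step counts embedded in $G_1$, the navigation offsets in $G_2$'s query datapaths, and the seeding instructions in $G_3$ all measure distances in units of tiles, so their bit-widths depend on $m$; but $m$ is itself determined by the total width of the $\genome$ band, which is the sum of the widths of exactly those datapaths. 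This circular dependency must be resolved before any of your three emission routines can produce a single concrete tile. The paper handles this by overestimating the bit-widths, computing the resulting $m$, checking sufficiency, and iterating if necessary; termination follows because the required widths grow only logarithmically in $m$ while $m$ itself grows polynomially, so the overestimate quickly stabilizes (yielding $m = O(|T|^2\log(|T|\tau))$). Without an argument of this kind, your proposal implicitly assumes the existence of the very coordinate system it is supposed to construct, and the subsequent verifications for $G_1$, $G_2$, and $G_3$ have no fixed target to be verified against.
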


\begin{proof}
Here we provide a high-level sketch of the correctness of Lemma~\ref{lem:genome-data}.  Details of the scale factor calculation as well as the algorithms for creating the $\genome$ and seed can be found in Section~\ref{sec:scale-details}. The online software implementation of these algorithms can be found at the address provided in Section~\ref{sec:implementation}.

The $\genome$ for simulating $\calT$ is essentially just a series of datapaths which encode directions and distances of growth and payloads to be delivered, separated by delimiters which dictate what each datapath corresponds to. The calculations of values for each specific datapath depend upon the scale factor of the simulation by $\calUT$, which in turn depends upon the size of the simulated tile set $T$.  However, the scale factor of the simulation also depends (to some extent) upon the widths of the datapaths needed to encode the distances to travel. Fortunately, this circular dependency can be easily resolved by attempting to overestimate the number of bits that will be needed to encode distances, using that estimate to determine datapath widths, computing the scale factor based on that estimate, then checking to ensure that the datapath widths are sufficient to encode distances for that scale factor.  If not, the estimate is increased until a sufficient scale factor is found.  Since the number of bits only grows logarithmically as the scale factor grows, it is easy to limit the number of iterations necessary.  The scale factor $m$ of the simulation is $O(|T|^2\log(|T|\tau))$. (Details of this analysis can be found in Section~\ref{sec:scale-details}.) Once the scale factor has been computed, and using the size of $T$ and value of $\tau$, it is relatively straightforward to determine the necessary spacing and location of macrotile components, and from that to compute the datapaths necessary to (1) initialize and grow macrotile components, and (2) resolve $\genome$ queries by growing to necessary $\adderUnit$s within the $\adderArray$, as this simply requires knowledge of the location of the individual $\adderUnit$s as well as an inspection of which pairs of tiles of $T$ have matching glues in each direction and the strengths of those glues.

Given the specifications of the necessary datapaths, the creation of the $\genome$ is a simple matter of placing the datapaths of $G_1$, $G_2$, and $G_3$ in a specified order separated by the correct delimiters. The time complexity of the entire seed generation is $O(|\sigma||T|^2\log(|T||\sigma|\tau))$.  (Details of this analysis can also be found in Section~\ref{sec:scale-details}.)
\end{proof}

\paragraph{Correctness of $\genome$ propagation}

\begin{lemma}[Genome growth]\label{lem:genome-growth}
Let $\beta \in \prodasm{\calUT}$ and $L \sqsubseteq \beta$ be a subassembly of $\beta$ such that $L$ is a single macrotile. If $L$ contains any complete row of an intersection point of the $\genome$, then the complete and correct bands of the $\genome$ will grow.
\end{lemma}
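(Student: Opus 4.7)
The plan is to view the hypothesis as giving us a single full ``row'' at a corner of one of the three concentric bands, and then argue that the instructions encoded in $G_1$ deterministically drive this row around each band and across each corner until the entire three-band structure exists. Throughout, I will rely on Lemma~\ref{lem:genome-data} to assume $G_1$ is well-formed, and on the standard aTAM property that once a valid cooperative frontier exists, the corresponding tile must eventually attach.

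First I would show that along a straight portion of a band, growth propagates one row at a time. The key observation is that every row contains the complete $G_1$ instruction list together with a ``program counter'' (the distance travelled along the current straight segment). Given the previous row, the tile types can only extend in one direction consistent with that counter, so by induction on the number of rows placed, the straight section grows to its prescribed length with the counter properly decremented on each step. This is purely local: each new row depends only on the immediately preceding one, and the tile types in $G_1$ are designed so that backward growth (against the counter) is blocked by the ``key and latch'' mechanism referenced in Section~\ref{sec:growth-patterns}.

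Next I would handle the corners. When the counter in $G_1$ reaches zero, the next block of instructions specifies a turn (direction and which band we are still on). I would argue, by case analysis over the finitely many types of corners in the three-band structure, that the tile types at the end of each straight segment present exactly the glues required to seed the next straight segment in the correct new direction, and no others. The corners of the innermost band in the critical orientation additionally have to seed the $G_3$ initialization datapaths; these are treated as separate ``branch'' instructions, which after emitting the seed continue the band. Because the corners of different bands meet at the shared intersection points, I would verify that every tile at an intersection is the same regardless of which band ``owns'' it at that moment, so when two bands meet their growth merges rather than conflicts.

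Finally I would close the global argument by induction on the number of corners traversed: since each band is a closed rectangular loop with finitely many corners and each straight segment has finite length, repeated application of the two previous claims terminates exactly when each band returns to its starting intersection. Uniqueness (``correct'' bands) follows because at every step only a single tile can attach consistent with $G_1$ and the already-placed neighbors, so the final configuration is independent of the order of placements. I expect the main obstacle to be the intersection bookkeeping: showing that where two or three bands share tiles, the tile types used by the different bands are literally identical, so that a band arriving at a partially built intersection continues through it without creating any incorrect branches or blocking itself. This reduces to a careful but finite check of the intersection gadget described alongside Figure~\ref{fig:intercube_genome}.
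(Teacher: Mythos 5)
Your proposal correctly captures the single‑source part of the argument: given one intersection row, the $G_1$ instructions with a decrementing counter drive growth along each straight segment, across corners, and through shared intersections where bands overlap, and the branch into $G_3$ at the critical orientation is a finite, local side‑effect. That matches roughly half of what the paper's proof has to establish.

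However, there is a genuine gap in the harder half. The lemma must hold for \emph{any} producible $\beta$ containing \emph{any} complete intersection row, which in particular includes assemblies in which \emph{several} neighboring macrotiles have differentiated and have each pushed an intersection row into $L$. In that scenario two partial bands can grow toward each other along the same straight segment and meet in the middle. Your uniqueness step — ``at every step only a single tile can attach consistent with $G_1$ and the already-placed neighbors'' — is precisely what fails at such a collision point: with the diagonal-advance growth pattern used inside bands, a tile one gap away from a path arriving in the opposite direction can cooperate ``across the gap'' with glues it was never meant to pair with, producing nondeterministic and potentially wrong attachments (this is exactly the failure mode depicted in Figure~\ref{fig:nondet-collision}). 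The paper's proof hinges on the circular-latch mechanism of Section~\ref{sec:collision-tolerance}: diagonal-advance (collision-prone) regions are confined to one-way latched segments, while the places where two growth fronts are permitted to meet use the limited strength-2 pattern, which is collision-tolerant by construction; the non-preferred direction rises out of plane and comes back down so that the merged result is indistinguishable from single-source growth. The latch you invoke (Section~\ref{sec:growth-patterns}) only blocks \emph{backward} growth along a single path and does not resolve two opposing forward fronts. To close the gap you would need to add, as an explicit case, the multi-neighbor scenario and argue that every segment along which two fronts could collide is protected by a circular latch or is a limited strength-2 region, so the merged bands are tile-for-tile identical to the single-source bands.
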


\begin{proof}
The proof of Lemma~\ref{lem:genome-growth} is based on the low-level design of the instructions which make up the $G_1$ portion of the $\genome$, which are responsible for the growth of the bands of the $\genome$.  By Lemma~\ref{lem:genome-data}, we know that the $\genome$ contains the set of instructions necessary to direct the growth of the bands with correct dimensions and locations.  The design of the instructions is such that, as growth is occurring which spreads into neighboring macrotiles, the first row which enters a neighboring macrotile is one of the specially designated intersection rows.  At this point, the set of instructions which become ``active'' are specific to the formation of the bands of the $\genome$ in a macrotile when started from the specific entry point of that intersection row.  While this is sufficient to ensure correct growth of the $\genome$ bands when initiated by any single neighbor, it is also necessary for the growth of those bands to be correct when initiated, possibly even simultaneously, by multiple neighboring macrotiles.  In order to ensure correctness when this occurs, the $\genome$ is designed so that (1) the intersection points for the $\genome$ between neighboring macrotiles are well defined, and (2) the design of the bands allows for bi-directional growth from those intersection points in such a way that, after the bands of the $\genome$s of a neighboring pair of macrotiles have fully formed, it is impossible to tell in which order they grew. This is accomplished by careful use of circular latches (see Section~\ref{sec:collision-tolerance}).  In order to guarantee that the growth of the bands of the $\genome$ result in the same tile placements irrespective of the particular intersection point from which they grew, they are completely ``collision tolerant'', meaning that even if growth is initiated via multiple different directions and partial bands grow toward each other, they ultimately merge and form bands which are identical to those which would have formed via growth from only a single source. Thus, given the existence of any single row of the $\genome$ at an intersection point within $L$, whether it is the only that ever grows or an arbitrary subset of the others grow in an arbitrary ordering, the complete and correct bands of the $\genome$ will grow.
\end{proof}

\begin{lemma}[Genome propagation]\label{lem:genome-prop}
Other than the $\genome$, assume that all the modules in the simulator $\calUT$ work correctly. Let $\alpha \in \prodasm{\mathcal{T}}$ such that $\sigma \sqsubset \alpha$ (i.e. $\alpha$ is larger than the seed) and $\beta \in \prodasm{\calUT}$ such that $R^*(\beta) = \alpha$. Let $l' \in (\dom{\alpha} \setminus \dom{\sigma})$ be a location in $\alpha$ outside of the seed, $l \not \in \dom{\alpha}$ be any location adjacent to $l'$ but outside of the assembly $\alpha$, and $L$ be the macrotile location in $\beta$ which maps to $l$. Then macrotile $L$ either already has completely grown the bands of the $\genome$ or it will.
\end{lemma}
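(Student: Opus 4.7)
The plan is to reduce the claim almost immediately to Lemma~\ref{lem:genome-growth}, via a short argument that tracks back what must have happened in the neighbor macrotile $L'$ at location $l'$. Let $L' = \beta^m_{l'}$. Since $l' \in \dom{\alpha}$, we have $R(L') = \alpha(l') \in T$, so $L'$ must have differentiated in $\beta$; and since $l' \notin \dom{\sigma}$, $L'$ is not a seed macrotile, so it reached differentiation only by executing the full growth sequence of Section~\ref{sec:full-growth-process}. By hypothesis every non-$\genome$ module is correct, so in particular the $\externalCommunication$ module and its callback mechanism function as designed. Thus, upon the $\bracket$ producing its winning output in $L'$, the callback signal described in Sections~\ref{sec:initialization} and \ref{sec:full-growth-process} (step 6) is emitted and circulates along the $\genome$ bands of $L'$, activating all six output intersections. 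The first step to take, then, is simply to record these facts and note that they all happen before or concurrently with the tile placements recorded in $\beta$.

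Second, I would observe that the macrotile $L$ adjacent to $L'$ shares a face with $L'$, and that the output intersection of $L'$ on that shared face is, by construction, also an input intersection of $L$ (this is exactly the connection depicted on the right of Figure~\ref{fig:intercube_genome}). The callback therefore places into $L$ at least one complete row of tiles at an intersection point of the $\genome$. Whether $\beta$ already contains this row (the ``already has completely grown'' case) or it must be produced by extending $\beta$ via assembly steps guaranteed to be available (the ``it will'' case) is exactly the dichotomy the lemma asks us to verify, so the remaining task is to promote ``an intersection row of the $\genome$ is present in $L$'' to ``the full concentric bands of the $\genome$ are present in $L$.''

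That promotion is precisely the conclusion of Lemma~\ref{lem:genome-growth}: the hypothesis of that lemma is exactly that $L$ contains a complete row of an intersection point of the $\genome$, and the conclusion is that the complete and correct bands grow. Applying Lemma~\ref{lem:genome-growth} to $L$ therefore finishes the proof. I would wrap up by noting that if $L$ happens to receive (partial) $\genome$ growth from several neighboring differentiated macrotiles at once, the collision-tolerance property built into $G_1$ (Section~\ref{sec:collision-tolerance}, and invoked in the proof of Lemma~\ref{lem:genome-growth}) ensures that the resulting bands are the same identical, correct structure in all cases, so we do not need to pick a distinguished neighbor as the ``initiator.''

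The only genuine subtlety, and the one I would be most careful about, is justifying that the callback actually reaches the output intersection on the face shared with $L$ (rather than only reaching the face through which the $\genome$ of $L'$ was originally seeded). This is exactly what the description of $G_3$ and the $\externalCommunication$ callback in Section~\ref{sec:initialization} is designed to do: the callback propagates around all three bands and triggers growth through all six output intersections symmetrically. So the ``hard part'' reduces to citing the symmetric six-way propagation in $G_3$ plus the collision tolerance in $G_1$; everything else is bookkeeping that $L'$ has reached the differentiation step in the first place.
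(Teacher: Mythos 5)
Your proposal is correct and follows essentially the same route as the paper's own proof: establish that $L'$ has differentiated and therefore grew its $\genome$ through the critical orientation and its $\externalCommunication$ seeds, observe that the post-$\bracket$ callback triggers $\genome$ output into $L$'s shared-face intersection, and then close by invoking Lemma~\ref{lem:genome-growth}. Your extra attention to six-way callback propagation and collision tolerance is not wasted, but the paper defers those points to Lemma~\ref{lem:genome-growth} itself, so there is no substantive difference in approach.
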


\begin{proof}
Recall that $l'$ be the location of the non-seed tile in $\alpha$ which is adjacent to $l$, and let $L'$ be the macrotile of $\beta$ which represents $l'$.  Since $R^*(\beta) = \alpha$, then $L'$ maps, via $R$, to some tile type $t \in T$.  By definition of $R$, this means that the binary encoding of $t$ is contained in tiles at the end of the output of the $\bracket$ of $L'$. Since $L'$ does not represent a seed tile in $\sigma$ and thus was initially empty, the only way that it could have tiles in the output location of the $\bracket$ is for $L'$ to have correctly grown its $\genome$ into the critical orientation, grown its $\externalCommunication$ seeds, $\adderArray$, and $\bracket$, and to have received at least one $\externalCommunication$ query from a neighboring macrotile. Because the $\externalCommunication$ seeds must be complete, they will receive the output from the $\bracket$ and the path which grows back to the $\genome$ to initiate the propagation of the $\genome$ bands into all neighbors will grow.  The fact that this will result in the first row of the $\genome$ growing in an intersection location in $L$ along with Lemma~\ref{lem:genome-growth} guarantees that the complete bands of the $\genome$ will grow in $L$.
\end{proof}

\paragraph{Correctness of $\genome$ querying}

\begin{lemma}[Genome querying]\label{lem:genome-query}
Other than the $\genome$, assume that all the modules in the simulator $\calUT$ work correctly. Let $\alpha \in \prodasm{\mathcal{T}}$ and $\beta \in \prodasm{\calUT}$ such that $R^*(\beta) = \alpha$, $l \not \in \dom{\sigma}$ be a location outside of the seed of $\calT$ but which is adjacent to a tile in $\alpha$ of type $t_d$ in direction $d$, and $L$ be the macrotile location in $\beta$ which maps to $l$. A query datapath that encodes $s$ will grow in $L$ from the $d$ query row of its $\genome$ to an $\adderUnit$ that corresponds to a tile type $t \in T$ if and only if a bond of strength $s \in \mathbb{N}$ could form between a tile of type $t$ in location $l$ and a tile of type $t_d$ in the $d$ direction.
\end{lemma}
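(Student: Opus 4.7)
The plan is to decompose the biconditional into two directions and appeal to the explicit structure of the glue table $G_2$ described in Section~\ref{sec:glue_table} together with the previously established $\genome$ correctness lemmas. First I would invoke Lemma~\ref{lem:genome-data} to conclude that the $G_2$ portion embedded in $L$'s $\genome$ is correctly populated: it contains an entry, enclosed by the appropriate delimiters, for exactly those quadruples $(t_d, d, t, s)$ such that a tile of type $t_d$ on the $d$-side could form a bond of strength $s$ with a tile of type $t$ at the current location. I would then apply Lemma~\ref{lem:genome-growth} together with the hypothesis that the other modules behave correctly to guarantee that the full concentric-band structure, including the six query rows on the critical orientation, has grown in $L$, and that an $\externalCommunication$ datapath carrying the binary encoding of $t_d$ arrives from direction $d$ at the entry point of the $d$-query row. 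The latter holds because the neighboring macrotile either is a seed macrotile whose $\externalCommunication$ seeds export $t_d$ or has itself differentiated to $t_d$ and therefore launched an $\externalCommunication$ datapath toward $L$ as part of step~\ref{step:ext-comm}.

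For the forward (``if'') direction, I would trace the incoming datapath as it walks along the $d$-query row. Using the hierarchical delimiter structure of the glue table depicted in Figure~\ref{fig:GlueTableDetailed}, the walk counts through the highest-level groups until it reaches the group indexed by $t_d$, then descends through the middle-level subgroup for direction $d$. Within that subgroup, for each lowest-level entry corresponding to a tile $t$ that can legitimately bond to $t_d$ in direction $d$, cooperation between the incoming datapath and the preplaced glue-table tiles activates a fresh query datapath; by the construction of $G_2$ this datapath carries the correct strength payload $s$ and contains the routing instructions that deliver it to the $\adderUnit$ inside the $\adderArray$ indexed by $t$. Hence for every legitimate quadruple $(t_d, d, t, s)$ a query datapath of the claimed form appears.

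For the reverse (``only if'') direction, I would argue that the only tile placements in $L$ capable of emitting a datapath directed toward the $\adderArray$ are the glue-table activation sites described above. By Lemma~\ref{lem:genome-data} these sites exist only for legitimate quadruples, so no spurious tile type, direction, or strength can be emitted. Spurious backward growth and crosstalk between concurrently arriving $\externalCommunication$ datapaths are ruled out by the distinct input/output glue design and the key-and-latch mechanics of Section~\ref{sec:growth-patterns}, and simultaneous arrivals from different directions are isolated because each direction uses its own dedicated query row on the critical orientation of the $\genome$.

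The hard part will be the delimiter-counting argument at the tile level: I must verify that the walking datapath, which carries only the binary encoding of $t_d$ and implicitly the direction $d$ it entered from, correctly identifies the unique sub-interval of $G_2$ corresponding to $(t_d, d)$ even when the bands are still only partially grown or other queries are in flight on different rows. This reduces to checking that the delimiter gadgets yield an unambiguous parse of the encoded list and that the cooperative activation at each lowest-level entry is geometrically confined to that entry — both of which follow, if somewhat tediously, from unfolding the datapath and delimiter gadgets of Section~\ref{sec:glue_table}.
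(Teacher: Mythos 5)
Your proof follows essentially the same route as the paper's: invoke Lemma~\ref{lem:genome-data} for correctness of the $G_2$ glue table, establish that the $\genome$ bands and $\externalCommunication$ datapath arrive in $L$, and then trace the delimiter-counting walk through the glue table in both directions of the biconditional. The one small mismatch is in the lemma you cite for the band structure: Lemma~\ref{lem:genome-growth} only tells you the bands finish \emph{given} that a first intersection row already exists in $L$, whereas the paper cites Lemma~\ref{lem:genome-prop} (non-seed neighbor) and Lemma~\ref{lem:seed-growth} (seed neighbor) precisely because those are what establish that the first row is actually delivered into $L$; you do cover this informally with your seed-versus-differentiated-neighbor case split, so the reasoning is sound, but the citation should be to the propagation/seed-expansion lemmas rather than the growth lemma alone.
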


\begin{proof}
By Lemmas~\ref{lem:genome-prop} and \ref{lem:seed-growth}, since $L$ has a neighboring macrotile which represents a tile (which is outside or inside the seed, respectively), it will correctly and fully grow the bands of its $\genome$. (Note that Lemma~\ref{lem:seed-growth} only relies on Lemmas~\ref{lem:genome-data} and \ref{lem:genome-growth} but is located later in the proof as it deals solely with the structure of seed macrotiles.) By the design of the $\genome$ and its correctness by Lemma~\ref{lem:genome-data}, it is guaranteed to grow the $\externalCommunication$ seeds, $\bracket$, and $\adderArray$ during its initialization phase of growth.  This ensures that if an $\externalCommunication$ datapath grows to a query row, the $\genome$ will be able to cooperate to grow the query datapath.  Also, by the assumption of the correctness of $R^*(\beta) = \alpha$ and of the $\externalCommunication$ module, if the tile in direction $d$ is of type $t_d$, an $\externalCommunication$ datapath will grow into $L$ to the location of the query row of its $\genome$ for direction $d$ and encode tile type $t_d$.

Given those facts, we first prove the forward direction.  When $L$ receives an \emph{external communication} datapath which arrives at the location of the query row of its $\genome$ for direction $d$ and encodes tile type $t_d$, the $\externalCommunication$ datapath will cooperate with the $\genome$ to begin growing the query datapath which will count the delimiters in the glue table of the $\genome$ until it finds the entries that correspond to the tile type $t_d$ and the direction $d$. It will then activate the datapaths for each of these entries. Given correct design of the $\genome$ (by Lemma~\ref{lem:genome-data}) and the correctness of the $\externalCommunication$ (by assumption), only if tiles of type $t$ and $t_d$ can form a bond in the $d$ direction will there be a glue table entry which corresponds to tile type $t$. This entry, when activated, has a datapath that encodes the strength $s$ of this potential bond and the instructions for navigating to the $\adderUnit$ that corresponds to the tile type $t$, and thus that path will grow.

Now, we prove the opposite direction. If no bond can form in $\alpha$ between the tile of type $t_d$ in direction $d$ and a tile of type $t$ in $l$, the query datapath which is initiated by the $\texttt{external}$ $\texttt{communication}$ datapath arriving at the query location of the $\genome$ in $L$ will not find an entry in the glue table for the pair of types $t$ and $t_d$ binding in direction $d$ (again by the correctness of the $\genome$ data, Lemma~\ref{lem:genome-data}).  Therefore, no query datapath for direction $d$ will grow to the $\adderUnit$ which represents tile type $t$.

\end{proof}

\subsection{$\adderArray$}\label{sec:adderArray}

As the simulation progresses, the growth of a new macrotile which resolves into a tile of $T$ should be able to cause each neighboring macrotile to potentially form into the representation of a tile of $T$ itself. The $\adderArray$ is a module inside each macrotile that keeps track of the tiles from $T$ into which a specific macrotile location may eventually differentiate. It works by having a component for every tile type in $T$, and summing up the matching glues between that tile type and the surrounding tiles as they form in the simulation, and comparing that sum with $\tau$. The $\adderArray$ is broken down into the following series of components:

\begin{enumerate}
\item $\adderUnit$ - One $\adderUnit$ exists for each tile type in $T$. The $\adderUnit$ for $t_i \in T$ sums up the glue strengths of all the surrounding glues represented by adjacent macrotiles matching with the corresponding glues of $t_i$. %There are $|T|$ $\adderUnit$ components in the $\adderArray$.
\item $\componentAdder$ - A component of an $\adderUnit$ that sums up the glue strengths from a specific subset of the 6 sides of the tile type that the $\adderUnit$ corresponds to. There are 63 $\componentAdder$ components in an $\adderUnit$ (one for each possible subset of sides except for the empty subset).
\item $\partialAdder$ - A component of the $\componentAdder$ that performs the addition of the strength of a glue on a specific side with the strength of a glue on another side or with $\tau$. There are 7 $\partialAdder$ components in a $\componentAdder$, one for each direction and one for the value of negative $\tau$ given in two's compliment binary form. Note that in any given $\componentAdder$ only a specific subset of $\partialAdder$s are used for possible input values which may arrive later, and the others are initialized to the value $0$.
\end{enumerate}

Whenever an $\externalCommunication$ datapath carrying the encoding of tile type $t$ grows from the $d$ direction to the query section of the $\genome$ of the currently growing macrotile, it will spawn new datapaths for each tile type with a matching glue into which the currently growing macrotile may eventually differentiate. These new datapaths grow into each $\partialAdder$ that corresponds to that direction within the $\adderUnit$ that corresponds to the same tile type as the datapath. At this point, if every $\partialAdder$ in a $\componentAdder$ has received input (i.e. input has arrived from the full subset of sides which it is designed to sum), the $\componentAdder$ will add them and subtract $\tau$. If this result is a negative integer, the $\componentAdder$ ``fails'', and the $\adderArray$ waits for more datapaths. However, if the result is greater than or equal to $0$, then the $\componentAdder$ ``succeeds''.

Success of a $\componentAdder$ causes the $\adderUnit$ to send a guide rail that encodes the id of the successful tile type into the $\bracket$. This signifies that there are enough matching glues for a tile of that type to bind in $\calT$. The growth is initiated by the $\componentAdder$ which cooperates with a ``backbone'' of tiles for that $\adderUnit$ so that a row indicating that success grows toward a location which causes growth into the $\bracket$.  This ``success'' row can be grown by any $\componentAdder$ which succeeds within an $\adderUnit$, and consists of repeated attachments of a single tile type, allowing it to grow in both directions and preventing the possible success of multiple $\componentAdder$s from interfering with each other. Furthermore, this results in growth such that it is impossible to tell in which order multiple succeeding $\componentAdder$s may have completed. (An illustration of the design if this portion can be seen in Figure~\ref{fig:Component_Adder_Success_Signal}.)

The reason for having the $63$ separate $\componentAdder$ components in an $\adderUnit$ is so that every incoming datapath will cause at least one $\componentAdder$ to be fully activated rather than having to wait for additional datapaths. This means that every combination of surrounding glues will be evaluated as they appear in the simulation.

\begin{figure}[htb]
\centering
\includegraphics[width=5.5in]{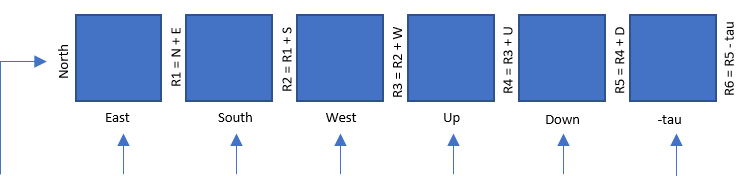}
\caption{$\partialAdder$ layout within a $\componentAdder$}
\label{fig:Component_Adder}
\end{figure}

\begin{figure}
\centering
\begin{subfigure}[b]{2.5in}
\centering
\includegraphics[width=2.5in]{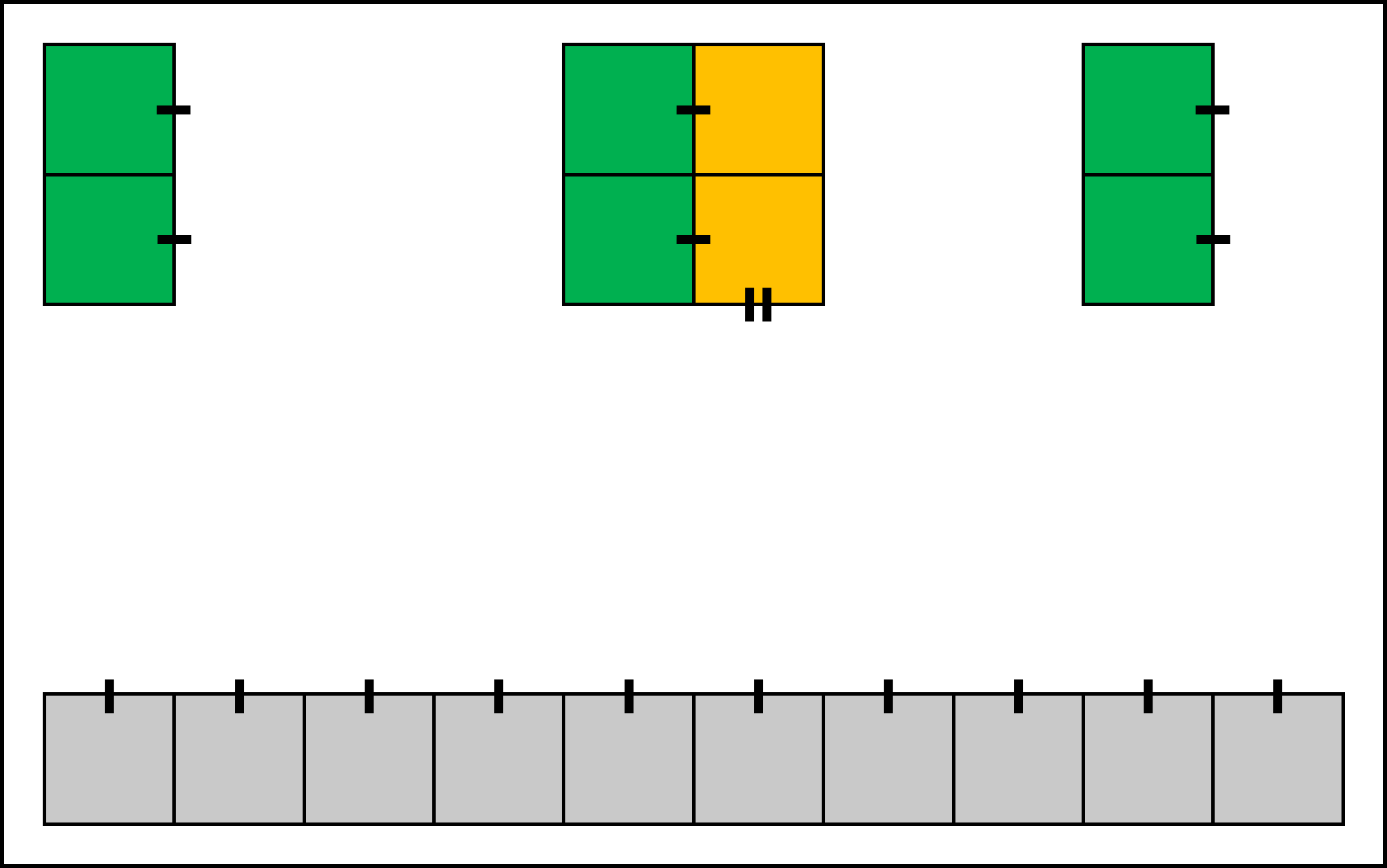}
\end{subfigure}

\begin{subfigure}[b]{2.5in}
\centering
\includegraphics[width=2.5in]{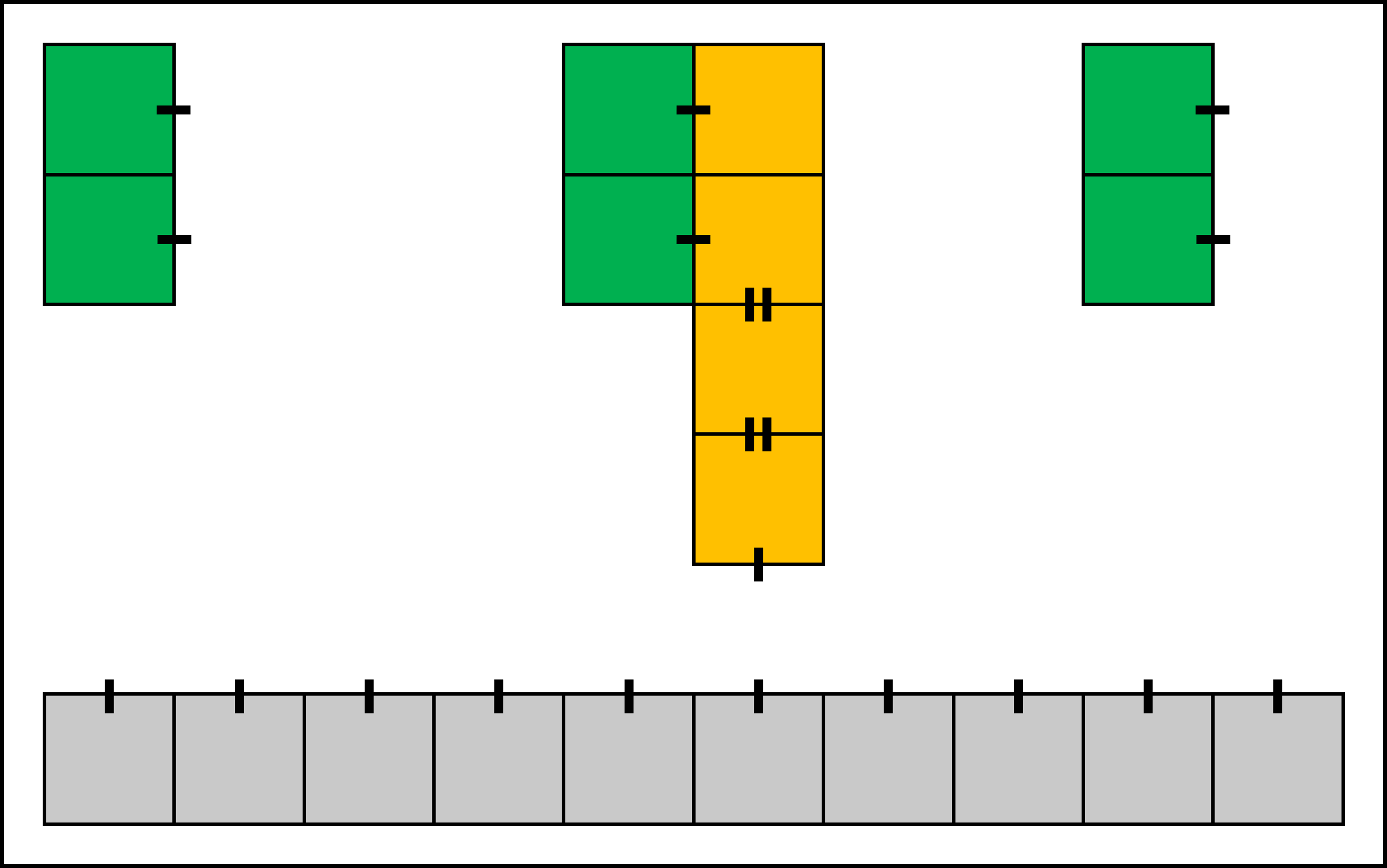}
\end{subfigure}

\begin{subfigure}[b]{2.5in}
\centering
\includegraphics[width=2.5in]{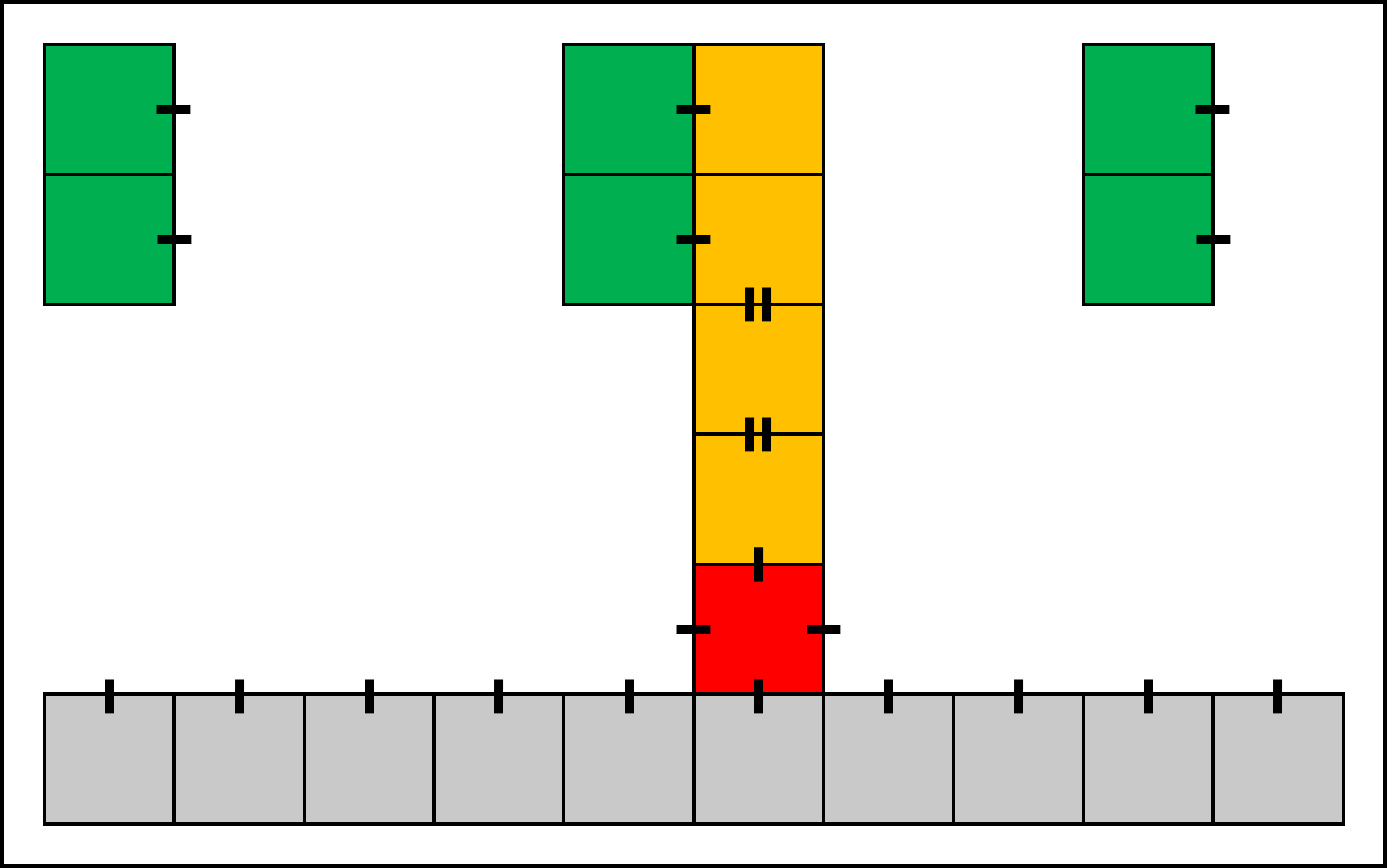}
\end{subfigure}

\begin{subfigure}[b]{2.5in}
\centering
\includegraphics[width=2.5in]{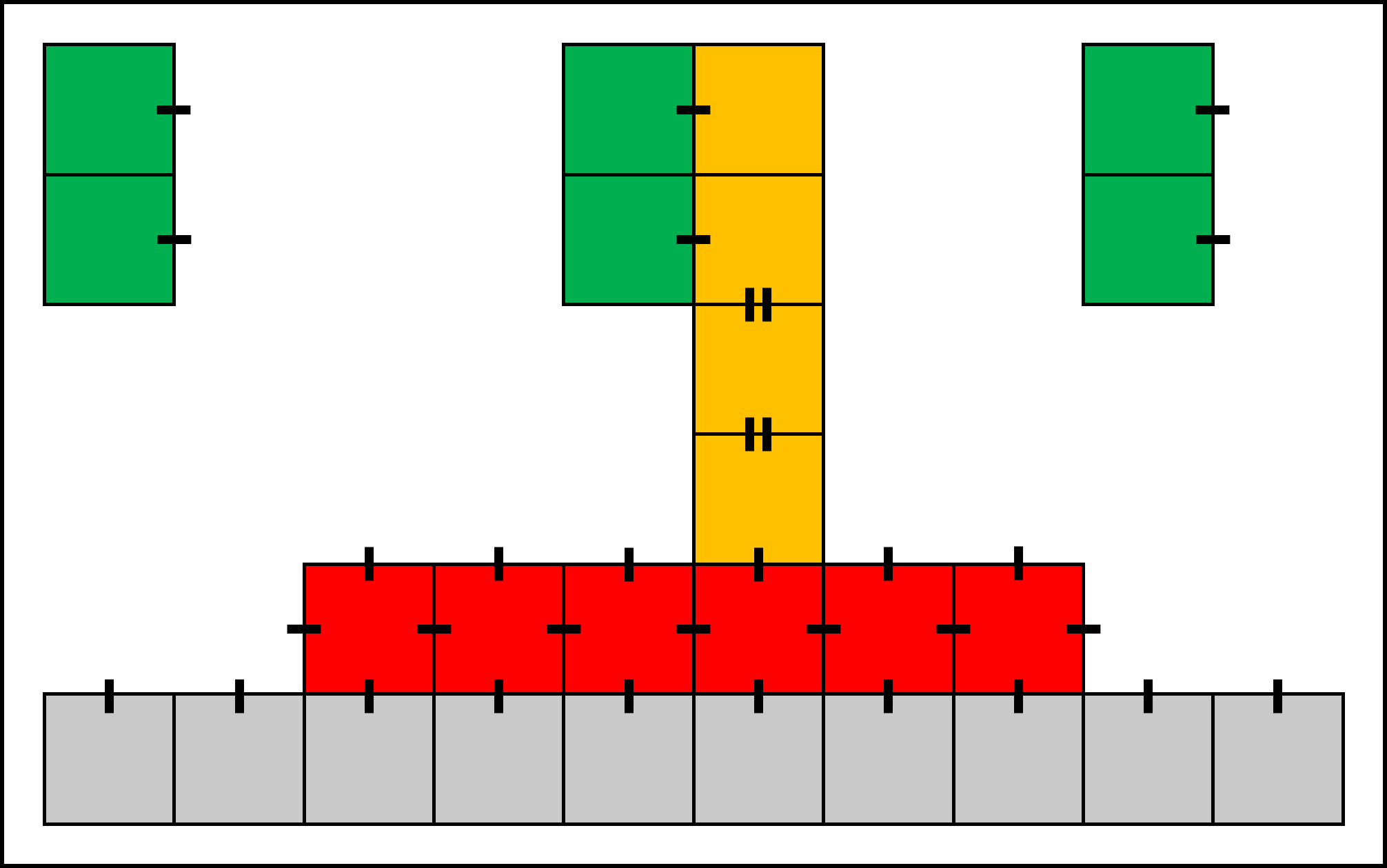}
\end{subfigure}

\caption{The initialization of the ``success signal'' within an $\adderUnit$. The green tiles are each a $\componentAdder$, while the red tiles make up the success signal. The cooperation over the gap in the third image allows for the initiation of the success signal, but the generic strength-1 glue on the top of the success signal tiles doesn't allow for back growth.}
\label{fig:Component_Adder_Success_Signal}
\end{figure}

\subsubsection{Correctness of $\adderArray$}

\begin{lemma}[Adder summation]\label{lem:component_evaluate}

Other than the $\adderArray$, assume that all the modules in the simulator $\calUT$ work correctly. Let $\alpha \in \prodasm{\mathcal{T}}$ and $\beta \in \prodasm{\calUT}$ such that $R^*(\beta) = \alpha $, $l \not \in \dom{\sigma}$ be a location outside of the seed of $\calT$ but which is adjacent to a tile in $\alpha$ of type $t_d$ in direction $d$, and $L$ be the macrotile location in $\beta$ which maps to $l$. Then, within macrotile $L$, for every tile type $t \in T$ which can form a bond with a tile of type $t_d$ in direction $d$, in the $\adderUnit$ for tile type $t$, every $\componentAdder$ which takes input from direction $d$ will receive an input for that direction equal to the strength of the possible bond.
\end{lemma}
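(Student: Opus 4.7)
The plan is to chain together the guarantees we already have about the $\genome$ and $\externalCommunication$ modules with the structural description of the $\adderArray$ in order to trace, end to end, a single input from the neighbor in direction $d$ all the way to the relevant $\partialAdder$s. First I would use the assumed correctness of the $\externalCommunication$ module together with the fact that the macrotile in direction $d$ from $L$ represents $t_d$ (this follows from $R^*(\beta) = \alpha$) to conclude that an $\externalCommunication$ datapath arrives at the direction-$d$ query row of $L$'s $\genome$ encoding the tile type $t_d$. By Lemma~\ref{lem:genome-prop} (or Lemma~\ref{lem:seed-growth} if $L$ is adjacent to a seed macrotile), the bands of $L$'s $\genome$ are fully present, so this datapath reaches its destination row.

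Next I would apply Lemma~\ref{lem:genome-query} to conclude that, for every tile type $t \in T$ that can bond with $t_d$ in direction $d$, a query datapath encoding the bond strength $s$ grows from the direction-$d$ query row of $L$'s $\genome$ to the $\adderUnit$ corresponding to $t$, and conversely no such query is produced for tile types that cannot bond. This reduces the claim to a purely local statement inside the $\adderUnit$ for $t$: the incoming query datapath arriving at the $\adderUnit$'s direction-$d$ input and carrying the value $s$ must be delivered, unmodified, to the direction-$d$ $\partialAdder$ inside every one of the $32$ $\componentAdder$s whose subset includes $d$.

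The main step here, and the one I expect to require the most care, is the \emph{broadcast argument} for the $\adderUnit$ itself. By the assumed correctness of the $\genome$'s initialization (which seeds the $\adderArray$) and the design described in Section~\ref{sec:adderArray}, the $\adderUnit$ for $t$ has already grown a fixed scaffold consisting of $63$ $\componentAdder$s, each with its own direction-indexed $\partialAdder$ slots, wired so that the direction-$d$ input line of the $\adderUnit$ feeds, via a deterministic fan-out, into exactly those $\componentAdder$s whose subset contains $d$. I would verify that this fan-out is written into the $\genome$'s $G_3$ instructions (Lemma~\ref{lem:genome-data}), that the incoming query datapath cooperates with this scaffold at each branch point to propagate the encoded strength $s$ (rather than any other value), and that uninvolved $\partialAdder$s of other directions remain at their preset value $0$ as described in Section~\ref{sec:adderArray}. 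Because we are assuming all other modules are correct, the only hazards to rule out are internal to the $\adderArray$: that concurrent inputs from different directions arriving in any order do not desynchronize the fan-out, and that the encoding $s$ is preserved along each branch. Both follow from the ``key and latch'' / cooperative layout invoked earlier in the construction, which ensures that each $\partialAdder$ can attach at most one value and does so only when the corresponding input datapath reaches it.

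Finally, I would combine these three observations: (i) an $\externalCommunication$ datapath for $t_d$ reaches the direction-$d$ query row of $L$'s $\genome$; (ii) this spawns, for each $t$ that can bond with $t_d$ in direction $d$, a query datapath encoding the correct strength $s$ to the $\adderUnit$ for $t$; and (iii) the $\adderUnit$'s fan-out delivers the value $s$ to the direction-$d$ $\partialAdder$ of every $\componentAdder$ whose subset contains $d$. This gives exactly the conclusion of Lemma~\ref{lem:component_evaluate}.
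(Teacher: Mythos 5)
Your proof follows the same route as the paper: establish that $L$'s $\genome$ bands and modules are in place (via Lemmas~\ref{lem:genome-prop}/\ref{lem:seed-growth} and \ref{lem:genome-data}), invoke Lemma~\ref{lem:genome-query} to get a strength-$s$ query datapath into the $\adderUnit$ for each bondable $t$, and then appeal to the $\adderArray$'s internal design to distribute that input to the relevant $\componentAdder$s. The paper's proof compresses your ``broadcast argument'' into the single phrase ``follows directly from \ldots the design of the $\adderArray$,'' so you have simply unpacked, correctly, what the paper leaves implicit (including the $2^5=32$ count of $\componentAdder$s containing direction $d$ and the fan-out via the periodic counters).
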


\begin{proof}
By Lemmas~\ref{lem:genome-prop} and \ref{lem:seed-growth}, since $L$ has a neighboring macrotile which represents a tile (which is outside or inside the seed, respectively), it will correctly and fully grow the bands of its $\genome$. (Note that Lemma~\ref{lem:seed-growth} only relies on Lemmas~\ref{lem:genome-data} and \ref{lem:genome-growth} but is located later in the proof as it deals solely with the structure of seed macrotiles.) By the design of the $\genome$ and its correctness by Lemma~\ref{lem:genome-data}, it is guaranteed to grow the $\externalCommunication$ seeds, $\bracket$, and $\adderArray$ during its initialization phase of growth.
Given that all of those components of the macrotile $L$ are correctly constructed, Lemma~\ref{lem:component_evaluate} follows directly from Lemma~\ref{lem:genome-query} and the design of the $\adderArray$.
\end{proof}

\begin{lemma}[Adder output]\label{lem:correct_adder}
Other than the $\adderArray$, assume that all the modules in the simulator $\calUT$ work correctly. Let $\alpha \in \prodasm{\calT}$ and $\beta \in \prodasm{\calUT}$ such that $R^*(\beta) = \alpha$, $l \not \in \dom{\sigma}$ be a location outside of the seed of $\calT$, and $L$ be the macrotile location in $\beta$ which maps to $l$. The $\adderArray$ in macrotile $L$ will output the encoding of tile type $t \in T$ to the $\bracket$ if and only if a tile of type $t$ could attach to $\alpha$ in location $l$ (regardless of whether $l$ already contains a tile or not) with bonds which sum to $\ge \tau$.
\end{lemma}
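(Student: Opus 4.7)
The plan is to prove the biconditional by combining Lemma~\ref{lem:component_evaluate}, which already pins down what inputs each $\componentAdder$ of $L$ receives, with the hard-wired arithmetic of an $\adderUnit$: its $63$ $\componentAdder$s enumerate every non-empty subset of the $6$ directions, and each one subtracts $\tau$ after summing the strengths delivered to it, ``succeeding'' precisely when the result is non-negative (equivalently, when the strengths it summed are $\geq \tau$).

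For the forward direction, suppose the $\adderArray$ in $L$ outputs the encoding of some $t \in T$ to the $\bracket$. By the design of the $\adderUnit$ for $t$, this output could only have been triggered because some $\componentAdder$ within it succeeded, i.e.\ its input subset $S$ of directions received strengths summing to $\geq \tau$. Lemma~\ref{lem:component_evaluate} (applied in each direction $d \in S$, using $R^*(\beta) = \alpha$ and the assumed correctness of the other modules) ensures that each such input equals the strength of the glue $t$ shares with the actual neighbor $t_d$ of $l$ in $\alpha$. Summing these strengths shows that $t$ could attach at $l$ in $\alpha$ with bonds totalling $\geq \tau$.

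For the reverse direction, assume a tile of type $t$ could attach at $l$ with bond strengths summing to $\geq \tau$, and let $\Delta$ be the non-empty set of directions $d$ in which $t$ shares a positive-strength glue with the tile $t_d$ at the neighbor of $l$ in direction $d$. Each such $t_d$ is present in $\alpha$, so by $R^*(\beta) = \alpha$ together with the assumed correctness of the $\genome$, $\bracket$, and $\externalCommunication$ modules, the corresponding neighbor macrotile has differentiated in $\beta$ and has propagated (or, in any extension that completes $L$, will propagate) an $\externalCommunication$ datapath into $L$. Applying Lemma~\ref{lem:component_evaluate}, the unique $\componentAdder$ in the $\adderUnit$ for $t$ whose input subset is exactly $\Delta$ receives all of its inputs with the correct strengths; their sum is $\geq \tau$ by hypothesis, so subtraction by $\tau$ yields a non-negative result. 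The $\componentAdder$ then succeeds and triggers the success row along the $\adderUnit$'s backbone, outputting the encoding of $t$ to the $\bracket$.

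The main subtlety, rather than a deep obstacle, is checking that the output is well-defined when several $\componentAdder$s inside the same $\adderUnit$ succeed simultaneously or in different orders — a situation that naturally arises whenever strict subsets of $\Delta$ already sum to $\geq \tau$, or when neighbor macrotiles differentiate at differing times. I will handle this by appealing to the design of the shared success row depicted in Figure~\ref{fig:Component_Adder_Success_Signal}: it is built from a single tile type that propagates bidirectionally along the $\adderUnit$'s backbone using only strength-$1$ glues, so the signals from any number of successful $\componentAdder$s merge seamlessly into a single encoding of $t$ entering the $\bracket$, independent of ordering. I will also observe that a $\componentAdder$ whose input subset is not contained in $\Delta$ can never receive all of its required inputs (again by Lemma~\ref{lem:component_evaluate}), so no spurious tile encoding can ever be produced.
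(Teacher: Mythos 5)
Your proof is correct and takes essentially the same approach as the paper: both directions hinge on Lemma~\ref{lem:component_evaluate}, the enumeration of the $63$ $\componentAdder$s (one per non-empty subset of the six directions) with their subtract-$\tau$ test, and the collision-tolerant success-row design for well-definedness of the output when multiple $\componentAdder$s succeed in arbitrary order. The only cosmetic difference is that you prove the ``output implies attachable'' direction directly, whereas the paper proves its contrapositive (``not attachable implies no output''), and the paper also spells out up front that the $\genome$ bands and initialized modules are in place via Lemmas~\ref{lem:genome-prop}, \ref{lem:seed-growth}, and \ref{lem:genome-data}, which you fold into the blanket assumption that the other modules work correctly.
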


\begin{proof}
By Lemmas~\ref{lem:genome-prop} and \ref{lem:seed-growth}, since $L$ has a neighboring macrotile which represents a tile (which is outside or inside the seed, respectively), it will correctly and fully grow the bands of its $\genome$. (Note that Lemma~\ref{lem:seed-growth} only relies on Lemmas~\ref{lem:genome-data} and \ref{lem:genome-growth} but is located later in the proof as it deals solely with the structure of seed macrotiles.) By the design of the $\genome$ and its correctness by Lemma~\ref{lem:genome-data}, it is guaranteed to grow the $\externalCommunication$ seeds, $\bracket$, and $\adderArray$ during its initialization phase of growth.

First, we show that if a tile of type $t$ is able to attach to $\alpha$ in location $l$ with bonds which sum to $\ge \tau$, then the $\adderArray$ in macrotile $l$ will propagate the encoding of a tile type $t$ to the $\bracket$.  If a tile of type $t$ is able to attach to $\alpha$ in location $l$ by forming such bonds, then there is some subset of directions, $B \subseteq \{N,S,E,W,U,D\}$, such that the tiles which neighbor location $l$ in those directions have glues adjacent to $l$ that match those of tile type $t$ and sum to $\ge \tau$. By Lemma~\ref{lem:component_evaluate}, we know that for each $d \in B$, for the $\adderUnit$ which corresponds to tile type $t$, each $\componentAdder$ which takes $d$ as input will receive as input the value of the corresponding glue. Recall that within each $\adderUnit$, there is a $\componentAdder$ for every subset of directions in $\{N,S,E,W,U,D\}$ other than the empty set, resulting in 63 instances of the $\componentAdder$ for each $t$. That means that there is one that exactly matches $D$, and that $\componentAdder$ will receive all inputs, which, since they sum to $\ge \tau$, will cause that $\componentAdder$ to evaluate to a non-negative integer, causing the $\adderUnit$ for $t$ to succeed and propagate a guide rail that encodes the tile type $t$ to the $\bracket$. Furthermore, by the design of the $\adderUnit$ and the guide rail it uses for output, even if multiple of its $\componentAdder$s ``succeed'' (meaning that multiple subsets of sides could match with total strength $\ge \tau$), with any ordering to their relative timings, they are guaranteed not to interfere with each other and the output will be unchanged and correct.

Now, we will prove the opposite direction. Assuming that a tile of type $t$ could not attach to $\alpha$ in location $l$ with bonds which sum to $\ge \tau$, we show that the $\adderArray$ in $L$ cannot propagate the encoding of tile type $t$ to the $\bracket$.  If a tile of type $t$ cannot attach in location $l$ of $\alpha$ by forming such bonds, that means that there is no subset of its glues which match those of the surrounding adjacent glues such that the sum of the strengths of those glues is $\ge \tau$.  In the case where $t$ has zero matching glues, there will be no datapaths in $L$ which grow from a query of the $\genome$ into the $\adderUnit$ corresponding to $t$, and therefore clearly that $\adderUnit$, which is the only which could propagate an encoding of tile type $t$ to the $\bracket$, will not do so.  In the case where one or more glues of $t$ match adjacent glues, for each such match a datapath resulting from a query of the $\genome$ will grow to the $\adderUnit$ corresponding to $t$.  For every subset of matching sides, there will be a $\componentAdder$ which corresponds to exactly that subset of sides and which will therefore receive all of its required inputs to proceed in its computation.  However, since no subset sums to $\ge \tau$, the $\componentAdder$ computation will result in a negative integer and it will therefore not initiate growth of a guide rail to the $\bracket$.  All other $\componentAdder$s of this $\adderUnit$ will fail to receive their full set of inputs and will thus not even perform their full computations and will not initiate growth of a guide rail to the $\bracket$.  Therefore, the $\bracket$ will never receive an encoding of tile type $t$.

\end{proof}

\subsection{$\bracket$}\label{sec:bracket}

Whenever an $\adderUnit$ $A_t$, for $t \in T$, succeeds, a guide rail that encodes the binary representation of the tile type $t$ (i.e. the number of the tile type in binary) is propagated to the next module, the $\bracket$. This propagation signifies that a tile of tile type $t$ could attach with at least strength $\tau$ in the simulated location. There are exactly $|T|$ inputs to the $\bracket$, one for each tile type in $T$, and each $\adderUnit$ of the $\adderArray$ is designed so that its output grows directly to the input location of the $\bracket$ which corresponds to the unique $t$ for which it computes. Because the simulated system $\mathcal{T}$ might be undirected and may have multiple valid tiles types that can attach in a specific location, there may be more than one successful $\adderUnit$ that propagates a tile type number to the $\bracket$. The purpose of the $\bracket$ is to move these guide rails through a competition such that only one guide rail continues to the final module and all other guide rails are blocked. It does this my making use of \emph{turn barriers} and \emph{merge barriers} (details of which can be found in Section~\ref{sec:bracket-details}). Intuitively, these pieces simply move these guide rails in pairs through \emph{points of competition} which are single-tile locations where exactly one path can place a tile, thus ``winning'' the competition.  The winning guide rail continues growth through the $\bracket$, and the losing guide rail cannot continue growth into or beyond the point of competition. The $\bracket$ has multiple levels, with guide rails competing in pairs at any given level, making for a grand total of $\ceil{\log_2|T|}$ levels which allows for selection of a single output from a maximum of $|T|$ possible inputs. Once a path has completed the $\bracket$, it initiates the growth of the datapath which grows the blocking mechanism which blocks all currently unused inputs to the $\bracket$ before finally allowing the output of the $\bracket$ to initiate the $\externalCommunication$ datapaths.  (See Section~\ref{sec:bracket-blocker} for more details.)

\subsubsection{Correctness of $\bracket$}

\begin{lemma}\label{lem:bracket}
Other than the $\bracket$, assume that all the modules in the simulator $\calUT$ work correctly. Let $\alpha \in \prodasm{\calT}$ and $\beta \in \prodasm{\calUT}$ such that $R^*(\beta) = \alpha$, $l \not \in \dom{\sigma}$ be a location outside of the seed of $\calT$ but which is adjacent to a tile in $\alpha$, $L$ be the macrotile location in $\beta$ which maps to $l$, and $l_T \subseteq T$ be the set of all tile types that could attach to $\alpha$ in location $l$ with bonds summing to $\ge \tau$. Then, the $\bracket$ of macrotile $L$ will either:

\begin{enumerate}
\item not output anything if $|l_T| = 0$, or
\item output exactly one guide rail encoding a tile type $t \in l_T$, causing $L$ to represent a tile of type $t$ under representation function $R$, otherwise.
\end{enumerate}

\end{lemma}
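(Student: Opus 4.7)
The plan is to prove the two cases of the lemma by leveraging Lemma~\ref{lem:correct_adder} to characterize exactly which inputs reach the $\bracket$, and then by analyzing the tournament structure of the $\bracket$ to show that exactly one input survives when at least one exists. Throughout, I would rely on the assumption that the other modules (in particular the $\adderArray$ and $\genome$) behave correctly, so the only obstacle is showing that the $\bracket$'s internal logic resolves any race between incoming guide rails consistently.

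First, I would set up the precondition. By Lemma~\ref{lem:correct_adder}, the guide rails that grow from the $\adderArray$ into the $\bracket$'s input locations are in one-to-one correspondence with the tile types in $l_T$: a guide rail encoding tile type $t$ enters the $\bracket$ if and only if $t \in l_T$. This immediately handles case (1): if $l_T = \emptyset$, then no guide rail ever arrives at any of the $|T|$ input positions of the $\bracket$. Since the $\bracket$'s tournament gadgets can only be activated by cooperative attachment with an incoming guide rail, no growth propagates through the tournament, so nothing is ever output. Hence $L$ never accumulates tiles in the output locations used by $R$, and $R$ maps $L$ to empty space.

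For case (2), with $|l_T| \geq 1$, I would argue about the tournament structure directly. The $\bracket$ consists of $\lceil \log_2 |T| \rceil$ levels of pairwise competitions implemented via $\texttt{turn barriers}$ and $\texttt{merge barriers}$, funneled through single-tile \emph{points of competition} (the low-level details of these gadgets, in Section~\ref{sec:bracket-details}, can be invoked). The key invariant I would establish by induction on the level $i$ of the tournament is: at each level, every guide rail that is still alive encodes some tile type in $l_T$, and at every point of competition that actually sees any activation, exactly one guide rail places its tile there and continues, while the other is permanently blocked from placing a tile at or beyond that point. The base case ($i=0$) is immediate from the characterization of the inputs. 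The inductive step uses the fact that a point of competition is a single lattice site at which only one tile can ever be placed; by the growth patterns guaranteed for guide rails (no backward growth, via the ``key and latch'' mechanics referenced in Section~\ref{sec:growth-patterns}), whichever guide rail tiles that location first wins, and the other cannot continue. Since no losing rail reenters the pipeline, after $\lceil \log_2 |T| \rceil$ levels at most one guide rail survives; and since at least one input entered, at least one survives, so exactly one does.

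Finally, I would finish by noting that the winning guide rail encodes a tile type $t \in l_T$ (by the invariant) and that once it reaches the end of the $\bracket$ it triggers the $\bracketBlocker$ (Section~\ref{sec:bracket-blocker}), which prevents any unused input channels from later causing additional outputs, and only afterward initiates the $\externalCommunication$ datapaths. Therefore the $\bracket$ produces exactly one output guide rail encoding some $t \in l_T$, and by the definition of the representation function $R$ in Section~\ref{sec:rep-function} this single encoding in the designated output tiles causes $R(L) = t$. The main obstacle I anticipate is the tournament invariant in the inductive step: one has to rule out scenarios in which two losing rails arrive at the same point of competition via different timings and somehow allow growth to continue on both branches. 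This is precisely what the ``key and latch'' design of the competition gadgets is built to prevent, so I would appeal to those low-level guarantees rather than reprove them here.
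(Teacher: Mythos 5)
Your proof is correct and follows essentially the same structure as the paper's: invoke Lemma~\ref{lem:correct_adder} to determine exactly which tile types can produce inputs to the $\bracket$, dispatch the $|l_T|=0$ case by observing that no guide rails arrive, and argue from the tournament design (single-tile points of competition together with the $\bracketBlocker$) that exactly one guide rail survives when $|l_T|\ge 1$. One small imprecision worth flagging: your biconditional ``a guide rail encoding $t$ enters the $\bracket$ if and only if $t\in l_T$'' is slightly too strong in the ``if'' direction --- Lemma~\ref{lem:correct_adder} only guarantees that the $\adderArray$ \emph{outputs} such a rail, and the paper's proof explicitly notes that a late-arriving rail may be stopped by the $\bracketBlocker$ before it ever reaches a $\bracket$ input location; your inductive invariant only uses the ``only if'' direction and you handle the blocker separately at the end, so the argument still goes through as written.
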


\begin{proof}
By Lemmas~\ref{lem:genome-prop} and \ref{lem:seed-growth}, since $L$ has a neighboring macrotile which represents a tile (which is outside or inside the seed, respectively), it will correctly and fully grow the bands of its $\genome$. (Note that Lemma~\ref{lem:seed-growth} only relies on Lemmas~\ref{lem:genome-data} and \ref{lem:genome-growth} but is located later in the proof as it deals solely with the structure of seed macrotiles.) By the design of the $\genome$ and its correctness by Lemma~\ref{lem:genome-data}, it is guaranteed to grow the $\externalCommunication$ seeds, $\bracket$, and $\adderArray$ during its initialization phase of growth.

In the specific case that $l_T$ is empty, this means that no tiles can attach to $\alpha$ in location $l$ with sufficient bonds. Therefore, by Lemma \ref{lem:correct_adder}, no $\adderUnit$ can succeed, and no guide rail can be propagated to the $\bracket$. Since no guide rails are input to the $\bracket$, by the design of the $\bracket$ no guide rail can be output from it.

Whenever $|l_T| > 0$, since each tile type in $t' \in l_T$ can attach to $\alpha$ in location $l$, then by Lemma \ref{lem:correct_adder}, there must be a $\componentAdder$ of the $\adderUnit$ corresponding to $t'$ that succeeds, causing that $\adderUnit$ to propagate a guide rail encoding that specific tile type to the $\bracket$. If $|l_T| = 1$, only one guide rail corresponding to the lone element in $|l_T|$ will be propagated, meaning it alone can be output from the $\bracket$, and by design of the $\bracket$ is guaranteed to do so. If $|l_T| > 1$, an input for each $t' \in l_T$ will be provided to the $\bracket$. In this scenario, each input falls into one of three categories: (1) it propagates fully through the $\bracket$, which exactly one input will do, (2) it enters the $\bracket$ but loses a point of competition and thus is blocked from completing the $\bracket$, or (3) the winner of the $\bracket$ completes the $\bracket$ and initiates and completes the blocking datapath (see Section~\ref{sec:bracket-blocker} for details about how the blocking is performed) before this input can enter the $\bracket$. This prevents this input from ever entering the $\bracket$.

By the design of the $\bracket$, one or multiple guide rails, depending on which of the above scenarios occurs, will move through a series of points of competitions that block all but one (with some points of competition blocking one path of two which arrive, and some possibly only ever having a single path growing to them and thus just allowing such a path to continue and not needing to block another). The single final guide rail which emerges from the point of competition of the final stage of the $\bracket$ will thus encode a single tile type $t$ from the set $l_T$ which is the ``winner'', causing the macrotile $L$ to differentiate into a tile of type $t$.
\end{proof}

\subsection{$\externalCommunication$}\label{sec:external_communication}

The final step in the differentiation process is the activation of the $\externalCommunication$ module. This module accepts the encoding of the winning tile type from the $\bracket$ as input. Once that information arrives, it is input into six datapaths, one for each direction, that grow to the neighboring macrotiles of the currently growing macrotile. Each datapath contains the binary encoding of the tile type received from the $\bracket$ and instructions for navigating to the correct row in the critical orientation of the neighbor macrotile's $\genome$ that represents the direction the datapath is coming from. Once there, it initiates a query in the glue table of that neighbor's $\genome$ by growing along the $G_2$ section, as outlined in Section \ref{sec:glue_table}. If this neighboring macrotile has not yet differentiated, this furthers that process, potentially allowing one of the components in the $\adderArray$ to succeed and for the neighboring macrotile to continue growing.

\begin{figure}[htb]
\centering
\includegraphics[width=6in]{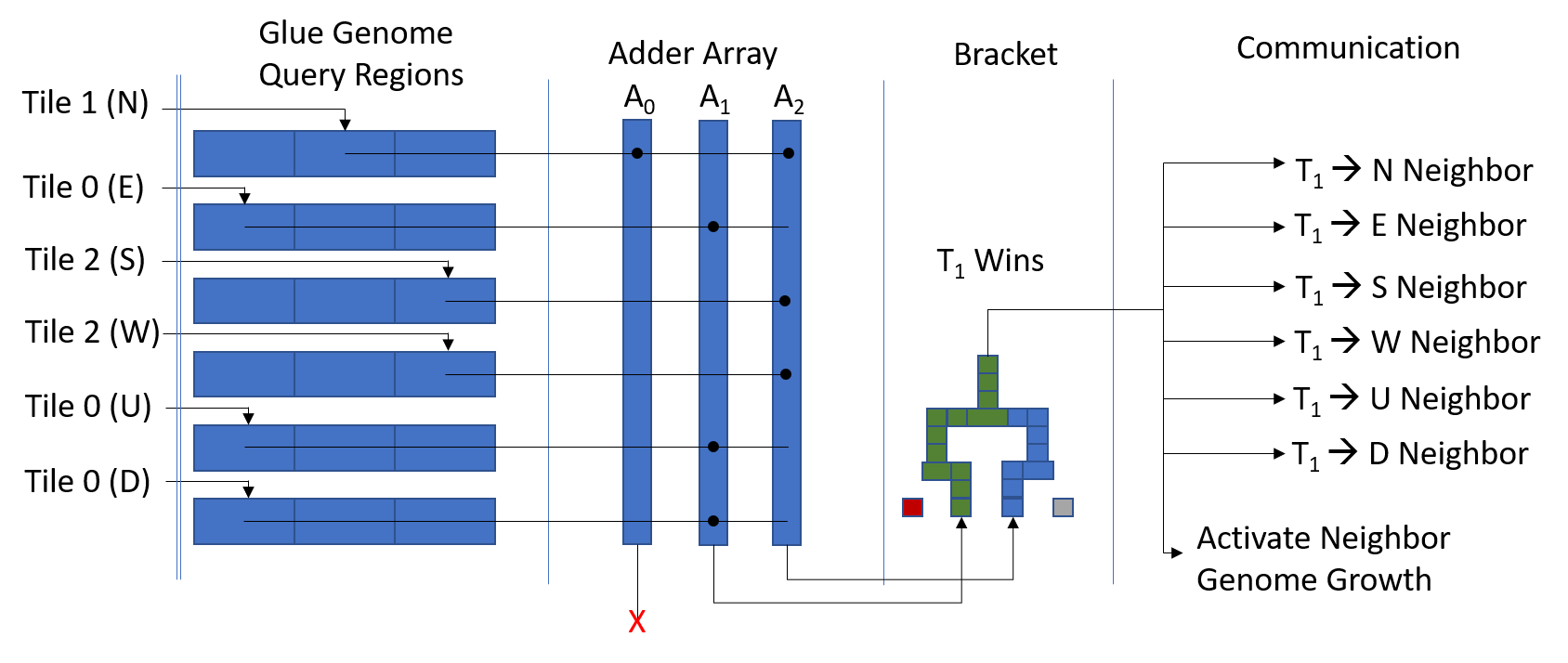}
\caption{This figure gives an example of six neighbors contributing to a central tile in a temperature 2 system, initialization is assumed to have completed for each neighbor tile.  All inputs in this figure are strength-1.  In this figure the adder representing tile 0 (A0) only receives one input and fails to output to the bracket, but A1 and A2 each receive 3 inputs and output to the bracket.  Tile 1 wins the bracket and outputs its data to each of its neighbors.  Note: This diagram is an abstract, out of context snapshot meant to show data flow through a single macrotile.  The tileset shown may or may not actually be a valid tileset. }\label{fig:DataFlow}
\end{figure}

\subsubsection{Correctness of $\externalCommunication$}

\begin{lemma} \label{lem:external-communication}
Other than the $\externalCommunication$ assume that all modules in the simulator $\calUT$ work correctly. Let $\alpha \in \prodasm{\mathcal{T}}$ and $\beta \in \prodasm{\calUT}$ such that $R^*(\beta) = \alpha$, $l \not \in \dom{\sigma}$ be a location outside of the seed of $\calT$, and $L$ be the macrotile location in $\beta$ which maps to $l$. An $\externalCommunication$ datapath encoding tile type $t$ will grow into the necessary location, for its relative direction, to perform a query in the $\genome$ of each of the $6$ neighboring macrotile locations of $L$ if and only if $L$ represents a tile of type $t$.
\end{lemma}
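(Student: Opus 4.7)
The plan is to prove the biconditional by handling each direction separately, leveraging the correctness of the $\genome$, $\adderArray$, and $\bracket$ modules (assumed in the hypothesis) together with the structural design of the $\externalCommunication$ module described in Section~\ref{sec:external_communication}.

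For the forward direction, suppose $L$ represents tile type $t$ under $R$. By the definition of the representation function from Section~\ref{sec:rep-function}, the $\lceil \log_2 |T|\rceil$ tile positions at the output end of the $\bracket$ inside $L$ are fully occupied and encode the binary representation of $t$. I would then invoke Lemma~\ref{lem:bracket} to argue that this encoded guide rail is precisely the one the $\bracket$ emitted. Using the $\genome$'s initialization procedure from Section~\ref{sec:initialization}, which is guaranteed to execute correctly by Lemma~\ref{lem:genome-data}, the six $\externalCommunication$ seed datapaths are already in place and positioned so that the first is adjacent to the $\bracket$ output. Cooperation between the $\bracket$ output tiles and these seeds triggers growth of the six outgoing datapaths, each carrying the binary encoding of $t$ and the hard-wired instructions needed to reach the direction-specific query row of its target neighbor's $\genome$. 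Lemma~\ref{lem:genome-prop}, together with the analogous seed-macrotile fact from Lemma~\ref{lem:seed-growth}, ensures each neighbor either already has or will complete the $\genome$ bands so that each datapath can arrive at its intended query row.

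For the backward direction, I would argue from the design of the $\externalCommunication$ module that the only mechanism by which an outgoing datapath can be initiated is cooperation between a pre-placed $\externalCommunication$ seed (set up by the $\genome$'s initialization) and the tile sitting at the $\bracket$ output location. Consequently, if a datapath encoding $t$ grows from $L$ to a neighbor's query row, the $\bracket$ output position must have been populated with an encoding of $t$. By Lemma~\ref{lem:bracket}, the $\bracket$ emits at most one encoding, and when it does so $L$ differentiates into the corresponding tile type under $R$; hence $R(L) = t$.

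The step I expect to be the main obstacle is the fine-grained navigation requirement in the forward direction: each of the six outgoing datapaths must both carry the tile-type payload \emph{and}, depending on which face of the neighbor it arrives at, terminate at the specific one of six query rows laid out on the critical orientation of that neighbor's $\genome$. To handle this I would lean on the fact, established in Section~\ref{sec:initialization}, that the six $\externalCommunication$ seeds are themselves direction-specific, each hard-coded with distinct navigation instructions, so the correct query row is selected by which seed launches which datapath rather than by any late-binding decision. A secondary subtlety is the case where the target neighbor is a seed macrotile: one must check, using the construction in Section~\ref{sec:seed}, that the blocking tiles sit at the query-initiation locations, not along the incoming datapath's route, so the datapath still arrives correctly at its row and merely fails to initiate a new query, as intended.
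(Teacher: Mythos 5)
Your proof is correct and follows essentially the same approach as the paper's: in both directions the key observation is that the $\bracket$ output location is simultaneously what $R$ reads and the sole cooperative trigger for the $\externalCommunication$ seed datapaths, so the payload carried must coincide with the tile type $L$ represents. The one structural difference is that the paper uses Lemmas~\ref{lem:genome-prop} and~\ref{lem:seed-growth} as a preamble to establish that $L$'s own $\genome$ (and hence its $\externalCommunication$ seeds) has fully grown, whereas you direct those lemmas at the neighboring macrotiles --- which is not actually needed, since the outgoing datapaths navigate to hardcoded coordinates whether or not the neighbor has a $\genome$ --- but this is harmless because the completeness of $L$'s $\genome$ is already implicit in either hypothesis of the biconditional.
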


\begin{proof}
By Lemmas~\ref{lem:genome-prop} and \ref{lem:seed-growth}, since $L$ has a neighboring macrotile which represents a tile (which is outside or inside the seed, respectively), it will correctly and fully grow the bands of its $\genome$. (Note that Lemma~\ref{lem:seed-growth} only relies on Lemmas~\ref{lem:genome-data} and \ref{lem:genome-growth} but is located later in the proof as it deals solely with the structure of seed macrotiles.) By the design of the $\genome$ and its correctness by Lemma~\ref{lem:genome-data}, it is guaranteed to grow the $\externalCommunication$ seeds, $\bracket$, and $\adderArray$ during its initialization phase of growth.

If $L$ represents a tile of type $t$, then an encoding of $t$ exists at the output of the $\bracket$ (since that is the criteria for $L$ to represent $t$).  This output will grow to locations adjacent to the datapaths which were created during the initialization phase of macrotile formation and which contain the instructions needed to grow each of the $6$ datapaths into adjacent macrotiles.  The encoding of $t$ is incorporated as the payload of these datapaths, causing each to grow the $\externalCommunication$ datapath which will terminate in the direction-specific query location of the critical orientation of the $\genome$ of each macrotile neighboring $L$.  If $L$ does not represent any tile type, then no cooperative growth will be possible which would allow the $\externalCommunication$ datapaths to grow, and if a tile type $t' \ne t$ is represented by $L$, then it will be the encoding of $t'$ which is carried by the $\externalCommunication$ datapaths.
\end{proof}

\subsection{Correctness of seed structure}\label{sec:seed-correctness}

\begin{lemma}[Seed correctness]\label{lem:seed-correct}
Let $\sigma$ be the seed assembly of $\calT$. If $R$ is the representation function defined in Section~\ref{sec:rep-function} and $\sigma_{\calT}$ is the seed assembly for $\calUT$ created using the techniques of Section~\ref{sec:seed}, then $R^*(\sigma_{\calT}) = \sigma$.
\end{lemma}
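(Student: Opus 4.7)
The plan is to verify the equality $R^*(\sigma_\calT) = \sigma$ pointwise in $\Z^d$, arguing separately for macrotile locations $l$ that lie in $\dom\sigma$ and for those that do not. Recall that $R^*$ is determined by applying $R$ independently to each $m$-block macrotile of $\sigma_\calT$, and that, by the definition of $R$ in Section~\ref{sec:rep-function}, $R$ inspects only the fixed set of $\lceil \log_2|T| \rceil$ positions at the end of the $\bracket$'s output within each macrotile; every other tile inside the macrotile is irrelevant to $R$.

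First I would handle the case $l \in \dom\sigma$ with $\sigma(l) = t$. By the seed construction of Section~\ref{sec:seed}, the macrotile in $\sigma_\calT$ corresponding to $l$ is built to contain tiles at exactly those $\lceil \log_2|T| \rceil$ positions at the end of the $\bracket$ output, and the glues/labels of those tiles encode the binary representation of $t$ under the fixed ordering of $T$. Applying $R$ to this macrotile therefore yields $t = \sigma(l)$, as required. The other pieces placed inside this macrotile (the single row of the $\genome$ at an intersection point, the connecting stability paths, the six query blocker tiles, and the two blockers preventing $\adderArray$ and $\bracket$ initialization) occupy positions disjoint from the $\bracket$ output locations, so they do not affect the value of $R$; they serve only to make $\sigma_\calT$ stable and to prevent unwanted initial growth, both of which are irrelevant to $R^*$ computed on the seed.

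Next I would handle the case $l \notin \dom\sigma$. By the seed construction, $\sigma_\calT$ places tiles only inside macrotile positions corresponding to locations of $\sigma$, together with the inter-macrotile connecting paths used to link $\genome$ rows of different seed macrotiles into a single stable assembly. These connecting paths can be routed entirely within the strip of $\genome$-band positions that do not intersect the $\bracket$ output region of any macrotile (as noted in Section~\ref{sec:seed}, these paths can be kept clear of positions reserved for other components). Consequently, for every $l \notin \dom\sigma$, the $\bracket$ output positions of the corresponding $m$-block are empty, and by the definition of $R$ the macrotile maps to empty space; equivalently, $R^*(\sigma_\calT)$ is undefined at $l$, matching $\sigma$.

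I do not anticipate a serious technical obstacle here; the result is essentially an unpacking of the definitions of $R$ and of the seed construction. The only point requiring care is verifying that the auxiliary tiles in the seed (the single $\genome$ row, stability paths, and blocker tiles) never incidentally place tiles at a $\bracket$ output coordinate of any macrotile, since such a stray tile could make $R$ produce a value at an unintended macrotile or the wrong value at an intended one. This is ensured by the design choices recorded in Section~\ref{sec:seed}, where the stability paths and blockers are routed through positions reserved for $\genome$-band and query/initialization structures, all of which are geometrically disjoint from the $\bracket$ output cells fixed by the scale factor $m$. With that observation in place, both cases above give $R^*(\sigma_\calT) = \sigma$.
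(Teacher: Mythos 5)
Your argument is correct and follows essentially the same route as the paper's: unpack the definition of $R$ (it inspects only the fixed $\bracket$-output cells of each $m$-block), check that the seed macrotiles place exactly the binary encoding of the right tile type there, and note that the auxiliary structures (genome row, blocker tiles, stability paths, inter-macrotile connectors) are routed disjointly from those cells so they cannot perturb $R$. The paper's proof additionally remarks that blocker tiles and latched $\genome$ merging prevent later growth from altering a seed macrotile's representation; that observation is about persistence under further assembly and is not needed to establish the static equality $R^*(\sigma_\calT) = \sigma$ (indeed it follows anyway from $R$ being a valid, i.e.\ monotone, macrotile representation function), so your proof is complete as written.
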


\begin{proof}
Lemma~\ref{lem:seed-correct} follows directly from the definition of the macrotiles in $\sigma_{\calT}$ and the definition of $R$, since each macrotile has the encoding of the necessary tile type at the end of the $\bracket$'s output path location, and that is where $R$ checks to resolve each macrotile, each macrotile in $\sigma_{\calT}$ will resolve to the correct tile in $\sigma$.  Since no tiles are placed outside of the macrotiles which map to tiled location of $\sigma$, $R^*(\sigma_{\calT}) = \sigma$.  Furthermore, by design of the macrotiles of the seed, the points which could initiate query datapaths along the $\genome$ in response to any $\externalCommunication$ datapaths received from neighboring macrotiles have blocking tiles in place, which prevent queries from growing.  Also, by the design of the $\genome$, any $\genome$ bands growing into a seed macrotile location from a neighboring macrotile will simply merge with the existing $\genome$ and not cause any new growth. Since those are the only pathways for growth into a macrotile, the seed macrotiles cannot have their behavior changed by any neighbors, and thus are guaranteed to always correctly represent the seed tiles.
\end{proof}

\begin{lemma}[Seed expansion]\label{lem:seed-growth}
Let $\beta \in \prodasm{\calUT}$ such that $R^*(\beta) = \sigma$, $l \not \in \dom{\sigma}$ be an empty location adjacent to a tile in $\sigma$, and $L$ be the macrotile location in $\beta$ which maps to $l$. The following will grow into macrotile $L$: (1) the complete bands of the $\genome$, and (2) a valid $\externalCommunication$ datapath encoding tile type $t$ which is in the adjacent location of $\sigma$.
\end{lemma}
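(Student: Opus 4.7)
The plan is to trace growth outward from the seed macrotile $M$ that is adjacent to $L$ and show that both required structures propagate into $L$. First I would appeal directly to the construction of the seed in Section~\ref{sec:seed}: $M$ already contains a complete row of the $\genome$ at an intersection point, plus the pre-placed encoding of its tile type $t$ at the end of the $\bracket$'s output location. Since that intersection row is present, Lemma~\ref{lem:genome-data} and Lemma~\ref{lem:genome-growth} together guarantee that the full three-band $\genome$ structure grows inside $M$ (this growth cannot be blocked by anything in the seed because the only obstructions placed are the six query blockers on the critical orientation and the two blockers on the $\adderArray$/$\bracket$ initialization paths, as enumerated in Section~\ref{sec:seed}).

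Next I would show that, once the $\genome$ bands of $M$ are complete, the $G_3$ initialization datapath for the $\externalCommunication$ module grows (the $\adderArray$ and $\bracket$ initializations are suppressed by the seed blocker tiles, but the $\externalCommunication$ seed is explicitly not blocked). When the six $\externalCommunication$ seed datapaths have grown, the one that lies adjacent to the $\bracket$-output position meets the pre-placed encoding of $t$. By design of the special bottom-most output tile described in Section~\ref{sec:seed}, this produces exactly the same cooperative attachment that a normally differentiated macrotile would produce when its own $\bracket$ winner arrives at that location. That cooperation triggers (a) the callback signal to the $\genome$ bands that launches $\genome$ propagation to all six neighbors, and (b) the six outgoing $\externalCommunication$ datapaths, each carrying the payload $t$ and the direction-specific routing instructions.

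Applying these two outputs in the direction of $L$: the propagated $\genome$ grows its first row into $L$ at an intersection point (this is the defining property of the output-intersection design in Section~\ref{sec:genome}), so a second invocation of Lemma~\ref{lem:genome-growth} yields the complete $\genome$ bands inside $L$, establishing part (1). The outgoing $\externalCommunication$ datapath in the direction of $L$ contains the correct binary encoding of $t$ (since the tile at the $\bracket$-output location in $M$ is, by Lemma~\ref{lem:seed-correct}, exactly the encoding of the seed tile at the location adjacent to $l$), and it carries the instructions needed to reach the appropriate query row of $L$'s critical orientation for the direction from which it arrives; this establishes part (2).

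The main obstacle is the first step: ensuring that even though a seed macrotile lacks a functioning $\adderArray$ and $\bracket$, the pre-seeded $\bracket$-output tiles still drive the $\externalCommunication$ initialization in exactly the same way the normal computation path would. This reduces to inspecting the glues on the designated special bottom-most output tile of Section~\ref{sec:seed} and confirming that it exposes precisely the cooperation surface needed by the $\externalCommunication$ seed, while its forward-only design prevents any spurious growth back into the blocked $\bracket$ region. Once that local compatibility is verified, the remainder of the argument is a direct composition of Lemmas~\ref{lem:genome-data}, \ref{lem:genome-growth}, and \ref{lem:seed-correct} with the structural description of $G_3$ and the $\externalCommunication$ module.
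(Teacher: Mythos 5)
Your proposal is correct and follows essentially the same route as the paper's own proof: starting from the seed macrotile's pre-placed intersection row, invoking Lemmas~\ref{lem:genome-data} and \ref{lem:genome-growth} to grow the full $\genome$ bands, noting that the blocker tiles suppress $\adderArray$/$\bracket$ initialization while the $\externalCommunication$ seeds grow, and then using the pre-placed $\bracket$-output encoding to cooperatively launch the callback for $\genome$ propagation and the six outgoing $\externalCommunication$ datapaths into $L$. The only minor presentational difference is that you explicitly flag the glue-compatibility check on the special bottom-most output tile as a local verification step, whereas the paper folds that observation into the general correctness claim; this does not change the substance of the argument.
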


\begin{proof}
Let $l'$ be the location of a tile in $\sigma$ which is adjacent to $l$, and $L'$ be the macrotile which maps to $l'$.  By the definition of a seed tile macrotile, $L'$ will contain a row of the $\genome$ at an intersection, and by Lemma~\ref{lem:genome-growth}, that is capable of growing the complete set of bands of the $\genome$ in $L'$.  Further, by Lemma~\ref{lem:genome-data} we know that the $\genome$ contains the necessary instructions to seed the $\externalCommunication$ datapaths.  As the critical orientation of the $\genome$ grows and enters the initialization stage, the $\externalCommunication$ seed is the first to grow and will correctly grow the datapaths necessary to accept output of the $\bracket$.  One of these datapaths also contains a pathway for a callback instruction, also initiated by that output, which causes growth back along that datapath to the $\genome$ and initiates the $\genome$ output to neighbors, resulting in an intersection row of the $\genome$ to grow in each, and thus the full bands in each by Lemma~\ref{lem:genome-growth}.  (The subsequent initialization of the $\bracket$ and $\adderArray$ will be prevented by the blocker tiles unique to seed macrotiles.)  Because the $\externalCommunication$ datapaths are correctly seeded and will grow to the point of accepting output from the $\bracket$, the preexising encoding of the tile type represented by $L'$, at the precise location of the end of the $\bracket$'s output, will initiate the cooperative growth that both begins the output of the $\genome$ and the growth of the $\externalCommunication$ datapaths encoding tile type $t$ to all $6$ macrotiles neighboring $L'$, including $L$. Also, exactly as discussed in the proof of Lemma~\ref{lem:seed-correct}, by design of the seed macrotiles, no other macrotiles can cause incorrect growth within them, and therefore $L'$ will always output the correct growth into $L$.
\end{proof}

Now we also state an prove a Lemma about the bounded nature of the ``fuzz'' that grows in $\calUT$, which will be useful later for our proof of correctness.

\begin{lemma}[Bounded fuzz]\label{lem:bounded-fuzz}
Let $\alpha \in \prodasm{\mathcal{T}}$ and $\beta \in \prodasm{\calUT}$ such that $R^*(\beta) = \alpha$, $l \not \in \dom{\sigma}$ be a location in space which is not part of the seed and not adjacent to any tile in $\alpha$, and $L$ be the macrotile location in $\beta$ which maps to $l$. No individual tiles of the system $\calUT$ can be placed within macrotile $L$.
\end{lemma}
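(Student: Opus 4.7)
The plan is to prove the contrapositive: if any tile of $\calUT$ is placed in macrotile location $L$, then $l$ must either lie in $\dom \sigma$ or be adjacent to some tile of $\alpha$. The argument is an induction on the assembly sequence of $\beta$, showing that the set of macrotile locations containing any tile at all is always contained in $\dom \alpha \cup N(\dom \alpha)$, where $N(\cdot)$ denotes the set of locations adjacent to tiles in $\alpha$.

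First, I would catalogue the only four mechanisms by which any tile of $\calUT$ can ever be placed inside a previously empty macrotile location $L$: (i) $L$ is a seed macrotile, placed as part of $\sigma_{\calT}$; (ii) the bands of the $\genome$ propagate into $L$ from an adjacent differentiated macrotile, as described in the initialization of Section~\ref{sec:genome}; (iii) an $\externalCommunication$ datapath grows into $L$ from a differentiated neighbor, as per Lemma~\ref{lem:external-communication}; or (iv) some internal module (the $\adderArray$, $\bracket$, or $\externalCommunication$ seeds, or a query datapath) fires inside $L$. Mechanism (iv) can only occur after the critical orientation of the $\genome$ has already grown in $L$, because every such internal module is initialized by $G_3$ or activated via cooperation with the $\genome$'s query rows (Lemma~\ref{lem:genome-data} and Lemma~\ref{lem:genome-query}). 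Hence (iv) presupposes (ii). Similarly, mechanism (iii) requires the originating neighbor to have output from its own $\bracket$, which by Lemma~\ref{lem:bracket} and Lemma~\ref{lem:external-communication} means that neighbor maps under $R$ to a tile of $T$, i.e.\ its location is in $\dom \alpha$.

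With this catalogue in hand, I would run the induction on the length of the $\calUT$-assembly sequence producing $\beta$. The base case is the seed $\sigma_{\calT}$: by the construction in Section~\ref{sec:seed} and Lemma~\ref{lem:seed-correct}, tiles of $\sigma_{\calT}$ are placed only within macrotiles that map to $\dom \sigma \subseteq \dom \alpha$. For the inductive step, suppose every tile placed so far lies in a macrotile location $L'$ with $l' \in \dom \alpha \cup N(\dom \alpha)$, and consider the next tile placed, say in macrotile $L$ with location $l$. If $l \in \dom \alpha$, we are done; otherwise the new tile is placed by mechanism (ii) or (iii) (mechanism (iv) reduces to (ii) as noted), so its growth originates from a differentiated neighbor macrotile $L''$ whose location $l''$ is in $\dom \alpha$. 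But then $l$ is adjacent to $l''$, so $l \in N(\dom \alpha)$, closing the induction.

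The one step that requires genuine care, and which I expect to be the main obstacle, is verifying that no previously differentiated macrotile can sneak tiles more than one macrotile away. Concretely, I must check that the instruction sets in $G_1$ and $G_3$, and the navigation headers of the $\externalCommunication$ datapaths, are all sized so that every path terminates inside its target macrotile rather than overshooting into a second-neighbor macrotile. This is precisely a consequence of Lemma~\ref{lem:genome-data} and the scale-factor choice of Section~\ref{sec:scale}: the hop counts encoded by every datapath are bounded by quantities derived from $m$, and every datapath's payload is discarded (its growth halts) upon reaching the specified coordinate in the destination macrotile. Once this bounded-reach property is noted explicitly, the induction closes cleanly and the lemma follows: a macrotile location $L$ with $l \notin \dom \sigma$ and $l \notin N(\dom \alpha)$ satisfies neither the base nor the inductive hypothesis, so no tile of $\calUT$ is ever placed within it.
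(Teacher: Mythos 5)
Your proposal is correct and rests on the same essential observation as the paper's proof, namely that tiles can only enter a previously-empty macrotile via the $\genome$ or $\externalCommunication$ of a differentiated neighbor, and since no neighbor of $L$ maps to a tile of $\alpha$ and $L$ is not part of $\sigma_\calT$, $L$ must remain empty. The paper states this directly as a chain of four facts, whereas you organize the identical content as an explicit induction on the assembly sequence with the invariant that every tiled macrotile lies in $\dom\alpha \cup N(\dom\alpha)$, which is a more formal packaging of the same reasoning rather than a genuinely different route.
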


\begin{proof}
Lemma~\ref{lem:bounded-fuzz} follows immediately from the facts that (1) since $R^*(\beta) = \alpha$, $L$ is not adjacent to any macrotile which maps to a tile in $\alpha$, (2) since $l$ is not in $\sigma$, $L$ does not initially have any tiles within it in the seed of $\calUT$, and (3) the first tiles that can grow into an initially empty macrotile are those from the $\genome$ or $\externalCommunication$ datapaths of a neighboring macrotile, and (4) only a macrotile which represents a tile of $\alpha$ can output either of those to neighboring macrotiles. Therefore, since no macrotiles neighboring $L$ represent tiles of $\alpha$, none of them can output into $L$ and therefore $L$ can have no tiles within it.
\end{proof}

\section{Proof of Correctness of General 3D aTAM Construction} \label{sec:thm1-proof}

In this section, we piece together the proofs of correctness of individual modules from Section~\ref{sec:construction} to prove the correctness of the entire construction and ultimately Theorem~\ref{thm:3DaTAMIU}.

\subsection{Correctness of construction}

In this section, we prove Theorem~\ref{thm:3DaTAMIU} by proving the correctness of our construction.  To do this, we will show that $\calUT$ correctly simulates $\calT$ following Definition~\ref{def:s-simulates-t}, specifically showing how $\calT$ follows $\calUT$ (Definition~\ref{def-t-follows-s}) and $\calUT$ models $\calT$ (Definition~\ref{def-s-models-t}), which will also show equivalent productions (Definition~\ref{def-equiv-prod}). To do this, we will prove the correctness of the growth of $\calUT$ as it simulates the growth $\calT$ through the full set of possible tile addition scenarios.

\begin{lemma}\label{lem:models}
$\calUT$ models $\calT$.
\end{lemma}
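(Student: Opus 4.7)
The plan is to verify Definition~\ref{def-s-models-t} directly by constructing an explicit family $\Pi \subset \prodasm{\calUT}$ for each $\alpha \in \prodasm{\calT}$, and then using the module-level correctness lemmas of Section~\ref{sec:construction} to discharge both the forward existence and reverse covering conditions. Specifically, I would define $\Pi_\alpha$ to be the set of all $\alpha' \in \prodasm{\calUT}$ with $R^*(\alpha') = \alpha$ in which every macrotile representing a location of $\dom\alpha$ has completed its growth through step~7 of Section~\ref{sec:full-growth-process}, while every macrotile representing a frontier location of $\alpha$ has not yet produced any tile at the output of its $\bracket$. Nonemptiness of $\Pi_\alpha$ follows by induction on $|\dom\alpha|$: the base case uses Lemma~\ref{lem:seed-correct}, and the inductive step uses Lemmas~\ref{lem:seed-growth}, \ref{lem:genome-prop}, and \ref{lem:external-communication}, invoking Lemma~\ref{lem:timing} so that the growth sequences can be chosen to fully complete each macrotile's outputs before the next differentiation occurs.

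For the forward condition, I would fix any $\alpha' \in \Pi_\alpha$ and any $\beta$ with $\alpha \to^\calT \beta$ obtained by placing a tile of type $t$ at frontier location $l$, and let $L$ be the macrotile representing $l$. By the definition of $\Pi_\alpha$ and Lemma~\ref{lem:external-communication}, $L$ has already received $\externalCommunication$ datapaths from every neighbor representing a tile of $\alpha$. Lemma~\ref{lem:correct_adder} then guarantees that the $\adderUnit$ for $t$ delivers the encoding of $t$ to the $\bracket$, Lemma~\ref{lem:bracket} yields an output encoding of $t$, and Lemma~\ref{lem:external-communication} propagates the result to all neighbors of $L$. The resulting $\beta' \in \prodasm{\calUT}$ satisfies $R^*(\beta') = \beta$ by the definition of $R$ in Section~\ref{sec:rep-function}, and Lemma~\ref{lem:bounded-fuzz} ensures that no tile is placed outside the allowed fuzz region, so the mapping gains only the single new tile at $l$.

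For the reverse condition, I would suppose $\alpha'' \to^\calUT \beta'$ with $R^*(\alpha'') = \alpha$ and $R^*(\beta') = \beta$. Since the mapping changes from $\alpha$ to $\beta$, the single new tile in $\beta'$ must be the one that completes the $\bracket$ output at the macrotile $L$ representing the unique location $l \in \dom\beta \setminus \dom\alpha$. Using Lemma~\ref{lem:timing}, any assembly sequence from the seed to $\alpha''$ can be reordered so as to first fully complete steps~1--7 within every macrotile representing a location of $\alpha$ while delaying any $\bracket$ output in a frontier macrotile; taking $\alpha'$ to be the intermediate assembly at the moment this reordered completion finishes places $\alpha'$ in $\Pi_\alpha$ with $\alpha' \to^\calUT \alpha''$, as required.

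The main obstacle will be justifying the reordering argument: I will need to show that every tile present in $\alpha''$ but absent from my chosen $\alpha'$ can be added after $\alpha'$ without either blocking or enabling any differentiation that was not already possible. This reduces to verifying that partial growth inside a frontier macrotile never opens up additional $\bracket$ outputs or affects neighboring macrotiles, which in turn follows from the strict input/output discipline of the $\genome$, $\adderArray$, and $\bracket$ modules together with the timing independence of Lemma~\ref{lem:timing}, which guarantees that the set of possible internal tile placements in a macrotile does not depend on the order in which its differentiated neighbors supplied their inputs.
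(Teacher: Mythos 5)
Your overall strategy matches the paper's — construct an explicit $\Pi$ for each $\alpha$, use the module-level correctness lemmas, and appeal to timing independence and the independence of growth between macrotiles — but there is a genuine gap in your definition of $\Pi_\alpha$. You require only that a frontier macrotile ``has not yet produced any tile at the output of its $\bracket$.'' Since the representation function $R$ examines only the $\bracket$ output locations, $R^*(\alpha') = \alpha$ continues to hold even after guide rails have entered the $\bracket$ of a frontier macrotile and won one or more points of competition; but once a point of competition is resolved, the losing tile-type option is permanently blocked. Thus if $l$ is a frontier location of $\alpha$ where either $t_1$ or $t_2$ could attach, your $\Pi_\alpha$ contains assemblies $\alpha'$ in which the $\bracket$ at $L$ has already committed against $t_1$, so no $\beta'$ with $\alpha' \to^\calUT \beta'$ and $R^*(\beta') = \alpha + (l \mapsto t_1)$ exists — condition (1) of Definition~\ref{def-s-models-t} fails.

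The fix — and what the paper actually does — is to define $\Pi$ so that every frontier macrotile has had its internal growth stalled \emph{before any guide rail enters the $\bracket$} at all, not merely before the $\bracket$ produces output. With this strengthening, every $\alpha' \in \Pi_\alpha$ still retains the ability to grow into a representative of every possible next assembly, and your forward-condition argument via Lemmas~\ref{lem:correct_adder}, \ref{lem:bracket}, \ref{lem:external-communication}, and \ref{lem:bounded-fuzz} goes through. Your reordering argument for the reverse condition is the right idea and becomes sound under the corrected $\Pi$, since non-differentiated macrotiles cannot output to neighbors and their internal bracket growth can therefore be freely delayed. The remaining difference from the paper is one of detail rather than substance: the paper additionally walks through explicit case analyses of single-sided vs.\ multi-sided binding and of whether the attaching tile type is unique, whereas you invoke the correctness lemmas more directly; the paper's case analysis is more thorough but encodes the same argument.
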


Given an arbitrary 3D aTAM system $\calT$ and producible assembly $\alpha \in \prodasm{\calT}$, there may be an arbitrary number of new assemblies that $\alpha$ can grow into via a single tile addition, i.e. the number of frontier locations where new tiles could validly attach may be arbitrarily large, and each frontier location may potentially allow for multiple tile types to attach. (See Figure~\ref{fig:models-example} for a small example.) By Definition~\ref{def-s-models-t}, for $\calUT$ to model $\calT$ there must exist assembly sequences which allow for the assemblies of $\calUT$ to grow into representations of \emph{any} of the resulting assemblies, which essentially means that $\calUT$ cannot overaly restrict growth options, preventing representations of some of $\calT$'s producible assemblies.  However, an importance nuance of this type of universal simulation arises because, given any particular macrotile representing a frontier location in $\alpha$, there may be an arbitrary number of tile types which can attach in that location in $\alpha$.  Since the simulating tile set $U$ must consist of a number of tile types which is constant regardless of what system is being simulated, the number of options to choose from could be much larger than the number of tile types in $U$. If the number is large enough, standard information theory therefore dictates that more than one tile placement in $\calUT$ must be used to determine the selection, and that after one or more of those tile additions, the range of options must become limited (i.e. the set of options is reduced) before a tile placement which makes the final selection. What this means is that a universal simulator is therefore forced to grow assemblies in $\calUT$ which represent assemblies of $\calT$ but which, at some point, have only partially completed the selection of which tile to represent in a given location, meaning that the assembly in $\calUT$ will still treat that location as representing empty space, but will no longer be able to grow into representations of all of the full set of tile types which could appear there in $\alpha$. Due to this fundamental constraint, in Definition~\ref{def-s-models-t} the set $\Pi$ is defined to capture the requirement of there being some set of assemblies through which $\calUT$ can grow so that, even if options become restricted at some point, there at least was an assembly sequence through which every possible assembly of $\alpha$ could have been represented.

\begin{figure}[htb]
\centering
\includegraphics[width=2.0in]{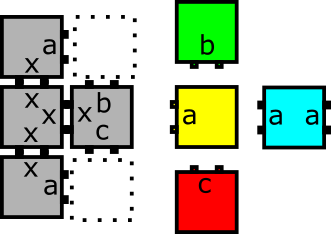}
\caption{(Left) An example assembly $\alpha \in \prodasm{\calT}$. Frontier locations are shown by dashed lines, (Right) Tiles that can attach in the frontier locations to the left.  The green, yellow, or blue could attach in the top location, and the red, yellow, or blue could attach in the bottom location.} \label{fig:models-example}
\end{figure}

A specific example of this as it relates to our construction is the following. Let $\beta \in \prodasm{\calUT}$ where $R^*(\beta) = \alpha$ is the assembly shown in Figure~\ref{fig:models-example}, and assume that the macrotile has begun growth in the location representing the top frontier location. If that macrotile has grown the bands of the $\genome$ and initialized all components, and then has received the $\externalCommunication$ datapaths from both the west and south sides and those have resulted in growth through the $\adderUnit$s for each of the green, yellow, and blue tile types, those three $\adderUnit$s will provide input to the $\bracket$.  Before growth of those inputs begins into the $\bracket$, the assembly still has the potential to grow into assemblies which would represent each of the three.  However, as soon as a point of competition is won by one of the $\bracket$ paths over another, but before a path has completed the entire $\bracket$, that macrotile will still represent empty space, but it will no longer have the potential to grow into all of the options that $\alpha$ can.  Furthermore, now assume the assembly sequence continues in such a way that no further tile attaches into the macrotile representing the top frontier location until the macrotile of the bottom location grows until it represents one of the possible tiles that could attach there in $\alpha$.  Now this new assembly in $\calUT$ maps to one which represents $5$ tiles in $\calT$ but at no time while it represents that assembly can it still grow into all of the options that match those still available to the assembly in $\calT$, since that can still receive any of the three tiles in the top frontier location, but one of those is no longer possible in the assembly of $\calUT$.  However, due to our construction, $\calUT$ still correctly models $\calT$, and this can be seen by noting that the growth of all macrotiles which have not yet differentiated (from representing empty space to representing a tile) is completely independent. This means that one macrotile's growth cannot affect another, and it also means that for any such macrotile which differentiates, there exists a valid assembly sequence in which all growth in all of the others which haven't yet differentiated is stalled before any inputs enter the $\bracket$. The assemblies of this form make up the set $\Pi$ of Definition~\ref{def-s-models-t}, and are the witnesses that it's possible through some assembly sequence(s) for assemblies in $\calUT$ to grow into representations of all valid assemblies in $\calT$.

Due to the independent growth of macrotiles, we can now finish our proof of Lemma~\ref{lem:models} by simply using induction which shows that $\calUT$ properly models $\calT$ throughout a (possibly infinite) assembly sequence by explaining how each of the different scenarios through which a single tile attachment can happen in $\calT$ is correctly modeled in $\calUT$  For our base case, we consider $\sigma$, the seed of $\calT$.  By Lemma~\ref{lem:seed-correct} we know that the seed $\sigma_{\calT}$ of $\calUT$ correctly represents $\sigma$. % In the following, we will show how the growth of $\calUT$ correctly models $\calT$ by explaining how, given any assembly sequence $\sigma = \alpha_0 \rightarrow^\mathcal{T} \alpha_1 \rightarrow^\mathcal{T} \alpha_2 ... \rightarrow^\mathcal{T} \alpha_i$, where $\alpha_i \in \prodasm{\calT}$ for $0 \le i \le \infty$, there exists an assembly sequence $\sigma_{\calT} = \beta_0 \rightarrow^\calUT \beta_1 \rightarrow^\calUT \beta_2 ... \rightarrow^\calUT \beta_j$, where $\beta_j \in \prodasm{\calUT}$ for $0 \le j \le \infty$ such that for every $i$ and $j$, $R(\beta_j) = \alpha_i$. At a high level, this works because each macrotile of $\calUT$ (outside of the seed) grows in such a way that it is always able to receive any inputs from neighboring macrotiles and differentiate into any tile which could be placed in its location in $\calT$ until the point at which an input path grows into the $\bracket$ and wins one or more points of competition.  Only then is the set of assemblies which can be represented restricted.
For our induction hypothesis, we assume that the assembly $\alpha \in \prodasm{\calT}$ is correctly represented in $\calUT$ and then show how $\calUT$ correctly models any possible tile addition to $\alpha$. The following sections will iterate through all of the possible scenarios in which new tiles in $\calT$ can attach to a $\alpha$ at a location $l$, and how those scenarios are modeled under Definition~\ref{def-s-models-t} in $\calUT$ with an assembly $\beta \in \prodasm{\calUT}$ and macrotile $L$ such that $R^*(\beta) = \alpha$ and macrotile $L$ in $\beta$ maps to location $l$ in $\alpha$.

\subsubsection{Single-sided binding}

First, we look at the variety of scenarios under which a tile of type $t$ can attach to $\alpha \in \prodasm{\calT}$ in location $l$ using a single $\tau$-strength bond to a neighboring tile $l_d$ (of tile type $t_d$) which is represented in $\beta \in \prodasm{\calUT}$ by macrotile $L_d$.

In each case of these scenarios, we can assume macrotile $L$ has a completed $\genome$ by Lemmas~\ref{lem:timing} and \ref{lem:genome-prop}. From there, we can assume that an $\externalCommunication$ datapath encoding tile type $t_d$ will grow in from macrotile $L_d$ to the query section of the $\genome$ and that it will activate the correct query datapaths corresponding to all the possible tile types in $T$ that could bond with a tile of type $t_d$ in the $d$ direction, using Lemmas~\ref{lem:genome-data} and \ref{lem:genome-query}. We also know that one of these query datapaths specifically corresponds to tile type $t$, since we know that a tile of type $t$ can bond to assembly $\alpha$ in location $l$.

\paragraph{No other neighbors} The trivial case of attaching a tile is when no other neighboring tiles are present and $t$ is the only tile type that can attach to assembly $\alpha$ at location $l$. In this case, since $t$ is the only tile type that can attach at location $l$, Lemma~\ref{lem:correct_adder} tells us that only a single guide rail that encodes $t$ will leave the $\adderArray$ and enter the $\bracket$. By Lemma~\ref{lem:bracket}, it is the only guide rail that can leave the $\bracket$, signifying that macrotile $L$ has differentiated to represent tile type $t$. Finally, this information is propagated to the neighbors, along with the $\genome$.

\paragraph{Multiple neighbors, single tile type} Still assuming that $t$ is the only tile type that can bind in location $l$, we now look at the scenario in which other neighbors are present but do not represent adjacent glues which are able to attach a tile type other than $t$. By Lemmas~\ref{lem:genome-data} and \ref{lem:genome-query}, these neighbors will initiate query datapaths from the $\genome$ to the $\adderArray$ that represent bonds they can form with other potential tile types. However, by Lemma~\ref{lem:correct_adder}, we know that none of these query datapaths can cause an $\adderUnit$ to succeed (other than the $\adderUnit$ corresponding to $t$), since no tile type other than $t$ can attach to assembly $\alpha$ at location $l$. Therefore, only a guide rail encoding $t$ will leave the $\adderArray$, go through the $\bracket$, and cause the macrotile $L$ to differentiate.

\paragraph{Multiple neighbors, single tile type, different attachments} Now, we look at the same case as before, but now we assume that tile type $t$ can attach at location $l$ through neighbors other than $l_d$. Similar to the last case, Lemma~\ref{lem:correct_adder} tells us that a guide rail encoding $t$ is the only possible output of the $\adderArray$. The difference from the last case is that query datapaths initiated by the neighboring macrotiles other than $L_d$ will cause other $\componentAdder$ pieces within the $\adderUnit$ that corresponds to tile type $t$ to also succeed. In fact, these query datapaths from the neighboring macrotiles may beat the query datapaths from macrotile $L_d$. However, the $\adderUnit$ is set up so that, regardless of which $\componentAdder$ succeeds first, the result in the same, making timing irrelevant (also by Lemma~\ref{lem:correct_adder}).

\paragraph{Multiple neighbors, multiple tile types} Now, we look at the case when there are multiple neighbors of tile location $l$, causing multiple tile types to be able to attach to $\alpha$, but tile type $t$ is chosen. Again, by Lemmas~\ref{lem:genome-data} and \ref{lem:genome-query}, each potential bond between a neighbor and location $l$ in $\alpha$ will have a corresponding query datapath in macrotile $L$ from the $\genome$ to the $\adderArray$. By Lemma~\ref{lem:correct_adder}, we know each tile type that can attach at location $l$ to assembly $\alpha$ will have a corresponding guide rail that leaves the $\adderArray$. Now, there are two scenarios we can consider, depending on whether a guide rail enters the $\bracket$ before or after the ``winning'' guide rail encoding tile type $t$ has caused the macrotile to differentiate. If the other guide rails enter before differentiation, they will in the traditional competition, eventually losing to the guide rail encoding tile type $t$ (assuming we are still modeling tile type $t$ attaching before another tile type in location $l$). However, if the other guide rails enter after differentiation, they may collide with the $\bracketBlocker$, whose construction was signaled by the ``winning'' guide rail, and stop growing (see Section~\ref{sec:bracket-blocker}). Either way, tile type $t$ is correctly represented and that information is propagated to the neighbors of $L$.

\subsubsection{Multi-sided binding}

Next, we look at the scenarios under which a tile of type $t$ can attach to $\alpha \in \prodasm{\calT}$ in location $l$ using cooperation between multiple bonds from neighbors $l_0$ through $l_i$ (of tile types $t_0$ and $t_i$) which are represented in $\beta \in \prodasm{\calUT}$ by macrotiles $L_0$ through $L_i$ such that each $L_n$ maps to $l_n$ for $0 \leq n \leq i$.

In each case of these scenarios, we can assume macrotile $L$ has a completed $\genome$ by Lemmas~\ref{lem:timing} and \ref{lem:genome-prop}. From there, we can assume that $\externalCommunication$ datapaths encoding tile types $t_0$ through $t_i$ will grow in from macrotiles $L_0$ through $L_i$ respectively to the query section of the $\genome$ and that they will initiate the correct query datapaths corresponding to all the possible tile types in $T$ that could bind through cooperation with tile types $t_0$ through $t_i$, using Lemmas~\ref{lem:genome-data} and \ref{lem:genome-query}. We also know that one of these query datapaths specifically corresponds to tile type $t$, since we know that a tile of type $t$ can bind to assembly $\alpha$ in location $l$.

\paragraph{Single tile type, single attachment} The trivial case for multi-sided binding is when just the neighbors $l_0$ through $l_i$ are present, allowing for only tile type $t$ to attach to assembly $\alpha$ in location $l$. In this case, Lemma~\ref{lem:correct_adder} tells us a guide rail encoding tile type $t$ is the only possible output of the $\adderArray$. Therefore, it must win the $\bracket$, causing macrotile $L$ to differentiate to a representation of $t$.

\paragraph{More neighbors, single tile type} Assuming $t$ is still the only tile type that can attach to assembly $\alpha$ at location $l$, we now consider when extra neighbors and exposed glues are present but unable to allow for the attachment of a tile of any other type. Again, Lemma~\ref{lem:correct_adder} says only a guide rail encoding tile type $t$ can leave the $\adderArray$. The only difference is that additional query datapaths will leave the $\genome$ to the $\adderArray$ representing the bonds that can form between the extra neighbors and specific tile types in $T$ (if any). However, these additional query datapaths will be unable to cause any $\componentAdder$ to succeed.

\paragraph{Single tile type, multiple attachments} Still assuming $t$ is the only tile type that can attach to assembly $\alpha$ at location $l$, we now assume that there are multiple different sets of neighboring glues which are sufficient to allow it to attach. Lemma~\ref{lem:correct_adder} says only the guide rail encoding tile type $t$ will leave the $\adderArray$. The difference is that additional query datapaths that correspond to additional bonds that can form will grow to the $\adderArray$ and cause additional $\componentAdder$ pieces within the $\adderUnit$ that corresponds to tile type $t$ to also succeed.

\paragraph{Multiple tile types} Now, we look at when additional tile types could also attach to assembly $\alpha$ at location $l$. Lemma~\ref{lem:correct_adder} says that a guide rail encoding each tile type (including $t$) will leave the $\adderArray$. Similar to the scenario from single-sided binding, there are two subcases we must look at for each additional guide rail. If a specific guide rail reaches the $\bracket$ before the macrotile has differentiated, it will go into the $\bracket$ and be blocked by another guide rail. However, if it reaches the $\bracket$ after the macrotile has differentiated, it may be blocked by the $\bracketBlocker$, whose growth was initiated by the guide rail encoding tile type $t$ winning the bracket. Regardless, $t$ will be correctly represented.

\begin{proof}[Proof of Lemma~\ref{lem:models}]
The cases discussed show that, starting from the seed assembly $\sigma$ and then for any assembly $\alpha \in \prodasm{\calT}$, $\calUT$ correctly models all possible tile attachments through the valid assembly sequences of $\calT$.  Additionally, by Lemma~\ref{lem:bounded-fuzz} we know that growth will not occur in $\calUT$ outside of the region which represents $\alpha$, and therefore $\calUT$ models $\calT$.

\end{proof}

\begin{lemma}\label{lem:follows}
$\calT$ follows $\calUT$.
\end{lemma}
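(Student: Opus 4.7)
The plan is to prove Lemma~\ref{lem:follows} by induction on the length of an assembly sequence in $\calUT$. For the base case, Lemma~\ref{lem:seed-correct} gives $R^*(\sigma_{\calT}) = \sigma \in \prodasm{\calT}$. For the inductive step, suppose $\alpha' \to_1^{\calUT} \beta'$ with $R^*(\alpha') = \alpha \in \prodasm{\calT}$; I need to show $\alpha \to^{\calT} R^*(\beta')$. It suffices, by transitivity of $\to^{\calT}$, to check this one-tile step, since $\to^{\calT}$ includes the zero-step reflexive case for when the representation is unchanged.

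The argument splits into two cases based on the single tile $p$ added in going from $\alpha'$ to $\beta'$. In Case 1, the placement of $p$ does not complete any as-yet-incomplete $\bracket$ output encoding within any macrotile. Since $R$ resolves a macrotile to a tile of $T$ only when every position of the $\bracket$'s output is occupied (Section~\ref{sec:rep-function}), none of those positions change their resolution status, and therefore $R^*(\beta') = R^*(\alpha') = \alpha$; the conclusion $\alpha \to^{\calT} \alpha$ is immediate. In Case 2, the placement of $p$ fills the final previously-empty position of the $\bracket$ output in some macrotile $L$ mapped to location $l \in \Z^3$, causing $L$ to differentiate to some tile type $t \in T$. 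Let $\beta = R^*(\beta')$; then $\beta$ agrees with $\alpha$ everywhere except that $\beta(l) = t$, i.e.\ $\beta = \alpha + (l \mapsto t)$. To conclude $\alpha \to_1^{\calT} \beta$, I need to show that a tile of type $t$ can $\tau$-stably attach to $\alpha$ at $l$.

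This is exactly what the module-level lemmas deliver. By Lemma~\ref{lem:bracket}, the $\bracket$ of $L$ can output an encoding of $t$ only if $t$ lies in the set $l_T$ of tile types able to attach to $\alpha$ at $l$ with incident bonds summing to at least $\tau$. The hypotheses of Lemma~\ref{lem:bracket} (correctness of all other modules) are satisfied by Lemmas~\ref{lem:genome-data}, \ref{lem:genome-growth}, \ref{lem:genome-prop}, \ref{lem:genome-query}, \ref{lem:component_evaluate}, \ref{lem:correct_adder}, and \ref{lem:external-communication}. Crucially, the input glues used by the $\adderArray$ of $L$ come only from $\externalCommunication$ datapaths sent by already-differentiated neighboring macrotiles (Lemma~\ref{lem:external-communication}), which under $R^*$ correspond precisely to tiles already present in $\alpha$ adjacent to $l$. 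Consequently $t \in l_T$ reflects a genuine $\tau$-stable attachment option in $\alpha$, so $\alpha \to_1^{\calT} \beta = R^*(\beta')$.

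The subtlety I expect to be the main obstacle is ruling out spurious representation changes that are not accompanied by a corresponding valid $\calT$-step: specifically, one must argue that no new tile of $R^*(\beta')$ can appear in a macrotile location whose bordering macrotiles, viewed through $R^*$, do not already provide the glues used by the $\adderArray$. This is handled by the observation that $\externalCommunication$ datapaths, and thus any growth into the $\adderArray$, are strictly initiated by neighboring differentiated macrotiles, together with Lemma~\ref{lem:bounded-fuzz} which prevents stray growth into macrotile locations not adjacent to $R^*(\alpha')$. Therefore every single-tile step in $\calUT$ either leaves $R^*$ unchanged or induces exactly one valid single-tile step in $\calT$; chaining these observations over the full (possibly infinite) assembly sequence gives $R^*(\alpha') \to^{\calT} R^*(\beta')$ for all $\alpha' \to^{\calUT} \beta'$, establishing $\calT \dashv_R \calUT$.
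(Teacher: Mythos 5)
Your proof is correct and rests on the same key lemma as the paper's (the paper reduces the claim to the contrapositive "if $t$ cannot attach in $\calT$ then $L$ cannot differentiate to $t$," which it derives from Lemma~\ref{lem:correct_adder}; your Case~2 proves the same implication directly via Lemma~\ref{lem:bracket}, which itself rests on Lemma~\ref{lem:correct_adder}). You merely make explicit the single-step induction over $\calUT$-assembly sequences and the handling of representation-preserving steps that the paper leaves implicit.
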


\begin{proof}
Rather than show $\calT$ can make any tile attachment that can be represented in $\calUT$, we opt to show that, if a tile of type $t$ cannot attach in $\calT$, a macrotile in the representative location in $\calUT$ cannot differentiate to represent $t$. This proof follows directly from Lemma~\ref{lem:correct_adder} that says, if a tile of type $t$ cannot attach in $\calT$, then a guide rail that encodes $t$ cannot leave the $\adderArray$ in the representative macrotile. If a guide rail encoding tile type $t$ cannot leave the $\adderArray$, then it cannot win the $\bracket$, and the representative macrotile, therefore, cannot differentiate to represent $t$.
\end{proof}

\begin{lemma}\label{lem:equiv-prod}
$\calT$ and $\calUT$ have equivalent productions.
\end{lemma}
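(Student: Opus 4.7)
The plan is to verify the three conditions of Definition~\ref{def-equiv-prod} in turn, pulling together Lemmas~\ref{lem:models}, \ref{lem:follows}, \ref{lem:bounded-fuzz}, and \ref{lem:seed-correct}. The heart of the work has already been done; what remains is to repackage it into the set-theoretic form demanded by equivalent productions.

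First I would handle condition (1), the equality $\{R^*(\alpha') \mid \alpha' \in \prodasm{\calUT}\} = \prodasm{\calT}$. For the inclusion $\subseteq$, I would induct on the length of an assembly sequence in $\calUT$. The base case is Lemma~\ref{lem:seed-correct}, which gives $R^*(\sigma_\calT) = \sigma$. For the inductive step, given $\alpha' \to^{\calUT} \beta'$, Lemma~\ref{lem:follows} yields $R^*(\alpha') \to^{\calT} R^*(\beta')$, and since $R^*(\alpha') \in \prodasm{\calT}$ by induction we conclude $R^*(\beta') \in \prodasm{\calT}$. For the inclusion $\supseteq$, I would induct on the length of an assembly sequence in $\calT$, using Lemma~\ref{lem:models}: given $\alpha \in \prodasm{\calT}$ with some preimage $\alpha' \in \prodasm{\calUT}$ under $R^*$, the modeling property guarantees a witness set $\Pi$ of such preimages through which every one-step successor $\beta$ of $\alpha$ in $\calT$ is realized as $R^*(\beta')$ for some $\beta' \in \prodasm{\calUT}$ with $\alpha' \to^{\calUT} \beta'$ (for the correct choice of $\alpha' \in \Pi$).

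Next I would handle condition (3), clean mapping, as it is needed to make condition (2) clean. For any $\alpha' \in \prodasm{\calUT}$ with $R^*(\alpha') = \alpha$, I need to show every non-empty $m$-block of $\alpha'$ lies in a macrotile position adjacent to $\dom(\alpha)$. This is exactly Lemma~\ref{lem:bounded-fuzz}: no tiles of $\calUT$ are placed in macrotile locations that are neither in the seed nor adjacent to a tile of the represented assembly.

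The main obstacle, and the last step, is condition (2), the equality on terminal assemblies. The inclusion $\subseteq$ requires that if $\alpha' \in \termasm{\calUT}$ then $R^*(\alpha') \in \termasm{\calT}$. Suppose for contradiction that $R^*(\alpha')$ has a frontier location $l$ in $\calT$ where some $t \in T$ can attach; then by Lemma~\ref{lem:models} (applied to the producible $R^*(\alpha')$ with preimage $\alpha'$, possibly after reaching a witness in $\Pi$ via further steps of $\calUT$ that do not change $R^*(\alpha')$, which by cleanness and the independence of macrotile growth is possible) there is a growth path in $\calUT$ from $\alpha'$ that places new tiles, contradicting terminality of $\alpha'$. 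Here the subtlety is that $\alpha'$ itself may not be in $\Pi$, so I would argue that either $\alpha'$ can still grow a macrotile toward representing $t$ at $l$, which contradicts terminality directly, or it has already grown enough into the macrotile at $l$ that further growth of the $\bracket$, $\externalCommunication$, or $\genome$ propagation is still enabled, again contradicting terminality. The inclusion $\supseteq$ is easier: given $\alpha \in \termasm{\calT}$, by condition (1) there exists $\alpha' \in \prodasm{\calUT}$ with $R^*(\alpha') = \alpha$, and by Lemma~\ref{lem:models} applied to a witness assembly in $\Pi$, the growth of $\calUT$ from $\alpha'$ can only produce one-step successors whose image under $R^*$ is a one-step successor of $\alpha$ in $\calT$; since $\alpha$ has none, any assembly sequence extending $\alpha'$ stays within the equivalence class $(R^*)^{-1}(\alpha)$ and eventually completes all pending macrotile growth, yielding a terminal $\alpha'' \in \termasm{\calUT}$ with $R^*(\alpha'') = \alpha$. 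Combined, these three conditions give $\calUT \Leftrightarrow_R \calT$, completing the proof.
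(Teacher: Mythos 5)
Your proof takes the same basic route as the paper's---derive equivalent productions from Lemmas~\ref{lem:seed-correct}, \ref{lem:models}, and \ref{lem:follows}---but supplies substantially more detail: the paper's proof is a one-sentence assertion that seed correctness plus ``equivalent dynamics'' gives equivalent productions, whereas you explicitly check each of the three clauses of Definition~\ref{def-equiv-prod}. Your treatment of clause~(1) by induction on assembly sequence length (with Lemma~\ref{lem:follows} for $\subseteq$ and Lemma~\ref{lem:models} for $\supseteq$), and of clause~(3) by invoking Lemma~\ref{lem:bounded-fuzz}, are both sound and correctly attributed. You also rightly flag that clause~(2) is where the real content lies: the inclusion $\{R^*(\alpha')\mid\alpha'\in\termasm{\calUT}\}\subseteq\termasm{\calT}$ does \emph{not} follow from ``models'' plus ``follows'' as a purely abstract deduction (nothing in those definitions forbids a dead-end $\alpha'\in\termasm{\calUT}$ whose image is non-terminal, since condition~(2) of Definition~\ref{def-s-models-t} only constrains $\alpha''$ that \emph{can} step toward some $\beta'$), so a construction-specific argument that an undifferentiated frontier macrotile always admits further growth is genuinely needed. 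One blemish: the parenthetical ``possibly after reaching a witness in $\Pi$ via further steps of $\calUT$ that do not change $R^*(\alpha')$'' is circular, since $\alpha'$ is assumed terminal and therefore has no further steps; you should drop it and rely directly on the fallback you then give, which is the right idea---by the $\genome$-propagation, query, adder, and bracket correctness lemmas, the macrotile at the frontier location $l$ must still admit tile attachments toward differentiating, contradicting terminality. Overall your proof is more careful than the paper's, which leaves clause~(2) entirely implicit.
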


\begin{proof}
Because the seed representation is correct by Lemma~\ref{lem:seed-correct}, and $\calT$ and $\calUT$ have equivalent dynamics, $\calT$ and $\calUT$ therefore have equivalent productions.
\end{proof}

\begin{lemma}\label{lem:simulates}
$\calUT$ simulates $\calT$.
\end{lemma}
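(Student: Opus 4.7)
The plan is straightforward: appeal directly to Definition~\ref{def-s-simulates-t}, which says that $\calUT$ simulates $\calT$ (under $R$) exactly when three conditions hold---equivalent productions, $\calT \dashv_R \calUT$, and $\calUT \models_R \calT$. Each of these three conditions has been established by the immediately preceding lemmas: Lemma~\ref{lem:equiv-prod} gives $\calUT \Leftrightarrow_R \calT$, Lemma~\ref{lem:follows} gives $\calT \dashv_R \calUT$, and Lemma~\ref{lem:models} gives $\calUT \models_R \calT$. The proof therefore amounts to fixing the representation function $R$ constructed in Section~\ref{sec:rep-function} and the seed $\sigma_\calT$ generated as in Section~\ref{sec:seed}, citing the three lemmas in sequence, and invoking the definition to conclude.

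There is no substantive obstacle at this final step; all of the real work has already been done. The bulk of the effort actually lies in Lemma~\ref{lem:models}, since establishing the models clause is where one has to argue (via the set $\Pi$ of witness assemblies) that $\calUT$ can reach a representation of every producible assembly of $\calT$ in spite of the partial-differentiation nondeterminism forced by a constant-size universal tile set. That in turn rests on the module-correctness lemmas (Lemmas~\ref{lem:genome-data}--\ref{lem:external-communication}), the seed correctness results (Lemmas~\ref{lem:seed-correct} and \ref{lem:seed-growth}), the bounded-fuzz result (Lemma~\ref{lem:bounded-fuzz}), and the timing independence established in Lemma~\ref{lem:timing}. With those already in hand, the proof of Lemma~\ref{lem:simulates} is essentially a one-line assembly of hypotheses, and no new ideas or case analyses are required.
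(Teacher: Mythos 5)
Your proof is correct and matches the paper's argument essentially verbatim: the paper's proof of Lemma~\ref{lem:simulates} likewise just cites Lemma~\ref{lem:equiv-prod} for equivalent productions and Lemmas~\ref{lem:models} and \ref{lem:follows} for equivalent dynamics, and then invokes the definition of simulation. The extra commentary you give on where the underlying work resides is accurate but not part of the paper's own (one-sentence) proof.
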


\begin{proof}
Since $\calT$ and $\calUT$ have equivalent productions by Lemma~\ref{lem:equiv-prod} and equivalent dynamics by Lemmas~\ref{lem:models} and \ref{lem:follows}, then by the definition of simulation, $\calUT$ simulates $\calT$.
\end{proof}

\begin{proof}[Proof of Theorem~\ref{thm:3DaTAMIU}]
The simulation $\calUT = (U, \sigma_\calT,\tau')$ uses $\tau' = 2$.

We use the computable algorithms outlined in Section~\ref{sec:scale} and provided in detail in Section~\ref{sec:scale-details} to calculate the scale factor $m \in \mathbb{N}$ for the simulation, the composition of the $\genome$, and the structure of the seed $\sigma_\calT$.

We use the algorithm (see Section~\ref{sec:rep-function}) which generates the representation function $R$ for $\calUT$ by simply assigning a binary number to each tile type of $T$ and then checking the output location of the $\bracket$ of a macrotile location to determine if a binary number is fully represented to map the macrotile to a tile type in $T$, else it is mapped to empty space.

Since $\calT$ can be an arbitrary 3D aTAM system, and $\calUT$ simulates $\calT$ at scale factor $m$ using representation function $R$, the tile set $U$ is intrinsically universal for the class of all 3D aTAM systems at temperature $2$.  Since $\tau'$ is always $2$, $U$ is intrinsically universal for the class of all 3D aTAM systems, and thus the class of all 3D aTAM systems is intrinsically universal.
\end{proof}

\section{Technical Details for General 3D aTAM Construction} \label{sec:low-level}

In this section, we provide technical details to support the design and implementation of our construction in Section~\ref{sec:construction-short}, as well as the claims we make in Section~\ref{sec:construction} and Section~\ref{sec:thm1-proof}.

\subsection{Growth patterns and paradigms}\label{sec:growth-patterns}

First, we discuss the basic patterns of growth throughout our construction.

\paragraph{Diagonal Advance}
The diagonal advance growth pattern uses a single strength-2 glue per row to advance into the next row and cooperation to fill in the rest of the row.  The first tile into the row allows the tile immediately to its right or left to be the next tile to advance into a new row (this is done via a glue label which is unique for the tiles along the diagonal), thus the next row's advancing tile can always be found one tile forward and one tile right (or left) of the previous advancing tile, forming a diagonal line of tiles.  This growth pattern is bi-directional but it is not collision tolerant.

\paragraph{Limited Strength-2}
In the limited strength-2 growth pattern, each tile in a row uses a strength-2 glue on its forward face and a different strength-2 glue on its backward face (i.e. each row uses glues hard-coded to be specific for that row).  Using different glues means that the limited strength-2 growth strategy can be used to advance a constant distance.  The limited strength-2 growth pattern can be collision tolerant.  If growth occurs from row 0 to row 1 and row 2 to row 1 at the same time, row 1 will always consist of the same tiles. This pattern is desirable when a constant distance must be covered, but is not effective if the required distance may vary.  This growth pattern is bi-directional and collision tolerant and is mainly used for growth where growth may occur from the start and end at the same time.

\paragraph{Unlimited Strength-2}
The unlimited strength-2 growth pattern, also referred to as \emph{pumping}, uses the same strength-2 glue on the forward and backward face. Once started this growth pattern will move in a straight line unless it collides with a previously placed ``stopper'' tile.  This is desirable when there is a variable distance between two ends of the collision zone. However, it introduces a dependency that the stopper tile(s) be placed before the unlimited strength-2 growth starts, otherwise it is impossible to guarantee the growth won't result in infinite growth.  Given correct usage of stopper tiles, this growth pattern can be reversible and collision tolerant, although in our construction it is generally only used when growth can occur from a single direction.

\begin{figure}[htb]
\centering
\includegraphics[width=5.0in]{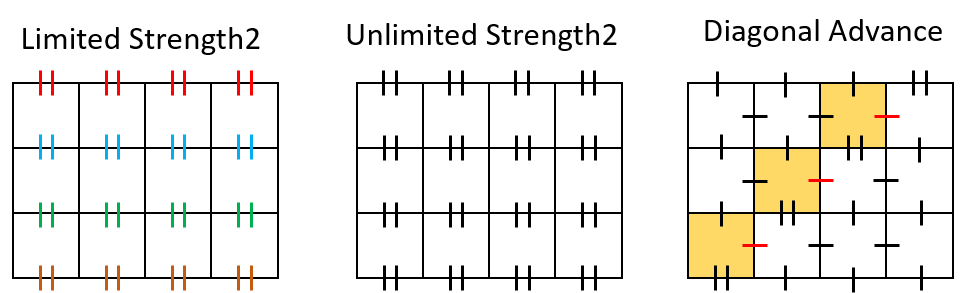}
\caption{Three general growth patterns}\label{fig:Movement Patterns}
\end{figure}

\paragraph{Delayed Activation}
The \emph{delayed activation} growth pattern, which we also refer to as \emph{priming}, is a technique that prepares glues to send new datapaths but waits for a signal from another component in the construction to do so.  An example can be seen in Figure~\ref{fig:DelayedActivation}.  There are two stages to the priming growth pattern.  The first stage, called the priming stage, happens when a row that is growing in a given plane presents strength-1 glues out of that plane.  The second stage is called activation.  Activation occurs when some event (like a query to the $\genome$ or a tile winning the $\bracket$) triggers attachment of a tile that cooperates with the primed strength-1 glues.  The activated row has strength-2 glues that initiate new growth (e.g. start a datapath or trigger the growth of the $\genome$ into a neighboring cube). The chain of cooperative growth in the activation stage will terminate if it reaches a ``dead zone'' in which the primed component purposely has a null glue included in the sequence of otherwise strength-1 glues.  Dead zones are important for callbacks in the initialization phase and for activating only the desired sections of the glue $\genome$.

\begin{figure}[htb]
\centering
\includegraphics[width=5.5in]{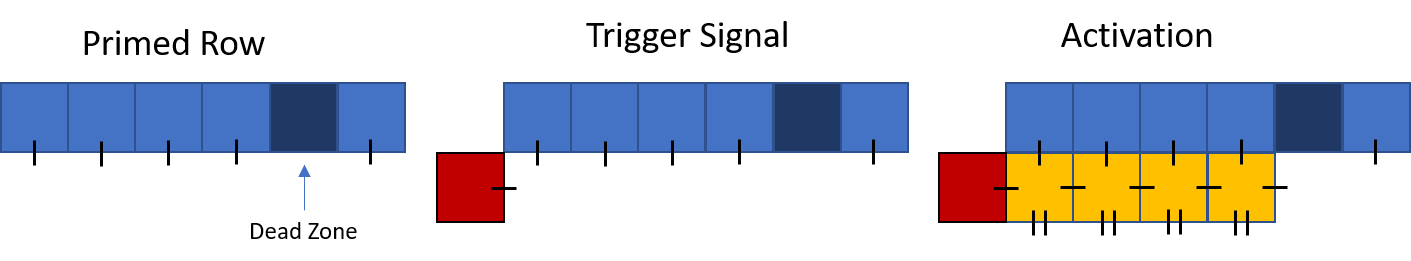}
\caption{The Delayed Activation movement pattern}\label{fig:DelayedActivation}
\end{figure}

\paragraph{Guide Rail}
The \emph{guide rail} growth pattern consists of two portions: a single-tile-wide ``guide rail'' and a ``payload'' which can carry an arbitrary amount of data (since it may have arbitrary width).  The guide rail section may use either a limited strength-2 or unlimited strength-2 growth pattern to move linearly in a single axis.  The payload section presents a strength-1 glue representing the data being carried in the forward and backward directions.  As the guide rail moves, the guide rail extends a strength-1 glue orthogonal to its direction of growth which allows the payload to use cooperation to fill in each new row that the guide rail has grown into. The guide rail is used when the relative distance between two points is fixed or only varies in one axis, which mainly occurs in the interiors of components in our construction, such as the $\adderArray$ and the $\bracket$.

\paragraph{Datapath}
The \emph{datapath} growth pattern is very similar to the diagonal advance pattern in that there is only ever a single strength-2 glue advancing the path at any given time which progresses over the width of the datapath as it grows forward. The key difference between the datapath and diagonal advance growth patterns is that, in the datapath paradigm, each of the tiles that place strength-2 glues which advance the path can also cause the datapath to perform an instruction as well. These instructions can cause the path to turn, move forward a defined amount using a binary counter, and place tiles to the right of the datapath among other instructions which will be explained shortly. It's important to keep in mind that, in addition to having the ability to perform instructions, the a datapath can also carry data along its width. Once the datapath finishes its instructions, this data is presented as strength-1 glues along the end of the datapath with which other structures can interact.

\paragraph{Latch}
The \emph{latch} (See Figure \ref{fig:SimpleLatch}) growth pattern is important to providing collision tolerance and preserving directedness.  The latch allows growth in one direction and prevents it in the reverse direction. From the prevented direction, once the latch tile grows it cannot grow the pre-latch tile because it only presents strength one glues.  From the allowed direction the pre-latch tile presents the same strength one glue toward the latch row, but it has a strength two glue that allows placement of the "Gen" and "Key" tiles, which allows cooperation into the latch row.  One important property of this latch mechanism is that once the latch is complete it is impossible to tell whether the pre-latch tile or the latch tile was placed first.  This property is vital to maintaining directedness in many situations.

\begin{figure}[htb]
\centering
\includegraphics[width=3.5in]{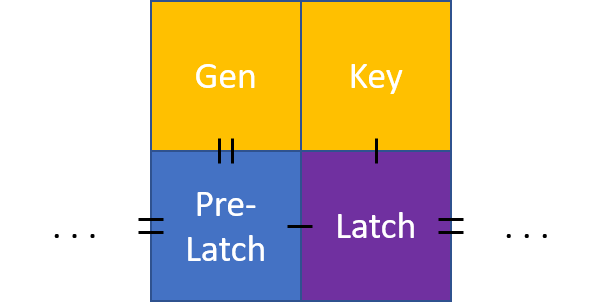}
\caption{The latch tile cannot generate the pre-latch tile, but the pre-latch tile can generate the latch tile}\label{fig:SimpleLatch}
\end{figure}

\subsection{General instructions}

In our construction, we use \emph{instructions} to refer to glues that attach new tiles in a certain way such that it accomplishes a specified movement or placement of data. Instructions provide an easy way of understanding and organizing the massive number of different glues and tiles that are used in moving data around. Instructions are mostly used in the $\genome$ and different datapaths. Multiple instructions are encoded into these gadgets by having each tile in a single row of the gadget encode a different instruction. Then, by growing in a diagonal advance pattern, each tile that is attached using a strength-2 bond is the ``active'' tile for that row, executing its instruction only. The instructions used by datapaths include:

\paragraph*{Buffer}
The buffer instruction advances a datapath one tile forward. Typically, this instruction is used to pad datapaths to a fixed length, but it can also be used to move small distances forward.

\paragraph*{Rise/Fall}
The rise and fall instructions encode an upward or downward turn respectively. Once interpreted, these instructions place a row of tiles forward with glues facing either up or down corresponding to the forward facing glues encoding the data in the datapath.

\paragraph*{Stop}
The stop instruction causes the datapath to stop growing forward and ignore all of the remaining instructions. The data encoded in these instructions, however, is still available to be read.

\paragraph*{Variable}
The variable instruction stalls the growth of the datapath until it receives some input from another set of tiles. This works by having the input data, growing orthogonal to the datapath using the unlimited strength-2 growth pattern, collide with the last row of the stalled datapath. The two gadgets can then cooperate to resume the execution of the remaining instructions in the datapath, while the datapath now encodes the information from the input data gadget.

Now, we describe a few additional instructions that work slightly different from the previous instructions in that they do not utilize the diagonal advance movement pattern.

\paragraph*{Forward}
The forward instruction is always followed by a series of tiles representing a number $c$ in binary. Using a standard, fixed width, binary decrementer, the forward instruction causes the datapath to move forward $c$ tiles as the number encoded is decremented until it equals 0. During the execution of a forward instruction, the forward propagation of the datapath is done by the strength-2 glues in the decrementer. Strength 1 glues along the left and right sides of the fixed width decrementer allow the tiles beyond the decrementer to fill in using cooperation. Once the decrementer is done, the instruction to the right of the tiles that used to represent $c$ becomes the active instruction and propagation returns to the diagonal advance pattern.

One thing to note is that, in our implementation of the datapaths, the decrementer used in the forward instruction decrements the encoded count on every other row of tiles. Furthermore, after the last row of the decrementer, one additional row of tiles is placed, so that the next instruction can be activated. Thus, for a number $c$ represented in the tiles after the forward instruction, the datapath, in our implementation, propagates $2c+1$ tiles forward. To design datapaths that grow an even number of tiles, the buffer instruction can also be used.

\paragraph*{Left/Right Turn}
The left turn and right turn instructions tell the datapath to turn in the respective direction relative to forward using a standard data rotation tile set.

\paragraph*{Place}
The place instruction moves the datapath forward 2 tiles and places a pair of tile below the rightmost tile of the new rows. Ideally a place instruction would be able to place a single tile rather than a pair; however, it is often the case that any placement tiles must be placed before the datapath can be allowed to continue growth. Because of this, the row in which the first tile is placed does not use a strength-2 glue to propagate and instead the datapath is advanced by a strength-2 glue between the placement tiles which guarantees their placement before allowing the datapath to continue.

In actuality, there are multiple tile types that encode place instructions, one for each pair of tile types that might need to be placed at some point in the construction.

\subsection{Callbacks}\label{sec:callback}

During the initialization phase of the construction, some datapaths need to grow fully before it's safe for others to grow, due to the use of unlimited strength-2 growth. In order to do this, we utilize a technique which we refer to as a \emph{callback} in which, once a certain datapath has either fully grown or has reached a special instruction, a single tile wide path grows along its right or left boundary. This growth requires strength-1 glues available along the far end of the boundary and requires a single tile wide open space in which the callback can grow. Since the boundary tiles don't require any tiles past the boundaries to operate, these conditions are easily met. It should also be noted that it's not necessary for a datapath to grow callbacks. In order to allow for datapaths with callbacks and datapaths without callbacks, different left and right boundaries can be given to the datapath. Some boundaries contain the necessary glues for the callback to grow and the others do not.

Furthermore, there are three different types of callbacks. The first two types are the right and left end callbacks. These begin once the datapath has finished growing and grow along the right and left boundaries respectively. The final kind of callback is a right variable callback. This grows once a variable instruction has caused the datapath to stop in order to wait for input. This variable callback is needed because, during initialization, some of the datapaths will have variable instructions that will not have receive input until later phases of the simulation. It's important that these datapaths are present before their input data begins growing since, otherwise, the paths might not collide properly. Therefore, the right variable callback allows a module that is setup during initialization to signal that it has been properly setup (thereby starting the datapath for the setup of the next module in the initialization process) despite it not having received input and growing to completion.

There are two distinct use cases for callbacks. The first and most common use case is to allow the next datapath in the initialization sequence to begin growth. In this use case, the datapath reaches either the last instruction in the datapath or a variable instruction and then sends a callback along its right side to activate the next primed datapath in the initialization. This assures that the initialization of certain components occurs in the correct order. The second use case is to signal to the genome that a tile has won the bracket and that differentiation has occurred. This callback grows after the datapath responsible for placing the bracket blocker gadget finishes growing. It grows along the datapath and then along the bottom of the genome to eventually activate the primed genome propagation.

Because callbacks have to grow backwards along the edges of their datapaths, it's important that they can perform all of the same turns that the datapath can. It's not difficult to see how a callback can propagate along the edge of a straight section of datapath, however the turn, rise, and fall instructions are a bit more complicated. Since, in the row before any turn instruction, the information that a turn will occur is passed along the width of a datapath, the boundary before a turn can present a glue which signals for the callback to place a tile with a glue in a direction orthogonal to the direction in which it was previously growing.

\subsection{Technical details for the $\genome$}

\subsubsection{Collision Tolerance and Circular Latches}\label{sec:collision-tolerance}

Collisions are a problem that can occur in the propagation of the $\genome$ whenever a path of tiles can grow in from two differing directions at the same time. If this happens, once the paths grow up to a single tile wide gap between them, unintentional cooperation can occur over the gap, causing non-determinism and potential errors. An example of this is depicted in Figure~\ref{fig:nondet-collision}.

\begin{figure}[htb]
\centering
\includegraphics[width=5.0in]{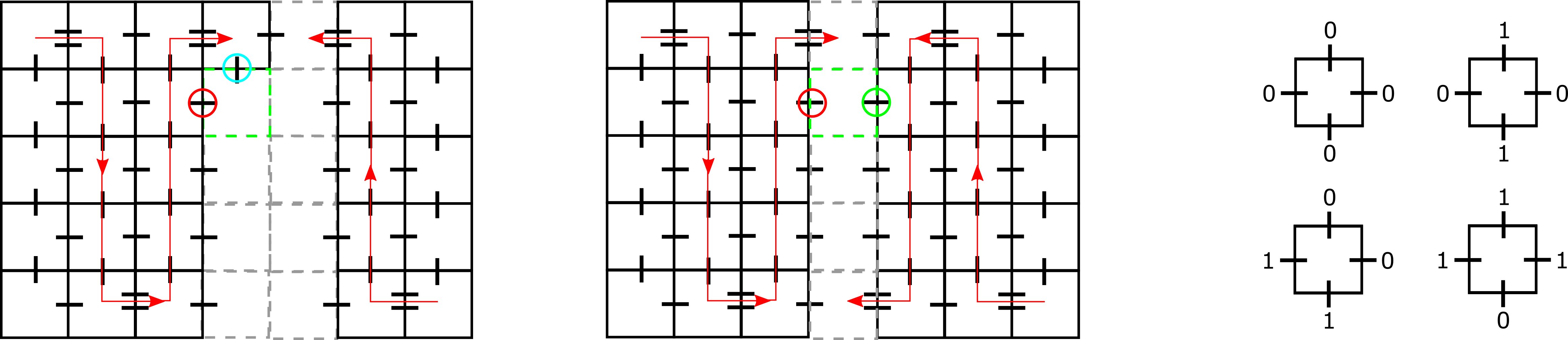}
\caption{An example of zig-zag growth which is initiated from both left and right sides, and grow toward a collision in the middle.  (Left) The fourth column on the left side happens to begin growth before third column on the right side begins growth.  This allows the tile which will bind in the location outlined in green to attach via the two adjacent glues (circled).  (Middle)  The third columns of both the left and right sides instead happen to complete before the fourth column begins. This allows the tile which will bind in the location outlined in green to attach via the two circled glues ``across the gap''.  (Right) Assuming that the glue label for the red circled glue is $0$, the blue circled glue is $1$, and the green circled glue is $0$, the ``correct'' tile for the location outlined in green is the top right tile.  Only this tile can attach in the scenario on the left, but in the scenario on the right, either of the top two tiles can attach.  This leads to nondeterminism and possible incorrect growth.}\label{fig:nondet-collision}
\end{figure}

The portions of the $\genome$ that utilize the diagonal advance growth pattern are susceptible to the errors caused by collisions. However, portions that utilize the limited strength-2 growth pattern are not susceptible to these errors, since these portions don't require tiles to cooperate with their neighbors within the same row to place the correct tile. Therefore, we have carefully designed the $\genome$ so that the potential collision locations are reduced to only portions that utilize the limited strength-2 growth pattern.

To prevent the diagonal advance portions of the $\genome$ from also being potential collision locations, these portions are grown within the confines of a gadget that we refer to as a \emph{circular latch}. This creates a combination of collision tolerant ``two-way'' regions and pairs of collision intolerant ``one-way'' regions, as shown in Figure \ref{fig:CircularLatch}.  The two-way regions operate using strength-2 glues to move a constant distance.  The one-way regions, or circular latches, use a single strength-2 glue per row to move forward and then cooperation to fill the row.  Each circular latch has a \emph{preferred} and a \emph{non-preferred} direction.  If data is to move in the non-preferred direction, it must rise out of its current plane and move forward until it reaches the next two-way region, at which point it can drop back down into the original plane.  Dropping back into the regular plane will then trigger the preferred direction to grow until it collides with the original path.  At the end of the process, it is impossible to tell whether the data grew from the preferred or non-preferred direction.

\begin{figure}[htb]
\centering
\includegraphics[width=6.5in]{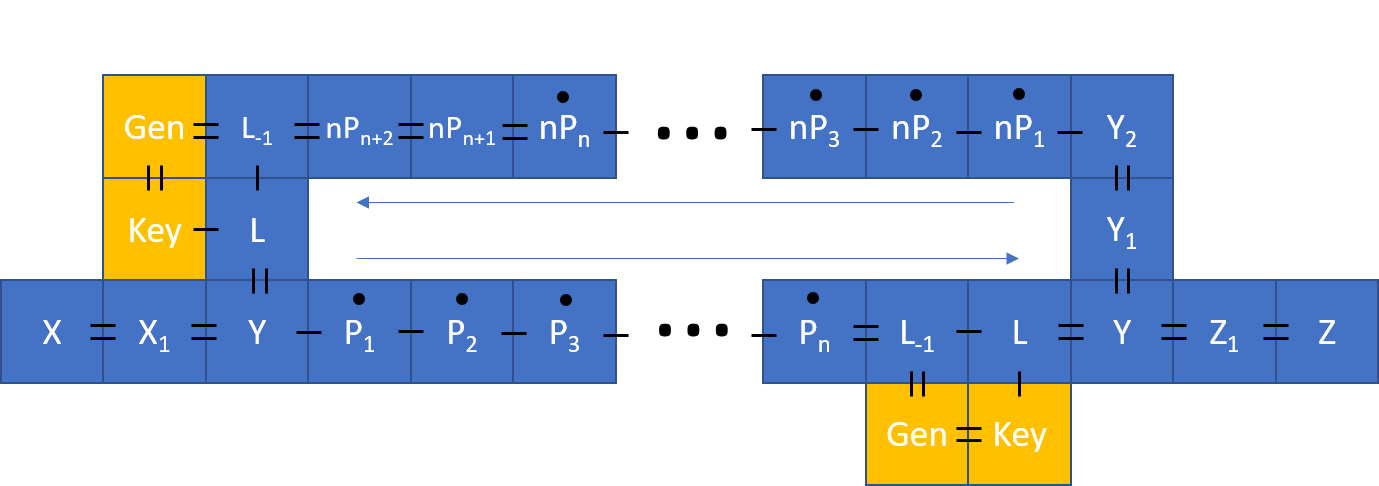}
\caption{Side view of circular latch with glues. Each square represents a full row of data.  Rows with dots use a diagonal advance movement pattern.}\label{fig:LatchMechanism}
\end{figure}

\begin{figure}[htb]
\centering
\includegraphics[width=6.5in]{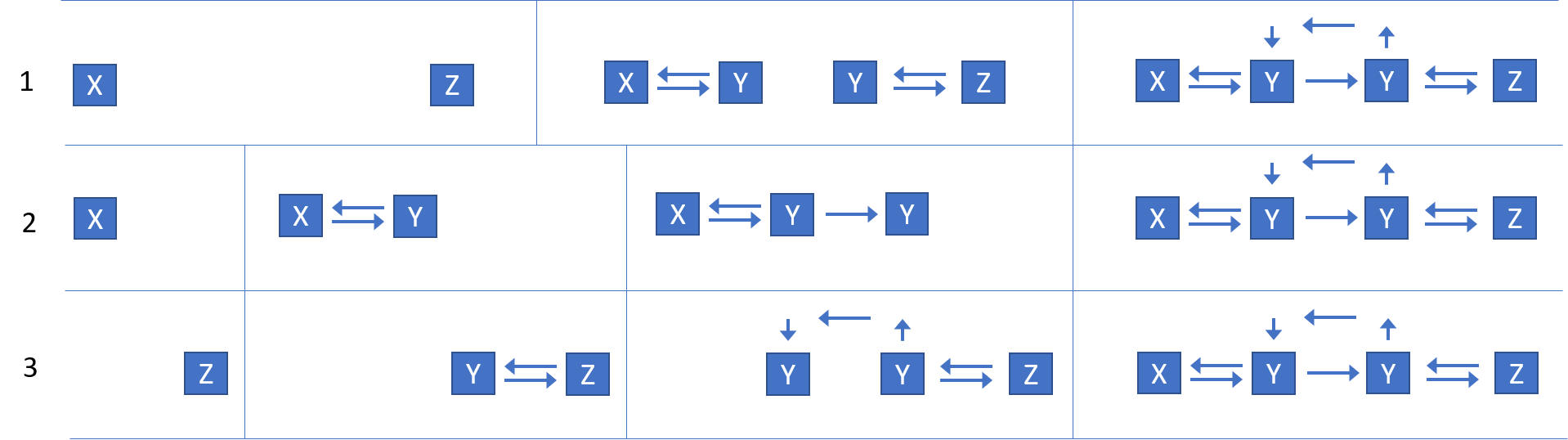}
\caption{Three possible growth sequences between point X and point Z}\label{fig:CircularLatch}
\end{figure}

\paragraph{Circular Latch Growth Sequences}
There are three types of growth patterns that can occur during a circular latch growth sequence, shown in Figure \ref{fig:CircularLatch}.  In Figure \ref{fig:CircularLatch} Points X and Z represent some checkpoint in the $\genome$ movement between which there exists a circular latch; points labeled Y represent the interface between the two-way and one-way sections of the circular latch.

\subsubsection{$\genome$ Specific Instructions}

These are additional instruction used specifically by the $\genome$ to propagate around macrotiles and to setup the modules in preparation for queries from other completed neighboring macrotiles.  A detailed layout of these instructions for each band can be found in Figure \ref{fig:InstructionsByBand}.

\paragraph*{Intersection}
The intersection instruction signifies the end of a circular latch region and triggers a limited strength-2 interface to the corner of one of the bands of the $\genome$. This instruction occurs once per orientation, or 24 times per macrotile.

\paragraph*{Turn}
The turn instruction signifies the end of a circular latch region and triggers the limited strength-2 interface to a cross band communication region.  This instruction occurs once per orientation, or 24 times per macrotile.

\paragraph*{Query}
This instruction signifies that a query may occur at this row of the $\genome$. This row of the $\genome$ is a priming row that primes all datapaths in $G_2$.  Although all datapaths are primed, the delimiters between each tile and each side within each tile are not primed which creates dead zones, ensuring that only the desired datapaths are activated.  This instruction occurs six times per macrotile only in the critical orientation.  Queries that occur in the non-preferred direction of a circular latch are ignored.

\paragraph*{Initialize}
This instruction primes $G_3$ and generates the activation signal for $G_3$, immediately beginning initialization once this instruction is reached.  This instruction occurs only once per macrotile in the critical orientation.

\paragraph*{Prop}
This instruction signifies an inert section of the $\genome$ which simply advances one tile location. Since the  A sequence of these forms a unary counter.  Differences in the amount of these instructions is responsible for the nesting of the bands.

\begin{figure}[htb]
\centering
\includegraphics[width=3.0in]{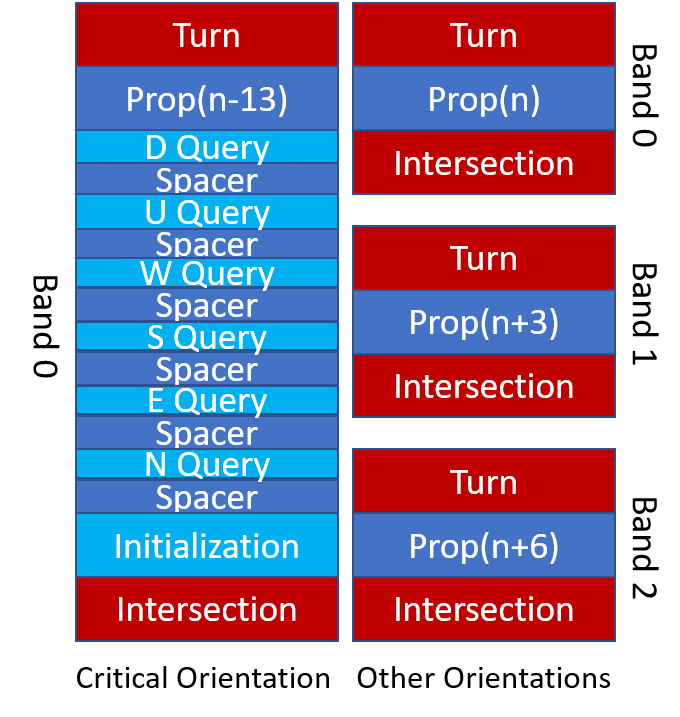}
\caption{Schematic depiction of sections of the bands of the $\genome$.  Each band is depicted as growing from bottom to top, with each horizontal row containing the entire contents of the $\genome$.  However, the $\genome$ contains instructions embedded within it, and as the bands grow each follows a specific set of those instructions. The labels in the figure show which types of instructions are being followed during that portion of forward growth.  Notice that in the critical orientation, the instructions cause the forward growth to result in specific sections which can be used for (1) initialization of macrotile structures, and (2) locations for $\externalCommunication$ datapaths from each direction to arrive to perform queries that may result in datapath growth of inputs to the $\adderArray$.}\label{fig:InstructionsByBand}
\end{figure}

\subsubsection{Input/Output}\label{sec:genomeIO}

\paragraph{Layout}
The the genome has one input region and one output region for each neighbor. Consider two neighboring fully differentiated macrotiles A and B.  If B is the Down neighbor of A, then A is the Up neighbor of B.  Both A and B receive the genome from and output the genome to the other.  A will output its genome to B from the intersection between its North and Down face, and B will receive its genome from A at the intersection between its North and Up face.  The opposite process will occur in the South of A and B, where B will output its genome from the intersection between its South and Up face, and A will receive its genome from B at the intersection between its South and Down face.  This example is illustrated in Figure \ref{fig:Genome_IO}.

\begin{figure}[htb]
\centering
\includegraphics[width=6.5in]{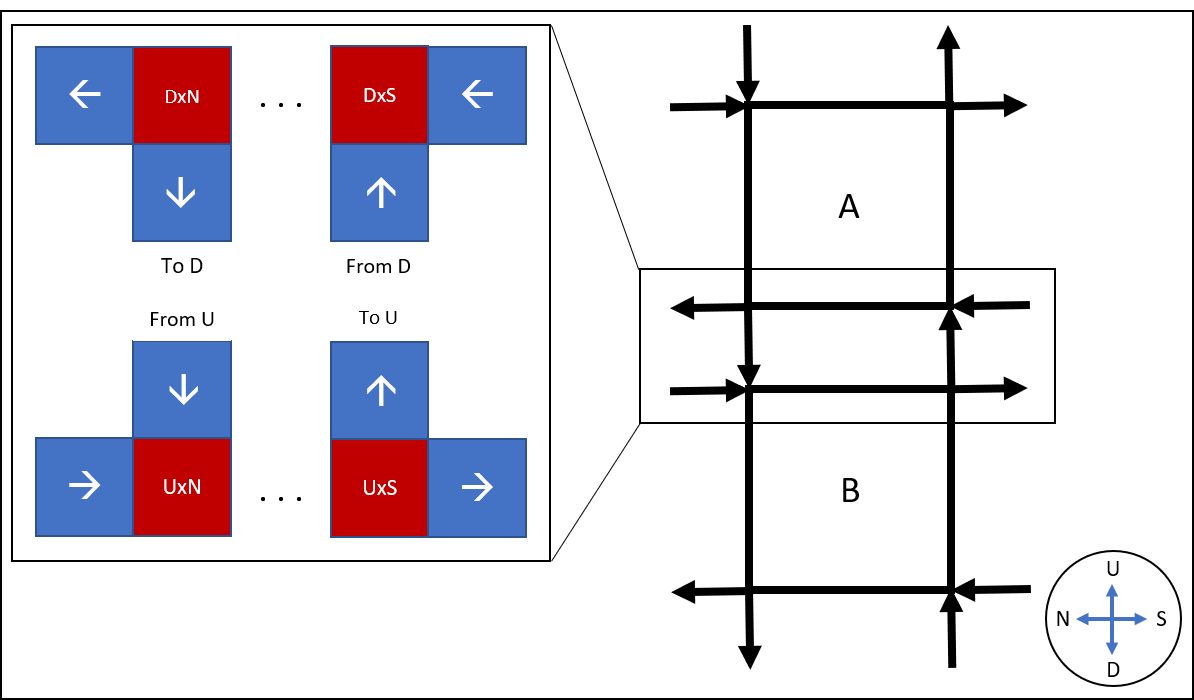}
\caption{Input and output arrangement between N/S/U/D neighbors.  Note that from the Down neighbor's perspective the current macrotile is the Up neighbor.}\label{fig:Genome_IO}
\end{figure}

\paragraph{Output process}
Genome output only needs to occur after the macrotile has differentiated, so the output regions are primed when the genome grows but only triggered after the differentiation callback (see \ref{sec:bracket-blocker}).  The differentiation callback causes a signal to propagate around the perimeter of the genome and places a trigger tile orthogonal to each of the output regions.  Once triggered, the genome begins growing toward the corresponding input region of the neighboring macrotile using limited strength-2 movement.

\paragraph{Input process}
The genome uses a latch to receive input. The latch serves two purposes. First, it prevents the genome from growing backward into the neighbor before the macrotile differentiates. Second, if the genome has already grown the latch makes it impossible to tell the order in which the inputs arrived.

\subsection{Technical details for the $\adderArray$}

\begin{figure}[htb]
\centering
\includegraphics[width=5.0in]{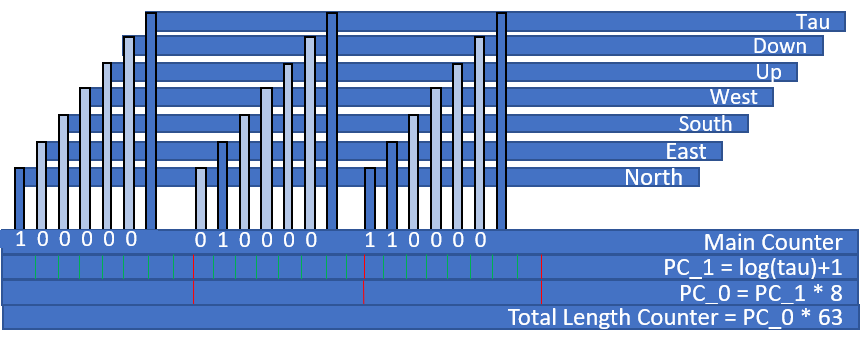}
\caption{Adder schematic (side)}\label{fig:AdderSchematicSide}
\end{figure}

\subsubsection{Inputs}
The top half of Figure~\ref{fig:AdderSchematicSide} shows that each input is sent into each component adder.  Each input is presented to each component adder, either to be rejected or accepted depending on the address of the component adder.  Not shown in Figure~\ref{fig:AdderSchematicSide} is that each of the inputs is guided by a copy of the Total Length Counter and $PC_0$.  Whenever the $PC_0$ zero signal is sent, the input data forks itself and drops down into the corresponding partial adder.  The guide rails are offset in the the forward direction by a multiple of the period of $PC_1$ to ensure that the input drops into the correct location.

\subsubsection{Periodic Counters}
The adder unit has several repeating structures which must appear multiple times, at regular intervals. Component adders must be regularly spaced along the adder, and partial adders must be regularly spaced along component adders.  We use a variant of binary decrementer called a \emph{periodic counter} to efficiently provide the required periodic structure.  A periodic counter functions like a regular binary decrementer, except that it preserves its starting value throughout its count and resets to that value once it hits zero.  If unrestricted, a generic periodic counter will continue repeating infinitely many times, which is not a desired behaviour in this construction.   We use a simple (non-periodic) decrementer initialized with the total desired length of the periodic counters to restrict the periodic counters.  A periodic counter receives signals from the layer below it and send signals to the the layer above it.  Periodic counters have two states: zero and non-zero.  When non-zero, the periodic counter passes a "continue" signal to the layer above.  When zero, the periodic counter sends a "zero" signal unique to its layer to the the layer above.  When multiple periodic counters are stacked, as in Figure~\ref{fig:AdderSchematicSide}, multiple zero signals may occur in several layers at the same point.  In this case, the bottom-most periodic counter's zero signal takes precedence and is displayed to the uppermost layer (See Figure \ref{fig:PeriodicCounterExample}).

\begin{figure}[htb]
\centering
\includegraphics[width=6.0in]{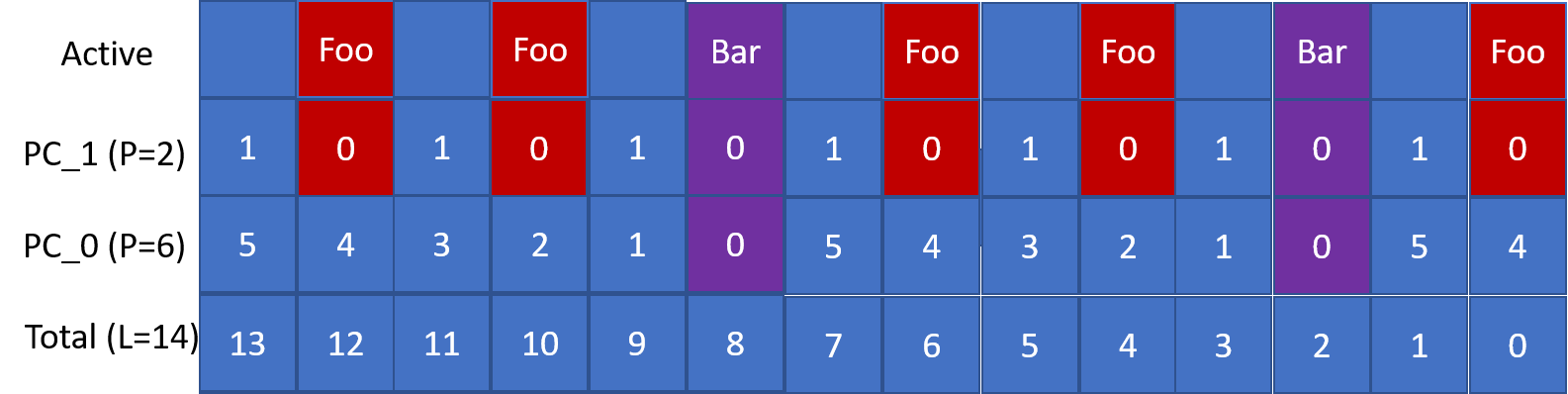}
\caption{Example of a 2-layer periodic counter}\label{fig:PeriodicCounterExample}
\end{figure}

\subsubsection{Main Counter}
The main counter layer contains the binary representation of the component adder.  Recall that a component adder is addressed by a six bit binary number where each bit represents whether to check or ignore input from a particular neighbor.  The main counter begins with the component adder address 000001 and increments each time the PC\_0 sends a zero signal.  Within each component adder, partial adders are constructed to be $log_2(\tau) + 1$ bits wide. $PC_1$ is set to repeat this distance plus one to allow a gap for the result of the addition to advance.  The component adder always considers the least significant bit (LSB) of the main counter first.  If the LSB is 1, then the partial adder is constructed to wait for input from the northern neighbor. If the LSB is 0, then the partial adder is constructed to ignore the input from the northern neighbor and present zero as the input.  When the $PC_1$ sends a zero signal, the main counter leaves a gap row and then constructs the partial adder of the next highest bit.

\subsubsection{Detailed layout}\label{sec:AdderLayout}
At initialization there are 19 layers of the adder aligned vertically which are each seeded by a different initialization datapath.  Starting from the lowest layer there is the non-periodic binary decrementer which is seeded with the full length of the adder.  Next there are two layers of periodic binary decrementers which are seeded with the size of the component adders and the subcomponent adders. The fourth layer up is the main counter layer.  The next 12 layers are made up a periodic counter seeded with the component adder size stacked on top of a non-periodic decrementer seeded with the full width for each of the six directional inputs.  The final three layers are similar to the previous six input groups but with the seventh input, a two's complement encoding of the temperature, stacked on top of the two decrementers.

\subsection{Technical details for the $\bracket$}\label{sec:bracket-details}

Whereas data being transported during a query uses a datapath, data being transported from a successful $\adderUnit$ through the $\bracket$ to the $\externalCommunication$ module uses a guide rail. This is because the $\bracket$ is initialized to facilitate the movement of guide rails as they grow through a series of points of competition. Recall that guide rails grow forward using the unlimited strength-2 growth pattern and therefore rely on other structures to change direction and stop growth.

\subsubsection{Turn barriers}
The first structure that a guide rail from the $\adderArray$ will interact with in the $\bracket$ is a \emph{turn barrier}. Turn barriers work by first blocking the guide rail and then placing a special tile using cooperation between the side of the guide rail and the second tile in the barrier that turns the guide rail and begins unlimited strength-2 growth to the left or right. The operation of the turn barrier can be seen in (1)-(3) of Figure~\ref{fig:bracketBarrier}.

If the barrier is a turn barrier, the backbone, after cooperating, will begin to propagate using strength-2 glues in the East or West direction depending on which side it cooperated. This propagation is temporary though and will only last until the backbone collides with a merge barrier. During this sideways propagation, the backbone will have a strength-1 glue in the upward direction to allow it to cooperate with the merge barrier once it collides.

\subsubsection{Merge barriers}
After colliding with a turn barrier, a guide rail in the $\bracket$ will grow indefinitely to the left or right of its original direction. It will then collide with a \emph{merge barrier}. Working similarly to a turn barrier, a merge barrier blocks a ``sideways'' growing guide rail and then cooperates with the side of the guide rail to place the first tile in a series of special tiles. The first special tile allows the attachment of four additional special tiles using limited strength-2 growth that move the guide rail to the point of competition. The last special tile is placed in a location where the two competing guide rails would overlap and has a generic strength two glue on its side that re-initiates unlimited strength-2 growth. This causes the guide rail to grow towards either the next turn barrier or the external communication module. The operation of the merge barrier can be seen in (3)-(5) of Figure~\ref{fig:bracketBarrier}.

\begin{figure}[htb]
\centering
\includegraphics[width=6in]{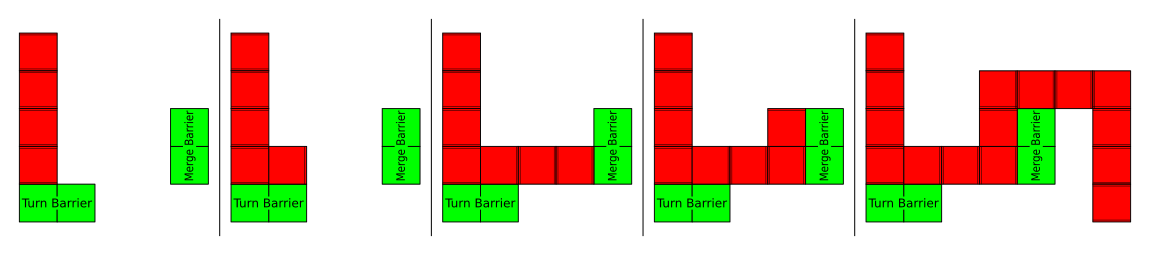}
\caption{(1) Backbone propagates until it collides with turn barrier (2) Cooperation between backbone and barrier (3) Eastward propagation until collision with merge barrier (4) Cooperation with merge barrier (5) Strength 2 path up and around merge barrier after which data propagates downward again}\label{fig:bracketBarrier}
\end{figure}

\subsubsection{Blocker mechanism}\label{sec:bracket-blocker}
After a tile wins the $\bracket$ but before the external communication is initiated, all unused inputs to the $\bracket$ are blocked off.  This prevents inputs received after the $\bracket$ completes from influencing the state of the $\bracket$ and is used to maintain directedness in specific circumstances which are discussed in the proof of Theorem~\ref{thm:directed3DIU}.
The datapath that blocks the $\bracket$ inputs (called the \emph{blocker datapath}) is the first datapath to receive data from the $\bracket$'s output.  Once the $\bracket$ winner data is received, it will not be allowed to continue to the external communication datapaths until the blocker datapath has sent a callback confirming that every unused $\bracket$ input has been blocked.  When the $\bracket$ winner data is received, the datapath navigates to the first layer of the $\bracket$ and places a tile which begins a zig-zag growth pattern of groups of four tiles, called blocker groups, shown in Figure \ref{fig:Bracket_Blocker_Mechanism}.  The blocker datapath grows along the length of the highest $\bracket$ layer with a 1 tile gap between its leftmost boundary and the $\bracket$ input locations (See Figure \ref{fig:Bracket_Blocker_Datapath}).  Each $\bracket$ input can be guaranteed to be an even number distance apart so that blocker group tile 3 will always be the tile that blocks the $\bracket$'s potential input locations.  If a $\bracket$ input region has already received input, then that input will have an exposed glue which allows for cooperation with tile 3 in order to continue blocking.  Once all inputs have been blocked and the blocker groups have reached the end of the blocker datapath, a callback is sent back to the variable input region of the blocker datapath to allow the winning tile data to propagate out of the $\bracket$ into the external communication datapaths.
Immediately upon receiving input from the bracket the bracket blocker sends a special callback called the \emph{differentiation callback} which allows the genome to propagate to its neighbors.  The bracket blocker is the first component initialized, which allows the differentiation callback upstream access to the start of the genome and thus the 12 intersection regions for genome output (See Section \ref{sec:genomeIO}).
Note there is a second blocker datapath which physically blocks the output guide rail from the final external communication datapath.  This datapath is trivial and does not have a specific name, but is also referred to as a blocker datapath in Section \ref{sec:datapath-widths}.

\begin{figure}[htb]
\centering
\includegraphics[width=5in]{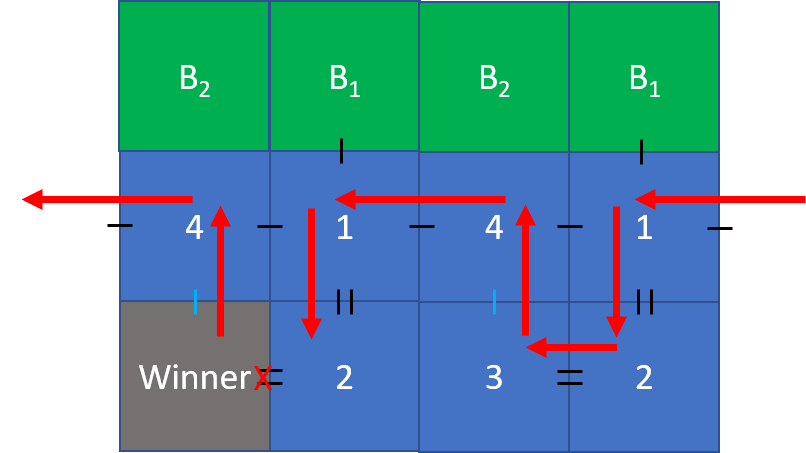}
\caption{Blocker mechanism is formed from repeating groups of tiles 1,2,3,4. 1 - Start tile, places 2. 2 - Places 3 if input does not exist.  3 - Not placed if bracket input already exists, cooperates with 1 to place 4.  4, cooperates with backbone(green) to place tile 1. } \label{fig:Bracket_Blocker_Mechanism}
\end{figure}

\begin{figure}[htb]
\centering
\includegraphics[width=3in]{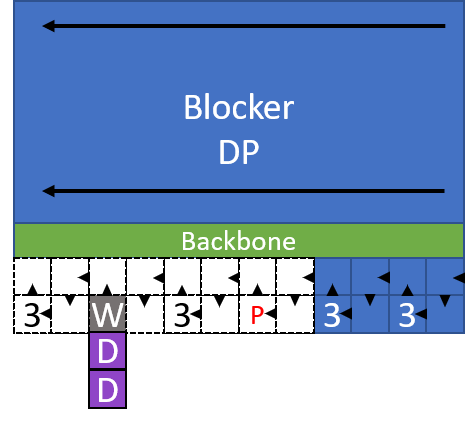}
\caption{P-Potential input, guide rail not arrived.  W-Winning input, guide rail arrived.  The blocker datapath grows along the length of the bracket input layer (2-input bracket shown) and places blocker groups to block all potential inputs.  Each bracket input will be blocked by a blocker tile 3.}
\label{fig:Bracket_Blocker_Datapath}
\end{figure}

\subsection{Technical details for the $\externalCommunication$}

While we talk about the $\externalCommunication$ module, the ``module'' really just consists of six variable datapaths (one for each direction) and a blocker. The initialization of the $\texttt{external}$ $\texttt{communication}$ is the growth of these datapaths to be directly under the $\bracket$ module. Each datapath grows to the necessary location and then initiates a callback to start the growth of the next datapath. The callback of the last variable datapath initiates the growth of the blocker.

Whenever a guide rail grows down from the $\bracket$, it triggers the \emph{bracket blocker} to begin blocking the bracket.  Once the bracket blocker is finished, it sends a callback to allow the winning data from the bracket to grow a guide rail to the external communication datapaths.  The guide rail collides with the second to last row of the top datapath. The collision allows the guide rail to cooperate with the last row of the top datapath and grow around using limited strength-2 growth to spawn a new row of the datapath. This new row has forward-facing glues that allow the datapath to continue to the neighboring $\genome$ and also downward-facing glues that drop a new guide rail (albeit, identical to the previous one) to the next stalled variable datapath. This process repeats for all six variable datapaths. The guide rail that drops out of the final variable datapath then grows into the blocker, where its growth is ultimately stopped.

\subsection{Details of Scale Factor and Seed Generation} \label{sec:scale-details}

The scale factor $m$ of our simulation of $\mathcal{T}$ is a positive integer that depends on the number of tiles $|T|$ and the temperature $\tau$ of $\mathcal{T}$. The size of $m$ is primarily determined by the width of a $\genome$ band, which in turn, is determined by the widths of the components, such as a number of datapaths, that make up the $\genome$ band. Therefore, in order to define an asymptotic bound on $m$, we will define a few useful values that describe how these components grow with respect to $|T|$, $\tau$ and $m$. It's important to note that, because certain datapaths will need to grow further as $m$ increases and because the size of $m$ is proportional to the size of the datapaths in our construction, there is a circular dependency between the widths of these datapaths and the scale factor. Fortunately, the width of a datapath grows logarithmically with respect to the forward distance it must travel, so this is not an issue.

\subsubsection{Datapath widths}\label{sec:datapath-widths}
Each datapath that will be described below is responsible for either placing tiles that will grow into structures such as the $\adderArray$ or for transporting data such as the strength of a certain glue between two tiles in $\mathcal{T}$. These datapaths all initially navigate to fixed locations within the macrotile which allows for the correct relative placement of structures such as the $\adderArray$ and $\bracket$. The datapaths navigate to the initial location by moving Down, North, and either West or East. Because the datapaths will never move forward more than $m$ tiles, this requires at most $8 + 3\log m$ tiles in the datapath ($2+\log m$ for each of the $3$ forward instructions and $2$ for the turn instructions in between). Also, each datapath requires an additional $3$ tiles to represent the left and right boundary tiles and the tile that represents the start instruction. Because each datapath will contain these tiles, for convenience we will define
$$|\text{Nav}| = 11 + 3\log m.$$

The first type of datapath is the type that moves from the glue table portion of the $\genome$ to the input of an $\adderUnit$. These datapaths each store a binary number representing the strength of the glue between the corresponding macrotiles. This strength is bounded by $\tau$ and thus the datapath needs to be at least $\log\tau$ tiles wide to be able to store the number. Since the only thing this datapath needs to do is navigate to its respective location, the number of tiles necessary to represent these datapaths is
$$|\text{DP}_{\text{glue}}|=\max(|\text{Nav}|,\log\tau)$$
$$\le |\text{Nav}| + \log\tau$$
\begin{equation} \label{eq:glue-dp-bound}
|\text{DP}_{\text{glue}}| \le 11 + 3\log(m\cdot\tau)
\end{equation}

Next are the datapaths responsible for initialization of macrotile structures. First, we will consider the datapaths responsible for placing the various rows of the $\adderArray$ ``seeds''. Each of these will, after navigating to the correct location, place $|T|$ instances of a single row of the adder seed separated by enough space to fit one of the glue genome datapaths between each instance. Each row of the adder seed is at most $6\log\tau$ tiles long and the space necessary to fit a glue genome datapath is $11 + 3\log(m\cdot\tau))$ tiles. For each of the $|T|$ adders, we will place the adder row tiles and move forward to account for the necessary spacing. The forward instruction width grows logarithmically with the distance travelled, so for each \adderUnit we will need at most $2 + 6\log\tau + \log(11 + 3\log(m\cdot\tau)))$ tiles (the additional $2$ comes from the fact that a forward instruction requires a tile before and after the binary representation of the forward amount). Therefore, the width of the adder placement datapaths is bounded by
$$|\text{DP}_{\text{adder}}|=|\text{Nav}| + |T|\big[2 + 6\log\tau + \log(11 + 3\log(m\cdot\tau))\big]$$
$$\le |\text{Nav}| + |T|\big[2 + 6\log\tau + \log(14\log(m\cdot\tau))\big]$$
$$\le |\text{Nav}| + |T|\big[2 + 6\log\tau + \log(14) + \log(\log(m\cdot\tau))\big]$$
$$\le 11 + 3\log m + |T|\big[6 + 7\log(m\cdot\tau)\big]$$
\begin{equation} \label{eq:adder-dp-bound}
|\text{DP}_{\text{adder}}| \le 11 + 3\log m + 6|T| + 7|T|\log(m\cdot\tau)
\end{equation}

We also have datapaths responsible for placing the $\bracket$. The $\bracket$ will have $\ceil{\log|T|}$ levels. Also, the number of inputs to the datapath will be the smallest power of $2$ greater than $|T|$; this is equal to $2^{\ceil{\log|T|}}$, which is bounded by $2|T|$. The spacing between the inputs will be equal to the spacing for the adders which was bounded by $11 + 3\log(m\cdot\tau))$ tiles. Each input of the $\bracket$ requires a turn barrier to be placed and between each pair of inputs needs to be placed two merge barriers for the $\bracket$ to work correctly. Placing the turn barriers requires $1$ instruction per input and placing the merge barriers requires $16$ instruction tiles because rising and falling is necessary to place the barriers perpendicular to the forward direction of the datapath. For each pair of inputs to the bracket, the datapath responsible will need to place a pair of turn barriers and a pair of merge barriers along with 3 forward instructions that move at most $11 + 3\log(m\cdot\tau))$ tiles each. Since their are $2|T|$ inputs or $|T|$ pairs of inputs, the width of a bracket datapath will be at most
$$|\text{DP}_{\text{bracket}}|=|\text{Nav}| + |T|\big[24 + 3\log(11 + 3\log(m\cdot\tau))\big]$$
$$\le |\text{Nav}| + |T|\big[24 + 3\log(14\log(m\cdot\tau))\big]$$
$$\le |\text{Nav}| + |T|\big[24 + 3\log(14) + 3\log(\log(m\cdot\tau))\big]$$
$$\le 11 + 3\log m + |T|\big[36 + 3\log(m\cdot\tau)\big]$$
\begin{equation} \label{eq:bracket-dp-bound}
|\text{DP}_{\text{bracket}}|\le 11 + 3\log m + 36|T| + 3|T|\log(m\cdot\tau)
\end{equation}
The final datapaths are the ones responsible for $\externalCommunication$. These are placed during initialization and contain two different navigational parts. The first set of navigation instructions bring these datapaths to just under the $\bracket$ so that they can intercept the winning tile information. The next set of navigation instructions bring the datapath to an adjacent macrotile to use the winning tile information to query. While the instructions necessary to navigate to an adjacent macrotile vary depending on the direction in which the $\externalCommunication$ is occurring, it's safe to say that the datapaths will contain at most 10 turns, rises, or falls and 5 forward instructions advancing no more than $2m$ tiles each. Also keep in mind that the $\externalCommunication$ datapaths are responsible for storing the winning tile ID and must thus be $\log|T|$ tiles long. Since each turn, rise or fall requires a single tile and each forward instruction requires at most $2 + \log(2m)$ or $3 + \log(m)$ tiles, the total number of tiles necessary for an $\externalCommunication$ datapath is
$$|\text{DP}_{\text{ext}}|=\max(|\text{Nav}| + 25 + 5\log m, \log|T|)$$
$$\le |\text{Nav}| + 25 + 5\log m + \log|T|$$
\begin{equation} \label{eq:ext-dp-bound}
|\text{DP}_{\text{ext}}| \le 36 + 8\log m + \log|T|
\end{equation}

There are also two blocker datapaths, one for blocking the $\bracket$ winning tile information once it has been read by all of the external verification datapaths and one for blocking the inputs to the $\bracket$ once a winner has been decided. Because both of these datapaths, after they navigate to their initial locations, only move within the macrotile to block some data, their widths are bounded by $|\text{DP}_{\text{ext}}|$ as well.

\subsubsection{Genome section widths}\label{sec:genome-section-widths}
Now that we have determined bounds for each of the datapaths that will occur in sections of the $\genome$, we can consider the bounds for the number of tiles in each of these $\genome$ sections themselves. The first section is the movement section. In this section, we encode for the width of a gap, through which the $\externalCommunication$ datapaths can move between adjacent datapaths. The width of this gap is represented in unary so there will be $|\text{DP}_{\text{ext}}|)$ tiles for each gap. The only other part of the movement $\genome$ section is a fixed number of single tile instructions. The number of these instructions is bounded by 40 and the unary gap is represented 4 times, so the width of this $\genome$ section is
$$|G_\text{move}|=4|\text{DP}_{\text{ext}}| + 40$$
$$= 4\big[36 + 8\log m + \log|T|\big] + 40$$
$$= 184 + 32\log m + 4\log|T|.$$

The next section is the glue $\genome$ section. This section encodes datapaths, in the form of a table, that courier information to the adders. The table has $6|T|^2$ entries, one for each cardinal direction per pair of tiles in the simulated system. Each entry in the table stores at most a single datapath of width $|\text{DP}_{\text{glue}}|$. There are at most 4 delimiters per entry in the table, so the number of tiles needed for this section of the genome is
$$|G_\text{glue}|=6|T|^2|\text{DP}_{\text{glue}}| + 24|T|^2$$
$$=6|T|^2\big[11 + 3\log(m\cdot\tau)\big] + 24|T|^2$$
$$=90|T|^2 + 18|T|^2\log(m\cdot\tau).$$

Finally the initialization section of the genome simply consists of several datapaths. There are 19 datapaths necessary to place the adder seed rows (Section \ref{sec:AdderLayout}), $\ceil{\log|T|}$ datapaths necessary to place the rows of the bracket (one for each level in the bracket), 6 datapaths for $\externalCommunication$ (one for each neighbor location), and the 2 blocking datapaths (Section \ref{sec:bracket-blocker}). Therefore the number of tiles necessary for this section of the genome is
$$|G_\text{init}| = 19|\text{DP}_\text{adder}| + \ceil{\log|T|}|\text{DP}_\text{bracket}| + 8|\text{DP}_\text{ext}|.$$
$$=19\big[11 + 3\log m + 6|T| + 7|T|\log(m\cdot\tau)\big] + \ceil{\log|T|}\big[11 + 3\log m + 36|T| + $$ $$3|T|\log(m\cdot\tau)\big] + 8\big[36 + 8\log m + \log|T|\big].$$

Notice that in all of these expressions, the largest term is of the order $|T|^2\log(m\cdot\tau)$. Since all we want is an upper bound, we can suppose that all of the terms in these expressions are terms of the order $|T|^2\log(m\cdot\tau)$ and simply add the coefficients to get a simpler upper bound.

$$|G_\text{move}|\le 220|T|^2\log(m\cdot\tau)$$
$$|G_\text{glue}|\le 108|T|^2\log(m\cdot\tau)$$
$$|G_\text{init}|\le 926|T|^2\log(m\cdot\tau)$$

\subsubsection{Scale factor bound}\label{sec:scale-factor-bound}

The scale factor the width of a $\genome$ band plus the width of two of the gaps encoded in the movement section so we can write
$$m = |G_\text{move}| + |G_\text{glue}| + |G_\text{init}| + 2|\text{DP}_{\text{ext}}|$$
$$= 1254|T|^2\log(m\cdot\tau) + 2(36 + 8\log m + \log|T|)$$
$$\le 1254|T|^2\log(m\cdot\tau) + 90|T|^2\log(m\cdot\tau)$$

Therefore we have
$$m \le 1344|T|^2\log(m\cdot\tau) = 1344|T|^2\log m + 1344|T|^2\log\tau$$

For convenience, let $c=1344$, $A=c|T|^2$ and $B=c|T|^2\log\tau$. Thus we have
$$m < A\log{m} + B$$
$$\frac{m}{A} < \log{m} + \frac{B}{A}$$
$$\frac{m}{A} < \log{m} + \frac{B}{A} - \log{A} + \log{A}$$
$$\frac{m}{A} < \log\frac{m}{A} + \frac{B}{A} + \log{A}$$
$$\frac{m}{A} - \log{\frac{m}{A}} < \frac{B}{A} + \log{A}$$
If we let $x = m/A$ we get
$$x - \log{x} < \frac{B}{A} + \log{A}$$
Notice that $x/3 < x - \log{x}$ for all positive $x$, therefore we have
$$\frac{x}{3} < \frac{B}{A} + \log{A}$$
$$\frac{m}{3A} < \frac{B}{A} + \log{A}$$
$$m < 3B + 3A\log{A}$$
$$m < 3c|T|^2\log{\tau} + 6c|T|^2\log{|T|}$$
Notice that $3c|T|^2\log{\tau} < 6c|T|^2\log{\tau}$, so
$$m < 6c|T|^2\log{\tau} + 6c|T|^2\log{|T|}$$
\begin{equation} \label{eq:sf-upper-bound}
m < 6c|T|^2\log(|T|\tau)
\end{equation}

Because $m < 6c|T|^2\log(|T|\tau)$ when $|T| > 1$ and $\tau > 1$, we have
$$m=O(|T|^2\log(|T|\tau)).$$

\subsubsection{Generation of the $\genome$ and seed}

In this section, we give pseudocode for the algorithm that generate the $\genome$ and a seed for a given $\calT$, and we analyze its run time.

\begin{algorithm}[H]

\SetKwFunction{S}{S}
\SetKwFunction{EmptyAssembly}{EmptyAssembly}
\SetKwFunction{CalculateScaleFactor}{CalculateScaleFactor}
\SetKwFunction{MovementInstructions}{MovementInstructions}
\SetKwFunction{GetLocation}{GetLocation}
\SetKwFunction{GenomeStartLocation}{GenomeStartLocation}
\SetKwFunction{AdderStartLocation}{AdderStartLocation}
\SetKwFunction{BracketStartLocation}{BracketStartLocation}
\SetKwFunction{BracketEndLocation}{BracketEndLocation}
\SetKwFunction{GlueDatapath}{GlueDatapath}
\SetKwFunction{CommDatapath}{CommDatapath}
\SetKwFunction{BlockerDatapath}{BlockerDatapath}
\SetKwFunction{BracketDatapath}{BracketDatapath}
\SetKwFunction{AdderDatapath}{AdderDatapath}
\SetKwFunction{SharesGlue}{SharesGlue}
\SetKwFunction{PlaceTiles}{PlaceTiles}
\SetKwFunction{GlueDelimeter}{GlueDelimeter}
\SetKwFunction{AdderSeedRows}{AdderSeedRows}
\SetKwFunction{AdderSpacing}{AdderSpacing}
\SetKwFunction{TileID}{TileID}

\SetKwProg{Func}{Function}{:}{}

\Func{\S{$\mathcal{T} = (T,\sigma,\tau) \in \frakC$}}{
\DontPrintSemicolon

$m$ := \CalculateScaleFactor{$\mathcal{T}$}
\BlankLine
\tcc{Create an empty assembly with tile set $U$ that will become our seed}

$\sigmaT$ := \EmptyAssembly{$U$}\;
\BlankLine

\For{$t_s\in \sigma$}
{
\tcc{Choose Starting Locations for the Genome, Adder, and Bracket}
$p_\text{genome}$ := \GenomeStartLocation{$\mathcal{T}$, $m$, $t_s$}\;
$p_\text{adder}$ := \AdderStartLocation{$\mathcal{T}$, $m$, $t_s$}\;
$p_\text{bracket}$ := \BracketStartLocation{$\mathcal{T}$, $m$, $t_s$}\;
$p_\text{winner}$ := \BracketEndLocation{$\mathcal{T}$, $m$, $t_s$}\;
\BlankLine

$p_\text{current}$ := $p_\text{genome}$\;
\BlankLine

\tcc{Generate tiles for the movement genome}
\PlaceTiles{$\sigmaT$, $p_\text{current}$, \MovementInstructions{$\mathcal{T}$, $m$}}\;
\BlankLine

\tcc{Generate tiles for the glue genome}
\PlaceTiles{$\sigmaT$, $p_\text{current}$, \GlueDelimeter{``Section Start''}}\;
\BlankLine
\For{$t_a\in T$}
{
\BlankLine
\PlaceTiles{$\sigmaT$, $p_\text{current}$, \GlueDelimeter{``Tile Start''}}\;
\For{$d \in \{N,E,S,W,U,D\}$}
{
\BlankLine
\PlaceTiles{$\sigmaT$, $p_\text{current}$, \GlueDelimeter{``Side Start '' + $d$}}\;
\For{$t_b\in T$}
{
\BlankLine
\tcc{Check if $t_a$ has the same glue on its $d$ face as $t_b$ has on the opposite face}
\If{\SharesGlue{$t_a$, $t_b$, $d$}}
{
\PlaceTiles{$\sigmaT$, $p_\text{current}$, \GlueDatapath{$p_\text{current}$, $p_\text{adder}$, $t_a$, $t_b$, $d$}}\;
}
}
}
}
\BlankLine
\tcc{Generate tiles for the initialization genome}
\PlaceTiles{$\sigma^U$, $p_\text{current}$, \BlockerDatapath{$p_\text{current}$, $p_\text{winner}$}}\;
\For{$d\in\{N,E,S,W,U,D\}$}
{
\PlaceTiles{$\sigmaT$, $p_\text{current}$, \CommDatapath{$p_\text{current}$, $p_\text{winner}$, $|T|$, $d$}}\;
}

\BlankLine
\For{$i\in\{1,2,\ldots,\ceil{\log|T|}\}$}
{
\PlaceTiles{$\sigmaT$, $p_\text{current}$, \BracketDatapath{$p_\text{current}$, $p_\text{bracket}$, $|T|$, $i$}}\;
}

\BlankLine
$adderSeed$ := \AdderSeedRows{$\mathcal{T}$}\;
\For{$r\in adderSeed$}
{
\PlaceTiles{$\sigmaT$, $p_\text{current}$, \AdderDatapath{$p_\text{current}$, $p_\text{adder}$, $|T|$, $r$}}\;
}

\BlankLine
\tcc{Place the seed tile ID at the end of the bracket}
\PlaceTiles{$\sigmaT$, $p_\text{winning}$, \TileID{$t_s$}}\;
}
}
\caption{\texttt{Seed Generation Function}}
\end{algorithm}

\paragraph{\texttt{CalculateScaleFactor}} This function calculates the scale factor for the assembly. It does so by first calculating an upper bound for the scale factor using inequality \ref{eq:sf-upper-bound} from Section~\ref{sec:scale-factor-bound}. It then uses this to compute the widths of the datapaths that will make up the genome using inequalities \ref{eq:glue-dp-bound}, \ref{eq:adder-dp-bound}, \ref{eq:bracket-dp-bound}, and \ref{eq:ext-dp-bound}. The actual scale factor is then computed using these datapath lengths which is guaranteed to be less than the previously computed upper bound. Because this computation is done using a fixed number of arithmetic operations on values no bigger than $O(|T|^2 \log(|T|\tau))$, the complexity of this subroutine is logarithmic in $O(\log(|T|) + \log(\log(|T|\tau)))$.

\paragraph{\texttt{EmptyAssembly}}
This function simply creates an empty assembly which will become our seed assembly. The assembly is stored as an associative array where the coordinates of the tile are used as a key for the type of tile being stored. The initialization of such a data structure can be done in constant time.

\paragraph{\texttt{StartLocation/EndLocation}} There are functions that generate a starting or ending location for a certain component of the construction. The starting location for the genome is simply a fixed number of tiles away from the north, west, up corner of the particular macrotile in which it's placed. And the starting locations for the components represent some locations within the macrotile at which the components will be placed. The absolute locations for each of these components can be computed as some fixed offset relative to the absolute position of a corner of the corresponding macrotile.

The location of a corner of a macrotile can be found by multiplying the coordinates of the corresponding seed tile in $\sigma$ by $m$. Since $\sigma$ must be connected, the largest coordinate of any tile in $\sigma$ can be bounded by $O(|\sigma|)$, where $|\sigma|$ is the number of tiles in $\sigma$. Thus the arithmetic for finding the corner of a super tile and any of the starting locations can be done in $O(\log(|\sigma|\cdot m)))$ time.

\paragraph{\texttt{PlaceTiles}} This function is responsible for placing all of the tiles into our previously initialized assembly. This function takes, as input, an assembly in which to place the tiles, a location at which to place the tiles, and a list of tiles to place. This function simply iterates over all of the tiles in the given list and inserts them into the assembly using the location as a key. After a tile is inserted, the location is incremented one tile in the East direction so that the next tile in the list will be placed adjacent to the last.

It's important to notice that the size of the associative array needed to store our assembly is $O(|\sigma|\cdot m)$. This is because, for each macrotile, we only seed a single row of tiles which will grow into the bands of the genome and everything else inside of the macrotile. Because this size is not fixed, the hash function with which we insert tiles into the assembly must have a range proportional to the size of the assembly. The hash function complexity can then be bounded by $O(\log(|\sigma|\cdot m))$. The computational cost of this function is therefore linear in the size of the given list of tiles times the complexity of performing a hash for each element.

\paragraph{\texttt{MovementInstructions}} This function generates a list of tiles representing the instructions for the movement part of the \texttt{genome}. The movement instructions consist of a number of \emph{Prop} instructions linear in the width of an external communication datapath plus a fixed amount of other instruction tiles. The size of the list of tiles returned by this function, and therefore its time complexity, is thus $O(|\text{DP}_{\text{ext}}|)$ or $O(\log m + \log|T|)$.

\paragraph{\texttt{GlueDelimiter}} This function simply returns a tile with the given name corresponding to some delimiter used in the glue table of the $\genome$. Since this function simply finds the tile with a given name from the fixed universal tile set, it can be done in constant time supposing tile sets are stored as hash tables.

\paragraph{\texttt{Datapaths}} There are five functions that return lists of the tiles necessary to place the datapaths which make up much of the genome. Because these functions simply return a list of tiles proportional to the width of the necessary datapaths, the time complexities of these functions are linear in the widths of their respective datapaths as discussed in Section~\ref{sec:datapath-widths}.

\paragraph{\texttt{AdderSeedRows}}
This function returns a list of rows of tiles, each containing the tiles necessary to seed a row of the adder construction as described in Section~\ref{sec:AdderLayout}. Instructions to place these tiles will make up part of the adder datapaths. There are 19 rows, each of which is no longer than $O(\log(\tau))$ tiles.

\paragraph{\texttt{SharesGlue}}
This function simply determines if two given tile types share a glue on the face in the specified direction. Because the maximum number of glues in a tile set is linearly proportional to the number of tiles in that set, the number of symbols necessary to represent each glue in $T$, is $O(\log|T|)$. Thus comparing two glues would take $O(\log|T|)$ time.

\subsubsection{Run Time Complexity}
In this algorithm, we first compute $m$, which takes $O(\log(|T|) + \log(\log(|T|\tau)))$ time. Then, For each tile in the given seed $\sigma$, we do the following. First we find the start and end locations of the various components which is done in time

$$O(\log(|\sigma|\cdot m)).$$

Next we place the tiles corresponding to the movement section of the $\genome$. This runs in time

$$O(t_\text{place}\cdot (\log m + \log|T|)),$$

where $O(t_\text{place})$ is the runtime complexity of the \texttt{PlaceTiles} function for convenience. Next we place the delimiters and glue datapaths necessary to make up the glue table part of the $\genome$. Remember from Section~\ref{sec:datapath-widths} that each of the datapaths in the glue table take $O(\log(m\cdot\tau))$ tiles. Since we are placing $O(|T|^2)$ datapaths, and since checking if two tiles in $T$ share a glue on a given face costs $O(\log|T|)$ time, the time complexity of this part of the algorithm is

$$O(|T|^2\log|T| + |T|^2\log(m\cdot\tau)\cdot t_\text{place}).$$

Notice that this bound takes into account the delimiter tiles placed in between the datapaths. Next we place the 2 \emph{bracket blocker} datapaths, the 6 external communication datpaths, the $\ceil{\log|T|}$ bracket datapaths, and the 19 adder datapaths. From Section~\ref{sec:datapath-widths}, the run time cost of this part of the algorithm is

$$O(t_\text{place}(|\text{DP}_\text{adder}| + |\text{DP}_\text{ext}| + \log|T||\text{DP}_\text{bracket}|)).$$
$$=O(t_\text{place}(|T|\log(m\cdot\tau) + \log|T| + \log m + |T|\log|T|\log(m\cdot\tau)))$$
$$=O(t_\text{place}(|T|\log|T|\log(m\cdot\tau)))$$

Finally we place the tiles that seed the given macrotile with the ID of tile $t_s$. The complexity of this is $O(t_\text{place}\cdot\log|T|)$ since there are at most $|T|$ different IDs that $t_s$ could have. This term fits into the complexity of the previous terms. The total run time complexity for placing the genome for a single tile in $\sigma$ is therefore

$$O(t_\text{place}|T|^2\log(m\cdot\tau) + |T|^2\log|T|)$$
$$=O(\log(|\sigma|\cdot m) |T|^2\log(m\cdot\tau) + |T|^2\log|T|)$$

Finally, by combining these terms, we find that the total run time complexity of the entire algorithm is

$$O(|\sigma| \cdot \Big[\log(|\sigma|\cdot m) |T|^2\log(m\cdot\tau) + |T|^2\log|T|\Big] + \log(|T|) + \log(\log(|T|\tau)))$$
$$=O(|\sigma||T|^2\log|T|\log(m\cdot\tau)\log(|\sigma|\cdot m))$$
$$=O(|\sigma||T|^2\log(|T||\sigma|\tau m))$$

Since $m=O(|T|^2\log(|T|\tau))$ this expands to

$$O(|\sigma||T|^2\log(|T|^3|\sigma|\cdot\tau\log(|T|\tau)))$$

Simplifying, we find that a bound for the run time of our algorithm is

$$O(|\sigma||T|^2\log(|T||\sigma|\tau))$$

\section{Full Proof for Directed 3D aTAM is Intrinsically Universal}
\label{sec:thm2-proof}

In this section, we provide technical details for Section~\ref{sec:directed3d-short} and Theorem~\ref{thm:directed3DIU}. To do this, we first prove that the construction from Section~\ref{sec:construction-short} is a directed simulator when simulating a directed system.

\begin{lemma}\label{lem:directed-simulation}
Let $\calT$ be an arbitrary directed 3D aTAM system. The system $\calUT = (U, \sigma_\calT, 2)$, which simulates $\calT$ using the construction of Theorem~\ref{thm:3DaTAMIU}, is also directed.
\end{lemma}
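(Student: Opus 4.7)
The plan is to prove $|\termasm{\calUT}| = 1$. Since $\calUT$ simulates $\calT$ (Lemma~\ref{lem:simulates}) and $\calT$ is directed, every terminal assembly of $\calUT$ maps under $R^*$ to the unique terminal assembly $\alpha$ of $\calT$. By Lemma~\ref{lem:bounded-fuzz}, tiles in any terminal assembly of $\calUT$ live only in macrotiles representing tiles of $\alpha$ or in fuzz macrotiles immediately adjacent to them. So it suffices to show that in each such macrotile $L$, the set of tiles placed in $L$ is the same across all valid assembly sequences, independently of the order in which $L$'s neighbors initiate growth and of the local ordering of attachments within $L$.

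I would then reduce the argument to analyzing the three classical sources of nondeterminism in the aTAM: (a) free choice of the next frontier cell, (b) multiple tile types of $U$ being simultaneously glue-compatible at the same cell, and (c) race conditions between different growth paths that meet within a module. Type (a) cannot change the final assembly when no placement conflicts exist, so it reduces to (b) and (c). Type (b) is ruled out at the level of the universal tile set: every tile of $U$ has distinct input and output sides with unique input-glue sets, and backward attachment through output sides is blocked by the key-and-latch gadgets of Section~\ref{sec:growth-patterns}. A small number of ``repeated'' tile types (pumped datapath interiors and the success-row described below) attach at many cells with the same tile type, so their apparent nondeterminism is harmless once stopper tiles are in place.

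The bulk of the work is type (c). I would walk through each module in turn and invoke its collision-tolerance or symmetry property. For the $\genome$, I would cite Lemma~\ref{lem:genome-growth} together with the circular latches of Section~\ref{sec:collision-tolerance} to conclude that the three concentric bands end up identical no matter which of the six intersection rows seeds their growth or in what temporal order, and that the $\genome$ output to all six neighbors (driven by the differentiation callback of Section~\ref{sec:callback}) is symmetric in the direction of the original seeding. For the $\adderArray$, since $\calT$ is directed, in any frontier cell at most one tile type of $T$ can ever attach; combined with Lemma~\ref{lem:correct_adder} this means that inside each macrotile at most one $\adderUnit$ can succeed. Within that $\adderUnit$, multiple $\componentAdder$s may still succeed from disjoint subsets of glues; the success row of Figure~\ref{fig:Component_Adder_Success_Signal} is a single pumpable tile type that can be triggered by any $\componentAdder$ and grows in both directions, so the resulting row is the same regardless of the order of successes. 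For the $\bracket$, a single input makes every point of competition trivial, so the output is deterministic; the $\externalCommunication$ then fires symmetrically into all six neighbors, seeded in a fixed order by the $\genome$'s initialization callbacks. Finally, Lemma~\ref{lem:timing} lets me decouple the relative timing of these modules so that the full intra-macrotile tiling is order-independent.

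The main obstacle I anticipate is handling a specific case the authors flag in which more than one $\adderUnit$ could nominally succeed in a single macrotile even though $\calT$ is directed. I would argue that if two distinct $\adderUnit$s each had some $\componentAdder$ reach $\geq \tau$, the corresponding two tile types of $T$ would both be able to attach at the same location of $\calT$ in some valid assembly sequence, contradicting directedness of $\calT$; any remaining apparent competition therefore lives inside a single $\adderUnit$ and reduces to the symmetric intra-$\adderUnit$ success-row argument above. Once each module is shown to produce an order-independent tiling and fuzz macrotiles are treated the same way (their $\genome$, $\adderArray$ seeds, $\bracket$, and $\externalCommunication$ seed datapaths all grow but produce no $\adderUnit$ success, hence no differentiation), the terminal assembly of $\calUT$ is fully determined and $\calUT$ is directed.
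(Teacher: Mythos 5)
There is a genuine gap in your handling of the ``specific case'' you flag at the end, and the gap is not repairable by the argument you propose. You claim that if two distinct $\adderUnit$s in a single macrotile each had a $\componentAdder$ reach $\geq\tau$, then two tile types of $T$ could both attach at the corresponding location in $\calT$ in some valid assembly sequence, contradicting directedness. This does not follow. Re-read Lemma~\ref{lem:correct_adder}: the $\adderArray$ outputs the encoding of tile type $t$ precisely when $t$ could attach at location $l$ \emph{``regardless of whether $l$ already contains a tile or not.''} The adder works purely at the level of incident glue strengths, and it continues to receive new query datapaths from neighbors that differentiate \emph{after} the current macrotile has already resolved. The scenario of Figure~\ref{fig:bracket-blocker-scenario} exhibits this exactly: $\calT$ is directed (only the blue tile type can ever appear at that location, because the adjacent red- and gold-compatible tiles can only be placed after the blue one), yet once those later neighbors are in place their $\externalCommunication$ datapaths trigger successes in the red and gold $\adderUnit$s of the already-differentiated blue macrotile. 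So ``at most one $\adderUnit$ can succeed'' is false even under the directedness hypothesis, and the competition you must worry about is genuinely \emph{across} $\adderUnit$s, not confined to a single one.

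The resolution in the paper is a mechanism your proof never invokes: the $\bracketBlocker$ (Section~\ref{sec:bracket-blocker}). After the winning guide rail exits the $\bracket$ but \emph{before} the macrotile outputs to its neighbors, a blocker datapath seals off every unused $\bracket$ input. Since any tile type other than the true one can only receive $\adderUnit$ input from neighbors whose own placement required the current macrotile's output (and hence its differentiation) to occur first, those late guide rails necessarily arrive after the blocker is in place and are stopped before they can enter the $\bracket$. This, not a contradiction with directedness of $\calT$, is what guarantees a unique tiling of the $\bracket$ region. Aside from this one point, your decomposition into nondeterminism types, your use of Lemmas~\ref{lem:timing}, \ref{lem:genome-growth}, and~\ref{lem:bounded-fuzz}, your treatment of the $\genome$ via circular latches, and your success-row argument for a single $\adderUnit$ all closely track the paper's proof. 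You need to replace the flawed case argument with the $\bracketBlocker$ argument.
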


To prove Lemma~\ref{lem:directed-simulation} we show how the modules of our construction are designed to maintain directedness in spite of nondeterministic rates of growth of the components.
We proved in Lemma~\ref{lem:timing} that the order in which different modules grow does not have any effect on the overall tile placement within a macrotile. Recall that any timing issues in this case are mitigated by the need for cooperation between the modules that are racing. %However, Lemma~\ref{lem:timing} treats different instances of the ordering of growth for the same module as one event. To show our construction is directed, we now need to show that the order in which different instances of the same module grow also has no effect.
Therefore, we will analyze all of the possible tile locations within modules where there can be competition to place tiles, and we show that all of these tile locations end up with the same placement of tiles regardless of the timing of their placements.

To determine the race conditions that need to be discussed, we go through the whole growth process. The growth of two modules is initiated by neighboring macrotiles, the $\genome$ and $\externalCommunication$ datapaths. Different instances of the $\genome$ initiated by different neighboring macrotiles is the first race condition, since each individual neighboring macrotile can propagate the entire $\genome$, and yet the currently growing macrotile will only end up with exactly one instance of the full $\genome$. The $\externalCommunication$ datapaths, however, each have a designated path that they will grow on to a designated row in the query section of the $\genome$ of the currently growing tile (based on their relative direction from the macrotile), and therefore do not result in a race condition. These paths have a unique set of movement instructions for each direction they represent. Once at the $\genome$, they initiate the growth of query datapaths. Each of these datapaths also has a unique destination to which they will grow whenever initiated, unique to the tile type, direction, tile type triple that they represent, thereby never creating a race condition. These query datapaths then activate different $\componentAdder$ pieces, each of which may initiate a success signal that grows out of the $\adderUnit$. Since each $\componentAdder$ within a single $\adderUnit$ can individually cause the success signal tiles to start attaching, this becomes our second possible race condition. Next, the success signals initiate growth of guide rails that leave the $\adderArray$ for the $\bracket$ where they compete, creating our third and final race condition. Afterwards, the winning guide rail grows into the $\externalCommunication$ module, which initiates $6$ datapaths, each on their own unique path to a different neighboring macrotile, as well as the path back along the $\genome$ which initiates its growth into the $6$ neighbors. These three possible race conditions are now examined.

\paragraph{Genome} By Lemma~\ref{lem:genome-growth}, we know that the $\genome$ will place the same tiles regardless of which neighbor initiates the growth or whether multiple neighbors cause partial growth. Recall that this is because circular latches cause collision-prone portions of the $\genome$ to only grow in one direction at a time. Therefore, different neighboring macrotiles that simultaneously propagate the $\genome$ to the currently growing macrotile will have their growth intersect in collision-robust portions of the $\genome$ which use the limited strength-2 pattern of growth (see Section~\ref{sec:growth-patterns} for details). Thus, regardless of which neighboring macrotiles propagate the $\genome$ and in what order, the set of all tile placements within the $\genome$ bands will always be the same.

\paragraph{Adder Unit} The second race condition happens within particular $\adderUnit$ modules whenever multiple $\componentAdder$ pieces can all cause the $\adderUnit$ to succeed. Whenever any particular $\componentAdder$ does succeed, it sparks the primed growth of a ``success signal'' that moves forward and backwards along a backbone of generic primed strength-1 glues to place a series of generic success tiles along the whole $\adderUnit$. The success tiles cannot initiate growth into an inactivated $\componentAdder$, however, since they only present strength-1 glues and have nothing to cooperate with in that direction. Therefore, whenever another $\componentAdder$ succeeds, it grows down and merges with the already formed success signal directly below it, which consists of the same set of tiles that would have been placed by the new $\componentAdder$. See Figure~\ref{fig:Component_Adder_Success_Signal} for an illustration of this. By the end of the assembly process, the terminal assembly will then have the entire success signal filled out, with every $\componentAdder$ that succeeds over the course of the assembly process being connected to it, rendering an observer unable to distinguish which $\componentAdder$ succeeded first by solely looking at that part of the construction.

\paragraph{Bracket} The last race condition is the only one dependent upon the assumption that $\calT$ is directed. Since we assume this, and because we have proved that $\calUT$ is valid simulator for $\calT$ (without the constraint of directedness), we know that each specific macrotile in $\calUT$ will only have one guide rail that can grow into the $\bracket$ before it differentiates. This eliminates any race condition caused by the guide rails, since there will only ever be one ``winner'' guide rail. However, even when $\calT$ is a directed system, there is potential within our construction for multiple $\adderUnit$s within a single $\adderArray$ to receive inputs of enough strength from incident neighboring glues to output to the $\bracket$.  Such a situation is depicted (in 2D) by the blue tile in Figure~\ref{fig:bracket-blocker-scenario}.  It is important to note that only a blue tile type could ever be placed in that position, and the only potential nondeterminism could have come from the later inputs into that macrotile location (by neighbors whose attachments required the blue tile to be there first) as they possibly competed with each other within the $\bracket$ before ultimately losing to the blue path. For this reason, our construction includes the $\bracketBlocker$ mechanism (see Section~\ref{sec:bracket-blocker}) which is a module that grows after a path has won but before the macrotile initiates its output to neighboring macrotiles and blocks future inputs to the $\bracket$. Only after this blocking mechanism is in place will the macrotile output to neighbors.  Since we know that $\calT$ is directed, it must be the case that in any location where multiple tile types could match glues with enough strength to bind in that location, growth of at least one of the macrotiles which inputs those glues must require the differentiation and output of the macrotile in that location first, but since the $\bracket$ is blocked off for later inputs before the output of that macrotile, no undirected growth within the $\bracket$ can occur, and thus $\calUT$ remains directed.

\begin{figure}[htb]
\centering
\includegraphics[width=5.0in]{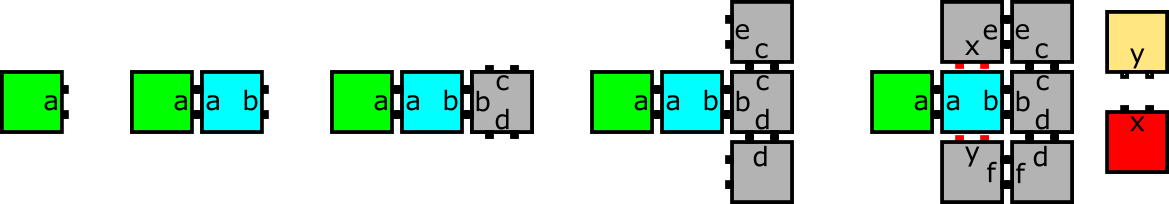}
\caption{(Left to right) A basic example, shown in 2D, in which an assembly can grow from a seed (the green tile).  When the final two tiles are placed (above and below the blue tile), they have strength-2 glues facing the location of the blue tile but which do not match glues on that tile (i.e. they are mismatches). If the blue tile were not present, the red tile could bind to the top tile or the gold tile could bind to the bottom tile.  However, this system is directed because the blue tile must always precede the placement of both the top and the bottom.} \label{fig:bracket-blocker-scenario}
\end{figure}

\begin{figure}[htb]
\centering
\includegraphics[width=4.5in]{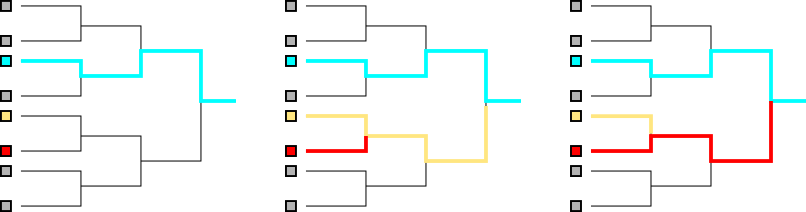}
\caption{A schematic depiction of the $\bracket$ in the macrotile representing the blue tile in Figure~\ref{fig:bracket-blocker-scenario} if the $\bracketBlocker$ mechanism did not exist.  (Left) The blue tile type must always win the $\bracket$ since the blue tile must differentiate and output to the neighbor to the right before any other inputs are possible. (Middle) If the growth from the southern neighbor happened more quickly than that of the northern, the path representing the gold tile type could win the point of competition with that of the red tile type (although it could never beat the blue, which must have won long before the others could ever grow). (Right) Conversely, if the growth of the northern neighbor was quicker, the path for the red tile type could win the $\bracket$. However, the $\bracketBlocker$ grows immediately after the blue path wins but before the macrotile outputs to neighbors, and thus prevents those later inputs from entering the $\bracket$, which preserves the directedness of the construction regardless of the timing of the growths from those neighbors.}\label{fig:nondet-brackets}
\end{figure}

\begin{proof}[Proof of Lemma~\ref{lem:directed-simulation}]
The proof of Lemma~\ref{lem:directed-simulation} follows from the fact that, given that $\calT$ is directed, any nondeterminism which can arise during the simulation by $\calUT$ cannot lead to instances of two assembly sequences which can place tiles of different types into the same location.  Therefore, $\calUT$ must be directed.
\end{proof}

\begin{proof}[Proof of Theorem~\ref{thm:directed3DIU}]
The proof of Theorem~\ref{thm:directed3DIU} follows directly from Theorem~\ref{thm:3DaTAMIU} and Lemma~\ref{lem:directed-simulation} since they prove that $U$ is intrinsically universal for the class of directed 3D aTAM systems, and thus the directed 3D aTAM is intrinsically universal.
\end{proof}

\section{Technical Details for Spatial aTAM IU Construction}
\label{sec:spatial_tech_details}

In this section, we provide technical details for Section~\ref{sec:spatial-atam}. To make the blocking protocol implemented by our construction for the proof of Theorem~\ref{thm:spatial-IU} more concrete, we will present an example of how it is implemented. Note that the construction could be more symmetric and timing efficient at the expense of using more components and having more timing dependencies. Our implementation relies on the passing of signals and information using datapaths and the use of cooperation to enforce sequential processes such that one component must wait on another to complete before processing itself in order to make the explanation easier to follow.

Presented here is a more in-depth description of the gadgets and protocols used to show that the Spatial aTAM is intrinsically universal. Note that this construction is very similar to the construction for showing that the 3D aTAM is intrinsically universal, with the addition of a blocking protocol which makes sure that each completed macrotile is completely encapsulated in a shell of tiles. This shell of tiles will serve to cause macrotiles to constrain a space exactly when tiles in the simulated space do. In this section we will simply describe the distinctions between the 3D aTAM IU construction and this construction and demonstrate that this shows that the Spatial aTAM is intrinsically universal. Also, it's important to notice that this construction is not concerned with preserving directedness.

For this section, let $\texttt{boundary}_{dir}$ be the set of tile locations that form the planar segment that is furthest in the $dir$ direction for a given macrotile where $dir \in \{U,N,E,S,W,D\}$. Let the tile location in the center of the pipe intersection be $\texttt{loc}_{center}$, since this is the centerpiece of the construction. Let $side$ designate the subset of directions $\{N,E,S,W\}$. Let $\texttt{pipe}_{dir}$ be the pipe subassembly that grows from the $\texttt{loc}_{center}$ to $\texttt{boundary}_{dir}$ for $dir \in \{N,E,S,W,D\}$.

\subsection{Incoming genome and external communication}
To have the correct set up for our blocking protocol to work, we first need to make a few adjustments to how the genome (and external communication) propagate to new macrotiles. The genome is a good module to make this adjustment in, since an incoming genome signifies that a neighbor has already differentiated. The adjustment is threefold. First we need the genome (and external communication) to come in at a new location. Figure~\ref{fig:side_face} shows an example of this for the $side$ faces. %However, this wouldn't be necessary if we are okay with our implementation requiring more cooperation and information tracking.
Next, we need the genome (and external communication) to exhibit a series of special glues at the boundary when it first comes into the current macrotile. These glues must be exposed in all directions perpendicular to the direction it is traveling. These glues will be used to direct the tiling of the boundary around the datapaths at a later stage.

Most importantly though, we need the incoming genome to generate what we refer to as an \emph{interception gadget} (shown in Figure~\ref{fig:intercept}). This gadget is used later to ``cut'' a hole in some of the piping to redirect tiles to the ``up'' direction, as previously mentioned. Once the pipes start forming, the gadget works by using special glues to cooperate with the attaching tiles such that the pipe can grow past it but will not fill in one location in the direction of the genome. Then, once the macrotile differentiates at a later stage, if there is a diffusion path from the ``up'' direction, tiles will be able to grow through this hole, along the genome, and signal that the macrotile is clear to propagate its information in the ``up'' direction. Note there will be 5 inception gadgets, one for each direction other than ``up'', which will refer to as $\texttt{intercept}_{dir}$ where $dir \in \{N,E,S,W,D\}$.

\begin{figure}
\begin{subfigure}[b]{0.45\linewidth}
\centering
\includegraphics[width=0.42\linewidth]{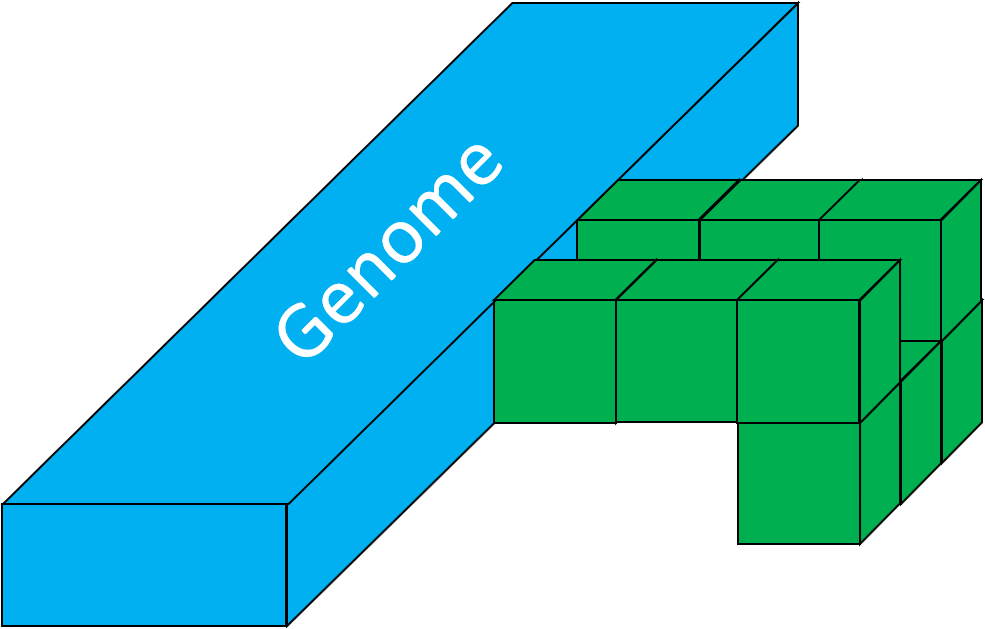}
\caption{Before piping grows}
\end{subfigure}
\begin{subfigure}[b]{0.52\linewidth}
\centering
\includegraphics[width=0.5\linewidth]{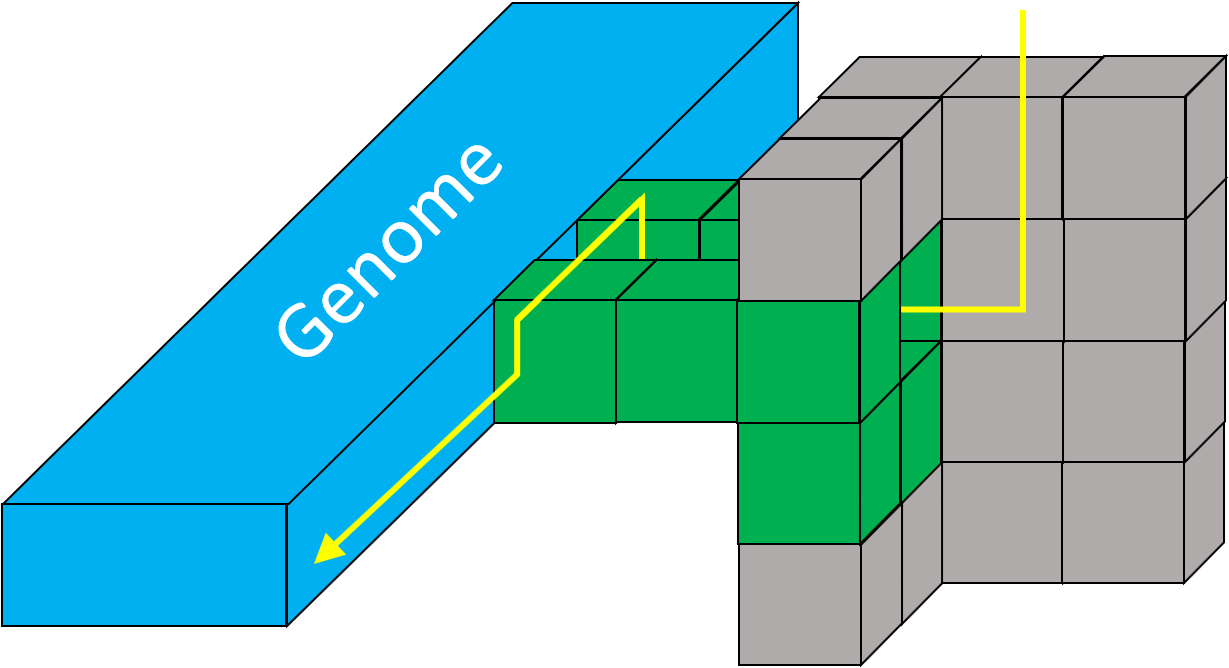}
\caption{After piping grows}
\end{subfigure}
\caption{The interception gadget works by growing out from an incoming genome path into the path of the future pipe that will extend in the same direction. Once the pipe grows, the original sequence of tile placements is blocked, and it instead cooperates with the interception gadget to continue growing but with a hole in the side toward the original genome path. Then, whenever the macrotile differentiates and tiles start growing through the piping, a signal can grow out of the hole and along the genome to activate its (and the corresponding external communication datapath's) propagation into the neighboring macrotile in the ``up'' direction. The path of this signal is shown in yellow.}
\label{fig:intercept}
\end{figure}

\subsection{Outgoing genome and external communication} The next step in our implementation is initiated whenever a guide rail encoding some tile type wins the bracket module. Once this happens, it grows into the external communication module, which grows datapaths in the direction of the neighboring macrotiles, and also tells the genome to do the same. Our first adjustment is encoding a ``variable'' instruction (from the original construction) into both datapaths such that they stop immediately before the boundary with the neighboring macrotile and wait for a signal to continue growing. Additionally, we augment the signal that tells the genome to start propagating so that it also initiates an added set of instructions in the initialization section of the genome that will begin the tiling process of the blocking protocol.

\subsection{Receiver and piping} Once activated, the tiling instructions in the genome will start by growing to the upper eastern corner of the north face and placing a gadget which we refer to as the $\texttt{receiver}$. This gadget just waits for the boundary tiling to complete, similar again to the ``variable'' instruction. Once placed the datapath will then grow back down until it is a constant number of units above the center of the bottom face, i.e. just above the center of $\texttt{boundary}_D$. Here, it will grow the pipe intersection. Next, a pipe of constant length $\texttt{pipe}_D$ will grow down to the bottom face, along with an encoding of the scale factor of the macrotile $m$. Along $\texttt{boundary}_D$, this encoding will be used to seed counters in all four directions that will grow a distance of $\frac{m}{2}$ to the edges of the macrotile, i.e. the edges created by $\texttt{boundary}_D \times \texttt{boundary}_{side}$. In the space between the counters, filler tiles will attach by cooperation. One quadrant, however, will use special filler tiles that grow only a constant hardcoded number of steps. This will allow the outgoing genome and external communication to grow to the boundary first, and then cooperation will allow the filler tiles to continue, thereby rectifying a potential timing dependency. In the opposite quadrant, the same type of cooperation is used to allow the filler tiles to grow around an incoming genome and external communication, although no special tile is used here. This way, if the incoming datapaths aren't present yet (as they possibly never will be) the filler tiles can fill out that space anyways, thereby blocking the datapaths if they ever do come in. Continuing on, the final row of each counter will initiate special signals in both directions perpendicular to the direction of growth, causing special tiles to be placed at the four bottom corners of the macrotile. The layout of the counters and special signals on the bottom face of the macrotile is illustrated in Figure~\ref{fig:bottom_face}.

Additionally, from the pipe intersection, a $\texttt{pipe}_{side}$ will grow out for each $side$ direction. It will grow independently for a small constant number of steps, and then will be tethered to the counter on $\texttt{boundary}_D$ by a small, constant-width additional strip of tiles so that it grows right up to $\texttt{boundary}_{side}$ without overgrowing it.

\begin{figure}
\centering
\includegraphics[width=0.5\linewidth]{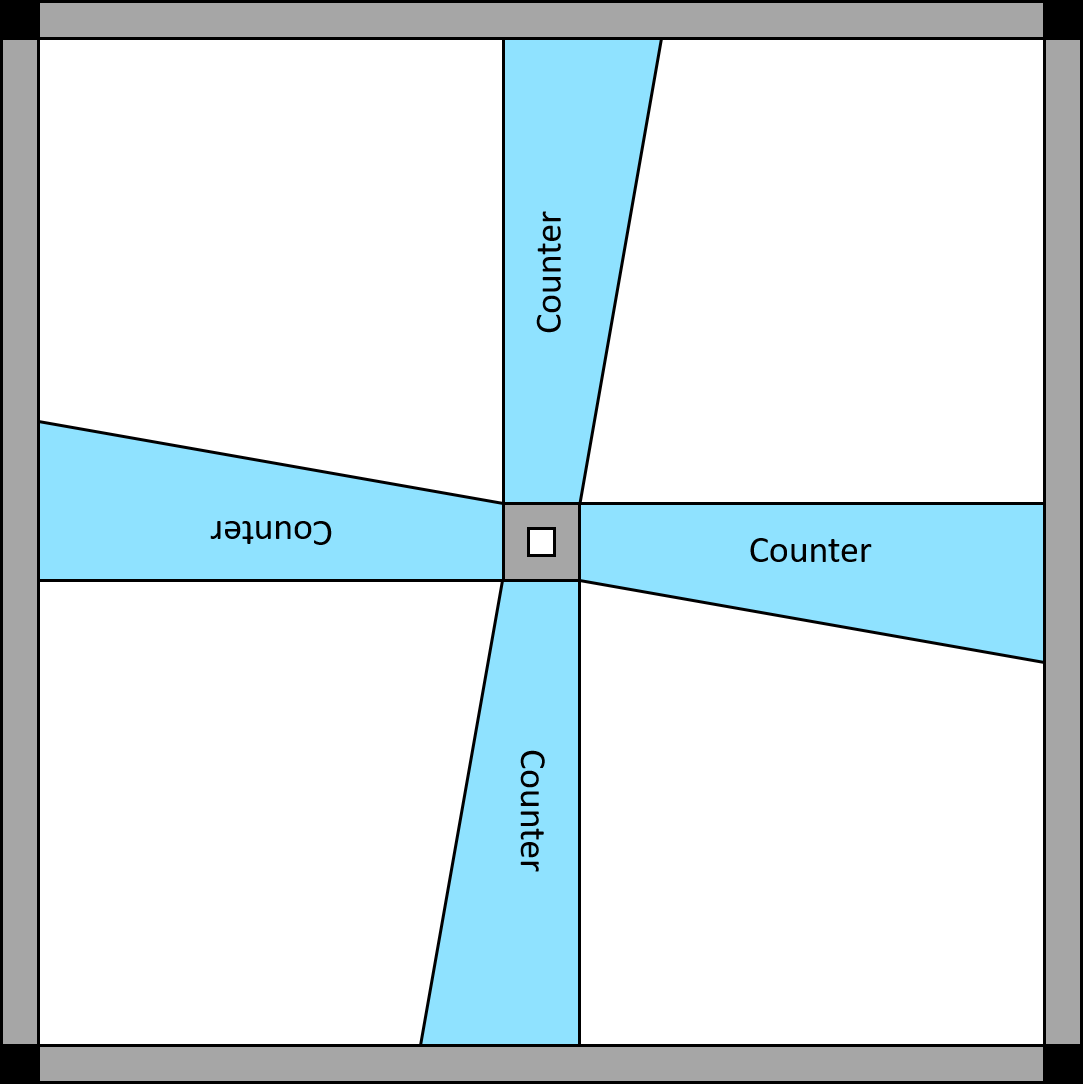}
\caption{For this example, we make it so that the scale factor only needs to be encoded into the bottom face so that it knows how far to grow outwards to get to each side face. Each pipe that grows outwards in the N,E,S,W directions is tethered to the bottom face so that it also grows exactly up to each side face. Each black tile in the corner then grows up a pole (as seen on the left in Figure~\ref{fig:bottom_strip}) that begins the growth of the bottom strip of each side face.}
\label{fig:bottom_face}
\end{figure}

\subsection{Growing the side faces} Each of the special tiles at the four bottom corners then grow up a constant height pole. Each pole initiates growth of a constant width strip along the bottom of each $\texttt{boundary}_{side}$, as seen in Figure~\ref{fig:bottom_strip}. These strips must be careful to keep growing whether the incoming external communications and genome have grown through or not. Again, this is done by allowing the filler tiles to grow independently, or by letting them cooperate with special glues on the datapaths if the datapaths are present and block the filler tiles. Once the filler tiles make it to the middle of the boundary, as designated by a special tile placed at the end of the counters, the tiles will wait (enforced by cooperation) for the corresponding $\texttt{pipe}_{side}$ to grow up to the boundary. Once present, the tiles can grow over the end of the piping and continue on the other side. However, they must wait again for the outgoing external communication and genome. This waiting is enforced by using a constant number of hardcoded tiles after passing the piping that must then cooperate with the tiles of the outgoing information in order to continue on. Once the two datapaths have also come in, the tiles can continue all the way to the other edge. Since the cooperation will happen left-to-right and bottom-to-top, we can enforce that the upper right most tile will be the last placed in this strip.

Of the poles that initiated these bottom strips, the north western pole will also start growing a path around the top of all four strips (in the opposite direction that the strips themselves are growing). This path will also use generic tiles that turn around the corners and run along the top of these strips. Once the path has grown all the way around one loop, it will lift up one unit in the ``up'' direction and continue around again and again in a spiral assembly pattern. Since the first loop relies on cooperation with the bottom strips, we know that all four bottom strips must have completed by the time the path makes its first full loop. The spiraling path should make the pattern seen in Figure~\ref{fig:side_face} on the north face. The final path around the macrotile in the uppermost ring should be calculated to begin right next to where the $\texttt{receiver}$ was placed in the previous steps. This will cause the path to loop around and hit the $\texttt{receiver}$, signaling that the bottom face and all four side faces are completely tiled, and the macrotile is ready for differentiation.

\begin{figure}
\centering
\includegraphics[width=\linewidth]{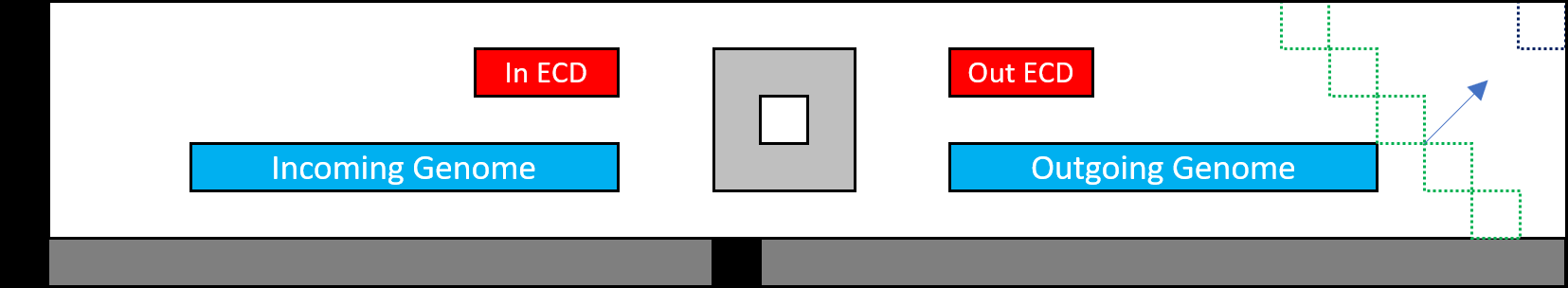}
\caption{The bottom strip of each side boundary $\texttt{boundary}_{side}$. Growth starts from the special tiles on the left. Generic filler tiles grow either (a) around the incoming genome and external communication or (b) over the slots to block the incoming datapaths from coming in later. The special tile in the middle of the bottom connects to the end of the piping (thereby preventing the filler tiles from growing over it). A constant number of hardcoded tiles then count over to the slots designated for the outgoing genome and external communication. The filler tiles must then wait for these datapaths to come in before they can continue on. The tile in the upper-rightmost corner is guaranteed to be the last placed.}
\label{fig:bottom_strip}
\end{figure}

\begin{figure}
\centering
\includegraphics[width=0.45\linewidth]{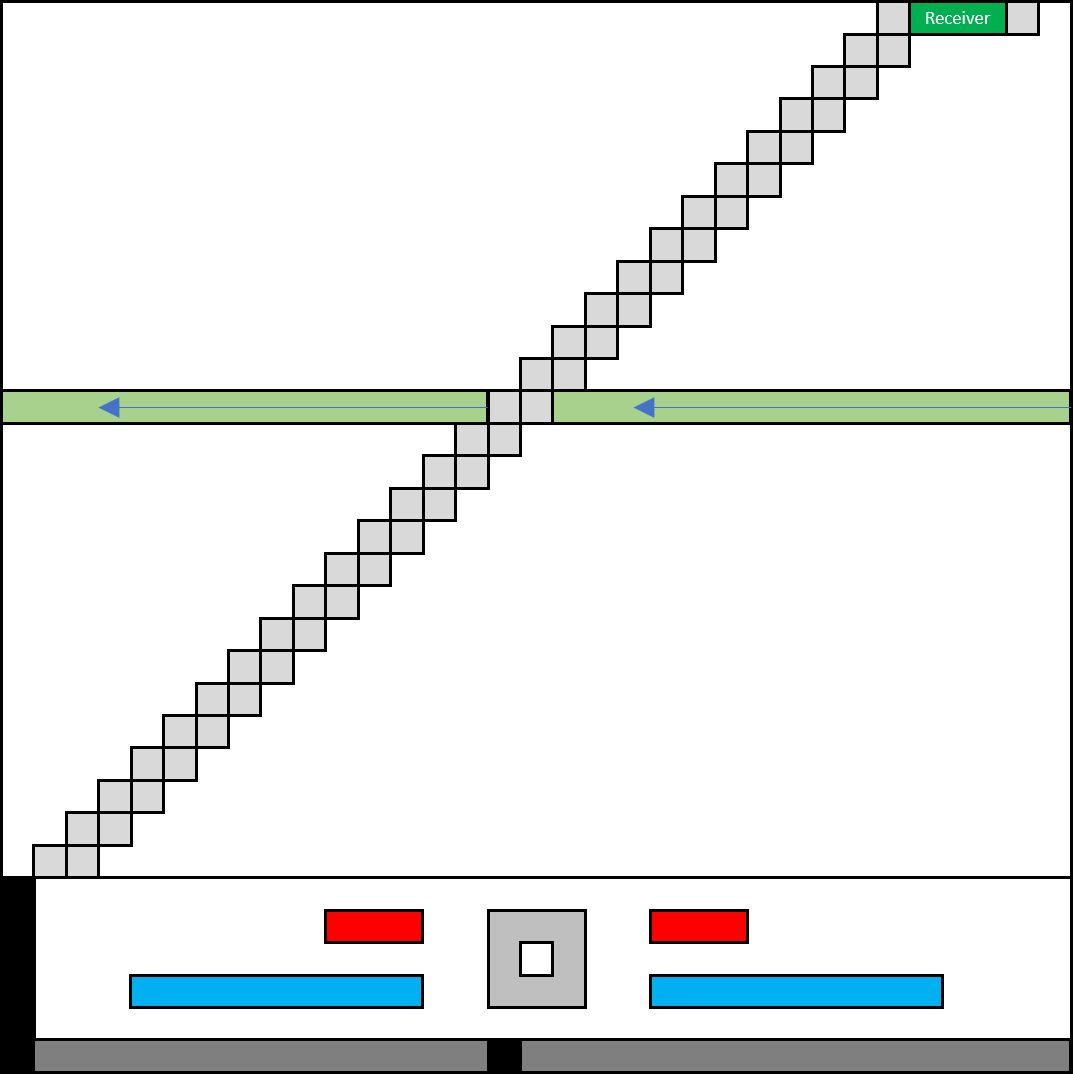}
\caption{Once the bottom strip has of each side face has completely filled out, one of the corner tiles starts the growth of the rest of the side faces by growing a one tile wide path around the top of each bottom strip. Each bottom strip can be certain to have completely by the time one loop has been made. Then, the path steps one in the ``up'' direction (through cooperation with the first tile in the path) and starts another path. This continue until the path reaches the $\texttt{receiver}$ in the upper right corner of the side face on which the first path started. Once this $\texttt{receiver}$ is hit, all the blocking protocol is certain to have completed, and the macrotile is clear to differentiate.}
\label{fig:side_face}
\end{figure}

\subsection{Differentiation and activation} Once the $\texttt{receiver}$ has been signaled, a path of tiles will grow back to the pipe intersection and into the $\texttt{loc}_{center}$ location from the ``up'' direction. Placing this tile signifies that the macrotile has officially differentiated under our representation function $R$. From here, for any direction $dir$ with non-intercepted pipes, if the neighboring macrotile hasn't differentiated, then tiles will attach within $\texttt{pipe}_{dir}$ until they come out in the neighboring macrotile. Then, these tiles will activate the halted outgoing external communication and genome datapaths to continue growing into the neighboring macrotile. As for any $\texttt{pipe}_{dir'}$ that was intercepted, this means that the neighboring macrotile has already differentiated, and tile don't need to / won't attach within $\texttt{pipe}_{dir'}$ past the $\texttt{intercept}_{dir'}$ gadget. However, the hole that $\texttt{intercept}_{dir'}$ leaves in $\texttt{pipe}_{dir'}$ allows for diffusion in the ``up'' direction. If the macrotile in the ``up'' direction has not already differentiated, tiles can therefore diffuse in from above and will grow through $\texttt{pipe}_{dir'}$, out of $\texttt{intercept}_{dir'}$, and along the genome to signal that it (and the ``up'' external communication datapath) can continue growing in the ``up'' direction. If the neighbor in the ``up'' direction has already differentiated, then there is no need for these to continue growing in that direction, and the currently growing macrotile will become a constrained subspace anyways, not allowing new tiles to attach regardless.

The tiled boundary of neighboring macrotiles can prevent external communication datapaths from correctly propagating into non-differentiated macrotiles. However, because the boundary tiling only occurs after a guide rail has left the bracket, the external communication datapath would be unable to affect the tile type that the neighboring macrotile would eventually differentiate into. In other words, the macrotile has already essentially chosen a representative tile type, meaning new incoming datapaths don't necessarily have to reach the macrotile's genome for the simulation to progress correctly.

\subsection{Seed and representation functions} In addition to the augmentations already discussed, the universal simulator also requires slightly tweaked seed and representation functions generator $S$ and $\mathcal{R}$. In our implementation, $S$ must generate the seed macrotiles such that the genome includes the new instructions previously mentioned. This covers the instructions to grow the interception gadgets when propagating into a new macrotile, the special glues on the perimeter when passing through the boundary of a new macrotile, the initialization instructions to place the $\texttt{receiver}$ once the bracket has finished, etc. The representation function $R$ has to be augmented (for every output of the $\mathcal{R}$ function) to ensure that macrotiles with an open pipe intersection map to empty space (regardless of bracket state) and macrotiles with a blocked pipe intersection (and processed bracket) map to a tile type in the simulated system.

\begin{figure}
\centering
\includegraphics[width=\linewidth]{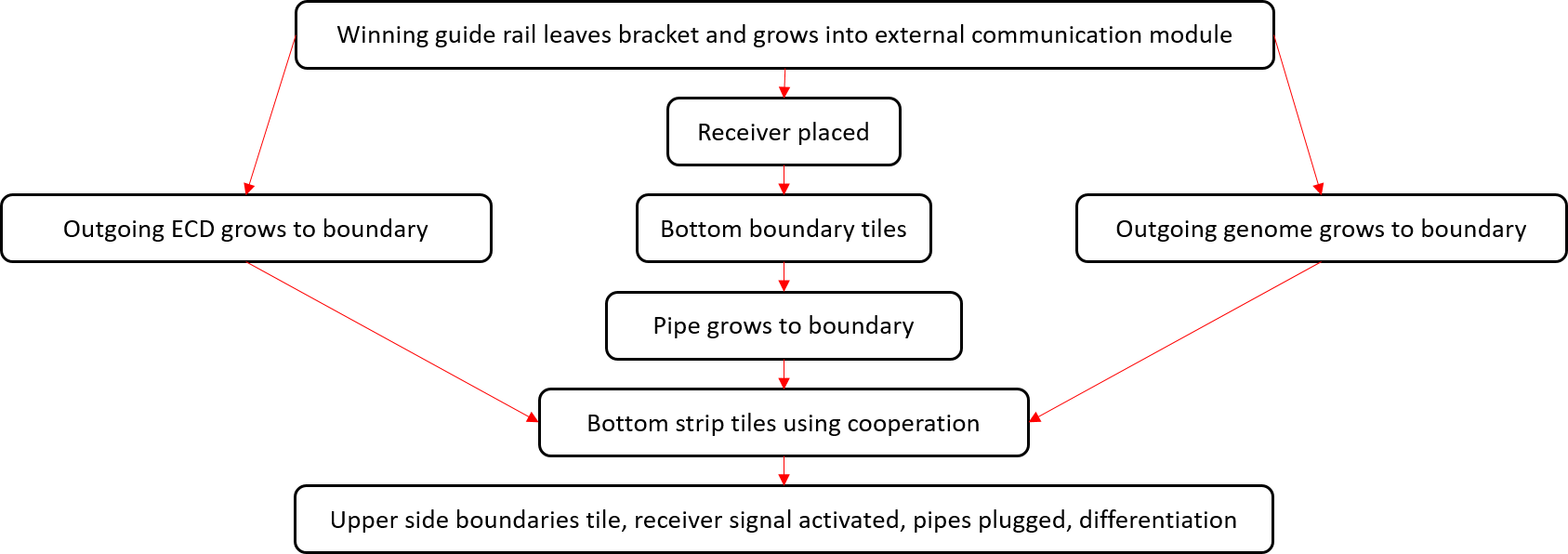}
\caption{A diagram depicting the timing dependencies present in our implementation. Once the bottom strips are tiled, all the timing dependencies are rectified and the rest of the process is sequential.}
\label{fig:timing}
\end{figure}

Overall, the main concern of our implementation is making sure the timing of the components is correct. This is important because, if certain components aren't completed when the macrotile differentiates, this may leave unintended diffusion paths open. A summary of the timing dependencies that are critical to the correctness of our augmented construction are shown in Figure~\ref{fig:timing}.

Now we give full proofs of the two lemmas mentioned in Section~\ref{sec:spatial-atam}.

\begin{lemma} \label{lem:diffusion_vs_differentiated}
Given a macrotile $L$ and two directions $dir_A,dir_B \in \{N,E,S,W,U,D\}$ such that $dir_A \neq dir_B$ and the neighboring macrotiles in those directions have not already differentiated, a diffusion path from a tile location in the neighboring macrotile $A$ in direction $dir_A$, through only macrotile $L$ (and no others), to a tile location in the neighboring macrotile $B$ in direction $dir_B$ will exist if and only if macrotile $L$ has not already differentiated.
\end{lemma}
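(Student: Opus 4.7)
The plan is to establish the two implications separately, using the structural design of the pipe intersection together with the strict ordering guarantees of the augmented blocking protocol described earlier in this section. At the highest level, the ``only if'' direction rests on the fact that differentiation is defined (via the augmented representation function) to occur precisely when the central tile at $\texttt{loc}_{center}$ is placed, and that this placement simultaneously severs all six pipes at their common junction; the ``if'' direction rests on the fact that before differentiation every one of the six one-tile-wide corridors between $\texttt{loc}_{center}$ and the faces remains an open diffusion channel, and that the pipes in directions $dir_A$ and $dir_B$ in particular cannot have been intercepted under the hypothesis that neighbors $A$ and $B$ have not differentiated (since intercept gadgets are seeded only by incoming $\genome$/$\externalCommunication$ datapaths from already-differentiated neighbors).

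For the ``only if'' direction, I would proceed by contrapositive: assume $L$ has differentiated. By the augmented definition of $R$, this means $\texttt{loc}_{center}$ contains a tile. By the ordering of steps 1--4 in the blocking protocol, $\texttt{loc}_{center}$ is filled from above only after the entirety of $\texttt{boundary}_D$ and the four $\texttt{boundary}_{side}$ faces have been completely tiled around the pipe openings (signaled by the completion of the spiral reaching the $\texttt{receiver}$). Hence every diffusion path from $A$ to $B$ through $L$ must enter $L$ via one of the one-tile-wide pipe openings through $\texttt{boundary}_{dir_A}$ and exit via a pipe opening through $\texttt{boundary}_{dir_B}$. By the geometry of the pipe intersection (Figure~\ref{fig:meeting_point}), the only lattice-adjacency between $\texttt{pipe}_{dir_A}$ and $\texttt{pipe}_{dir_B}$ inside $L$ is through $\texttt{loc}_{center}$, which is now occupied; so no such path exists.

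For the ``if'' direction, assume $L$ has not differentiated, so $\texttt{loc}_{center}$ is empty. I would case-split on the growth stage of $L$. If $L$ has not yet received an incoming $\genome$, then $L$ is empty and any straight-line path from $\texttt{boundary}_{dir_A}$ to $\texttt{boundary}_{dir_B}$ through the interior of $L$ is a diffusion path. If growth has begun but the boundary spiral has not yet fully enclosed $L$, then by the sequential dependencies of the blocking protocol (summarized in Figure~\ref{fig:timing}), the interior of $L$ still has an unbounded untiled region through which a path can be routed from $\texttt{boundary}_{dir_A}$ to $\texttt{boundary}_{dir_B}$. If the boundary has been completely tiled around the pipe ends but differentiation has not yet fired, then the pipes $\texttt{pipe}_{dir_A}$ and $\texttt{pipe}_{dir_B}$ (which are not intercepted, by the hypothesis on $A$ and $B$) each form a one-tile-wide unblocked corridor from the corresponding face opening to $\texttt{loc}_{center}$, and concatenating them through the empty center yields a diffusion path from a tile location in $A$ to a tile location in $B$ through only $L$.

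The main obstacle will be handling the asymmetric treatment of the ``up'' direction. Since $\texttt{pipe}_{dir}$ is explicitly grown only for $dir \in \{N,E,S,W,D\}$, while the center is filled from above during differentiation, the claim for $dir_A = U$ or $dir_B = U$ requires a careful description of the implicit upward corridor used to deliver the differentiation tile, and a verification that this corridor remains open and traversable as a diffusion path while $\texttt{loc}_{center}$ is unfilled. A secondary technicality is ruling out transient configurations during the spiral tiling of the side faces in which the partially-tiled boundary, together with partially-grown pipes and intercept gadgets from other already-differentiated neighbors, could momentarily disconnect $A$ from $B$; this is dispatched by appealing to the cooperation-based ``wait'' instructions described in the protocol, which guarantee that any such partially-tiled stage still leaves an open region in $L$ joining the two relevant faces.
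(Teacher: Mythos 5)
You correctly set up the two implications and correctly identify the key structural fact (all six pipes meet only at $\texttt{loc}_{center}$, and filling it severs them simultaneously), but your proposal leaves the genuinely hard part of the ``if'' direction unresolved, and the resolution you gesture at is not the one that actually works. When $dir_A = U$ (say), there is no $\texttt{pipe}_U$; the paper's argument does \emph{not} rely on an ``implicit upward corridor used to deliver the differentiation tile.'' Instead it observes that, since $L$ has begun its blocking protocol, \emph{some} neighbor $C$ has already differentiated, and by hypothesis $C \notin \{dir_A, dir_B\}$, so $C \in \{N,E,S,W,D\}\setminus\{dir_B\}$; the interception gadget $\texttt{intercept}_C$ has therefore cut a lateral hole in $\texttt{pipe}_C$, and the diffusion path goes from $A$ through the open top of $L$, in through that hole, into $\texttt{pipe}_C$, through the still-empty $\texttt{loc}_{center}$, and out $\texttt{pipe}_{dir_B}$. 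Without invoking the intercept gadget you have no justified route from the top region of $L$ to the pipe intersection (the interior is cluttered with the $\genome$, $\adderArray$, and $\bracket$ modules, and there is no dedicated upward pipe), so the ``implicit upward corridor'' claim is a gap, not a technicality.

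There is a second, subtler difference. You propose to case-split on the growth stage of $L$ (no $\genome$ yet; boundary spiral incomplete; boundary complete but not differentiated), and you rely on an unproved claim that partial boundary tilings never transiently disconnect the $dir_A$ face from the $dir_B$ face. The paper avoids this entirely with a monotonicity argument: since tiles never detach, it suffices to check the \emph{most constrained} assembly with $L$ undifferentiated (one tile short of filling $\texttt{loc}_{center}$) for the ``if'' direction, and the \emph{least constrained} assembly with $L$ differentiated (immediately after filling $\texttt{loc}_{center}$) for the ``only if'' direction; all other configurations are handled by subset/superset containment of empty locations. Adopting that extremal framing would let you drop your first two cases and the appeal to ``cooperation-based wait instructions,'' and would also sharpen your ``only if'' argument, where the claim that any $A$-to-$B$ path must enter through a one-tile pipe opening in $\texttt{boundary}_{dir_A}$ is false when $dir_A = U$ (the top face is untiled, so the path enters through the whole open top, and the correct argument is that $\texttt{pipe}_{dir_B}$ is unintercepted and its only connection to the rest of $L$'s interior is through the now-occupied $\texttt{loc}_{center}$).
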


\begin{proof}
For the proof, we will instead focus on two conditional statements that combined are logically equivalent to the Lemma~\ref{lem:diffusion_vs_differentiated}.

The first statement we will prove is, ``If macrotile $L$ has not already differentiated, then there is a diffusion path from $A$ to $B$ through $L$.'' The most constrained the problem can be while the premise is still true is if macrotile $L$ is one tile away from differentiating. By proving the diffusion paths still exist in this situation, we prove they exist if macrotile $L$ is earlier in the differentiation process. Now, we can break the problem down into two cases. Let's start with $dir_A, dir_B \in \{N,E,S,W,D\}$. In this simple case, the diffusion path through $L$ from $A$ to $B$ is just in the tube in the $dir_A$ direction, through $\texttt{loc}_{center}$, and out the tube in the $dir_B$ direction. The more complicated case is when, without loss of generality, $dir_A = U$ and $dir_B \in \{N,E,S,W,D\}$. In this case, we know that some neighbor in a direction $C \in \{N,E,S,W,D\} \setminus dir_B$ has already differentiated (in order for $L$ to have initiated the differentiation process). Therefore, $\texttt{intercept}_C$ was present to ``cut'' a hole in $\texttt{pipe}_C$. With this, our path is now moving from macrotile $A$, through the open top to macrotile $L$, into $\texttt{pipe}_C$ through $\texttt{intercept}_C$, through the $\texttt{loc}_{center}$, and out $\texttt{pipe}_B$.

The other statement we will prove is, ``If macrotile $L$ has already differentiated, then there is no diffusion path from $A$ to $B$ through $L$.'' Doing the opposite of the last claim, we will now look at the least constrained the problem can be while the premise is still true, right after the tile in $\texttt{loc}_{center}$ of macrotile $L$ has attached. Again, we can break the problem down into two cases. First, when $dir_A, dir_B \in \{N,E,S,W,D\}$, the pipes in these two directions are guaranteed to not have been intercepted, since neither $A$ nor $B$ has differentiated in the premise of the lemma. Since the pipe intersection is also blocked and there are no other breaks in the tiling of the side and bottom faces, there is no diffusion path. In the more complicated case, when $dir_A = U$ and $dir_B \in \{N,E,S,W,D\}$, the path can start through the open top of macrotile $L$. While some pipes must have necessarily been intercepted, we know by the premise of the lemma that the pipe in direction $dir_B$ was not. Since that pipe is no longer connected to the intercepted pipes due to the tile attachment in $\texttt{loc}_{center}$, there is no diffusion path.

Together, these two claims prove Lemma~\ref{lem:diffusion_vs_differentiated}.
\end{proof}

\begin{lemma} \label{lem:scaled_spatial}
Given an empty tile location $l$ in the simulated system $\calT$ and the corresponding macrotile $L$ in the simulator $\mathcal{U}$, tiles can attach within macrotile $L$ in the simulator if and only if tile location $l$ is not in a constrained subspace in the simulated system.
\end{lemma}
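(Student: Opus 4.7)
I will prove Lemma \ref{lem:scaled_spatial} as a biconditional, using Lemma \ref{lem:diffusion_vs_differentiated} as the main workhorse. The crucial observation linking $\calT$ and $\calUT$ is that under the augmented representation function $R$ described in Section~\ref{sec:spatial-atam}, a macrotile $L$ differentiates at the exact moment that the central tile of its pipe intersection is placed; this is also the exact moment that all through-diffusion inside $L$ is cut off. Thus, ``$L$ has not differentiated'' is equivalent to ``$l$ is an empty location in $\calT$'' and simultaneously equivalent to ``$L$'s pipe intersection is unblocked.''

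For the direction $l$ unconstrained $\Rightarrow$ tiles can attach in $L$, I would begin with a diffusion path $l = p_0, p_1, \ldots, p_k$ in $\calT$ in which each $p_i$ is empty and $p_k$ lies outside the bounding box of the current assembly of $\calT$. Each corresponding macrotile $P_i$ in $\calUT$ maps to empty space under $R$, so by the equivalence above, none of them has differentiated. Applying Lemma~\ref{lem:diffusion_vs_differentiated} to each consecutive triple $(P_{i-1}, P_i, P_{i+1})$ yields an internal diffusion subpath through $P_i$ connecting its face shared with $P_{i-1}$ to its face shared with $P_{i+1}$. Concatenating these subpaths produces a single diffusion path in $\calUT$ from a tile location inside $L$ to a location outside the simulator's bounding box. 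Since $L$ has available frontier positions (either from its own unfinished structure, or from genome/external communication datapaths that any already-differentiated neighbor is pushing in), at least one tile placement in $L$ is stable and diffusion-reachable, so tiles can attach.

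For the direction $l$ constrained $\Rightarrow$ no tile can attach in $L$, let $C$ be the constraining subassembly of $l$ in $\calT$. Every tile of $C$ is a differentiated tile of $\calT$, so by the correctness of the simulator each corresponding macrotile of $\calUT$ has completed the blocking protocol, in particular filling its pipe intersection. By Lemma~\ref{lem:diffusion_vs_differentiated}, no diffusion path in $\calUT$ passes through any such macrotile. Now, suppose for contradiction that a new tile could be placed at some location $q$ inside $L$; then the Spatial aTAM dynamics require a diffusion path in $\calUT$ from $q$ to infinity. Tracing that path across macrotile boundaries and classifying each traversed macrotile as (i) mapping to empty space, (ii) mapping to a tile of $C$, or (iii) mapping to a tile of the assembly outside $C$, Lemma~\ref{lem:diffusion_vs_differentiated} rules out (ii) and (iii). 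The path therefore visits only macrotiles mapping to empty space in $\calT$, and its sequence of macrotile crossings projects to a diffusion path in $\calT$ from $l$ to a location outside the bounding box, contradicting that $l$ is constrained.

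The main obstacle I expect is handling the transient interval during which the constraining subassembly has just become complete in $\calT$ but one or more constraining macrotiles in $\calUT$ are still midway through their boundary tiling. A naive worry is that a tile might slip into $L$ through a not-yet-sealed face of such a macrotile. I plan to dispose of this using the timing dependencies summarized in Figure~\ref{fig:timing} and the fuzz bound of Lemma~\ref{lem:bounded-fuzz}: the pipe intersection of any macrotile is filled strictly after its boundary has been sealed, so a macrotile cannot yet map to a tile of $\calT$ under $R$ while still leaving a diffusion gap on one of its faces. Consequently, at every step of the simulation the set of differentiated macrotiles in $\calUT$ corresponds exactly to the set of tiled locations in $\calT$, and the biconditional of the lemma holds at every such step.
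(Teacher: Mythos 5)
Your proof follows the same structure and hinges on the same workhorse lemma as the paper: the forward direction concatenates the per-macrotile diffusion subpaths guaranteed by Lemma~\ref{lem:diffusion_vs_differentiated} along a preimage of the diffusion path in $\calT$, and the backward direction observes that every macrotile representing a tile of the constraining subassembly has already differentiated and therefore admits no through-path. Your extra discussion of the ``transient interval'' is not a divergence in approach but rather a defensive clarification of something the paper takes for granted, namely that the boundary tiling of a macrotile is sealed strictly before the pipe intersection is filled; this is exactly what the blocking protocol enforces and what the proof of Lemma~\ref{lem:diffusion_vs_differentiated} already relies on, so once you invoke that lemma the worry disappears. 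One small imprecision worth noting: you invoke Lemma~\ref{lem:diffusion_vs_differentiated} to rule out diffusion across macrotiles mapping to tiles of the assembly (your cases (ii) and (iii)), but that lemma, as stated, is conditioned on the two neighboring macrotiles \emph{not} having differentiated, so strictly speaking it is the second claim inside the lemma's proof (a differentiated macrotile has no internal face-to-face path, which holds regardless of the neighbors' status) that does the work; this is also how the paper applies it, so your argument is no less rigorous than the paper's, but you should phrase the appeal to match what the lemma's proof actually establishes.
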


\begin{proof}
We again break the problem into two claims that together are logically equivalent to Lemma~\ref{lem:scaled_spatial}.

First, we will prove, ``If tile location $l$ is not constrained, then tiles can attach within macrotile $L$.'' By the premise, there must be a diffusion path in the simulated system $\calT$ from infinitely far away to the tile location $l$. Therefore, there must be a series of macrotiles in the simulator $\mathcal{U}$ from infinitely far away to the macrotile $L$. Since these macrotiles all must map to empty space under the representation function $R$, none of them must have already differentiated. By Lemma~\ref{lem:diffusion_vs_differentiated}, we know that each of these non-differentiated macrotiles has a path through it connecting all 15 pairs of directions. These mini-diffusion paths are guaranteed to lined up, since the pipes are designed to line up between adjacent macrotiles, and can therefore be linked together to comprise a longer path. Therefore, there must be a diffusion path in the simulator $\mathcal{U}$, comprised of these mini-diffusion paths through each macrotile concatenated together, from infinitely far away to any empty location within the macrotile $L$, thereby allowing tiles to attach in $L$.

Next, we will prove, ``If tile location $l$ is constrained, then tiles cannot attach within macrotile $L$.'' By the premise, the tile location $l$ is within a constrained subspace. By definition, there must be a constraining subassembly surrounding tile location $l$. In the simulator $\mathcal{U}$, all of the tiles from this constraining subassembly must be represented by already differentiated macrotiles. Since a diffusion path in the simulator $\mathcal{U}$ from infinity to the macrotile $L$ must pass through one of the macrotiles representing a tile in the constraining subassembly, and since we know that already differentiated macrotiles have no diffusion paths between neighbors in any two different directions by Lemma~\ref{lem:diffusion_vs_differentiated}, then it must necessarily be the case that no diffusion path from infinity to the macrotile $L$ exists, and tiles can therefore not attach within the macrotile $L$.

Together, these two claims prove Lemma~\ref{lem:scaled_spatial}.
\end{proof}

Given the ability to soundly simulate 3D aTAM dynamics as well as the spatial constraint, this construction is a valid universal simulator for the Spatial aTAM, thereby proving Theorem~\ref{thm:spatial-IU}.

\bibliographystyle{plainurl}% the mandatory bibstyle
\bibliography{tam,experimental_refs,complexity}

\end{document}